\crefname{lem}{Lemma}{Lemmas}
\crefname{section}{Section}{Sections}
\crefname{lemma}{Lemma}{Lemmas}
\crefname{thm}{Theorem}{Theorems}
\crefname{corollary}{Corollary}{Corollaries}
\crefname{theorem}{Theorem}{Theorems}
\crefname{defn}{Definition}{Definitions}
\crefname{definition}{Definition}{Definitions}
\crefname{fact}{Fact}{Facts}
\crefname{figure}{Fig.}{Figures}
\crefname{clm}{Claim}{Claims}
\crefname{claim}{Claim}{Claims}
\crefname{prop}{Proposition}{Propositions}
\crefname{proposition}{Proposition}{Propositions}
\crefname{algocf}{Algorithm}{Algorithms}
\newtheorem{theorem}{Theorem}[section]
\newtheorem{lemma}[theorem]{Lemma}
\newtheorem{proposition}[theorem]{Proposition}
\newtheorem{claim}[theorem]{Claim}
\newtheorem{corollary}[theorem]{Corollary}
\newtheorem{definition}[theorem]{Definition}
\theoremstyle{definition}
\newtheorem{condition}[definition]{Condition}
\newcommand{\cA}{{\mathcal{A}}}
\newcommand{\cB}{{\mathcal{B}}}
\newcommand{\cC}{{\mathcal{C}}}
\newcommand{\cD}{{\mathcal{D}}}
\newcommand{\cE}{{\mathcal{E}}}
\newcommand{\cF}{{\mathcal{F}}}
\newcommand{\cH}{{\mathcal{H}}}
\newcommand{\cK}{{\mathcal{K}}}
\newcommand{\cL}{{\mathcal{L}}}
\newcommand{\cN}{{\mathcal{N}}}
\newcommand{\cM}{{\mathcal{M}}}
\newcommand{\cP}{{\mathcal{P}}}
\newcommand{\cR}{{\mathcal{R}}}
\newcommand{\cS}{{\mathcal{S}}}
\newcommand{\cT}{{\mathcal{T}}}
\newcommand{\G}{\mathbold{G}}
\newcommand{\bI}{{\mathbold{I}}}
\newcommand{\bJ}{{\mathbold{J}}}
\newcommand{\bX}{{\mathbold{X}}}
\newcommand{\bY}{{{\mathbold Y}}}
\newcommand{\degr}{\mathrm{deg}}
\newcommand{\Exp}{\mathbb{E}}
\newcommand{\Ind}{{\mathds{1}}}
\newcommand{\Inf}{ {\rm \Upgamma}}
\newcommand{\bInf}{\bm{\Upgamma}}
\newcommand{\compWeight}{\Upupsilon}
\newcommand{\WA}{{\UpA}}
\newcommand{\WB}{{\tt W}}
\newcommand{\pure}{{locally simple}}
\newcommand{\cappedWA}{\UpM}
\newcommand{\cappedWB}{{\WB}}
\newcommand{\vecJ}{\mathcal{J}}
\newcommand{\bvecJ}{\bm{\mathcal{J}}}
\DeclareSymbolFont{EuclidLetter}{U}{eur}{m}{n}
\DeclareMathSymbol{\UpA}{\mathord}{EuclidLetter}{65}
\DeclareMathSymbol{\UpB}{\mathord}{EuclidLetter}{66}
\DeclareMathSymbol{\UpJ}{\mathord}{EuclidLetter}{74}
\DeclareMathSymbol{\UpM}{\mathord}{EuclidLetter}{77}
\DeclareMathSymbol{\UpS}{\mathord}{EuclidLetter}{83}
\DeclareMathSymbol{\UpT}{\mathord}{EuclidLetter}{84}
\DeclareSymbolFont{EuclidLetter}{U}{eur}{m}{n}
\DeclareMathSymbol{\UpB}{\mathord}{EuclidLetter}{66}
\DeclareMathSymbol{\UpJ}{\mathord}{EuclidLetter}{74}
\DeclareMathSymbol{\UpT}{\mathord}{EuclidLetter}{84}
\newcounter{proptyno}
\newcounter{kcomcount}
\newcommand{\IntrstGraphFam}{\mathcal{G}}
\newcommand{\Tmix}{{\rm T}_{\rm Mix}}
\newcommand{\distance}{{\tt dist}}
\newcommand{\EdgeBlockWeight}{\distance}
\newcommand{\impV}{\mathrm{W}}
\newcommand{\One}{\cA}
\newcommand{\Two}{\cL}
\newcommand{\In}{\mathrm{In}}
\newcommand{\Out}{\mathrm{Out}}
\newcommand{\BTree}{\widehat{T}}
\newcommand{\BBTree}{\overline{T}}
\newcommand{\BRest}{\widehat{R}}
\newcommand{\BCDist}{q}
\newcommand{\hatC}{\widehat{C}}
\newcommand{\GWT}{\pmb{\UpT}}
\newcommand{\ST}{\pmb{T}}
\title{On sampling diluted Spin Glasses \\using  Glauber dynamics} 
\author{Charilaos Efthymiou and Kostas Zampetakis}
\date{\today}
\begin{document}

\maketitle

\begin{abstract}%
{\em Spin-glasses}  are natural  Gibbs distributions that have  been studied in 
theoretical computer science for  many decades. Recently,   they have been gaining renewed 
attention from the community  as they emerge naturally in 
{\em neural computation} and {\em learning},  {\em network inference}, 
{\em optimisation}  and many other areas.

Here we consider the {\em Edwards-Anderson} spin-glass  distribution at inverse temperature $\beta$ when  the underlying  graph 
is   an instance of $\G(n,d/n)$.  This is a random graph on $n$ vertices such that  each edge appears independently with probability $d/n$, 
where the expected degree $d=\Theta(1)$. 
We study the problem of efficiently  sampling from the aforementioned distribution using the well-known Markov chain called 
{\em Glauber dynamics}.

For a certain range of $\beta$, that depends only on the expected degree $d$ of the graph,
and   for  typical instances of the Edwards-Anderson model on $\G(n,d/n)$,
we show that the corresponding (single-site) Glauber dynamics exhibits mixing time 
$O(n^{2+\frac{3}{\log^2 d}})$. 

The range of   $\beta$  for which we obtain  our rapid-mixing result corresponds to  the expected influence  being smaller than $1/d$. This bound is very natural, and we conjecture that it is 
the best  possible for rapid mixing.

As opposed to the {\em mean-field} spin-glasses, where the Glauber dynamics  has been studied before,  less is known 
for the {\em diluted} cases like the one we consider here.  The latter problems  are more  challenging to work with because
the corresponding instances involve {\em two levels} of randomness, i.e., the random graph instance and the random Gibbs 
distribution.

We establish our results by utilising the well-known {\em path-coupling} technique. 
In the standard setting of Glauber dynamics on $\G(n,d/n)$ one has to deal 
with the so-called effect of high degree vertices. 
Here, with the spin-glasses, rather than considering  vertex-degrees, it is more natural 
to use a different measure on the vertices of the graph, that  we call 
{\em aggregate influence}.

We  build  on the block-construction approach proposed by  
[Dyer, Flaxman, Frieze and Vigoda: 2006]   to circumvent the  problem 
with the high degrees in the  path-coupling analysis.
Specifically, to obtain our results, we first  establish rapid mixing for
an  appropriately defined block-dynamics. 
We design this dynamics such that vertices of large aggregate
influence are placed deep inside their blocks. Then, we  obtain  rapid mixing   
for the (single-site) Glauber dynamics  by utilising  a comparison argument.

\end{abstract}

\section{Introduction}

{\em Spin-glasses}  are natural, high dimensional Gibbs distributions that have  been studied in theoretical computer science for 
many decades. Recently, they have been gaining renewed attention from the community  as they emerge naturally in  {\em neural computation} and {\em learning}, e.g., in the Hopfield model,  as models of  {\em network inference}, e.g., in the stochastic block model, in 
{\em optimisation}, {\em counting-sampling} and many other areas, e.g.,  see \cite{SteinNewmanSpinGlassBook, GamarnikJWFOCS20,CoEfJKKCMI,OptElAlaoui, KoehLRColt22,eldan2022spectral,EfthICALP22}.  

Furthermore, spin-glasses are widely considered to be canonical models of  {\em extremely} disordered systems \cite{mezard1990spin,SteinNewmanSpinGlassBook} and, as such,  they  have beed studied extensively  
in mathematics \& mathematical physics, e.g., see 
\cite{TalagrandAnnals,panchenko2013parisi,franz2001exact,guerra2004high},   and also in  statistical physics \cite{mezard1990spin,SteinNewmanSpinGlassBook,RSBParisi}.  In particular, as far as physics is concerned,  spin-glasses have been studied intensively since the early '80s, while the seminal, groundbreaking work of  Giorgio Parisi  on  spin-glasses got him  the  Nobel Prize in Physics in 2021.

In this work, we consider the natural problem of sampling from these distributions. 
To this end, our endeavour is to employ  the powerful Markov Chain Monte Carlo  (MCMC) method.
Typically, the MCMC sampling algorithms are  very simple to describe and implement in practice, however,
analysing their performance can  be extremely  challenging.

We focus on one of the best-known cases of spin-glasses, called   the {\em Edwards-Anderson model} (EA model) 
\cite{EAIsingIntroWork}. Given a fixed graph $G=(V,E)$,  and the vector $\bm{\vecJ}=\{\bJ_e: e\in E\}$ of independent, 
identically distributed (i.i.d.) standard Gaussians, the EA model with {\em inverse temperature} $\beta>0$,  corresponds 
to  the {\em random} Gibbs  distribution  $\mu=\mu_{G,\bm{\vecJ}, \beta}$  on the configuration space $\{\pm 1\}^V$  such that 
\begin{align*}
{\mu}(\sigma) &\propto  \exp\left( \beta\sum\nolimits_{\{w,u\}\in E} {\Ind}\{\sigma(u)=\sigma(w)\} \cdot \bJ_{\{u,w\}}   \right) \enspace,
\end{align*}
where $\propto$ stands for ``proportional to".

In the literature,  we also encounter this distribution   as the {\em Viana-Bray} model (e.g. see \cite{guerra2004high}), or 
as the {\em 2-spin model} (e.g., see \cite{PanchenkoTalagrand,GamarnikJWFOCS20}).  It is worth mentioning at this 
point that the Edwards-Anderson model on the complete graph corresponds to  the well-known {\em Sherrington-Kirkpatrick} 
model (SK model) \cite{SKModel}.

On a first account, the EA model may look innocent, i.e., it looks similar to the standard Ising model with just  the addition 
of the Gaussian couplings.  It turns out, though, that it is a {\em fascinating} distribution with a lot of intricacies, while the 
configuration space has  extremely rich   structure, e.g.  in various settings, it is conjectured to exhibit the ``{\em infinite} 
Replica Symmetry Breaking" \cite{mezard1990spin}.

We use  the  {\em Glauber dynamics} to sample from the  Edwards-Anderson 
spin-glass.  We assume that the underlying geometry is an  instance of the {\em sparse}  random graph $\G(n,d/n)$.  
This is a random graph  on $n$ vertices, such that each edge appears, independently, with probability $p=d/n$,  for
some constant $d>0$.   Note that we obtain an instance of the problem by  first drawing the underlying graph form the distribution  $\G(n,d/n)$ 
and then, given the  graph,  we generate the random Gibbs distribution.

Sampling from Gibbs distributions induced by
instances of $\G(n, d/n)$, or, more generally, instances of so-called random constraint satisfaction
problems, is at the heart of recent endeavors to investigate connections between 
{\em phase transitions} and the efficiency of algorithms, e.g. \cite{OptasOghlan08,alaoui2020algorithmic,COghlanEfth11,GalStefVigJACM15,GamSudan17,SlySun12}. 
The MCMC sampling problem on random graphs has garnered a lot of attention, e.g., see 
\cite{dyer2006randomly,BezakovaGGS22,EftFeng23,mossel2010gibbs,EfthymiouHSV18,DyerFriez10,ChManMo23}, as it is considered to be an intriguing case to study. 
So far the focus has been on sampling standard Gibbs distributions, which, already, is a very challenging problem. 
The study of spin-glasses takes us a step further.
Working with the EA model, we introduce an
{\em extra level of disorder} which is due to the random couplings at the edges of the graph.
Hence, in our analysis, we need to deal with the disorder of both $\G(n,d/n)$ and   the random couplings.

For typical instances of {\em both} the random graph $\G(n,d/n)$ and the EA model,   we show that the Glauber dynamics exhibits mixing time ${\textstyle O(n^{2+\frac{3}{\log^2 d}})}$ 
for  any inverse temperature  $\beta>0$ such that  
\begin{align}\label{eq:IntroUniqueness}
\Exp\left[\tanh\left({\textstyle \frac{\beta}{2}} \left| \bJ \right |\right) \right]<1/d\enspace,
\end{align}
where the expectation is with respect to the standard Gaussian random variable $\bJ$.
It is elementary to show that for large $d$, the  above condition corresponds to having $0<\beta<\frac{\sqrt{2\pi}}{d}$. In turn, this implies that the {\em expected influence} is smaller than $1/d$,  a quite natural requirement to have. 
We  {\em conjecture} that the  region for  $\beta$ in which we establish the 
 rapid-mixing of Glauber dynamics  is  the {\em best possible}.

We also believe that the bound we obtain on the mixing  time is very close to the optimal one. 
Our conjecture is that the   mixing time for Glauber dynamics, in the aforementioned range of $\beta$,  
is $n \exp(\Theta(\sqrt{\log n}))$.
This  is  because we 
typically have  isolated stars with rather large couplings at their edges. Note that we expect  to see couplings of magnitude as large as $\Theta(\sqrt{\log n})$.

To the best of our knowledge, not much is known about the  mixing time of Glauber dynamics for the 
EA model on $\G(n,d/n)$. 
There is only a weaker, non-MCMC, sampler for this distribution,  proposed in~\cite{EfthICALP22}.
 The  error at the output for this algorithm is a vanishing function of $n$.
Here, with the Glauber dynamics,  we obtain the standard approximation guarantees one gets from the  MCMC samplers,
which are considered to be the best possible.  

As opposed to the MCMC samplers for the EA model on $\G(n,d/n )$, more is known about  the {\em mean-field}  SK model,  with the most recent works being \cite{AlaouiMSFOCS22,bauerschmidt2019very,KoehLRColt22,eldan2022spectral}. 
For the cases we consider here, these results imply a much weaker bound on $\beta$ compared to what  we get with our approach, i.e., 
they require $\beta$ to be a vanishing function of $n$\footnote{More concretely, it is elementary to verify that
they require $\beta=O\left(\sqrt{\frac{\log\log n}{\log n}}\right)$.}.
This comes without surprise as the SK model lives on the complete graph, hence there is {\em no geometry} in the problem. 
On the other hand, the  EA model on $\G(n,d/n)$  has a very rich, and as it turns out, intricate geometry. 
Overall, the results for the SK-model fare poorly because they cannot address the geometric challenges that emerge in the EA model on $\G(n,d/n)$.

Our analysis does not rely on the newly introduced {\em Spectral Independence} (SI) method 
\cite{OptMCMCIS,ChenLVStoc21,Chen0YZFOCS22},   even though there  are rapid mixing results 
for Gibbs distributions on $\G(n,d/n)$  that utilise SI, e.g., see \cite{BezakovaGGS22,EftFeng23}. 
When it comes to spin-glasses, the natural quantities that arise with SI turn out to be  too complicated to work with.
Our approach exploits the classic {\em path-coupling}  
technique  \cite{bubley1997path}. 
Specifically, we build on the machinery developed in the sequence of results in \cite{dyer2006randomly,EfthymiouDA14,efthymiou2018sampling}
that establish  fast mixing  for Glauber dynamics on (standard) Gibbs distributions on $\G(n,d/n)$. 
Basically,  our approach builds on these ideas so that we can  accommodate the extra disorder that the problem exhibits. 
Note that these previous works
are about  distributions such as the colouring model, the hard-core model, etc., and not for spin-glasses.

For typical instances of $\G(n,d/n)$  all but an $\exp(-\Theta(d))$ fraction
of  the vertices are of degree close to $d$, while we expect to have degrees as huge as
$\Theta\left(\frac{\log n}{\log\log n}\right)$. In that respect, it is natural to have the
parameters of the problem expressed in terms of  the expected degree $d$, 
rather than, e.g.,  the maximum degree. Hence, a major challenge in the analysis is how to 
circumvent the so-called ``effect of high degrees". 

Roughly speaking, the approach underlying \cite{efthymiou2018sampling} 
is as follows: 
Rather than considering (single site) Glauber dynamics, we  consider 
{\em block dynamics}.  That is,    there is an appropriately constructed  block-partition of the set of  
vertices  such that, at each step, 
we update the  configuration of a randomly  chosen block.  
As it was already observed in \cite{dyer2006randomly},  the typical instances of $\G(n,d/n)$ admit 
a  block-partition, such that the high-degree vertices are  hidden  deep inside the 
blocks,  in a way that  makes their effect vanish.  
This allows one to circumvent the problem that the high-degree vertices pose in the 
path-coupling analysis and   show  fast  mixing  for the block dynamics. Subsequently,  
one obtains  the bounds on the mixing time for the single-site  dynamics 
by  using  {\em comparison}.

A primary challenge of the above approach is to construct the desirable block-partition.
 The contribution coming from the aforementioned works amounts to introducing  a {\em weighting-scheme}  
 (a set of potentials) for the paths in the graph, which is further leveraged for the block construction. Typically, this approach entails heavy probabilistic analysis.

It turns out that the use of  potentials for the paths is  quite  natural in our setting, too.
One of our main contributions  here is to  introduce  {\em new} weights (new potentials) for the paths
of the graph, which  accommodate the richer structure of the problem.  
 Unlike the previous works that focus only vertex-degrees, we utilise concepts from Spectral Independence and introduce a new measure for the vertices of the graph, which we call {\em aggregate influence}

Overall, getting a handle on the behaviour of the weights of the paths is one of the most technically demanding parts of this work. 
It is worth mentioning that the set of blocks we obtain here is quite different than the one appears in \cite{efthymiou2018sampling}.   
Note that in this work, the  block partition is such that vertices with large degree are hidden deep inside the blocks.   
Here, it is typical to have single-vertex blocks consisting of  a high-degree vertex,  or having multi-vertex blocks whose vertices are all  low-degree ones.  
This comes without surprise, as   the notions of the aggregate influence and the degree of a vertex are  different from each other.

\subsection{Results}\label{sec:MainResults}

We let $\G=\G(n, d/n)$ be the Erd\H{o}s–R\'enyi graph on a set $V_n$ of $n$ vertices, with edge probability $d/n$, where $d>0$
is a fixed number.
The Edwards-Anderson model on $\G$ at \emph{inverse temperature} $\beta >0$,   is defined as follows: 
for  $\bm{\vecJ} = \{ \bJ_e :e \in E(\G)\}$ a family of independent,  {\em standard Gaussians}, and for 
$\sigma \in \{\pm1\}^{V_n}$, we let 
\begin{align}\label{eq:kSpin1}
	\mu_{{\G}, \bm{\vecJ},\beta}(\sigma)&
    =\frac{1}{Z_\beta(\bm{G}, \bm{\vecJ})}
    \cdot \exp\left( \beta \sum_{x\sim y} {\Ind}\{\sigma(y)=\sigma(x)\} \cdot {\bJ}_{\{x,y\}}  \right) \enspace,
\end{align}
where
	\begin{align*}
	Z_\beta(\G, \bm{\vecJ}) &
    =\sum_{\tau\in\{\pm1\}^{V_n}} \exp\left( \beta \sum_{x\sim y}{\Ind}\{\tau(y)=\tau(x)\}\cdot \bJ_{\{x,y\}}  \right)\enspace.
	\end{align*}
Typically, we study this distribution as $n\to \infty$.

We use the discrete time,   (single site)   {\em Glauber dynamics}   $(X_t)_{t \geq 0}$ to approximately sample from  
the aforementioned distributions $\mu=\mu_{\G, \bm{\vecJ},\beta}$. 
  Glauber dynamics is a Markov chain with  state space the  support of the distribution $\mu$.
We assume that the chain   starts from an arbitrary configuration $X_0\in \{\pm 1\}^{V_n}$. For 
$t\geq 0$, the transition from the state $X_t$ to $X_{t+1}$ is according to the  following steps: 
\begin{enumerate}
\item choose uniformly at random a vertex $v$,
\item for every vertex $w$ different than $v$, set $X_{t+1}(w)=X_t(w)$,
\item set $X_{t+1}(v)$ according to the marginal of $\mu$ at $v$, conditional on 
the neighbours  of $v$ having the configuration  specified by $X_{t+1}$.
\end{enumerate}

It is standard to show that when a Markov chain satisfies a set of technical conditions called 
{\em ergodicity}, then it converges to a unique stationary distribution. For the cases 
we consider here, the Glauber dynamics  is trivially ergodic.

Let  $P$ be the transition matrix of an  ergodic Markov chain $(X_t)$  with a finite state space 
$\Omega$ and equilibrium distribution $\mu$. For  $t\geq 0$, and $\sigma\in \Omega$, let 
$P^{t}(\sigma, \cdot)$ denote the distribution of $X_t$, when the initial state of the chain 
satisfies $X_0=\sigma$.  The  {\em mixing time} of the Markov chain  $(X_t)_{t \geq 0}$ is 
defined by
\begin{align*}
\Tmix &=  \max_{\sigma\in \Omega}\min { \left\{t > 0 : \Vert P^{t}(\sigma, \cdot) - \mu \Vert_{\rm TV} \leq{1}/{\mathrm{e}} \right \}}\enspace .
\end{align*}
Our focus is  on the mixing time of the aforementioned Markov chain. 

Finally, for $k>1$ we let 
\begin{align}\label{def:UniqRegion}
 \beta_{c}(k)=\frac{\sqrt{2\pi}}{k}\enspace.
\end{align}

\begin{theorem}\label{thrm:MainResult}
For any $\varepsilon\in (0,1)$, there exists $d_0=d_0(\varepsilon) \ge 1$, such that for  $d\ge d_0$, for
$\beta \leq (1-\varepsilon)\beta_c(d)$, there is a constant $C>0$ such that the following is true: 

Let $\G=\G(n,d/n)$, while let $\mu$ be the Edwards-Anderson model on $\G$, at
inverse temperature $\beta$. 
Then, with probability $1- n^{-1/4}$ over the instances of $\G$ and $\mu$, the Glauber 
dynamics on $\mu$ exhibits mixing time
\begin{align*}
 \Tmix\leq C \cdot n^{2+\frac{3}{\log^2 d}} \enspace.
\end{align*}
\end{theorem}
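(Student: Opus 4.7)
The plan is to prove \cref{thrm:MainResult} through the standard two-stage path-coupling framework of \cite{dyer2006randomly,efthymiou2018sampling}, suitably adapted to the additional layer of disorder coming from the Gaussian couplings $\bm{\vecJ}$. I would first establish rapid mixing for a carefully designed block dynamics $(Y_t)$ on the same state space, and then transfer the bound to the single-site Glauber dynamics $(X_t)$ by a Peres--Winkler-style comparison argument, which loses an $n^{O(1/\log^2 d)}$ factor and yields the claimed $n^{2+3/\log^2 d}$ bound. The uniqueness-like condition $\Exp[\tanh(\tfrac{\beta}{2}|\bJ|)] < 1/d$ is what drives the one-step contraction in the path-coupling step, and the job of the block construction is to hide the ``bad'' vertices (those whose local geometry or coupling magnitudes make a direct single-site analysis fail) deep inside the blocks so that disagreements cannot propagate outward across block boundaries.

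The main new ingredient is the \emph{aggregate influence} of a vertex $v$, which I would define roughly as $\sum_{u\sim v} \tanh(\tfrac{\beta}{2}|\bJ_{\{u,v\}}|)$, i.e., the sum over incident edges of the pairwise influence induced by the random coupling. This is the right quantity because, under condition \eqref{eq:IntroUniqueness}, a typical vertex has expected aggregate influence strictly below $1$, which is precisely what one needs for contraction of a disagreement. Unlike the degree-based analyses of prior works, a low-degree vertex can still be ``bad'' if it has one exceptionally large coupling, while a high-degree vertex with many small couplings may be ``good''. I would then define a weighting scheme on paths in $\G$, assigning to each path $P=v_0,v_1,\dots,v_\ell$ a product of edge-potentials determined by $\tanh(\tfrac{\beta}{2}|\bJ_e|)$, and define a vertex to be \emph{substantial} when the total weight of paths emanating from it is large. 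The blocks are then grown around each substantial vertex so that all substantial vertices lie at block-distance at least some $r = \Theta(\log n / \log^2 d)$ from the block boundary.

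Next I would analyse $\G(n,d/n)$ with its Gaussian couplings to show that, with probability $1-n^{-1/4}$, the instance admits such a block partition with blocks of size at most $n^{O(1/\log^2 d)}$ and with only a tiny fraction of vertices being substantial. The probabilistic heart of this step is a tail bound on the weighted-path generating function at a typical vertex, using the Galton--Watson description of local neighbourhoods of $\G(n,d/n)$ together with sub-Gaussian concentration of the edge potentials $\tanh(\tfrac{\beta}{2}|\bJ_e|)$; the choice $\beta < (1-\varepsilon)\beta_c(d) = (1-\varepsilon)\sqrt{2\pi}/d$ makes the expected one-step weight strictly less than $1$, so the path-weight branching process is subcritical and its total progeny has exponentially decaying tails. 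This, I expect, will be the most technically demanding part, because it must simultaneously absorb the randomness of the graph and of the couplings, and must be uniform over all starting vertices by a union bound.

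For the block-dynamics path-coupling step, I would couple two chains $Y_t, Y'_t$ that differ at a single vertex $v$ and show that the expected disagreement weight contracts by a factor $1 - \Omega(1/n)$ after one step. The contraction follows because, within each block, only non-substantial vertices can be reached from the disagreement without first crossing a substantial vertex (which by construction sits deep inside), and on the non-substantial part the path weights are controlled by the subcritical branching structure above. Combined with the standard path-coupling theorem, this yields $\Tmix(Y) = O(n \log n)$, and the comparison inequality then produces the final bound. The main obstacles I anticipate are: (i) designing the path potentials so that they are simultaneously strong enough to tame substantial vertices and weak enough to contract; (ii) proving the concentration estimates uniformly over the block partition in the presence of two independent sources of randomness; and (iii) controlling the congestion in the comparison step, since blocks containing a substantial vertex may be large and the single-site chain must simulate a block update through a long sequence of local moves.
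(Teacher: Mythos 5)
Your roadmap is essentially the paper's: aggregate influence as the governing quantity, a weighted-path criterion to isolate ``good'' vertices, a block construction that buries the bad ones deep inside blocks, path coupling for block dynamics, and a Martinelli-style relaxation-time comparison to transfer the bound to single-site Glauber.

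The one substantive divergence is the weighting scheme. You propose path weights as products of edge-potentials $\tanh(\tfrac{\beta}{2}|\bJ_e|)$ along the path, and declare a vertex ``substantial'' when the \emph{total} weight of emanating paths is large. Note that each edge-potential is strictly less than $1$, so every individual path weight in your scheme is automatically less than $1$; identifying a heavy vertex would therefore require reasoning about the \emph{sum} over all emanating paths (the weighted branching-process progeny), which is harder to union-bound over starting points and does not feed directly into the path-coupling recursion. The paper instead defines a \emph{vertex} weight $\cappedWA(w)$ that equals $1-\varepsilon/4$ when the aggregate influence satisfies $\WA(w) \le 1-\varepsilon/2$, but \emph{jumps} to $d\cdot\WA(w) > 1$ when $w$ is heavy (\cref{def:VertexWeights}), and declares $w$ a block vertex whenever every emanating path has \emph{product} of vertex-weights below $1$ (\cref{def:BlockVertex}). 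This multiplicative one-vertex penalty is the device that converts the branching-process statement into a per-path criterion: it keeps the union bound over paths in \cref{thm:AllPathsGood} manageable (one bounds $\exp(O(\ell))$ paths of length $\ell$, not weighted multiplicities), and it plugs directly into the contraction recursion of \cref{thrm:BDContraction} and \cref{lemma:ContributionsSA}, where the relevant quantity $\sum_j \WA_{\rm out}(w_j)\prod_{i<j}\bigl[\WA(w_i)-\WA_{\rm out}(w_i)\bigr]$ is controlled exactly because heavy vertices are forced to be far from $\partial_{\rm out}B$ and their weight inflation is offset by the many light vertices in between (\cref{cor:HeavyWABound}). Your version could plausibly be made to work, but you would be fighting the multiplicity of paths at the boundary directly rather than absorbing it into the one-vertex penalty, which is the cleaner route and is what makes \cref{thrm:BlockParitionGnp} and the subsequent analysis go through.
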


A couple of remarks are in order.
First, in Theorem \ref{thrm:MainResult}, 
we write  $\beta<\beta_c(d)$ to specify  the region  that we have rapid mixing. 
This condition is  equivalent to  what we have in \eqref{eq:IntroUniqueness}, since we assume sufficiently large $d$.
Using  $\beta<\beta_c(d)$ instead of \eqref{eq:IntroUniqueness}, leads to  cleaner  derivations. 

Second, we note that Theorem \ref{thrm:MainResult} continues to hold if we replace the Gaussian couplings with couplings drawn from an arbitrary {\em sub-Gaussian} distribution. To see this, notice that if $J^\prime$
is sub-Gaussian, and $J$ is a zero-mean Gaussian, there exist a constant $c\ge 0$, such that for all $t>0$ we have that $\Pr[|J^\prime |\ge t] \le c\cdot \Pr[|J|\ge t]$. Therefore, for any (using coupling) any sub-Gaussian model to the EA-model we consider here.
\section{Approach}\label{sec:Approach}\label{sec:OverviewBlockConstr}

In this section, we give a high-level overview of the construction we use to prove Theorem~\ref{thrm:MainResult}.  Note that the construction relies on introducing a good number of potentials.

\subsection{Aggregate Influence} Consider the  graph $\G=\G(n,d/n)$, and the vector of real numbers $\bvecJ = \{\bJ_e : e \in E(\G)\}$,
where $\bJ_e$'s are i.i.d. standard Gaussians, i.e., $\cN(0,1)$. 
For  $\beta>0$, consider also the Edwards-Anderson model  $\mu=\mu_{\G,\bvecJ, \beta}$ on $\{\pm 1\}^{V_n}$, that is, 
 \begin{align}
\mu(\sigma)&\propto \exp\left( \beta \sum\nolimits_{x\sim y}{\Ind}\{\sigma(y)=\sigma(x)\}\cdot \bJ_{\{x,y\}}  \right), & \forall \sigma\in \{\pm 1\}^{V_n}\enspace.
 \end{align}
We usually refer to each $\bJ_e$ as the \emph{coupling} at the edge $e$ in the graph.

For each edge $e\in E(\G)$ consider the
{\em influence} $\Inf_e$ defined by
\begin{align}
\Inf_e&= \frac{\left|1-\exp\left( \beta \bJ_e  \right) \right|}{1+\exp\left( \beta \bJ_{e}\right) }\enspace.
\end{align}
A natural way of viewing the influence   $\Inf_e$ is as a measure of  correlation decay over the edge $e$.  
Note that the influence is a very natural quantity to consider  and emerges in many different contexts such as 
 \cite{ChenLVStoc21,EfthICALP22,EfthZamp2302}, just to mention a few.  

To get some intuition, note that  $\Inf_e\in [0,1]$ is an increasing function of  $|J_e|$.  
For a ``heavy edge",   i.e., an edge $e$ that $|J_e|$  is ``big", 
  the corresponding influence is  close to 1. 
On the other hand, for a ``light edge", i.e., when $|J_e|$ is ``small",  the corresponding influence is  close to 0.

Given the set of  influences $\{\Inf_e\}_{e\in E(\G)}$, for every vertex $w\in V(\G)$, we define 
the {\em aggregate influence}  such that 
\begin{align}
\WA(w) &= \sum\nolimits_{{z \sim w }} \Inf_{\{w,z\}}\enspace.
\end{align}
The quantities $\WA(w)$'s play a key role in the analysis, as they are used in  our construction
of blocks.

The choice of parameters in Theorem \ref{thrm:MainResult}, 
implies that 
for each vertex $w$, we have that $\Exp[\WA(w)]=(1-\varepsilon)$,  
where the expectation is with respect to both the degree of $w$ and 
the couplings at its incident edges.

In Theorem \ref{theorem:TailBound4WA}, in the appendix, we show that  $\WA(w)$ is well-concentrated, i.e.,  
for any fixed  $\delta>0$, we have that 
\begin{align}
\Pr[\WA(w)> \Exp[\WA(w)]+\delta ] \leq \exp(-\Omega(d))\enspace.
\end{align}

In our analysis, we would have liked that each vertex in $\G$ has aggregate influence $<1$. 
From the above tail bound we expect to have many, but relatively rare, heavy vertices in  $\G$, i.e., vertices
with  aggregate influence $>1$. Note that, we typically have ${\rm poly}(n)$  many vertices each of aggregate influence as huge as $\Theta(\log^{4/3} n)$.

\subsection{Path Weights}

We introduce a weighting-scheme for the vertices of the graph $\G$
that uses $\WA(w)$'s.
There are parameters $d,\delta>0$ in the scheme, where  $d>0$ is a large number, 
while $\delta\in (0,1)$. Each vertex $w$ is assigned  weight $\cappedWA(w)$ defined by
\begin{align}\label{eq:expLambdaW}
\cappedWA(w) = \begin{cases}
                1-\delta/2  & \text{ if } \WA(w) \le 1-\delta,  \\
                d \cdot \WA(w) & \text{ otherwise. }
                \end{cases}
                \enspace 
\end{align}

Given the above  weights for the vertices, we define a weight for each path in $\G$. Specifically, the 
path  $P=(v_0, \ldots, v_\ell)$ is assigned weight $\cappedWA(P)$, defined by
\begin{equation*}
\cappedWA(P) = \prod_{i = 0}^{\ell} \cappedWA(v_{i})\enspace.
\end{equation*}
%
Subsequently, we introduce the notion of  {\em block-vertices} in $\G$. A vertex $w$ is called block-vertex if
every path  $P$ that emanates from $w$ is ``light", i.e., it has weight $\cappedWA(P) < 1$. 
Intuitively, $w$ being a block-vertex implies that $\WA(w)<1$, while  
every heavy vertex $v$, i.e., having $\WA(v)>1$, needs to be far from $w$.

For the range of the parameters we consider here, it turns out that there is a plethora
of block-vertices in $\G$. Specifically, we show the following result (Theorem~\ref{thm:AllPathsGood}, in the appendix).
Let $\cP$ be the set of paths   $P$ in $\G$ of  length $|P| =\frac{\log n}{\sqrt{d}}$, 
such that  there is no block-vertex in $P$. Then, we have that 
\begin{align}\label{eq:ExpositionShortPathsWithBreaks}
\Pr[\cP \ \textrm{is empty}]= 1- o(1)\enspace. 
\end{align}
Establishing the above, is one of the main technical challenges in this paper, as  the quantities we consider are inherently 
quite involved.

Below, we show how we use  \eqref{eq:ExpositionShortPathsWithBreaks} for the block construction.

\subsection{Block construction}\label{sec:highlevelBocks}

The aim is to obtain a block partition  $\cB=\{B_1, B_2, \ldots, B_N\}$ 
such that each $B\in \cB$  is small,  simply structured, while $\partial_{\rm out}B$, 
the outer boundary of $B$, consists  exclusively of block vertices. Note that $\partial_{\rm out}B$
is the set of vertices outside $B$ which have a neighbour inside the block. 

Let us give a high-level description of how $\cB$ looks like.
%
Recall that, typically, $\G$ is locally tree-like,  however, there are some relatively rare  short cycles, i.e., cycles of length less than 
$4\frac{\log n}{\log^4 d}$, 
which are far apart from each other.  

Each block $B\in \cB$ can be one of the following 
\begin{enumerate}
\item single vertex, 
\item a tree,
\item a unicyclic graph. 
\end{enumerate}
If $B$ consists of a single vertex, then this vertex must be a block-vertex. If $B$ is  multi-vertex, then $\partial_{\rm out}B$ consists of block-vertices.
This, somehow, guarantees that 
the heavy vertices are hidden  deep inside the blocks.   
%
The unicyclic blocks contain only  short cycles.

Intuitively,   \eqref{eq:ExpositionShortPathsWithBreaks}  guarantees 
that the blocks are not extensive structures.  Note that,  
for a heavy vertex $w$, we can reach the boundary of its block, by following any
path of length as small as  $\frac{\log n}{\sqrt{d}}$ that emanates from $w$.

Compared to the blocks we  have in \cite{EfthymiouDA14,efthymiou2018sampling} 
the ones we get  here are quite different. Note that the actual structure of the blocks depends on both
 graph $\G$ and the EA model $\mu_{\G, \bm{\vecJ}, \beta}$ on this graph. Furthermore, 
here, it is typical to see single vertex blocks consisting of a high-degree vertex, i.e., degree $\gg d$,  or having multi-vertex blocks whose  vertices are all low-degree ones, i.e., degree $\approx d$.  Of course, this has to do with the fact  that the aggregate influence is a different quantity  than the degree of a vertex.

\subsection{Rapid Mixing of Block Dynamics}\label{sec:MixingBlockDynamics}

Suppose that  we have a  block partition $\cB=\{B_1, \ldots, B_N\}$  as the one we describe in Section \ref{sec:highlevelBocks}. 

We  consider the {\em block dynamics} $(X_t)_{t\geq 0}$  with respect to the set of
blocks $\cB$.
The transition from  $X_t$ to  $X_{t+1}$  is  according to the  following steps: 
\begin{enumerate}
\item Choose uniformly at random a block  $B\in \cB$.
\item For every vertex $w$ outside  $B$, set $X_{t+1}(w)=X_t(w)$.
\item Draw $X_{t+1}(B)$, the configuration at $B$,  according to the marginal of $\mu$ at $B$, 
conditional on  the vertices in $\partial_{\rm out} B$ having the configuration  specified by $X_{t+1}$.
\end{enumerate}

At this stage, the goal is to show that $(X_t)_{t\geq 0}$ 
exhibits mixing time $\Tmix$ such that
\begin{align}\label{eq:ExpTarget4BlockDyn}
\Tmix &=O(N\log N)\enspace, 
\end{align}
where $N$ is the number of blocks in $\cB$.

We  use path coupling \cite{bubley1997path} to show~\eqref{eq:ExpTarget4BlockDyn}.
That is, we consider $(X_t)_{t\geq 0}$ and $(Y_t)_{t\geq 0}$, two copies of the block dynamics.
Assume that  at some time $t\geq 0$,  the configurations $X_t$ and $Y_t$ differ at a single vertex, $u^*$.
It suffices to show that  we have  one-step contraction i.e., we
can couple the two copies of the block dynamics such that 
the expected distance between $X_{t+1}$ and $Y_{t+1}$ is smaller than that of
$X_t$ and $Y_t$.

Typically, we establish contraction with respect to the {\em Hamming  metric} between  two configurations. It turns out that, for block dynamics, this metric is suboptimal. 
In contrast to the single-site dynamics, 
when we update a block $B$ that is adjacent to the disagreeing vertex $u^*$, the number of disagreements  grows by the size of $B$. 
For this reason, we follow an analysis for  path coupling  which adapts to the setting of block dynamics.

For each vertex $z$, we write $B_z$ for the block that $z$ belongs to.  Furthermore, let 
\begin{align}\label{eq:def:OfBOut}
\WA_{\rm out}(z)=\sum_{\substack{z\sim w \\ w\notin B_z}} \Inf_{\{z,w\}}\enspace. 
\end{align}
That is, $\WA_{\rm out}(z)$ is the sum of influences over the edges that 
connect  $z$ with its neighbours  outside the block $B_z$. 
Typically, if vertex $z$ has at least one neighrbour outside $B_z$, then
$\WA_{\rm out}(z)>n^{-7/3}$. 

We let $\impV \subseteq V$  be the set that of vertices   $z$ such that   
$\WA_{\rm out}(z)>0$. We call $\impV$ the set of {\em external} vertices. 
On the other hand, we call {\em internal} all the vertices in  
$V\setminus \impV$. Note that the internal vertices have no neighbours outside their block and hence $\WA_{\rm out}(z)=0$.

For the path coupling, we introduce the following distance metric
for any  two $\sigma,\tau\in \{\pm 1\}^V$
\begin{align*}
\EdgeBlockWeight(\sigma,\tau) &= \sum_{z\in V\setminus \impV}\Ind\{z\in \sigma \oplus \tau\}+
n^4{}\cdot \sum_{z\in \impV} \WA_{\rm out}(z)  \cdot \Ind\{z\in \sigma \oplus \tau\}\enspace, 
\end{align*}
where $\sigma\oplus \tau$ is the set of vertices $w\in V$ that the two configurations disagree, i.e., 
it consists of the vertices $w$ such that $\sigma(w)\neq \tau(w)$. 

The above metric assigns completely different weights to the internal and external vertices, respectively.   If some vertex $z$ is internal, 
then its disagreement  gets  (tiny) {weight} $1$.  On the other hand, if $z$ is external, its disagreement
gets  {weight} which is equal to  $n^4\times \WA_{\rm out}(z)\gg 1$. 
Particularly, we have that $n^4\times \WA_{\rm out}(z)=\Omega(n^{4/3})$, for all external vertices $z$.

The above metric essentially captures  that  the disagreements that do matter in the path coupling analysis, are those 
which involve vertices  at the boundary of blocks, i.e., external vertices. 
In particular,  the ``potential" for an external vertex  to spread 
disagreements to adjacent blocks increases with  $\WA_{\rm out}(z)$.
Let us remark that this observation was  first introduced and exploited in \cite{efthymiou2018sampling} in an analogous  setting.

In the path coupling analysis, we also   exploit  properties of the block partition $\cB$. Particularly, we use that the block vertices  are  far from the heavy ones, i.e., the vertices that have aggregate  influence larger than $1$. 
A heavy vertex in  block  $B$, once it becomes disagreeing,   tends to create higher than typical number of new disagreements. This is highly undesirable.
Having a large distance  between the heavy vertices inside $B$ and the boundary
 $\partial_{\rm out} B$
implies that the probability of a heavy vertex becoming disagreeing is very small. As a consequence,  the overall expected contribution
of the heavy vertices  becomes negligible.

In light  of all the above, we conclude that
there is a constant $C>0$ such that 
for any vertex $u^*\in V$, for any  pair of configurations $X_{t}, Y_{t}$ that differ on  $u^*$  
there is a coupling  such that 
\begin{align}
\Exp \left[ \left. 
\EdgeBlockWeight(X_{t+1}, Y_{t+1}) \  \right | \ 
 X_t, Y_t
\right ]
\leq   (1-C/N) \cdot \EdgeBlockWeight (X_{t}, Y_{t})\enspace. 
\end{align}
Using path coupling, and arguing  that $N=\Omega(\sqrt{n})$, it is standard   
to obtain \eqref{eq:ExpTarget4BlockDyn} from  the above inequality.
For further details, see Theorem \ref{thrm:RapidMixingBlockDyn}, in the appendix.

\subsection{Rapid Mixing for Glauber Dynamics - Comparison}

This part is a bit technical. Recall that our aim is to obtain a bound on the mixing time for the 
single-site Glauber dynamics, whereas  so far we only have a bound for the mixing time of the block-dynamics. 
To this end, we utilise  a well-known comparison argument form \cite{martinelli1999lectures}, 
which  relates the relaxation times of the Glauber dynamics and the block dynamics. 
Recall that the relaxation time of a Markov chain with transition matrix $P$  is equal to 
$\frac{1}{1-\lambda^*}$, where $\lambda^*$ is the second  largest eigenvalue in magnitude of $P$.

Letting  $\uptau_{\rm rel}$ and  $\uptau_{\rm block}$  be the relaxation times of the Glauber dynamics and the block
dynamics, respectively, from  \cite{martinelli1999lectures}, we obtain that 
\begin{align}\label{eq:ComparisonHighLevel}
\uptau_{\rm rel} &\leq \textstyle \uptau_{\rm block} \cdot \left ( \max_{B\in \cB}\{ \uptau_B \} \right )  \enspace,
\end{align}
where   $\uptau_B$ is  the relaxation time of the (single site) Glauber dynamics on each block
$B\in \cB$, under worst-case condition $\sigma$ at the boundary $\partial_{\rm out}B$.

From the rapid mixing result of block dynamics it is standard to obtain a bound on $\uptau_{\rm block}$,
hence, it remains to get a bound on $\uptau_B$, for all $B\in \cB$.
Our  endeavour  to bound $\uptau_B$ gives rise to a new weight over the paths in $\G=\G(n,d/n)$. 
Specifically, for a path $P$ in $\G$, we define the weight
\begin{align}
\compWeight(P)& ={\textstyle \beta \sum_{e}}  |J_{e}| + {\textstyle \sum_{v}} \log\degr(v)\enspace.
\end{align}
In the first sum, which involves the couplings,  the variable $e$ varies over all edges having at least one endpoint in $P$. 
The second sum varies over all the vertices in $P$.

 Building on a recursive argument from  \cite{efthymiou2018sampling,mossel2010gibbs},  
 for every tree-like block $B$ rooted at vertex $v$, we show 
\begin{align}
\uptau_B &\leq \textstyle \exp\left( \max_{P}\{ {\compWeight}(P)\}\right) 
\enspace, 
\end{align}
where the maximum is  over all the paths $P$ in $B$  from the root $v$ 
to $\partial_{\rm out}B$.
Note that such a path $P$ is at most $\frac{\log n}{\sqrt{d}}$ long. 
We also  get a similar bound for the relaxation time when $B$ is  unicyclic.

We show that with probability $1-o(1)$ over the instances of $\G$ and $\mu$,    every path $P$ in $\G$ of length at most $\frac{\log n}{\log^4 d}$, 
satisfies $\compWeight(P) \leq \frac{\log n}{\log^2 d}$. 
This implies that for every block $B$ we have that
\begin{align}
\uptau_B &\leq n^{\frac{3}{\log^2 d}} 
\enspace. 
\end{align}
As mentioned earlier, it is standard to obtain an estimate for $\uptau_{\rm block}$ from our rapid mixing results for 
the block-dynamics. Hence,  plugging $\uptau_B$'s and $\uptau_{\rm block}$ into  \eqref{eq:ComparisonHighLevel}, gives the desired bound  on  $\uptau_{\rm rel}$, the relaxation time for Glauber dynamics.  From this point on, it is standard  to get   Theorem \ref{thrm:MainResult}. 
For further details, see Section \ref{sec:Comparison}, in the appendix.

\section{Basic Notions}\label{sec:BasicNotions}

\subsubsection*{Gibbs distribution \& Influences:}

Even though most of the notions we describe below have been already introduced earlier, we present them here in full detail and formality.

We use the triplet $(G,\vecJ, \beta)$, where $G=(V, E)$ is a graph, $\vecJ=\{J_e \in \mathbb{R}:e\in E\}$, and $\beta\in \mathbb{R}_{> 0}$, to represent the ``glassy" Gibbs distribution $\mu=\mu_{G,\vecJ, \beta}$ on $\{\pm 1\}^{V}$ defined by 
\begin{align}\label{eq:DefOfGlassyMu}
\mu(\sigma) 
&\propto 
\exp\left( \beta \cdot \sum_{x\sim y} {\Ind}\{\sigma(y)=\sigma(x)\} \cdot J_{\{x,y\}} \right), & \forall \sigma \in \{\pm 1\}^{V}\enspace.
\end{align}
Clearly, the above gives rise to the EA model on $G$ once we take the couplings on the edges to be i.i.d. standard Gaussians.

The above triplet specifies  a set of {\em influences} over the edges of $G$. That is, for every $e\in E$ we define the {\em edge-influence} $\Inf_e$ such that
\begin{align}\label{eq:DefOFInf}
\Inf_e&= \frac{\left|1-\exp\left( \beta J_e  \right) \right|}{1+\exp\left( \beta J_{e}\right) }\enspace.
\end{align}
This is the standard influence we encounter in the context of Spectral Independence, e.g., see \cite{ChenLVStoc21}.

The following argument concerning influences is standard. Consider two adjacent vertices $u$ and $w$ in $G$, and let $\sigma, \tau$ be any two configurations at the neighbours of $w$ disagreeing only on at $u$. Writing $e=\{u,w\}$ for the edge between $w$ and $u$,  we have that
\begin{align}\label{eq:OneStepDisVsInfluence}
|| \mu^{\sigma}_w-\mu^{\tau}_w ||_{\rm TV}\leq \Inf_e\enspace,
\end{align}
where $\mu^{\sigma}_w$ is the marginal of $\mu$ at $w$ conditional on the configuration $\sigma$, and similarly for $\mu^{\tau}_w$.

\subsubsection*{Aggregate Influences:} For each vertex $u\in V$ we define the aggregate influence $\WA(u)$ such that
\begin{align}\label{eq:DefOfVPotential}
\WA(u)= \sum_{{w\sim u }} \Inf_{\{u,w\}}\enspace.
\end{align}
Consider now the random graph $\G(n,d/n)$ and the EA model on this graph with inverse temperature $\beta$. For a vertex $u$ in this graph, the corresponding  quantity, $\WA(u)$, turns out to be  an  involved random variable. 
The number of edges incident to $u$, hence the number of summads in $\WA(u)$, is Binomially distributed, i.e., ${\tt Bin}(n,d/n)$, while for each edge $e$ incident to $u$ the influence is also a random variable, i.e., for each $e$, we have $\bInf_e= \frac{\left|1-\exp\left( \beta \bJ_e  \right) \right|}{1+\exp\left( \beta \bJ_{e}\right)}$, where $\bJ_e$ is a  standard Gaussian.

There is a natural connection between the aggregate influence and the condition for the inverse temperature $\beta \le (1-\varepsilon) \beta_c(d)$ we have in Theorem \ref{thrm:MainResult}. Specifically, for any such $\beta$ we have that
\begin{align*}
\Exp[\WA(u)] \leq 1-\varepsilon\enspace. 
\end{align*}
In the analysis we establish the following tail bound for $\WA(u)$ establishing this random variable is well-concentrated.

\begin{theorem}\label{theorem:TailBound4WA}
For $\varepsilon>0$, there exists $d_0=d_0(\varepsilon) \ge 1$, such that for all $d\ge d_0$ and 
$0<\beta \le (1-\varepsilon)\beta_c(d)$, the following is true:

Consider $\G=\G(n,d/n)$ and the EA model on this graph at inverse temperature $\beta$. 
For a  fixed vertex $u$  in $\G$, we have that
\begin{align}\label{eq:WeightAVertexTailBound}
\Pr\left[\WA(u) \geq {\Exp[\WA(u)]+\frac{\varepsilon}{2}} \right]
\le \exp\left(-\frac{\varepsilon^4 }{8\pi}\cdot d\right)
\enspace.
\end{align}
\end{theorem}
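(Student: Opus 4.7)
The plan is to express $\WA(u)$ as a sum of $n-1$ independent random variables indexed by candidate neighbors and apply Chernoff, with each summand's moment generating function controlled by combining its Bernoulli edge-indicator with Gaussian Lipschitz concentration for $|\tanh(\beta \bJ/2)|$. Concretely, for each $w \neq u$ I would set $\xi_w = \Ind\{w \sim u\} \sim \mathrm{Bern}(d/n)$ and $X_w = |\tanh(\beta \bJ_{\{u,w\}}/2)|$, so that all $\xi_w,\bJ_{\{u,w\}}$ are mutually independent and $\WA(u) = \sum_{w \neq u} \xi_w X_w$. Writing $\mu := \Exp[X_w]$, the pointwise bound $|\tanh(\beta x/2)| \le \beta|x|/2$ together with $\Exp|\bJ| = \sqrt{2/\pi}$ gives $\mu \le \beta/\sqrt{2\pi} \le (1-\varepsilon)/d$, matching the informal computation of $\Exp[\WA(u)] \le 1-\varepsilon$ quoted in the paper.

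The key ingredient is that $x \mapsto |\tanh(\beta x/2)|$ is $(\beta/2)$-Lipschitz, so the Borell--TIS Gaussian Lipschitz concentration yields $\Exp[e^{s(X_w-\mu)}] \le e^{s^2\beta^2/8}$ for every $s \in \mathbb{R}$. Combining this with the Bernoulli factor, using $1+x \le e^x$, and taking a product over $w$ gives
\[
\Exp\!\left[e^{s(\WA(u)-\Exp\WA(u))}\right] \;\le\; \exp\!\Bigl(d\bigl[e^{s\mu + s^2\beta^2/8}-1-s\mu\bigr]\Bigr).
\]
For $s$ small enough that $s\mu + s^2\beta^2/8 \le 1$, a Taylor expansion bounds the right-hand side by the sub-Gaussian form $\exp(s^2 v/2)$ with $v \le d(\beta^2/4 + 2\mu^2)(1+o_d(1))$. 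Plugging in $\beta \le (1-\varepsilon)\sqrt{2\pi}/d$ and $\mu \le (1-\varepsilon)/d$ gives $v \le (1-\varepsilon)^2(\pi+4)/(2d)\cdot(1+o_d(1))$.

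Chernoff at $t = \varepsilon/2$ then produces $\Pr[\WA(u)-\Exp\WA(u) \ge \varepsilon/2] \le \exp(-t^2/(2v)) \le \exp\bigl(-\varepsilon^2 d/(4(1-\varepsilon)^2(\pi+4))\cdot(1-o_d(1))\bigr)$, and to match the stated bound it suffices to verify the elementary inequality $\varepsilon^2(1-\varepsilon)^2(\pi+4) \le 2\pi$, which holds for all $\varepsilon \in (0,1)$ since the left-hand side is at most $(\pi+4)/16 < 2\pi$; the $o_d(1)$ slack is absorbed by the choice of $d_0(\varepsilon)$. The main obstacle is controlling the Taylor remainder in the MGF bound uniformly over the relevant range of $s$: at the optimal Chernoff choice $s^* = \Theta_\varepsilon(d)$ one has $s^*\mu = O_\varepsilon(1)$ and $(s^*\beta)^2 = O_\varepsilon(1)$, so the argument of the MGF exponential stays bounded for each fixed $\varepsilon < 1$ and the quadratic approximation is valid with only a constant-factor loss. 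This loss, together with the slack in $\mu \le \beta/\sqrt{2\pi}$, is what accounts for the gap between the stated $\exp(-\varepsilon^4 d/(8\pi))$ and the sharper tail $\approx \exp(-\varepsilon^2 d/(4\pi))$ that one would guess from the CLT-scale variance of $\WA(u)$.
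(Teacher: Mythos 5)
Your proof takes a genuinely different route from the paper's. The paper first conditions on the vertex degree, splitting the tail into the event $\degr(u)\le (1+\eta)d$ and its complement; on the first event it bounds $\WA(u)\le \sum_e\frac{\beta}{2}|\bJ_e|$ pointwise via $|\tanh x|\le|x|$ and invokes a dedicated tail bound for a sum of a \emph{fixed number} of half-normals (their Appendix~A), while the complementary event is dispatched by a Chernoff bound on the binomial degree, and $\theta=\eta=\varepsilon/2$ are then tuned. You instead leave the degree unconditioned, decompose $\WA(u)=\sum_{w\ne u}\xi_w X_w$ into independent Bernoulli--Gaussian products over all $n-1$ candidate neighbours, obtain a sub-Gaussian moment generating function bound for each $X_w$ via the $(\beta/2)$-Lipschitz property of $|\tanh(\beta x/2)|$ and Gaussian (Borell--TIS) concentration, fold in the Bernoulli factor with $1+x\le e^x$, and run a single Chernoff argument on the product. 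The paper's route buys an exactly computable moment generating function (sums of half-normals) at the cost of a degree-conditioning case split and a separate appendix lemma; your route is a single unified computation that also generalizes immediately to sub-Gaussian couplings (matching the remark after Theorem~\ref{thrm:MainResult}), at the cost of an extra Taylor-expansion step.

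The one place you have left a real gap is the reduction from the MGF bound $\exp\!\bigl(d[e^{s\mu+s^2\beta^2/8}-1-s\mu]\bigr)$ to the clean sub-Gaussian form $\exp(s^2 v/2)$ with $v\le d(\beta^2/4+2\mu^2)(1+o_d(1))$. Setting $a(s)=s\mu+s^2\beta^2/8$ and expanding $e^{a}-1-s\mu=(e^a-1-a)+s^2\beta^2/8$, the term $e^a-1-a$ is controlled by $O(a^2)$ only for $a=O(1)$, and $a^2=(s\mu+s^2\beta^2/8)^2$ contains a quartic-in-$s$ piece $s^4\beta^4/64$. At the scale $s=\Theta_\varepsilon(d)$ this quartic contribution is itself $\Theta_\varepsilon(d)$ --- it is \emph{not} $o_d(1)$ relative to the main term, contrary to the parenthetical $(1+o_d(1))$; rather it is a constant factor depending on the ratio $s/d$ and on $\varepsilon$. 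You do acknowledge this in the last paragraph (``the quadratic approximation is valid with only a constant-factor loss''), but you then verify the elementary inequality $\varepsilon^2(1-\varepsilon)^2(\pi+4)\le 2\pi$ against a value of $v$ that \emph{omits} that constant factor. Moreover, for $\varepsilon$ near $1$ the nominal Chernoff optimizer $s^*=t/v=\varepsilon d/((1-\varepsilon)^2(\pi+4))$ makes $s^*\mu\approx\varepsilon/((1-\varepsilon)(\pi+4))$ blow up, so $a(s^*)>1$ and the Taylor regime is left entirely; one has to pick a smaller, $\varepsilon$-dependent $s$ and re-optimize. None of this is fatal --- the slack between $\varepsilon^2/\mathrm{const}$ and $\varepsilon^4/(8\pi)$ is large (a factor of order ten even at the worst $\varepsilon$), so a careful choice of $s=s(\varepsilon)d$ does land inside the claimed bound --- but the step as written is not a proof. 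To close it, either keep the raw form $\exp\!\bigl(-st+d[e^{a(s)}-1-s\mu]\bigr)$ and directly exhibit an $s=c(\varepsilon)d$ for which the exponent is $\le -\varepsilon^4 d/(8\pi)$, or justify the quadratic MGF bound with an explicit, $\varepsilon$-uniform constant in place of $(1+o_d(1))$.
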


A crucial corollary from the above result is that,  for the range of the parameters we consider in Theorem \ref{thrm:MainResult},  the probability of having $\WA(u)>1$ is exponentially small in $d$.

\subsubsection*{Weighting-Scheme:}
As before, consider the triplet $(G,\vecJ, \beta)$, for a fixed graph $G=(V,E)$, couplings  $\vecJ=\{J_e \in \mathbb{R}\ : \ e\in E\}$, and $\beta > 0$. Also, consider the distribution $\mu=\mu_{G,\vecJ, \beta}$,  i.e., the Gibbs distribution defined with respect to $(G,\vecJ, \beta)$.

We use the $\WA(u)$'s induced by $\vecJ$ to introduce a weight scheme for both the vertices and the paths of $G$. Let us start with the weights for the vertices. 

\begin{definition}[$(d,\varepsilon)$ Vertex-Weight]\label{def:VertexWeights}
For parameters $\varepsilon>0$ and $d>0$, and for any $u\in V$ define  the \emph{vertex weight}
\begin{align}
\cappedWA(u) &= \begin{cases}
                {1-\frac{\varepsilon}{4}}  & \text{ if } \WA(u) \le {1-\frac{\varepsilon}{2},}  \\
                d\cdot\WA(u) & \text{ otherwise. }
                \end{cases}
                \enspace 
\end{align}
\end{definition}

Then, having defined the weights $\cappedWA(u)$, we define the weight of paths in~$G$. 

\begin{definition}[$(d,\varepsilon)$ {Path}-Weight]\label{def:WightOfPath}
For parameters $\varepsilon>0$ and $d>0$, and for any path  $P=(w_0, \ldots, w_\ell)$ in $G$,  we define the path weight
\begin{align*}
\cappedWA(P) &=  \prod\nolimits_{i = 0}^{\ell} \cappedWA(w_{i})\enspace.
\end{align*}
\end{definition}

That is, the weight of a path, is obtained by multiplying the weights of its vertices. 
Finally, we have the notion of block vertex, which plays a key role in the construction of the block-partition in Section~\ref{sec:StructuralProporties}. 

\begin{definition}[$(d,\varepsilon)$-Block Vertex]\label{def:BlockVertex}
For  $\varepsilon, d>0$,  a vertex $u$ in $G$ is called  $(d,\varepsilon)$-{\em block vertex}, if for every path $P$ of length at most $\log n$ emanating  from $u$,  the $(d, \varepsilon)$-weight of $P$ is less than 1, i.e.,  $\cappedWA(P) < 1$. 
\end{definition}

\cref{def:WightOfPath,def:BlockVertex} make it apparent that every path $P$ connecting a heavy vertex $w$ and a block vertex $u$, must contain a large ``buffer" of light vertices. We exploit this property of the block vertices in the block construction that follows.

\section{Block Partition \& Other Structural properties}\label{sec:StructuralProporties}

In order to show  \cref{thrm:MainResult},  rather than considering (single site) Glauber dynamics, 
firstly, we  consider  {\em block dynamics}.  Specifically, we show that  there is an appropriately 
constructed  block-partition  $\cB$ of the set of   vertices  such that the corresponding 
block dynamics mixes fast. 
Subsequently, we utilise a comparison argument to show that the above result implies  rapid mixing 
for the Glauber dynamics, as well. 

The block partition $\cB$ is specified  with respect to the weights we defined in the previous section. 
Hence, $\cB$ depends on $\G=\G(n,d/n)$ and the specification of the Edwards-Anderson model on this graph. 
In what follows, we describe in full detail how the  set of blocks $\cB$  looks like. Furthermore, 
we show that the typical instances of   $\G$ and $\mu$, the Edwards-Anderson model on $\G$ at inverse temperature $\beta$
as specified in Theorem \ref{thrm:MainResult},  admit such a partition.

\begin{definition}[$(d,\varepsilon)$-Block Partition]\label{def:BlockDecomp}
For $\varepsilon, d>0$,    
 the vertex partition $\mathcal{B}=\{B_1, \ldots, B_{N}\}$ is called 
\emph{$(d,\varepsilon)$-block partition} if 
 for every $B\in \cB$  the following is true:
\begin{enumerate}
\item block $B$ is a tree with at most one extra edge, \label{itm:BPunicyclic}
\item  if $B$ is a multi-vertex block, then we have the following:
\begin{enumerate}
\item every $w\in \partial_{\rm out}B$ is a $(d,\varepsilon)$-block vertex, \label{itm:BPblockBoundary}
\item  every $w\in \partial_{\rm out}B$ has exactly one neighbour in $B$, \label{itm:BPSingleNeighBoundary}
\item  if $B$ contains a cycle, this is a short one, i.e.,  its length is at most $4\frac{\log n}{\log^4 d}$,   \label{itm:BPShortCycle}
\item the distance of the short cycle from the boundary of its block is at least $\log^5d$.
\label{itm:BuffCond}
\end{enumerate}
\item if $B$ is a single-vertex block, then this vertex is $(d,\varepsilon)$-block.
\end{enumerate}
\end{definition}

We would like to stress here that the blocks in $\cB$ are really simply structured, i.e., these are trees or unicyclic graphs.

Apart from the weight on paths introduced in Definition \ref{def:WightOfPath}, we consider yet another weight for the paths. In particular, for a path $P$, we define 
\begin{align}\label{def:Weitgh4Comparison}
\compWeight(P)& = \beta \sum_{e } |J_{e}| + \sum_{v} \log\degr(v)\enspace,
\end{align}
where the first sum varies over all edges $e$ in $G$ having least one endpoint in $P$, while the second sum varies over all vertices of $P$. 
As we discuss later,  this weight arises in our comparison  argument.

For any $d,\varepsilon>0$, for $0<\beta \le (1-\varepsilon)\beta_c(d)$,  
we define $\IntrstGraphFam (d, \varepsilon)$ to be the set of  triplets $(G, {\vecJ} , \beta)$ possessing the following  properties:
\begin{enumerate}
\item \label{itm:propty1} $G$ admits a $(d, \varepsilon)$-block partition,
\item \label{itm:propty2}every path $P$ in $G$ of length $|P| \le \frac{\log n}{\log^4 d}$, satisfies $
\compWeight(P) \leq {\frac{\log n}{\log^2 d}}
$,
\item \label{itm:propty3} every edge $e$ in $G$ satisfies $n^{-7/3}\leq |J_e| \le 10\sqrt{\log n}$.
\end{enumerate}
In light of all the above,   we prove the following result.

\begin{theorem}\label{thrm:BlockParitionGnp}
For $\varepsilon>0$, there exists $d_{0}=d_0(\varepsilon) \ge 1$ such that for any
$d\ge d_0$, and for any $0<\beta \le (1-\varepsilon)\beta_c(d)$,
 the following is true:

For $\G=\G(n,d/n)$, and for $\bm{\vecJ} = \{\bJ_e :e \in E(\G)\}$ a family of i.i.d. {\em standard Gaussians}
we have 
\begin{align*}
\Pr\left[(\G,\bm{\vecJ}, \beta)\in \IntrstGraphFam(d,\varepsilon) \ \right] \ge 1-n^{-{1/4}}
\enspace.
\end{align*}
\end{theorem}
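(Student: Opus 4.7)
The plan is to treat the three defining properties of $\IntrstGraphFam(d,\varepsilon)$ separately, show that each fails with probability $o(n^{-1/4})$, and then conclude by a union bound. Property \ref{itm:propty3}, the uniform control of $|\bJ_e|$, is immediate: the upper tail $\Pr[|\bJ_e|>10\sqrt{\log n}]\le 2n^{-50}$ and the small-ball bound $\Pr[|\bJ_e|<n^{-7/3}]=O(n^{-7/3})$ (the Gaussian density at the origin is bounded by $1/\sqrt{2\pi}$), combined with a Chernoff bound that $\G$ has $O(n)$ edges, together yield a failure probability of $O(n^{-4/3})$.

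For property \ref{itm:propty2}, I would fix a path $P$ of length $\ell \le \frac{\log n}{\log^4 d}$ in $\G$. Standard moment estimates on $\G(n,d/n)$ give that with probability $1-n^{-\omega(1)}$ the maximum degree is $O(\log n / \log\log n)$, so the number of edges incident to any such $P$ is at most $O(\ell\log n/\log\log n)$. The coupling sum $\beta\sum_e |\bJ_e|$ is then a sum of that many folded standard Gaussians, so Bernstein's inequality gives exponential concentration around its mean $O(\beta \ell d)=O(\ell)$; since $\beta=O(1/d)$. The vertex sum $\sum_{v\in P}\log\degr(v) = O(\ell \log\log n)$ is deterministic once the degree bound holds. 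Both contributions are easily dominated by $\frac{\log n}{\log^2 d}$. A union bound over the at most $n\cdot(\log n)^{O(\ell)}=n^{1+o(1)}$ candidate paths in $\G$ absorbs the per-path failure.

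The main content is property \ref{itm:propty1}, the existence of a $(d,\varepsilon)$-block partition. Here I would lean heavily on equation \eqref{eq:ExpositionShortPathsWithBreaks}, i.e.\ \cref{thm:AllPathsGood}, which states that with probability $1-o(1)$ no path of length $\frac{\log n}{\sqrt{d}}$ in $\G$ avoids all $(d,\varepsilon)$-block vertices, together with the classical fact that $\G(n,d/n)$ is locally tree-like: with probability $1-n^{-\Omega(1)}$, every ball of radius $\frac{\log n}{\log^4 d}$ contains at most one cycle, of length at most $4\frac{\log n}{\log^4 d}$. The construction I have in mind is as follows. Let $H$ be the set of non-block vertices, and form the connected components of $\G[H]$. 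For each such component $C$, grow the block $B(C)$ outward by a breadth-first search in $\G$, stopping as soon as the current outer boundary consists entirely of $(d,\varepsilon)$-block vertices; the block-vertex-density property forces this BFS to terminate within $\frac{\log n}{\sqrt{d}}$ steps. Any remaining block-vertex is placed in its own singleton block. Local tree-likeness combined with this small radius forces each multi-vertex block to be a tree or a unicyclic graph (condition \ref{itm:BPunicyclic}). Condition \ref{itm:BPSingleNeighBoundary} holds because two edges from an external boundary vertex into the same block would close a short cycle near the boundary. Condition \ref{itm:BuffCond} forces any short cycle inside a block to sit at distance at least $\log^5 d$ from the boundary; if this failed one could exhibit a block-vertex-free path of length less than $\frac{\log n}{\sqrt{d}}$ crossing the cycle, contradicting \cref{thm:AllPathsGood}.

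The main obstacle I anticipate is the compatibility of the BFS-based block growth with the cycle-buffer condition \ref{itm:BuffCond}: when a short cycle lies close to the non-block core $H$, one must enlarge the block to absorb a full $\log^5 d$-neighbourhood of that cycle before terminating, and then re-verify that the new boundary is still made of block-vertices. This is where the exponential-in-$d$ rarity of heavy vertices from \cref{theorem:TailBound4WA} becomes crucial: conditional on a short cycle being present, the nearest heavy vertex is typically much farther than $\log^5 d$ away, so pushing out the buffer does not accidentally pull a neighbouring heavy cluster into the block. Once each of the three properties has failure probability $o(n^{-1/4})$, a final union bound completes the proof of \cref{thrm:BlockParitionGnp}.
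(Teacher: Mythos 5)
Your overall plan --- handle the three defining properties of $\IntrstGraphFam(d,\varepsilon)$ separately and union-bound --- is exactly the paper's, and your treatment of property~\ref{itm:propty3} is fine (even slightly tighter than the paper's, which wastefully sums over $\binom{n}{2}$ potential edges). The reliance on \cref{thm:AllPathsGood} for property~\ref{itm:propty1} is also the right key lemma, though your construction (components of the non-block core grown by BFS) differs in its bookkeeping from the paper's three-stage algorithm (short-cycle blocks, then heavy-vertex tree blocks, then singletons), and you are right that the cycle-buffer condition~\ref{itm:BuffCond} is the delicate point --- the paper's \cref{eq:CHeck1Alg} handles it by checking directly that every path of length $\ge 2\log n/\sqrt d$ from the $\log^5 d$-ball of a short cycle hits a block vertex.

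However, your argument for property~\ref{itm:propty2} has real gaps that the max-degree bound cannot repair.
\begin{enumerate}
\item \textbf{The path count is super-polynomial.} You claim there are $n\cdot(\log n)^{O(\ell)}=n^{1+o(1)}$ paths of length $\ell=\frac{\log n}{\log^4 d}$, using $\Delta = O(\log n/\log\log n)$. But $(\log n)^\ell = \exp\bigl(\ell\log\log n\bigr) = n^{\frac{\log\log n}{\log^4 d}}$, and for fixed $d$ the exponent $\frac{\log\log n}{\log^4 d}\to\infty$. So the count is $n^{\omega(1)}$, and a union bound over this family does not close. The paper instead uses a first-moment count over vertex tuples: there are $\sim n^{\ell+1}$ candidate tuples, each a path with probability $(d/n)^\ell$, giving $\Exp[\#\text{paths}]\le 2n\cdot d^\ell = 2n^{1+\frac{1}{\log^3 d}}$, which is what the $n^{-d^{85/100}}$ per-path bound is measured against.
\item \textbf{The degree sum is not controlled by the max-degree bound.} You bound $\sum_{v\in P}\log\degr(v)=O(\ell\log\log n)$, but $\ell\log\log n = \frac{\log n\,\log\log n}{\log^4 d}$, which eventually \emph{exceeds} the target $\frac{\log n}{\log^2 d}$ once $\log\log n > \log^2 d$. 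The paper avoids this via the Chernoff event $\cE_P=\{|\One(P)|,|\Two(P)|\le 3d\ell\}$, which pins the total degree at $\sum_{v\in P}\degr(v)\le 9d\ell$ and then applies AM--GM to get $\sum_{v\in P}\log\degr(v)\le \ell\log(9d)$ --- an $O(\ell\log d)$ bound, with no $\log\log n$.
\item \textbf{The mean of the coupling sum is misidentified.} Conditioned on $\Delta=O(\log n/\log\log n)$ the number of edges incident to $P$ can be as large as $O(\ell\log n/\log\log n)$, so the mean of $\beta\sum_e|\bJ_e|$ is $O(\beta\,\ell\log n/\log\log n)$, not the $O(\beta\ell d)=O(\ell)$ you claim. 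To get the stated mean you would need to condition on the \emph{typical} edge count $\Theta(d\ell)$ and separately show that conditioning event has probability $1-n^{-d^{\Omega(1)}}$, which is exactly what the paper does with $\cE_P$ before invoking the half-normal tail bound.
\end{enumerate}
In short: for property~\ref{itm:propty2} the max-degree bound is both too weak (it cannot keep the degree sum below the target) and too coarse (it inflates the path count and the coupling-sum mean past what the per-path probability can absorb). The Chernoff-on-$|\One(P)|,|\Two(P)|$-plus-first-moment scheme of \cref{lem:SmallCompWeight} is not an alternative to your argument but a necessary replacement for it.
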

The proof of Theorem \ref{thrm:BlockParitionGnp} appears in Section \ref{sec:thrm:BlockParitionGnp}.

\section{Fast Mixing of Block Dynamics}\label{sec:BlockDynamicsMixingT}


For $\varepsilon>0$ and sufficiently large $d>0$, we consider a triplet $(G,\vecJ, \beta)\in \IntrstGraphFam(d,\varepsilon)$, and a $(d,\varepsilon)$-block 
partition $\cB$ obtained from this triplet. We also consider $(X_t)_{t\geq 0}$ the block-dynamics defined with respect to the Gibbs distribution 
$\mu=\mu_{G,\vecJ,G}$ and the set of blocks $\cB$. We show that $(X_t)_{t\geq 0}$  is optimally  mixing, i.e., the mixing time is $O(N\log N)$, where $N$ is the number of blocks in $\cB$.

Recall that $(X_t)_{t\geq 0}$ is a discrete-time Markov chain. 
We get $X_{t+1}$ from $X_t$ by choosing uniformly at random a block  $B_t\in \cB$ and updating the
configuration of $B_t$ according to the marginal of $\mu$ at this block, conditional on that the configuration 
outside $B_t$ being as specified by $X_{t}$.

\begin{theorem}\label{thrm:RapidMixingBlockDyn}
For $\varepsilon>0$, there exists $d_0=d_0(\varepsilon) \ge 1$, such that for all $d\ge d_0$, for
$0<\beta \le (1-\varepsilon)\beta_c(d)$ and any $(G, \vecJ, \beta) \in \IntrstGraphFam (d, \varepsilon)$, there exists $C>0$ such that
 the following is true:

Let $\cB$ a $(d,\varepsilon)$-block partition of $G$ and consider the Gibbs distribution $\mu=\mu_{G,\vecJ, \beta}$.
The block dynamics defined with respect to  $\mu$  and the set of blocks $\cB$, exhibits
mixing time 
\begin{align*}
 \Tmix\leq C\cdot N \log N \enspace, 
\end{align*}
where $N$ is the number of blocks in $\cB$.   
\end{theorem}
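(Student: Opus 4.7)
The plan is to apply the path-coupling method of Bubley--Dyer with the weighted metric $\distance$ from Section~\ref{sec:MixingBlockDynamics}. Since one can verify that $N=\Omega(\sqrt{n})$ and that the diameter of $\distance$ is $n^{O(1)}$ (using property~\ref{itm:propty3} to bound the weights $\WA_{\rm out}(z)$), it suffices to exhibit a one-step coupling such that whenever $X_t,Y_t$ disagree on a single vertex $u^*$,
\begin{align*}
\Exp[\distance(X_{t+1},Y_{t+1}) \mid X_t, Y_t] \le (1 - C/N)\,\distance(X_t, Y_t)
\end{align*}
for a universal constant $C>0$; contraction then yields $\Tmix=O(N\log N)$ in the standard way.

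The coupling picks the same block $B\in\cB$ in both chains and updates them via the optimal coupling of the conditional marginals. Since $X_t, Y_t$ agree outside $\{u^*\}$, these marginals coincide unless $u^* \in B\cup \partial_{\rm out}B$. If $u^*$ is internal then only $B=B_{u^*}$ matters, and picking it (probability $1/N$) removes the disagreement entirely while any other choice preserves the unit distance, giving contraction $(1-1/N)$ immediately. If $u^*$ is external, picking $B_{u^*}$ kills the weight $n^4\WA_{\rm out}(u^*)$, whereas picking any $B$ with $u^*\in\partial_{\rm out}B$ may create new disagreements inside $B$. By property~\ref{itm:BPSingleNeighBoundary}, $u^*$ has a unique neighbour $v=v(B)$ in $B$; iterating the one-step disagreement bound \eqref{eq:OneStepDisVsInfluence} through the tree (or near-tree, handled separately via properties \ref{itm:BPShortCycle}--\ref{itm:BuffCond}) controls the probability that $z\in B$ becomes disagreeing by $\sum_{P: v\to z}\prod_{e\in P}\Inf_e$.

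It then remains to show that the total expected new-disagreement weight across all blocks $B$ with $u^*\in\partial_{\rm out}B$ is strictly smaller than $n^4\WA_{\rm out}(u^*)$. This sum naturally factors as $\sum_B n^4\Inf_{\{u^*,v(B)\}}\cdot S_B$, where $S_B$ aggregates path-influences into $B$ weighted by $\WA_{\rm out}(\cdot)$, reducing matters to the uniform bound $S_B\le 1-\Omega(\varepsilon)$. The key ingredient is the $(d,\varepsilon)$-block-vertex property of $u^*$: every path of length $\le \log n$ from $u^*$ has $(d,\varepsilon)$-weight strictly below $1$, which via $\Inf_{\{x,y\}}\le \min\{\WA(x),\WA(y)\}$ converts into the required telescoping path-influence bound. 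I expect the main obstacle to be handling the heavy vertices that may sit inside $B$: such a vertex $v^*$ has $\WA(v^*)$ possibly as large as $\log^{4/3}n$, producing an inflated branching factor in the recursive coupling. However, the block-vertex constraint forces any path from $u^*$ to $v^*$ to be preceded by a long buffer of light vertices (each contributing $1-\varepsilon/4$ to $\cappedWA$), yielding a geometric damping factor strong enough to absorb the $d\cdot \WA(v^*)$ blow-up hidden in $\cappedWA(v^*)$; the unicyclic case uses the buffer condition \ref{itm:BuffCond} around the single cycle to treat it as essentially a tree. Assembling these estimates gives $S_B\le 1-\Omega(\varepsilon)$ uniformly and closes the drift bound.
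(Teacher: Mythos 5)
Your proposal is correct and follows essentially the same approach as the paper: path coupling with the weighted metric $\distance$, same-block synchronous updates, the internal/external split on the disagreeing vertex $u^*$, reduction to the per-block contraction bound $S_B\le 1-\Omega(\varepsilon)$ via the recursive propagation of influences through $B$, and use of the block-vertex property (buffer of light vertices plus the multiplicative path-weight $\cappedWA(P)<1$) to absorb the inflated branching at heavy vertices, with the unicyclic case handled by the buffer condition around the cycle. The paper organizes the last step slightly differently---splitting $B$ into the depth-$\log d$ ball $\BTree$ around the entry vertex (all light, giving the clean $(1-\varepsilon/2)$ factor by a direct telescope) and the remainder $\BRest$ (geometrically damped via Corollary~\ref{cor:HeavyWABound}), and tracks the factors as $\WA(w_i)-\WA_{\rm out}(w_i)$ rather than via $\Inf_{\{x,y\}}\le\min\{\WA(x),\WA(y)\}$---but these are presentational choices within the same argument, not a different route.
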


The proof of Theorem \ref{thrm:RapidMixingBlockDyn} appears in Section \ref{sec:thrm:RapidMixingBlockDyn}. 
Some remarks are in order. 
First, note that in the above theorem we do not need to have an instance of $\G(n,d/n)$,
or Gaussian couplings at the edges, i.e., it holds for every triplet in $\IntrstGraphFam$.

Second, even though we use block dynamics only as a tool for the analysis, it is worth mentioning that one can actually implement this dynamics efficiently. 
This is due to the fact that the blocks in $\cB$ are trees with at most one extra edge. Hence,  each step $t$ of
the dynamics  can be implemented in  $O(|B_t|)$ using {\em dynamic programming}, where $|B_t|$ is the number of vertices in the block $B_t$.

\section{Fast mixing for single site Glauber dynamics -  Proof of Theorem \ref{thrm:MainResult}}\label{sec:Comparison}

We use the  rapid mixing result from Section \ref{sec:BlockDynamicsMixingT} to upper bound the mixing time of the single site Glauber dynamics  by means of the following well-known comparison result from  \cite{martinelli1999lectures}. 
\begin{proposition}\label{prop:Comparison}
For the graph $G=(V,E)$,  let $(X_t)_{t\geq 0}$ be the   block dynamics, with set of blocks $\cR$, such that  each vertex $u \in V$ belongs to $M_u$ different blocks. 
Furthermore, let $(Y_t)_{t\geq 0}$ be the  {\em single site} dynamics on $G$.

Let $\uptau_{\rm block}$ and $\uptau_{\rm rel}$ be the relaxation times of $(X_t)$ and $(Y_t)$, respectively. 
Furthermore, for each block $B\in \cR$, let $\uptau_B$ be the relaxation time of the 
 single site dynamics on $B$, given any arbitrary condition at $\partial_{\rm out} B$.  
Then,
\begin{align*}
\uptau_{\rm rel} &\leq \textstyle \uptau_{\rm block} \cdot \left ( \max_{B\in \cR}\{ \uptau_B \} \right ) \cdot  \left (\max_{u\in V} \{ M_u \}  \right) \enspace.
\end{align*}
\end{proposition}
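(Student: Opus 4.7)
The plan is to compare the relaxation times through their variational (Poincar\'e) characterization. For a reversible chain with stationary distribution $\mu$ and Dirichlet form $\mathcal{E}$, the relaxation time satisfies $\uptau = \sup_{f}\mathrm{Var}_\mu(f)/\mathcal{E}(f,f)$. I would write the Dirichlet forms of the single-site and block chains (for the uniform choice of vertex, resp.\ block, as in the paper) as
\begin{equation*}
\mathcal{E}_{\mathrm{single}}(f,f)\;=\;\frac{1}{|V|}\sum_{u\in V}\Exp_\mu\!\left[\mathrm{Var}_\mu(f\mid \sigma_{V\setminus u})\right],\qquad \mathcal{E}_{\mathrm{block}}(f,f)\;=\;\frac{1}{|\cR|}\sum_{B\in \cR}\Exp_\mu\!\left[\mathrm{Var}_\mu(f\mid \sigma_{V\setminus B})\right],
\end{equation*}
so that the task reduces to controlling $\mathcal{E}_{\mathrm{block}}$ from above by $\mathcal{E}_{\mathrm{single}}$.

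The first, and main, step would be to control each conditional variance $\mathrm{Var}_\mu(f\mid \sigma_{V\setminus B})$ through the single-site chain restricted to $B$. Fix a boundary condition $\sigma_{V\setminus B}$; then the conditional law $\mu(\cdot\mid \sigma_{V\setminus B})$ is again a Gibbs distribution on configurations of $B$, and by hypothesis the single-site Glauber dynamics on $B$ under this boundary has relaxation time at most $\uptau_B$. Its Poincar\'e inequality yields
\begin{equation*}
\mathrm{Var}_\mu\!\left(f\mid \sigma_{V\setminus B}\right)\;\le\;\uptau_B\cdot \frac{1}{|B|}\sum_{u\in B}\Exp\!\left[\mathrm{Var}_\mu(f\mid \sigma_{V\setminus u})\;\big|\;\sigma_{V\setminus B}\right],
\end{equation*}
after identifying $\mathrm{Var}_{\mu(\cdot\mid \sigma_{V\setminus B})}(f\mid \sigma_{B\setminus u})$ with $\mathrm{Var}_\mu(f\mid \sigma_{V\setminus u})$ for $u\in B$, via the tower property of the variance.

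Next, I would take expectation over $\sigma_{V\setminus B}$, sum over $B\in \cR$, and interchange the order of summation. Since each vertex $u$ lies in exactly $M_u$ blocks, pulling out $\max_B \uptau_B$ and bounding the resulting coefficients by $\max_u M_u$ times the normalizing constant of $\mathcal{E}_{\mathrm{single}}$ gives
\begin{equation*}
\mathcal{E}_{\mathrm{block}}(f,f)\;\le\;\bigl(\max_{B\in \cR}\uptau_B\bigr)\bigl(\max_{u\in V}M_u\bigr)\,\mathcal{E}_{\mathrm{single}}(f,f).
\end{equation*}
Chaining with the Poincar\'e inequality $\mathrm{Var}_\mu(f)\le \uptau_{\mathrm{block}}\cdot \mathcal{E}_{\mathrm{block}}(f,f)$ for the block chain and taking the supremum over $f$ would then deliver the claimed bound on $\uptau_{\mathrm{rel}}$.

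The main obstacle is the careful bookkeeping of the various conditionings and normalizations. The identity $\mathrm{Var}_{\mu(\cdot\mid\sigma_{V\setminus B})}(f\mid \sigma_{B\setminus u}) = \mathrm{Var}_\mu(f\mid \sigma_{V\setminus u})$ is what makes the inner single-site Dirichlet forms inside the blocks patch together with the global single-site Dirichlet form when summed over $B\in\cR$; on top of that one has to combine the uniform-block factor $1/|\cR|$, the uniform-vertex factor $1/|V|$, and the inner $1/|B|$ weights, so that only the multiplicative constant $\max_B \uptau_B \cdot \max_u M_u$ survives in front of $\mathcal{E}_{\mathrm{single}}$.
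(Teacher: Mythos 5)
The paper does not prove this proposition: it is quoted as a known comparison result from \cite{martinelli1999lectures}, so your attempt is an independent reconstruction rather than a comparison against a proof in the text. Your overall strategy --- chain the Poincar\'e inequality of the block chain with the Poincar\'e inequality of the single-site chain inside each block, glued by the tower identity $\mathrm{Var}_{\mu(\cdot\mid\sigma_{V\setminus B})}(f\mid\sigma_{B\setminus u})=\mathrm{Var}_\mu(f\mid\sigma_{V\setminus u})$ for $u\in B$ --- is exactly the two-scale decomposition behind the Martinelli result, and the one-block step is correct.

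The gap is in the bookkeeping you assert at the end. With the discrete-time, uniform-pick Dirichlet forms you wrote down, chaining and interchanging the sums gives
\begin{equation*}
\mathcal{E}_{\mathrm{block}}(f,f)\;\le\;\frac{\max_{B\in\cR}\uptau_B}{|\cR|}\sum_{u\in V}\Bigl(\sum_{B\ni u}\frac{1}{|B|}\Bigr)\,\Exp_\mu\!\bigl[\mathrm{Var}_\mu(f\mid\sigma_{V\setminus u})\bigr]\enspace,
\end{equation*}
so the coefficient multiplying each conditional variance is $\frac{1}{|\cR|}\sum_{B\ni u}\frac{1}{|B|}$, which is \emph{not} bounded by $\frac{M_u}{|V|}$ in general. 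Already in the partition case ($M_u=1$) the ratio of these two coefficients equals $\frac{|V|}{|\cR|\cdot|B_u|}$, which is as large as $|V|/|\cR|$ when the block containing $u$ is a singleton. Thus the $1/|V|$, $1/|\cR|$, and $1/|B|$ normalizations do not cancel to leave only $\max_u M_u$; as written, your argument gives the extra multiplicative factor $\max_u\sum_{B\ni u}\frac{|V|}{|\cR|\,|B|}$. The clean form of the bound corresponds to continuous-time, rate-$1$ dynamics, whose Dirichlet forms carry none of those normalizations (generators $\sum_u(\mu_u-I)$, $\sum_B(\mu_B-I)$, $\sum_{u\in B}(\mu_u-I)$); with those, the very same chaining gives $\uptau_{\rm rel}\le\uptau_{\rm block}\cdot\max_B\uptau_B\cdot\max_u M_u$ without loss. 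To make your proof correct you should either redo the computation with the unnormalized (rate-$1$) Dirichlet forms, or explicitly carry the extra factor and then note that for the paper's application (a genuine partition, $\uptau_{\rm block}$ already bounded crudely by $O(n\log n)$) it is harmless.
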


For $\varepsilon>0$ and sufficiently large $d>0$,  consider $(G,\vecJ,\beta)\in \IntrstGraphFam(d,\varepsilon)$ and the corresponding distribution   $\mu=\mu_{G,\vecJ,\beta}$, specified as in \eqref{eq:DefOfGlassyMu}.  Also, let $\cB$ be a $(d,\varepsilon)$-block partition for $G$.

Let  $(X)_{t\geq 0}$ be  the   block dynamics defined with respect to  $\mu$ and the set of  blocks $\cB$. 
Also, let $(Y)_{t\geq 0}$ let  be the  (single site) Glauber dynamics on $\mu$.  

Suppose that $\uptau_{\rm block}$ is the relaxation time for $(X_t)$, while let $\uptau_{\rm rel}$ be the relaxation time for $(Y_t)$. 
Finally, for each $B\in \cB$, let $\uptau_{B}$ be the relaxation time of the  single site dynamics on $B$, for
arbitrary condition $\sigma$ at $\partial_{\rm out} B$.  
We have the following result.

\begin{theorem}\label{thrm:RelaxBounds}
It holds that $\uptau_{\rm block}=O(n\log n)$, while for any $B\in\cB$ we have that  $\uptau_{B} \leq {n^{\frac{3}{\log^2 d}}}$.
\end{theorem}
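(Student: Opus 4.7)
The plan is to handle the two bounds separately; the first is essentially immediate from the block-dynamics mixing-time bound, while the second relies on a recursive spectral-gap argument on each block.

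For the block-dynamics relaxation time, I would invoke \cref{thrm:RapidMixingBlockDyn} to obtain mixing time $T_{\mathrm{mix}} \leq C \cdot N \log N$. Since every block contains at least one vertex we have $N \leq n$, so $T_{\mathrm{mix}} = O(n \log n)$. The standard reverse comparison $T_{\mathrm{mix}}(1/e) \geq (\uptau_{\rm rel} - 1) \cdot \log(e/2)$ then yields $\uptau_{\rm block} = O(n \log n)$.

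For the single-block relaxation time $\uptau_B$, I would fix an arbitrary boundary condition on $\partial_{\rm out} B$ and split into cases according to the block structure guaranteed by \cref{def:BlockDecomp}. If $B$ is a single vertex then $\uptau_B = 1$ trivially. If $B$ is a tree, I would root it at the unique vertex adjacent to (or closest to) $\partial_{\rm out} B$ and follow the recursive spectral-gap argument of \cite{efthymiou2018sampling,mossel2010gibbs}: decompose the Poincar\'e inequality for Glauber on $B$ along the children of each internal vertex $v$, picking up a multiplicative factor bounded by $\deg(v)$ from the decomposition and, for each child $c$, a factor bounded by $\exp(\beta |J_{\{v,c\}}|)$ from the edge-influence bound \eqref{eq:OneStepDisVsInfluence}. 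Telescoping along any root-to-boundary path $P$ produces exactly
\[
\uptau_B \leq \exp\!\Bigl( c_0 \cdot \max_{P} \compWeight(P) \Bigr),
\]
with a small absolute constant $c_0$, the maximum being over paths from the root to $\partial_{\rm out} B$. For the unicyclic case, I would delete one edge of the short cycle to obtain a tree $B'$, apply the tree bound to $B'$, and then use a Cheeger/canonical-path comparison that reintroduces the removed edge with multiplicative loss bounded by $\exp(O(\max_P \compWeight(P)))$; the distance condition \cref{itm:BuffCond} ensures that this loss is absorbed into the exponent.

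To convert the exponential bound into $n^{3/\log^2 d}$, I would use the geometric fact derived from \eqref{eq:ExpositionShortPathsWithBreaks} that every root-to-boundary path in $B$ has length at most $O(\log n/\sqrt{d})$, which for $d \geq d_0$ is at most $\log n/\log^4 d$. Property \ref{itm:propty2} of $\IntrstGraphFam(d,\varepsilon)$ then gives $\compWeight(P) \leq \log n/\log^2 d$, and combining with $c_0 \leq 3$ yields $\uptau_B \leq n^{3/\log^2 d}$. The technically demanding step is the tree recursion: one must carry out the spectral-gap decomposition with enough care that the $\log\deg(v)$ and $\beta|J_e|$ terms aggregate exactly into $\compWeight(P)$, rather than a larger quantity. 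The unicyclic comparison is a secondary obstacle, but since the cycle is short and lies deep inside $B$, it only perturbs the tree bound by a factor that fits inside the $n^{3/\log^2 d}$ budget.
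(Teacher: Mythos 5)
Your first bound (block dynamics) and your tree recursion match the paper: $\uptau_{\rm block} = O(n\log n)$ follows from \cref{thrm:RapidMixingBlockDyn} and $N\le n$, and for tree blocks the paper's \cref{thrm:TreeRelaxationBound} (built on \cref{prop:MixingStarUnified}) implements essentially the same recursive spectral-gap decomposition you describe, picking up $\log\deg$ and $\beta|J_e|$ contributions that aggregate into $\compWeight(P)$ and then are capped by Property~\ref{itm:propty2}. For the unicyclic case, however, you take a genuinely different route: you delete one cycle edge to get a tree $B'$, bound $\uptau_{B'}$ by the tree recursion, then re-insert the edge via a Holley--Stroock-type (or canonical-path) perturbation, absorbing the loss using the bound $|J_e|\le 10\sqrt{\log n}$ from Property~\ref{itm:propty3}. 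The paper instead never deletes an edge. It cuts the cycle combinatorially, into two tree-structured sub-blocks $B_1$ and $B_2$, runs a two-block dynamics on $\{B_1,B_2\}$, bounds its relaxation time via an epoch-based coupling (see \cref{lemma:RelaxUnicyclicBFinalComparison}, \cref{claim:Influence4Two}, and \cref{claim:TwoDisagreements}, which explicitly track disagreements propagating around the cycle and invoke $|J_a|,|J_b|\le 10\sqrt{\log n}$), and then uses \cref{prop:Comparison} again with the two tree bounds to get $\uptau_B\le n^{3/\log^2 d}$. Your perturbation approach should also work and is arguably simpler to state, since removing a single edge changes the Hamiltonian by at most $\beta|J_e| = O(\beta\sqrt{\log n})$, which is $o(\log n/\log^2 d)$; the paper's approach has the virtue of staying entirely within the comparison-plus-coupling machinery it has already set up, without importing a Holley--Stroock perturbation lemma. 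One small discrepancy: you say the loss from re-inserting the edge is $\exp(O(\max_P\compWeight(P)))$, which is much larger than the true $\exp(O(\beta|J_e|))$ perturbation cost; this still fits the $n^{3/\log^2 d}$ budget, but you should state the tighter bound since the looser one does not obviously obey the constant $3$ unless the constant in the $O(\cdot)$ is explicitly controlled.
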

The proof of Theorem \ref{thrm:RelaxBounds} appears in Section \ref{sec:thrm:RelaxBounds}.
The following corollary is immediate from Theorem \ref{thrm:RelaxBounds}  and Proposition \ref{prop:Comparison}.
\begin{corollary}\label{cor:FastMixGinFContTime}
For any $\varepsilon>0$ there exist $d_0=d_0(\varepsilon)\ge 1$, such that for every $d\ge d_0$ and $0<\beta \le (1-\varepsilon)\beta_c(d)$  the following is true:

For any graph $G=(V,E)$, and $\vecJ=\{J_e: e\in E\}$ such that $(G,\vecJ,\beta)\in \IntrstGraphFam(d,\varepsilon)$,
consider the Gibbs distribution $\mu=\mu_{G,\vecJ, \beta}$.  Then, the single site Glauber dynamics on $\mu$
exhibits  relaxation time 
\begin{align*} 
\uptau_{\rm rel} &\leq {n^{1+\frac{3}{\log^2 d}}} \enspace. 
\end{align*}
\end{corollary}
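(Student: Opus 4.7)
\textbf{Proof plan for Corollary \ref{cor:FastMixGinFContTime}.}
The plan is simply to bolt together Proposition \ref{prop:Comparison} with the two relaxation-time estimates from Theorem \ref{thrm:RelaxBounds}. The hypothesis $(G,\vecJ,\beta)\in\IntrstGraphFam(d,\varepsilon)$ is exactly what is needed to apply both results, so the proof is almost a one-liner once the ingredients are stated.

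First, since $(G,\vecJ,\beta)\in\IntrstGraphFam(d,\varepsilon)$, property \ref{itm:propty1} of Definition (of $\IntrstGraphFam$) guarantees the existence of a $(d,\varepsilon)$-block partition $\cB$ of $G$. I would fix such a $\cB$ and let $(X_t)$ denote the block dynamics on $\mu=\mu_{G,\vecJ,\beta}$ with respect to $\cB$, and $(Y_t)$ the single-site Glauber dynamics on $\mu$. Applying Proposition \ref{prop:Comparison} with $\cR=\cB$ yields
\begin{align*}
\uptau_{\rm rel} \;\leq\; \uptau_{\rm block}\cdot\Bigl(\max_{B\in\cB}\uptau_B\Bigr)\cdot\Bigl(\max_{u\in V} M_u\Bigr).
\end{align*}
The crucial (and trivial) observation at this step is that $\cB$ is a \emph{partition} of $V$, so each vertex $u$ lies in exactly one block; hence $M_u=1$ uniformly in $u$, and the last factor drops out.

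Next, I would invoke Theorem \ref{thrm:RelaxBounds} to plug in $\uptau_{\rm block}=O(n\log n)$ and $\max_{B\in\cB}\uptau_B\le n^{3/\log^2 d}$, giving
\begin{align*}
\uptau_{\rm rel} \;\leq\; C\,n\log n\cdot n^{3/\log^2 d}.
\end{align*}
The only bookkeeping left is to absorb the $C\log n$ prefactor into the exponent. Since $3/\log^2 d>0$ is a strictly positive constant (depending only on $d$), for all $n$ large enough we have $C\log n\le n^{\eta}$ with any fixed $\eta\in(0,3/\log^2 d)$; but more cheaply, one can simply observe $C\log n = n^{o(1)}$ and slightly enlarge the constant $3$ (or restrict to $n$ sufficiently large depending on $d$ and $\varepsilon$) so that the final product is at most $n^{1+3/\log^2 d}$.

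I do not foresee a genuine obstacle here; the corollary is essentially a substitution. The real work has already been done in Theorem \ref{thrm:RelaxBounds} (the recursive bound on $\uptau_B$ via the path weight $\compWeight$ and the block-dynamics mixing bound from Theorem \ref{thrm:RapidMixingBlockDyn}). The only small care needed is to note that $\cB$ is a partition so that $M_u\equiv 1$, and to handle the $\log n$ factor cleanly in the exponent.
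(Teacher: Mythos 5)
Your proof matches the paper's (implicit) argument: the paper declares the corollary ``immediate from Theorem \ref{thrm:RelaxBounds} and Proposition \ref{prop:Comparison},'' and you carry out exactly that substitution, correctly noting that $M_u\equiv 1$ because $\cB$ is a partition. Your remark on absorbing the $O(\log n)$ factor from $\uptau_{\rm block}=O(n\log n)$ into the exponent is slightly more careful than the paper itself, which leaves that bookkeeping unstated (the slack comes from the tree/unicyclic bounds in Theorem \ref{thrm:RelaxBounds} being strictly smaller than $n^{3/\log^2 d}$).
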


\begin{proof}[Proof of Theorem \ref{thrm:MainResult}]
According to Theorem \ref{thrm:BlockParitionGnp} we have the following:

For any $\varepsilon>0$, there exists $d_0=d_0(\varepsilon)$ such that for any $d \ge d_0$ and for $0<\beta \le (1-\varepsilon)\beta_c(d)$ the 
following is true:
For $\G=\G(n,d/n)$, for $\bm{\vecJ} = \{\bJ_e :e \in E(\G)\}$ a family of i.i.d.  standard Gaussians,
we have that
\begin{align*}
\Pr\left[(\G,\bm{\vecJ}, \beta)\in \IntrstGraphFam(d,\varepsilon) \ \right] \ge 1-n^{-1/4}
\enspace.
\end{align*}
Hence, due to Corollary \ref{cor:FastMixGinFContTime}, with probability at least $1-n^{-1/4}$ over the instances of $(\G,\bm{\vecJ})$ 
the single site, Glauber dynamics exhibits relaxation time $\uptau_{\rm rel}$ such that
\begin{align*}
\uptau_{\rm rel}&\leq {n^{1+\frac{3}{\log^2 d}}}\enspace. 
\end{align*}
It is now standard that 
\begin{align*}
\Tmix &= O\left(n^{2+\frac{3}{\log^2 d}} \right)\enspace,  
\end{align*}
concluding the proof of Theorem \ref{thrm:MainResult}.
\end{proof}

\section{Proof of Theorem \ref{thrm:RapidMixingBlockDyn}}\label{sec:thrm:RapidMixingBlockDyn}

We use path coupling \cite{bubley1997path} for the proof of Theorem \ref{thrm:RapidMixingBlockDyn}.
We establish contraction  by introducing a metric on the configuration space $\{\pm 1\}^{V}$ (also discussed in Section \ref{sec:MixingBlockDynamics}). For each vertex $z\in V$, we let $B_z$ denote the block that $z$ belongs to. Furthermore, let 
\begin{align}\label{eq:def:OfBOutA}
\WA_{\rm out}(z)=\sum_{\substack{z\sim w \\ w\notin B_z}} \Inf_{\{z,w\}}\enspace. 
\end{align}
That is, $\WA_{\rm out}$ is the aggregate influence of vertex $z$ on its neighbours {\em outside} block $B_z$. Let also $\impV\subseteq V$  consist of all vertices $z\in V$ such that  $\WA_{\rm out}(z)>0$.
Perhaps it is useful to remind the reader that all couplings $J_e$'s are assumed to be non-zero. Specifically, for every edge $e$, we have that $|J_e|>n^{-7/3}$.
The above  implies that,  if $\WA_{\rm out}(z)=0$, i.e., $z \in V\setminus \impV$, then $z$ has no neighbours outside $B_z$. We call such a vertex {\em internal}.

In light of the above, we introduce the following distance metric on the configuration space~$\{\pm 1\}^V$:  
For any  two $\sigma,\tau\in \{\pm 1\}^V$, we let
\begin{align}\label{eq:DefOfHammingWeights}
\EdgeBlockWeight(\sigma,\tau) &= \sum_{z\in V\setminus \impV}\Ind\{z\in \sigma \oplus \tau\}+
n^4\cdot \sum_{z\in \impV} \WA_{\rm out}(z)  \cdot \Ind\{z\in \sigma \oplus \tau\}\enspace,
\end{align}
where $\sigma\oplus \tau$ is the set of vertices $w\in V$ that the two configurations disagree, i.e., 
$\sigma(w)\neq \tau(w)$. Let us note that a similar distance metric first appeared in \cite{EfthymiouHSV18}.

We proceed now with the path coupling argument. Consider two copies of the block dynamics $(X_t)_{t\geq 0}$ and $(Y_t)_{t\geq 0}$. 
Assume that  at time $t\geq 0$,  the configurations $X_t$ and $Y_t$ differ at a single vertex $u^*$. It suffices to show that we have contraction at $t+1$, i.e.,  the expected distance between $X_{t+1}$ and $Y_{t+1}$ is smaller than that between $X_t$ and $Y_t$.
To this end, we prove the following theorem.

\begin{theorem} \label{thrm:BDContraction}
Suppose that $u^*\in \impV$.  For any $B\in\cB$ such that $u^*\in \partial_{\rm out} B$,  there exists a coupling between $X_{t+1}$ and $Y_{t+1}$, such that 
\begin{align*}
\Exp \left[   
\EdgeBlockWeight(X_{t+1}, Y_{t+1})
- \EdgeBlockWeight (X_{t}, Y_{t}) \  \mid \ 
 X_t, Y_t, \ B\  \textrm{updated at time $t+1$}
\right ]
& \leq  \Inf_{\{u^*,z\}} \cdot n^4\cdot (1-\varepsilon/6) \enspace,
\end{align*}
where $z\in B$ is the unique neighbour of $u^*$ in $B$.   
\end{theorem}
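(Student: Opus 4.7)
My strategy is to apply path coupling with the weighted metric $\EdgeBlockWeight$ of \eqref{eq:DefOfHammingWeights}. Since $u^*\notin B$, updating $B$ leaves the disagreement at $u^*$ intact, so the change in metric equals the total $w$-weight of newly disagreeing vertices in $B$, where $w(v)=1$ for internal $v$ and $w(v)=n^4\,\WA_{\rm out}(v)$ for external $v$. The plan is to use the maximal coupling of the two block-marginals $\mu_B(\cdot\mid X_t)$ and $\mu_B(\cdot\mid Y_t)$ and show
\[
\sum_{v\in B} p_v\cdot w(v) \ \le\ \Inf_{\{u^*,z\}}\cdot n^4\cdot (1-\varepsilon/6),
\]
where $p_v := \Pr[X_{t+1}(v)\neq Y_{t+1}(v)]$.

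The first step is to bound $p_v$ by exploiting the tree structure of $B$. By \cref{def:BlockDecomp} (items \ref{itm:BPunicyclic} and \ref{itm:BPSingleNeighBoundary}), $B$ is a tree or a tree with one extra edge, and $u^*$ has a unique neighbour $z\in B$, so all dependence of the sampled configuration on the boundary value at $u^*$ flows through $z$. A standard tree-recursion for pairwise binary spin systems, combined with the one-step influence bound~\eqref{eq:OneStepDisVsInfluence}, yields
\[
p_v \ \le\ \Inf_{\{u^*,z\}}\cdot \sum_{P\in \cP^{B}_{z\to v}} \prod_{e\in P} \Inf_e,
\]
where $\cP^{B}_{z\to v}$ is the set of simple paths from $z$ to $v$ in $B$ (a singleton if $B$ is a tree, at most two in the unicyclic case).

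The next step is to combine these bounds with the weights. For external $v$, expanding $\WA_{\rm out}(v)=\sum_{w\sim v,\,w\notin B}\Inf_{\{v,w\}}$, the contribution $\sum_{v\text{ ext}} p_v\cdot w(v)$ rewrites as $n^4\cdot \Inf_{\{u^*,z\}}$ times a sum of influence products over paths that start at $z$, traverse $B$, and take one final edge out of $B$. Prepending $u^*$, every such path is a path $P$ emanating from $u^*$ of length at most $O(\log n/\log^4 d)$, hence below $\log n$. Since $u^*$ is a $(d,\varepsilon)$-block vertex by item \ref{itm:BPblockBoundary}, every such $P$ satisfies $\cappedWA(P)<1$. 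Using $\WA(v)\le \tfrac{1-\varepsilon/2}{1-\varepsilon/4}\cappedWA(v)$ on light vertices and $\WA(v)=\cappedWA(v)/d$ on heavy ones, the branching enumeration over paths collapses to a geometric series that yields the required slack $(1-\varepsilon/6)$. The internal contribution $\sum_{v\text{ int}} p_v$ is at most $\Inf_{\{u^*,z\}}\cdot|B|=\Inf_{\{u^*,z\}}\cdot \mathrm{polylog}(n)$, and is absorbed into the $n^4$ term.

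The main obstacle is the conversion between the edge-influence product $\prod_{e\in P}\Inf_e$ (which emerges from the tree recursion) and the vertex-weight product $\prod_{v\in P}\cappedWA(v)$ (which is controlled by the block-vertex hypothesis). Heavy vertices along $P$ are delicate: the ratio $\WA(v)/\cappedWA(v)=1/d$ is very favourable, but one must argue that heavy vertices appear rarely enough along the branching enumeration that they cannot dominate the sum. The unicyclic case requires a separate verification that the second simple path from $z$ to $v$ contributes only a $(1+o(1))$ multiplicative correction, using the buffer condition \ref{itm:BuffCond} that places the cycle at distance $\ge \log^5 d$ from $\partial_{\rm out}B$.
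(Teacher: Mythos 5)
Your high-level plan (path coupling with the weighted metric, tree recursion bounding disagreement probabilities by edge-influence products, converting these to aggregate-influence products, and invoking the block-vertex hypothesis on $u^*$) is the same skeleton the paper uses. But the proposal has a genuine gap at exactly the step you flag as ``the main obstacle,'' and it is not a detail you can wave away: it is the heart of the proof.

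The paper's argument proceeds via a recursion that replaces, at each vertex $v$ of the block, the sum over children $x$ of $\Inf_{\{v,x\}}\cdot(\text{subtree contribution})$ by $\bigl[\WA(v)-\WA_{\rm out}(v)\bigr]\cdot\max_x(\text{subtree contribution})$, thereby collapsing the branching enumeration to a \emph{worst-case path}. Along that path the branching factor at a heavy vertex $v$ is $\WA(v)>1$, so the resulting series is not a priori geometric. What rescues it is the decomposition that you do not carry out: the paper first isolates $\BTree$, the set of vertices within distance $\log d$ of $z$, and uses Corollary~\ref{cor:BufferWeightsSmall} (a consequence of the block-vertex property, via Corollary~\ref{cor:HeavyWABound}) to conclude that \emph{every vertex of $\BTree$ is light}, so $\WA\le 1-\varepsilon/2$ there and Lemma~\ref{lemma:Bound4QTPlustUniformity} delivers the dominant $(1-\varepsilon/2)$ contraction by a clean induction. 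Only then are the tail contributions $S_A,S_B$ from deeper vertices shown to be $O(d^{-\varepsilon/5})$ corrections, and this is exactly where the bookkeeping $\Upphi(j)\le(1-\varepsilon/4)^{r+j-m_j}d^{-m_j}$ (eq.~\eqref{eq:RJFinallBound}, built on Corollary~\ref{cor:HeavyWABound}) controls the heavy vertices. Your proposal does neither step: you assert that the block-vertex hypothesis together with $\WA(v)\le\tfrac{1-\varepsilon/2}{1-\varepsilon/4}\cappedWA(v)$ / $\WA(v)=\cappedWA(v)/d$ makes ``the branching enumeration collapse to a geometric series,'' but $\cappedWA(P)<1$ is a per-path constraint and does not directly bound a sum of partial products $\sum_j\WA_{\rm out}(w_j)\prod_{i<j}\WA(w_i)$ when some of the $\WA(w_i)$ exceed~1. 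You need (i) the buffer property that the first $\log d$ vertices from $z$ are light (this gives the main $(1-\varepsilon/2)$), and (ii) a quantitative version of Corollary~\ref{cor:HeavyWABound} that trades each heavy factor $\WA(v)$ against the accumulated $(1-\varepsilon/4)$-decay to make the tail negligible. Without these the claimed slack $(1-\varepsilon/6)$ is not established.

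Two smaller points. First, your bound $p_v\le\Inf_{\{u^*,z\}}\sum_{P\in\cP^B_{z\to v}}\prod_{e\in P}\Inf_e$ in the unicyclic case needs an actual coupling construction; the paper's Lemma~\ref{lemma:DisWeightCycleStuff} obtains a factor~$2$ by splitting the cycle with an auxiliary vertex and composing two tree couplings, and your ``at most two simple paths'' heuristic should be justified along those lines. Second, your internal-vertex contribution bound $\Inf_{\{u^*,z\}}\cdot|B|$ is fine and slightly sharper than the paper's $\Inf_{\{u^*,z\}}\cdot n$, and indeed absorbs harmlessly into the $n^4$ scale; this part is sound.
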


The proof of Theorem \ref{thrm:BDContraction} appears in Section \ref{sec:thrm:BDContraction}. 

In light of Theorem \ref{thrm:BDContraction}, we proceed with the path coupling analysis.
Consider now two cases for vertex $u^*$. In the first case, assume that $u^*\in \impV$, whereas in the second one, assume that $u^*\notin \impV$.  We establish contraction for both cases. 

We start by assuming that $u^*$ is external, i.e., $u^*\in \impV$. Suppose that at time $t+1$ the block $B_{t+1}\in \cB$ is updated. 
If we have  $B_{u^*}=B_{t+1}$, then we  use identity coupling and get that  $X_{t+1}=Y_{t+1}$. Hence, we have that
\begin{align}\label{eq:ExpDistanceBuStarUpdateExt}
\Exp \left[    
\EdgeBlockWeight(X_{t+1}, Y_{t+1})
- \EdgeBlockWeight (X_{t}, Y_{t})
 \  \mid \ 
 X_t, Y_t, \ B_{t+1}=B_{u^*}
\right ] 
{=}
-n^4\cdot \WA_{\rm out}(u^*) \enspace.
\end{align}
If $B_{t+1}$ is such that $u^*\notin \partial_{\rm out} B_{t+1}$, we have that $\EdgeBlockWeight(X_{t+1}, Y_{t+1})=\EdgeBlockWeight(X_{t}, Y_{t})$ by using identity coupling. Hence, we have that
\begin{align}\label{eq:ExpDistanceBDistantUpdateExt}
\Exp \left[    
\EdgeBlockWeight(X_{t+1}, Y_{t+1})
- \EdgeBlockWeight (X_{t}, Y_{t})
 \  \mid \ 
 X_t, Y_t, \ u^*\notin \partial_{\rm out} B_{t+1}
\right ] 
=0\enspace. 
\end{align} 
Finally, if  $B_{t+1}$ is such that $u^*\in \partial_{\rm out}B_{t+1}$, then  Theorem \ref{thrm:BDContraction} implies that
\begin{align}
\nonumber
\Exp \left[  
\EdgeBlockWeight(X_{t+1}, Y_{t+1})
- \EdgeBlockWeight (X_{t}, Y_{t})
 \  \mid \ 
 X_t, Y_t,\  u^*\in \partial_{\rm out}B_{t+1}  
\right ]
&\leq \sum_{\substack{z\sim u^* \\ z \notin B_{u^*}}} \Inf_{\{u^*,z\}} \cdot n^4\cdot (1-\varepsilon/6) 
\\
&= n^4\cdot (1-\varepsilon/6)\cdot \WA_{\rm out}(u^*)  \enspace,  \label{eq:ExpDistanceBCloseUpdateExt}
\end{align}
where the last equality follows from the definition of $\WA_{\rm out}(u^*)$. 
Combining now \eqref{eq:ExpDistanceBuStarUpdateExt}, \eqref{eq:ExpDistanceBDistantUpdateExt} and \eqref{eq:ExpDistanceBCloseUpdateExt}, we get that
\begin{align}\label{eq:ImportantU}
\Exp \left[  
\EdgeBlockWeight(X_{t+1}, Y_{t+1})
- \EdgeBlockWeight (X_{t}, Y_{t})
 \  \mid \ 
 X_t, Y_t, \ u^*\in \partial_{\rm out}B_{t+1}  
\right ]
&\leq -\varepsilon \cdot \frac{n^4}{6N} \cdot \WA_{\rm out}(u^*)\enspace. 
\end{align}
On the other hand, if the disagreeing vertex, $u^*$  is internal, i.e., $u^*\in V\setminus \impV$, then the disagreement cannot spread.
Hence,  when the block $B_{u^*}$ updates, then the disagreement disappears, and thus, if $u^*\notin \impV$, then we have that
\begin{align}\label{eq:NImportantU}
\Exp \left[ 
\EdgeBlockWeight(X_{t+1}, Y_{t+1})
- \EdgeBlockWeight (X_{t}, Y_{t})
 \  \mid \ 
 X_t, Y_t  
\right ]
\leq  -\frac{1}{N}\enspace.
\end{align}
From \eqref{eq:NImportantU} and \eqref{eq:ImportantU} it is standard to get Theorem \ref{thrm:RapidMixingBlockDyn}.

\subsection{Proof of Theorem \ref{thrm:BDContraction}}\label{sec:thrm:BDContraction}
For any set of vertices $\Lambda$ such that $\Lambda\subseteq B$, we define the quantity 
\begin{align*}
\cD(\Lambda, X_t, Y_t) 
&=\sum_{w\in \Lambda}   n^4 \cdot 
                        \WA_{\rm out}(w) \cdot 
                        \Ind \{w\in X_t\oplus Y_t\} 
        \enspace.
\end{align*}
%
%
Furthermore, for any  $w\in B\cup \partial_{\rm out} B$ we let
\begin{align*}
S_w(\Lambda) &=  
\Exp\left[ 
\left .  {\cD}(\Lambda, X_{t+1}, Y_{t+1})  \ \right |
\    X_{t+1} (w)\neq Y_{t+1}(w), \  X_{t},Y_{t}, \   B\textrm{ updated at $t+1$} 
\right] \enspace.
\end{align*}
Recall that $z\in B$ is the unique neighbour of $u^*$ in $B$. We have that
\begin{align}\label{eq:Target4:thrm:BlockUpdtCovergent}
\Exp
\left [  \left . 
\EdgeBlockWeight(X_{t+1}, Y_{t+1}) - \EdgeBlockWeight (X_{t}, Y_{t}) 
\ \right |
 X_t, Y_t, \ B\  \textrm{updated at $t+1$}
\right ]
\leq   S_{u^*}(B)+\Inf_{\{u^*,z\}} \cdot n \enspace.
\end{align}
To see the above, notice that the term $S_{u^*}(B)$ equals the contribution of the external vertices in the block $B$, i.e., those vertices $w$ such that $\WA_{\rm out}(w)>0$. The term $\Inf_{\{u^*,z\}}\cdot  n$ is an overestimation for the contribution of the internal vertices of $B$. Specifically, once the disagreement of $u^*$ propagates at $z$, which happens with probability at most $\Inf_{\{u^*,z\}}$, then the contribution of internal vertices in $B$ is at most $n$.

The theorem follows by showing that  
\begin{align}\label{eq:Target4thrm:BDContraction}
S_{u^*}(B) \leq \Inf_{\{ u^*,z\}}\cdot n^4 \cdot (1-\varepsilon/5)\enspace.
\end{align}
The following standard, technical, result is useful in our analysis.

\begin{lemma}\label{lemma:Coupling4TreeMeasures}
Let  $H=(V_H, E_H)$ be the graph induced by  $B\cup \partial_{\rm out}B$, for a multi-vertex block $B\in \cB$.

For any  $\Lambda\subset V_H$ that includes $\partial_{\rm out}B$, for any  vertex 
$w\in V_H\setminus \Lambda$, where   $\sigma\in \{\pm 1\}^{\Lambda}$,  let  
$\mu^+=\mu(\cdot\ |\ \{ \Lambda, \sigma\}, \{w, +\} )$ and $\mu^-=\mu(\cdot\ |\ \{ \Lambda, \sigma\}, \{w, -\} )$. 
There exist a coupling $\nu$ of the measures $\mu^+$ and $\mu^-$ such that   the following is true:

Let   $M \subseteq V_H\setminus  \Lambda$, be a subset of neighbours of  $w$ in $B$ 
which do not  belong to the cycle inside $B$,  if such a cycle exists. 
Then, for  $(\bX,\bY)$ distributed as in $\nu$,  we have that 
\begin{align}\nonumber
\Exp[ |(\bX  \oplus \bY)\cap M|]&\leq {\sum\nolimits_{u\in M}} \Inf_{\{w,u\}}\enspace,
\end{align}
where $\Inf_{\{w,u\}}$ is the influence of the edge $\{w,u\}$.
\end{lemma}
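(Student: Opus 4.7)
The plan is to exploit the simply structured topology of $B$ to decompose the free region $V_H \setminus (\Lambda \cup \{w\})$ into independent pieces, and then couple $\mu^{+}$ and $\mu^{-}$ componentwise.

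\textbf{Step 1: structural decomposition.} For each $u \in M$, let $C_u$ denote the connected component of $u$ in the subgraph of $H$ induced on $V_H \setminus (\Lambda \cup \{w\})$. Because $\Lambda \supseteq \partial_{\rm out} B$, this free region is contained in $B \setminus \{w\}$. I would first show that the sets $\{C_u\}_{u \in M}$ are pairwise disjoint and that each $u$ is the unique neighbour of $w$ lying in $C_u$. The verification is by cases on the topology of $B$: if $B$ is a tree, deleting $w$ separates its neighbours into distinct components; if $B$ is unicyclic with $w$ on the cycle, deleting $w$ turns the cycle into a path that fuses the two on-cycle neighbours of $w$ (which are excluded from $M$ by hypothesis) into one component, while each off-cycle neighbour of $w$ roots its own disjoint tree-component; if $B$ is unicyclic with $w$ off the cycle, at most one neighbour of $w$ lies on the path toward the cycle, and the remaining neighbours again root disjoint subtrees. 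The exclusion of cycle vertices from $M$ is exactly what prevents two distinct $u, u' \in M$ from being collapsed into the same component through the cycle.

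\textbf{Step 2: factorisation and marginal bound.} Since $\Lambda \cup \{w\}$ separates distinct components of the free region and is pinned identically in $\mu^{+}$ and $\mu^{-}$ except at $w$, the global Markov property gives the product factorisation $\mu^{\pm}|_{V_H \setminus (\Lambda \cup \{w\})} = \prod_C \mu^{\pm}|_C$ over components $C$. Moreover, because $u$ is the unique neighbour of $w$ in $C_u$, the dependence of $\mu^{\pm}|_{C_u}$ on the pinning at $w$ factors through the single coupling $\beta J_{\{w,u\}}$ at the edge $\{w,u\}$. A direct computation then gives $\Vert \mu^{+}_u - \mu^{-}_u \Vert_{\rm TV} \le \Inf_{\{w,u\}}$: writing $a = \exp(\beta J_{\{w,u\}})$ and $r = R_u(+)/R_u(-)$ for the ratio of $u$-pinned partition functions on $C_u$ (with the rest of the boundary fixed as in $\sigma$), elementary algebra yields $\mu^{+}_u(+) - \mu^{-}_u(+) = r(a^2-1)/[(ar+1)(r+a)]$, and AM--GM applied to $(ar+1)(r+a) \ge (a+1)^2$ shows that this quantity is maximised in magnitude at $|a-1|/(a+1) = \Inf_{\{w,u\}}$. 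This is the same one-step influence bound as in \eqref{eq:OneStepDisVsInfluence}, applied in the dual direction.

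\textbf{Step 3: assembling the coupling.} For each $u \in M$, take an optimal TV coupling of the marginals $(\mu^{+}_u, \mu^{-}_u)$, achieving $\Pr[\bX_u \ne \bY_u] \le \Inf_{\{w,u\}}$. I would extend this to a coupling on all of $C_u$ as follows: conditional on the value at $u$, the law of $C_u \setminus \{u\}$ under $\mu^{\pm}$ does not depend on $w$ (since $w$ influences $C_u$ only through $u$), so one can reuse the same continuation whenever $\bX_u = \bY_u$ and use independent continuations otherwise; this produces a valid coupling of the two component-restricted laws. Couple distinct $C_u$'s independently, and extend arbitrarily on any components of the free region that contain no vertex of $M$. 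Linearity of expectation then yields $\Exp[|(\bX \oplus \bY) \cap M|] = \sum_{u \in M} \Pr[\bX_u \ne \bY_u] \le \sum_{u \in M} \Inf_{\{w,u\}}$, as required. The main obstacle is the structural case analysis in Step~1—everything thereafter is standard Gibbs-measure manipulation—since it is there that the exclusion of cycle vertices from $M$ is used to secure the disjointness that Step~2's factorisation demands.
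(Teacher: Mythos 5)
Your proof is correct and follows the same route as the paper's two-line sketch: maximal coupling on each marginal at $u\in M$, disjointness of the components rooted at the $u$'s after deleting $w$ (which is exactly where the hypothesis that $M$ avoids the cycle is used), and the one-step influence bound \eqref{eq:OneStepDisVsInfluence}; you have simply filled in the structural case analysis and the marginal computation that the paper glosses over as ``elementary'' and ``standard.'' One minor algebraic slip in Step~2: the inequality you want is $(ar+1)(r+a)\ge (a+1)^2\,r$ (equivalent to $(r-1)^2\ge 0$), not $(a+1)^2$ --- with the missing factor of $r$ restored, the bound $|\mu^+_u(+)-\mu^-_u(+)|\le |a-1|/(a+1)=\Inf_{\{w,u\}}$ goes through as you claim.
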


\begin{proof}
Choose $\nu$ to be the maximal coupling for each one of neighbours of $w$ in $M$. Since none of the vertices in $M$ belongs to the same cycle, it is elementary to show that we have
\begin{align*}
\Exp[ |(\bX  \oplus \bY)\cap M|] 
&=
\textstyle {\sum_{u\in M}} ||\mu^+_{u}-\mu^-_u ||_{\rm TV}\enspace,
\end{align*}
where $\mu^+_{u}, \mu^-_u$ are the marginals of $\mu^+, \mu^-$, respectively, at vertex $u\in M$.  It is standard to show that $||\mu^+_{u}-\mu^-_u ||_{\rm TV}\leq \Inf_{\{w,u\}}$. For example, see discussion in \eqref{eq:OneStepDisVsInfluence}. 

The above concludes the proof of the lemma. 
\end{proof}

To proceed with the proof of \eqref{eq:Target4thrm:BDContraction}, we make some useful observation following from the definition of the  block-vertex (Definition \ref{def:BlockVertex}) and that of the block-partition $\cB$ (Definition \ref{def:BlockDecomp}).

\begin{corollary}\label{cor:HeavyWABound}
For any multi-vertex block $B\in\cB$, for any $u\in \partial_{\rm out}B$ and any vertex $w\in B$ the following is true:

Let $P$ be any path of length $\ell$ connecting $u$ and $w$. Let  also $M$ be the set of heavy vertices in $P$, i.e., for all $x\in M$ we have $\WA(x)>1-\varepsilon/2$. Then, we have that
\begin{align*}
\prod_{x\in M}\WA(x
) \leq d^{-|M|}(1-\varepsilon/4)^{|M|-\ell}\enspace.
\end{align*}
\end{corollary}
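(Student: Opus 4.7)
The plan is to apply \cref{def:BlockVertex} to $u$ and to expand the product $\cappedWA(P)$ by splitting $V(P)$ into its heavy subset $M$ and its complementary light subset. All the work is essentially algebraic; the only substantive input is the requirement in \cref{def:BlockDecomp} that boundary vertices of blocks in $\cB$ are block vertices.

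First I would record that by \cref{def:BlockDecomp}, item~\ref{itm:BPblockBoundary}, the vertex $u\in\partial_{\rm out}B$ is a $(d,\varepsilon)$-block vertex. Using that blocks in $\cB$ have diameter at most $\log n$ (a structural property that follows from \cref{def:BlockDecomp} together with the ``plenty of block vertices'' phenomenon captured by~\eqref{eq:ExpositionShortPathsWithBreaks}), the path $P$ qualifies as an admissible instance in \cref{def:BlockVertex}, so $\cappedWA(P)<1$. As a quick consistency check I would also note that already the length-zero path from $u$ forces $\cappedWA(u)<1$, which for large $d$ rules out $u$ being heavy (otherwise $\cappedWA(u)=d\cdot\WA(u)\ge d(1-\varepsilon/2)>1$); this confirms $u\notin M$ and is consistent with the expected intuition that boundary vertices of blocks are never heavy.

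Next, using the two-case formula in \cref{def:VertexWeights} together with the product formula in \cref{def:WightOfPath}, I would expand
\begin{equation*}
\cappedWA(P)\;=\;\prod_{x\in M}\bigl(d\cdot\WA(x)\bigr)\cdot\prod_{y\in V(P)\setminus M}(1-\varepsilon/4)\;=\;d^{|M|}\,(1-\varepsilon/4)^{|V(P)|-|M|}\prod_{x\in M}\WA(x).
\end{equation*}
Substituting into $\cappedWA(P)<1$ and solving for $\prod_{x\in M}\WA(x)$ yields
\begin{equation*}
\prod_{x\in M}\WA(x)\;<\;d^{-|M|}(1-\varepsilon/4)^{|M|-|V(P)|},
\end{equation*}
which, with $|V(P)|$ playing the role of $\ell$ in the length convention used here, is exactly the claimed bound.

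The main obstacle is bookkeeping rather than conceptual: the only care needed is in the path-length convention (edges versus vertices in $P$) and in verifying that $P$ genuinely fits under the ``length at most $\log n$'' clause of \cref{def:BlockVertex}. The latter check is where the structural properties of $\cB$ really enter; everything else is one line of algebra once one observes that the two cases of \cref{def:VertexWeights} make heavy vertices contribute a factor $d\cdot\WA$ and light vertices a factor $1-\varepsilon/4$.
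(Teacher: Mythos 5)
Your proof is correct and follows essentially the same route as the paper's: invoke the block-vertex property of $u$ to get $\cappedWA(P)<1$, then expand the product via the two cases of \cref{def:VertexWeights} and isolate $\prod_{x\in M}\WA(x)$. You also correctly flag the path-length convention ambiguity (edges versus vertices) that is present in the paper's own statement and its later use in \eqref{eq:ProdDegreeBound}, and the observation that the ``length $\le\log n$'' clause of \cref{def:BlockVertex} is satisfied only because the block construction forces small diameter — a hypothesis the paper leaves implicit.
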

 
The above corollary follows by noticing that any path $P$ in $B$ connecting $w$ to a vertex in $\partial_{\rm out}B$, needs to satisfy $\cappedWA(P)<1$. The inequality then follows from the definition of the weight $\cappedWA(P)$.
\begin{corollary}\label{cor:BufferWeightsSmall}
For every multi-vertex block $B$ and every vertex $w\in B$ such that $\WA(w)>1-\varepsilon/2$, the closest block-vertex to $w$ is at distance  $>\log d$. 
\end{corollary}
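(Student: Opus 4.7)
The plan is to argue by contradiction. Suppose $w \in B$ is heavy, so $\WA(w) > 1-\varepsilon/2$, and assume towards contradiction that there exists a $(d,\varepsilon)$-block vertex $u$ with $\ell := \dist(u,w) \le \log d$. I will exhibit an explicit path $P$ emanating from $u$ whose $(d,\varepsilon)$-weight is at least $1$, which directly contradicts the defining property in Definition \ref{def:BlockVertex} (recall $\log d \le \log n$ for $n$ large, so such a path falls within the range quantified over).

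The path to consider is any shortest path $P = (v_0, v_1, \ldots, v_\ell)$ with $v_0 = u$ and $v_\ell = w$. I would then lower-bound $\cappedWA(P) = \prod_{i=0}^\ell \cappedWA(v_i)$ factor by factor using Definition \ref{def:VertexWeights}. Each vertex contributes at least $1-\varepsilon/4$ (the minimum possible value of $\cappedWA$, attained on light vertices; heavy ones only give a larger factor since $d$ is large), while the endpoint $w$ contributes $\cappedWA(w) = d \cdot \WA(w) > d(1-\varepsilon/2)$ because $w$ is heavy. Combining,
\begin{align*}
\cappedWA(P) \;\ge\; d(1-\varepsilon/2) \cdot (1-\varepsilon/4)^{\ell}.
\end{align*}

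Using $\ell \le \log d$ together with the elementary estimate $(1-\varepsilon/4)^{\log d} \ge d^{-\varepsilon/4}$ (which follows from $\log(1-\varepsilon/4) \ge -\varepsilon/4 - O(\varepsilon^2)$), this yields
\begin{align*}
\cappedWA(P) \;\ge\; (1-\varepsilon/2) \, d^{\,1-\varepsilon/4}.
\end{align*}
For any $\varepsilon \in (0,1)$ and all $d \ge d_0(\varepsilon)$ large enough, the right-hand side exceeds $1$, contradicting the fact that $u$ is a $(d,\varepsilon)$-block vertex. Hence no block vertex can lie within distance $\log d$ of $w$.

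There is no real obstacle here; the argument is a one-line calculation once the correct path is chosen. The only care required is in the quantification of $d_0(\varepsilon)$: one needs $d^{1-\varepsilon/4}(1-\varepsilon/2) > 1$, which is automatic for $d \ge d_0(\varepsilon)$ with $d_0$ chosen of the same order as the threshold already imposed by Theorems \ref{thrm:MainResult} and \ref{thrm:BlockParitionGnp}. So the statement fits seamlessly into the assumed regime without requiring any new lower bound on $d$.
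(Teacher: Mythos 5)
Your argument is essentially the same as the paper's (which routes the identical path-weight product calculation through Corollary~\ref{cor:HeavyWABound}), and the overall structure is sound: a short path from a heavy $w$ to a block vertex $u$ forces $\cappedWA(P)>1$, contradicting \cref{def:BlockVertex}. One slip worth flagging, though it does not sink the proof: the inequality $(1-\varepsilon/4)^{\log d}\ge d^{-\varepsilon/4}$ is reversed. Since $\log(1-x)<-x$ for $x>0$, we have $(1-\varepsilon/4)^{\log d}=d^{\log(1-\varepsilon/4)}<d^{-\varepsilon/4}$; the estimate you cite, $\log(1-\varepsilon/4)\ge-\varepsilon/4-O(\varepsilon^2)$, only gives $(1-\varepsilon/4)^{\log d}\ge d^{-\varepsilon/4-O(\varepsilon^2)}$, so the displayed bound $\cappedWA(P)\ge(1-\varepsilon/2)d^{1-\varepsilon/4}$ is slightly too strong as stated. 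The correct observation is that $1+\log(1-\varepsilon/4)>0$ for all $\varepsilon\in(0,1)$ (because $1-\varepsilon/4>3/4>e^{-1}$), so $\cappedWA(P)\ge(1-\varepsilon/2)\,d^{1+\log(1-\varepsilon/4)}\to\infty$ as $d\to\infty$, and the contradiction still obtains for $d\ge d_0(\varepsilon)$.
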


Corollary~\ref{cor:BufferWeightsSmall} follows from Corollary~\ref{cor:HeavyWABound} by noticing that if there is a path $P$ from $w$ to $\partial_{\rm out}B$ with $|P| \le \log(d)$, then $\cappedWA(P)>1$. Clearly, this cannot be true, since  $\partial_{\rm out}B$ consists only of block-vertices.

Let us introduce a few useful concepts. 
We  let $\BTree$ be the  subset of  $B$ that contains all  vertices reachable from $u^*$ through a path within $B$ of  length at most $\log d$. It is easy to see that $\BTree$ always induces a tree, since property \eqref{itm:BuffCond} of the block partition $\cB$ implies that the distance of the cycle in $B$, if any, is at least $\log^5 d$ from $z$. Furthermore, all vertices in $w\in \BTree$ satisfy that $\WA(w)<1-\varepsilon/2$. This is due to Corollary \ref{cor:BufferWeightsSmall}.

Consider the root of $\BTree$ to be the vertex $z$, this is the unique vertex in $B$ adjacent to $u^*$. Also, let $\BRest=B\setminus \BTree$. Then, the linearity of expectation yields
\begin{eqnarray} \label{eq:LinearityOfQvBVsQvT+Rest}
S_{u^*}(B) = S_{u^*}\left(\BTree\right)+S_{u^*}\left(\BRest\right)\enspace. 
\end{eqnarray}
We estimate the quantities  $S_{u^*}(\BTree)$ and $S_{u^*}(\BRest)$, separately. As far as $S_{u^*}(\BTree)$ is concerned, we have the following lemma. 

\begin{lemma} \label{lemma:Bound4QTPlustUniformity}
We have that
$S_{u^*}\left(\BTree\right)\leq \Inf_{\{u^*,z\}}\cdot n^4 \cdot (1-\varepsilon/2)$.
\end{lemma}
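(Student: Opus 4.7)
The plan is to bound $S_{u^*}(\BTree)$ via a tree-recursion argument which exploits the fact that $\BTree$ is a tree rooted at $z$ in which every vertex has aggregate influence at most $1-\varepsilon/2$.

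The first step is to set up an iterated coupling of the updates on $B$ that propagates the disagreement at $u^*$ down the tree $\BTree$. Since the cycle in $B$, if it exists, is at distance at least $\log^5 d$ from $\partial_{\rm out} B$ by property \eqref{itm:BuffCond}, it is not contained in $\BTree$, so $\BTree$ is indeed a tree. Conditioning on the value at the root $z$ and exploiting the tree Markov property of $\mu_B$ along $\BTree$, one can apply Lemma~\ref{lemma:Coupling4TreeMeasures} layer by layer: first at $u^*$ with $M=\{z\}$ to bound $\Pr[z\in X_{t+1}\oplus Y_{t+1}]\leq \Inf_{\{u^*,z\}}$; then, conditionally on $z$ disagreeing, at $z$ with $M$ the children of $z$ in $\BTree$; and so on down the tree. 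This yields the standard tree-propagation bound
\begin{equation*}
\Pr\bigl[w\in X_{t+1}\oplus Y_{t+1}\bigr] \;\leq\; \Inf_{\{u^*,z\}} \cdot \prod_{i=1}^{\ell} \Inf_{\{w_{i-1},w_i\}}
\end{equation*}
for every $w\in \BTree$, where $z=w_0,w_1,\ldots,w_\ell=w$ is the unique path from $z$ to $w$ in $\BTree$. Summing over $w\in \BTree$ with weights $n^4\cdot \WA_{\rm out}(w)$ gives
\begin{equation*}
S_{u^*}(\BTree) \;\leq\; \Inf_{\{u^*,z\}} \cdot n^4 \cdot g(z),
\end{equation*}
where the potential $g$ on $\BTree$ is defined recursively by $g(v) = \WA_{\rm out}(v) + \sum_{c}\, \Inf_{\{v,c\}}\, g(c)$, with the sum ranging over the children $c$ of $v$ in $\BTree$.

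The second step is an induction on $\BTree$, from the leaves upward, showing $g(v)\leq \WA(v)$ for every $v\in \BTree$. At a leaf $v$ we have $g(v)=\WA_{\rm out}(v)\leq \WA(v)$ trivially. For the inductive step we first note that every $v\in \BTree$ satisfies $\WA(v)\leq 1-\varepsilon/2$: by definition $v$ lies at graph distance at most $\log d$ from $u^*\in \partial_{\rm out} B$, which by property \eqref{itm:BPblockBoundary} is a $(d,\varepsilon)$-block vertex, so $\WA(v)>1-\varepsilon/2$ would contradict Corollary~\ref{cor:BufferWeightsSmall}. Consequently $g(c)\leq \WA(c)\leq 1$ for every child $c$, and therefore
\begin{equation*}
g(v) \;\leq\; \WA_{\rm out}(v) + \sum_{c} \Inf_{\{v,c\}} \;\leq\; \WA(v),
\end{equation*}
where the last inequality holds because the summands are edge-influences over pairwise distinct edges incident to $v$ (the edges to neighbours outside $B$ and the edges to children in $\BTree\subset B$ form disjoint subsets of $v$'s incident edges). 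Specialising to $v=z$ gives $g(z)\leq \WA(z)\leq 1-\varepsilon/2$, which combined with the display of the first step yields the claim.

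The main (though largely routine) obstacle is the coupling construction of the first step: Lemma~\ref{lemma:Coupling4TreeMeasures} is a one-step statement, whereas the block dynamics updates all of $B$ simultaneously. The role of the tree structure of $\BTree$ and of property \eqref{itm:BuffCond} of the block partition is precisely to allow one to chain the one-step couplings into a multi-step bound along each root-to-vertex path, while the role of Corollary~\ref{cor:BufferWeightsSmall} is to ensure that the resulting tree recursion is contractive, i.e.\ that $g(v)\leq 1-\varepsilon/2$.
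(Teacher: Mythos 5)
Your proof is correct and takes essentially the same route as the paper's: both establish the same tree recursion via Lemma~\ref{lemma:Coupling4TreeMeasures} and close it using Corollary~\ref{cor:BufferWeightsSmall} to guarantee $\WA(v)\le 1-\varepsilon/2$ for every $v\in\BTree$. The only difference is packaging --- the paper does a top-down induction on tree height with hypothesis $S_z(\BTree_y)\le\Inf_{\{z,y\}}\cdot n^4\cdot(1-\varepsilon/2)$, whereas you introduce the explicit potential $g$ and prove $g(v)\le\WA(v)$ by a leaves-up induction; unrolling either recursion gives an identical bound.
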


As far as $S_{u^*}(\BRest)$ is concerned, we work as follows:
Let $S_A$ be the  contribution to $S_{u^*}(\BRest)$  coming from vertices reachable from $z$ with paths in $B$ that do not include vertices of the cycle in $B$ (if there is any). Also, let $S_B$ be  contribution to $S_{u^*}(\BRest)$  coming from vertices that are reachable from $z$ via a path that includes vertices from the cycle (if there is any). 
The linearity of expectation implies that 
\begin{align}\label{eq:SUVsSASB}
S_{u^*}\left(\BRest\right)&=S_A+S_B\enspace. 
\end{align}

We prove the following bounds for $S_A$ and $S_B$ in sections \ref{sec:lemma:ContributionsSA} and \ref{sec:prop:ContributionsSB}, respectively.

\begin{lemma}\label{lemma:ContributionsSA}
We have that $S_A\leq   \Inf_{\{u^*, z\}}\cdot n^4 \cdot d^{-{\varepsilon}/{5}}$.
\end{lemma}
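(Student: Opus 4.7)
The plan is to first reduce $S_A$ to a path-propagation sum via the recursive tree coupling from Lemma \ref{lemma:Coupling4TreeMeasures}, then bound that sum using the block-vertex path-weight inequality at $u^*$ while exploiting the light-buffer structure of $\BTree$ to control the aggregate contribution over all $w\in\BRest$.

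I will begin by iterating Lemma \ref{lemma:Coupling4TreeMeasures} layer by layer through the tree part of $B$, which is well-defined for $S_A$ since its contributing vertices are reached from $z$ via non-cycle paths. Standard recursive coupling then yields $\Pr[w\in X_{t+1}\oplus Y_{t+1}\mid u^*\text{ disagrees},\ B\text{ updated}]\leq \prod_{e\in P_{u^*\to w}}\Inf_e$, so that
\begin{align*}
S_A \leq n^4\cdot\Inf_{\{u^*,z\}}\cdot\sum_{w\in\BRest}\WA_{\rm out}(w)\prod_{e\in P_{z\to w}}\Inf_e\enspace.
\end{align*}
Each summand I will bound by $\prod_{v\in P_{z\to w}}\WA(v)$, using $\Inf_e\leq\WA(\cdot)$ at the deeper endpoint together with $\WA_{\rm out}(w)\leq \WA(w)$. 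Next I would invoke the block-vertex property of $u^*$ (Definition \ref{def:BlockVertex}), which gives $\cappedWA(P_{u^*\to w})<1$ along every short path from $u^*$. Combined with the vertex-wise inequality $\WA(v)\leq \rho\cappedWA(v)$ for $\rho:=(1-\varepsilon/2)/(1-\varepsilon/4)\leq 1-\varepsilon/5$ (valid for small $\varepsilon$ and $d$ large, verified by case analysis on whether $v$ is heavy or light per Definition \ref{def:VertexWeights}), this yields per-path decay $\prod_{v\in P_{u^*\to w}}\WA(v) < \rho^{|P_{u^*\to w}|}$. Since $w\in\BRest$ forces $|P_{u^*\to w}|\geq \log d + 1$, each individual $w$ contributes at most $\rho^{\log d}\leq d^{-\varepsilon/5}$.

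Finally, to aggregate over all $w\in\BRest$ without incurring a factor of $|B|$, I will work with the subtree-sum $F(v):=\WA_{\rm out}(v)\Ind\{v\in\BRest\}+\sum_{u\,\text{child of}\,v}\Inf_{\{v,u\}}F(u)$, whose value at $z$ is exactly the remaining sum. By Corollary \ref{cor:BufferWeightsSmall} every buffer vertex $v\in\BTree$ is light, so $\sum_u\Inf_{\{v,u\}}\leq\WA(v)\leq 1-\varepsilon/2$; hence $F(v)\leq (1-\varepsilon/2)\max_u F(u)$ at each of the $\log d$ buffer layers separating $z$ from the $\BRest$ frontier. Iterating this contraction and combining with the per-path bound at the frontier yields $F(z)\leq d^{-\varepsilon/5}$, from which the lemma follows. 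The main obstacle is precisely this aggregation step: the per-path bound alone gives $d^{-\varepsilon/5}$ per $w$, and naively summing over the potentially many $w\in\BRest$ reached from $z$ would overwhelm the saving. The resolution is the geometric contraction of $F$ through the $\log d$ light-buffer layers, which absorbs the tree's branching before the sum even reaches $\BRest$.
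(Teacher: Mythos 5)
Your proposal is correct and takes essentially the same route as the paper: iterate the tree coupling of Lemma \ref{lemma:Coupling4TreeMeasures} to reduce $S_A$ to a worst-case path sum emanating from $z$, exploit the block-vertex constraint $\cappedWA(P_{u^*\to w})<1$ for per-path geometric decay, and sum the geometric tail over depths $j\geq\log d$. Your vertex-wise bound $\WA(v)\leq\rho\,\cappedWA(v)$ with $\rho=(1-\varepsilon/2)/(1-\varepsilon/4)$ is a slightly cleaner substitute for the paper's explicit heavy/light split via Corollary~\ref{cor:HeavyWABound}; both are equivalent consequences of $\cappedWA(P)<1$.

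Two small points worth tightening. First, your summands carry $\prod_{v\in P_{z\to w}}\WA(v)$, not $\prod_{v\in P_{u^*\to w}}\WA(v)$, so the decay bound $\rho^{|P|}$ must be renormalized by dividing off the contribution of $u^*$; the safe way is to use $\cappedWA(u^*)=1-\varepsilon/4$ (a fixed constant, since $u^*$ is a block vertex) rather than $\WA(u^*)$, which could be arbitrarily small. Second, the phrasing ``iterate the buffer contraction, then apply the per-path bound at the frontier'' is loose: $\max_{\text{frontier}}F(v)$ is not bounded on its own, since $\BRest$ contains heavy vertices. The correct reading is to unwind the $F$-recursion all the way down, obtaining
\[
F(z)\ \leq\ \max_{P}\sum_{j\geq \log d}\WA_{\rm out}(w_j)\prod_{i<j}\WA(w_i)\enspace,
\]
and then to apply the per-path bound term by term before summing the geometric series; this is also how the paper handles it via its $\Upphi(j)$ bound. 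Neither issue is a genuine gap, and the argument is sound.
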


\begin{proposition}\label{prop:ContributionsSB}
We have that $S_{B} \leq \Inf_{\{u^*, z\}}\cdot n^4  \cdot d^{-12}$.
\end{proposition}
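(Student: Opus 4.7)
\textbf{Proof plan for Proposition \ref{prop:ContributionsSB}.} The plan is to exploit the fact that, by property \eqref{itm:BuffCond} of \cref{def:BlockDecomp}, the cycle inside $B$ lies at distance at least $\log^5 d$ from $\partial_{\rm out}B$, so every $w \in \BRest$ reachable from $z$ only through paths that traverse an edge of the cycle sits at path-distance at least $\log^5 d$ from $z$ along any such route. This is a much stronger buffer than the one used for $S_A$, and it is where the exponent $12$ will come from.

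The coupling between $X_{t+1}$ and $Y_{t+1}$ is constructed by iterating \cref{lemma:Coupling4TreeMeasures}. To accommodate the (unique) cycle inside $B$, I first sever one cycle edge so that $B$ unfolds to a tree with one identification, and then run the tree lemma on this unfolding. Since $B$ is unicyclic there are at most two simple paths in $B$ from $z$ to any $w$, and a union bound over them yields
\begin{align*}
\Pr\bigl[w \in X_{t+1}\oplus Y_{t+1}\,\big|\, X_t,Y_t,\, B \text{ updated}\bigr]\ \le\ \Inf_{\{u^*,z\}} \cdot \sum_{P:\, z\to w}\ \prod_{v\in P\setminus\{z,w\}}\WA(v),
\end{align*}
in the same style as in the proof of \cref{lemma:ContributionsSA}, but with the number of summands bounded by $2$ for each endpoint $w$.

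For every such cyclic path $P$ of length $\ell \ge \log^5 d$, partition its interior into the heavy set $M$ (vertices with $\WA > 1-\varepsilon/2$) and the light complement. Extending $P$ within $B$ so it terminates at a $(d,\varepsilon)$-block vertex on $\partial_{\rm out}B$, \cref{cor:HeavyWABound} gives $\prod_{v \in M}\WA(v) \le d^{-|M|}(1-\varepsilon/4)^{|M|-\ell}$; combining with $\WA(v) \le 1-\varepsilon/2$ for each light vertex and simplifying gives
\begin{align*}
\prod_{v \in P \setminus \{z,w\}}\WA(v)\ \le\ (1-\alpha)^{\ell}\ \le\ (1-\alpha)^{\log^5 d}
\end{align*}
for some absolute constant $\alpha = \alpha(\varepsilon) > 0$. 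Since $(1-\alpha)^{\log^5 d} = d^{-\Omega(\log^4 d)}$, this kills any polynomial in $d$ once $d$ is large enough. Summing now over $w \in \BRest$ reachable cyclically and using property~\eqref{itm:BPSingleNeighBoundary} of \cref{def:BlockDecomp}, which ensures $\sum_{w \in B} \WA_{\rm out}(w) \le |\partial_{\rm out}B|$, together with the structural bound on block size furnished by \cref{thrm:BlockParitionGnp} and property~\eqref{itm:propty2} of $\IntrstGraphFam$, one obtains
\begin{align*}
S_B\ \le\ n^4\cdot \Inf_{\{u^*,z\}}\cdot 2\,|\partial_{\rm out}B|\cdot (1-\alpha)^{\log^5 d}\ \le\ \Inf_{\{u^*,z\}}\cdot n^4\cdot d^{-12}
\end{align*}
for $d\ge d_0$ sufficiently large.

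The main obstacle I anticipate is twofold. First, \cref{lemma:Coupling4TreeMeasures} is stated only for tree-neighbourhoods, so cyclic paths require a careful unfolding of $B$: severing one cycle edge, duplicating its endpoint, and arguing that running the iterative coupling on the two resulting arcs preserves the multiplicative $\WA$-bound (up to the harmless factor of $2$). Second, one must ensure that the cumulative weight $\sum_{w}\WA_{\rm out}(w)$ inside $B$ is indeed dominated by the exponential decay $d^{-\Omega(\log^4 d)}$; this requires a non-trivial block-size or outgoing-influence bound that must be extracted from $\IntrstGraphFam$, and is the step demanding the most care.
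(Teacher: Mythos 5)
Your high-level strategy is correctly aimed: exploit the $\log^5 d$ buffer (property~\eqref{itm:BuffCond}), unfold the unicyclic block by severing a cycle edge (essentially what the paper does via \cref{lemma:DisWeightCycleStuff}, which introduces a duplicate vertex and uses a triangle inequality, yielding the same factor of~$2$), and use the multiplicative decay forced by \cref{cor:HeavyWABound}. However, there is a genuine gap in how you aggregate the contributions, and you yourself flag the weak spot (the $\sum_w \WA_{\rm out}(w)$ bound) as the step needing the most care — that step does not go through.

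Two problems. First, the per-vertex inequality $\Pr[w \in X_{t+1}\oplus Y_{t+1}] \le \Inf_{\{u^*,z\}}\sum_{P:z\to w}\prod_{v\in P}\WA(v)$ is not what \cref{lemma:Coupling4TreeMeasures} or the recursion in \cref{lemma:ContributionsSA} give. For a fixed target $w$ along a fixed path $P$, the probability that the disagreement propagates is at most $\prod_{e\in P}\Inf_e$, not $\prod_{v\in P}\WA(v)$; the aggregate influences $\WA$ only enter after you sum over all children at a level, via $\sum_{y\sim x}\Inf_{\{x,y\}}\le \WA(x)-\WA_{\rm out}(x)$. Second, even if you replace $\prod\WA$ with the correct $\prod\Inf_e$ and then sum over $w\in\BRest$ weighted by $n^4\WA_{\rm out}(w)$, you are left to bound $\sum_{w\in B}\WA_{\rm out}(w)\le |\partial_{\rm out}B|$. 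But $|\partial_{\rm out}B|$ is a function of $n$ — the blocks have depth up to $\Theta(\log n/\sqrt d)$ and can branch, so $|\partial_{\rm out}B|$ can be $n^{\Theta(\log d/\sqrt d)}$ — whereas $(1-\alpha)^{\log^5 d}$ is a constant in $n$. Nothing in $\IntrstGraphFam$ bounds $|\partial_{\rm out}B|$ by a quantity that $(1-\alpha)^{\log^5 d}$ dominates, so the product does not collapse to $d^{-12}$.

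The paper's proof avoids any block-size factor by never summing over terminal vertices at all. The recursion of \cref{lemma:ContributionsSA} (culminating in \eqref{eq:QvTGenBound}) interleaves, at every level of the tree, the estimate $\sum_{y}\Inf_{\{x,y\}}\le \WA(x)-\WA_{\rm out}(x)\le 1$ with a maximum over children, so that the whole aggregate $S_B$ telescopes onto a single worst-case path $P^*=(u_0,u_1,\dots)$. This gives
\begin{align*}
S_B \;\le\; 2\,\Inf_{\{u^*,z\}}\cdot n^4\cdot \sum_{j\ge \BCDist+1}\ \prod_{i=0}^{j-1}\WA(u_i)\enspace,
\end{align*}
with $\BCDist\ge\log^5 d$, and then the geometric-series argument already used for $S_A$ delivers $d^{-12}$ with room to spare. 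This worst-case-path reduction is precisely the mechanism your plan is missing; if you replace your ``sum over $w\in\BRest$ and bound $\sum_w\WA_{\rm out}(w)\le|\partial_{\rm out}B|$'' step by the max-over-children telescoping from \eqref{eq:QvTGenBound}, and replace $\prod\WA(v)$ by the correct $\prod[\WA(u_i)-\WA_{\rm out}(u_i)]$ appearing there, your outline becomes the paper's proof.
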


Plugging the bounds from  Lemma \ref{lemma:ContributionsSA} and Proposition \ref{prop:ContributionsSB} into \eqref{eq:SUVsSASB}, we get
\begin{align*}
S_{u^*}\left(\BRest\right)&\leq 2 \Inf_{\{u^*, z\}}\cdot n^4  \cdot d^{-\varepsilon/5}  \enspace. 
\end{align*}
Plugging the above bound  for $S_{u^*}(\BRest)$, and the bound from Lemma \ref{lemma:Bound4QTPlustUniformity} for $S_{u^*}(\BTree)$ into \eqref{eq:LinearityOfQvBVsQvT+Rest}, gives the desired bound for  $S_{u^*}(B)$, and Theorem \ref{thrm:BDContraction} follows.
\hfill $\Box$

\subsection{Proof of Lemma \ref{lemma:Bound4QTPlustUniformity}}\label{sec:lemma:Bound4QTPlustUniformity}

Since $\BTree\cup\{u^*\}$ is a tree, while there is a single disagreement at $u^*$, we can use the coupling introduced in Lemma \ref{lemma:Coupling4TreeMeasures}. 

We prove the lemma using induction on the height of $\BTree$. The base case is when $\BTree$ is a single 
vertex tree, i.e., the height is $0$. Let $\BTree=\{z\}$.  In this case, recall that due to
Corollary \ref{cor:BufferWeightsSmall}, we must have
$\WA_{\rm out}(z)\le \WA(z) \le 1-\varepsilon/2$.  
Then, by Lemma \ref{lemma:Coupling4TreeMeasures}, we have that
\begin{align*}
S_{u^*}\left(\BTree\right) & \leq  \Inf_{\{u^*, z\}} \cdot n^4 \cdot \WA_{\rm out}(z)  \leq  \Inf_{\{u^*, z\}} \cdot n^4 \cdot (1-\varepsilon/2) \enspace.
\end{align*}
We proceed with the inductive step.  Recall that  $z$ is  the root of $\BTree$. For every vertex $y$, child of vertex $z$ in $\BTree$, we let  $\BTree_{y}$ be  the subtree of $\BTree$ rooted at $y$, and containing all its decadents.
The induction hypothesis is that $S_{z}(\BTree_{y})\leq \Inf_{\{z, y\}}  n^4 \cdot (1-\varepsilon/2)$, for every $y$ child of $z$. Then, we have that 
\begin{align*}
S_{u^*} \left(\BTree \right )  
&\leq 
\Inf_{\{u^*, z\}} \cdot \left(n^4\WA_{\rm out}(z) + \sum_{y\in N(z) \cap \BTree} S_z\left( \BTree_{y}\right) \right) \\
&\le  \Inf_{\{u^*, z\}} \cdot \left(n^4\WA_{\rm out}(z) + \sum_{y\in N(z) \cap \BTree} \Inf_{\{z, y\}}  \cdot n^4 \cdot (1-\varepsilon/2) \right) 
&\mbox{[induction hypothesis]} 
\\
&\le  \Inf_{\{u^*, z\}} \cdot n^4 \cdot\left(\WA_{\rm out}(z) + \left[\WA(z)- \WA_{\rm out}(z) \right]    \right) \\   
&\le \Inf_{\{u^*, z\}} \cdot  n^4\cdot (1-\varepsilon/2) \enspace.
\end{align*}
The above proves the inductive step, and thus, the lemma follows.  
\hfill $\Box$

\subsection{Proof of Lemma \ref{lemma:ContributionsSA}}\label{sec:lemma:ContributionsSA}
 First note that  the vertices in $B$ that are reachable from $z$ using a path that does not include vertices 
 in the cycle, induce a tree.  Let us denote this tree with $\BBTree$.

Working as in Lemma \ref{lemma:Bound4QTPlustUniformity} we have that
\begin{align*}
S_{u^*}\left(\BBTree\right) \leq 
\Inf_{\{u^*, z\}} \cdot \left(n^4\cdot \WA_{\rm out}(z) + { \sum\nolimits_{x \in N(z)\cap B}} \;S_z\left(\BBTree_x\right) \right) \enspace,
\end{align*}
where $\BBTree_x$ is the subtree of $\BBTree$ including $x$ and all its decedents. 
Repeating the above step, we get 
\begin{align}
S_{u^*}\left(\BBTree\right)
&\leq
\Inf_{\{u^*, z\}}  \left(n^4 \WA_{\rm out}(z) 
	\; + \;
\sum_{x \in N(z)\cap B} 
\Inf_{\{ z, x\}} \left(n^4 \WA_{\rm out}(x) 
	+ 
\sum_{y \in (N(x)\cap B)\setminus \{z\} }  S_{x}\left( \BBTree_{y}\right) \right) \right)
\nonumber\\
&\le   \Inf_{\{u^*, z\}}   \left( n^4 \WA_{\rm out}(z)  + 
\left[\WA(z)-\WA_{\rm out}(z) \right] 
\max_{x \in N(z)\cap B } \left\{ n^4 \WA_{\rm out}(x) + { \sum_{y \in (N(x)\cap B)\setminus \{z\} }} S_x\left(\BBTree_y\right) \right\} 
\right) 
\nonumber 
\enspace.
\end{align}
Repeating the above steps deeper on $\BBTree$, and using induction, we find a worst-case path emanating from $z$, $P^* = (w_0 = z,  \ldots, , w_\ell)$, such that
\begin{align}
S_{u^*}\left(\BBTree\right)
&\leq  \Inf_{\{u^*, z\}}\cdot  n^4
\cdot
\left(
\sum_{j=0}^\ell  \WA_{\rm out}(w_j)
\prod_{i=0}^{j-1} 
\left [ \WA(w_i)-\WA_{\rm out}(w_i)\right]
\right)\enspace.  \label{eq:QvTGenBound}
\end{align}
For $r=\log d$, it is direct that
\begin{equation}\label{eq:SvBB} 
S_A \leq   \Inf_{\{u^*, z\}}\cdot n^4 \cdot\sum_{j\geq r+1}   \WA_{\rm out}(w_j)
\prod_{i=0}^{j-1}   
\left[ \WA(w_i)-\WA_{\rm out}(w_i) \right] \enspace.
\end{equation}
Since for every vertex $w\in B$ we have $0\leq \WA_{\rm out}(w) \  \leq 1-\varepsilon/2 <1$, we get that
\begin{align}\label{eq:QBDef}
S_A & \leq \Inf_{\{u^*, z\}}\cdot n^4\cdot  
\sum_{j\geq r+1}  \prod_{i=0}^{j-1}  \left[\WA(w_i)-\WA_{\rm out}(w_i) \right] \leq  n^4\cdot  
\sum_{j\geq r+1}  \prod_{i=0}^{j-1}  \WA(w_i) \enspace.
\end{align}
The lemma follows by bounding appropriately the magnitude of each summand in \eqref{eq:QBDef}, separately. 

For $j\geq 1$, let $M_j$ be the set of heavy vertices in $\{w_0,\ldots, w_{r+j-1}\}$, i.e, those vertices $w$ with $\WA(w) > 1 - \varepsilon/2$. Letting also $m_j=|M_j|$, we define
\begin{align}
\Upphi(j) &=    \prod_{i=0}^{r+j-1}   \WA(w_i) \leq  \left( 1-\varepsilon/2 \right)^{r+j-m_j} \prod_{w\in M_j}\WA(w)\enspace.   
\label{eq:FirstBound4Rj}
\end{align}
To bound $\Upphi(j)$, we need to argue about  $\prod_{w\in M_j}\WA(w)$. 
Using Corollary \ref{cor:HeavyWABound} we get that
\begin{eqnarray}\label{eq:ProdDegreeBound}
\textstyle \prod_{w\in M_j}\WA(w) \leq d^{-m_j}\left( 1-\varepsilon/4\right)^{-r-j+m_j} \enspace.
\end{eqnarray}	
Plugging (\ref{eq:ProdDegreeBound}) into (\ref{eq:FirstBound4Rj}) we get that
\begin{align}
\Upphi(j)  &\leq  \left(\frac{1-\varepsilon/2}{1-\varepsilon/4}\right)^{r+j-m_j} d^{-m_j}  \leq 
\left(1-\frac{\varepsilon}{4}\right)^{r+j-m_j} d^{-m_j} 
\leq d^{-\frac{\varepsilon}{4}} \left(1-\frac{\varepsilon}{4}\right)^{j}. 
\label{eq:RJFinallBound} 
\end{align}
In the last inequality we use that $(1-\varepsilon/4)^{r}< d^{-\frac{\varepsilon}{4}} $ and $\left(d/(1-\varepsilon/4)\right)^{-m_j}\leq 1$, since $r=\log d$.
Plugging (\ref{eq:RJFinallBound})  into (\ref{eq:QBDef}), and changing variable, we get
\begin{align}
S_A & \leq  \Inf_{\{u^*, z\}}n^4 \cdot\sum_{j\geq 1} \Upphi(j) \ 
\leq  \Inf_{\{u^*, z\}}  n^4  \cdot d^{-\frac{\varepsilon}{4}}  \sum_{j\geq 0}\left( 1-\varepsilon/4 \right)^{j} 
=   \Inf_{\{u^*, z\}}  n^4 \cdot d^{-\frac{\varepsilon}{4}} \cdot 4\varepsilon^{-1}  
\leq   \Inf_{\{u^*, z\}}  n^4   \cdot d^{-\frac{\varepsilon}{5}}  , \nonumber 
\end{align}
where the last inequality holds for sufficiently large $d$. 
The lemma follows.
\hfill $\Box$

\subsection{Proof of Proposition \ref{prop:ContributionsSB}}\label{sec:prop:ContributionsSB}
If there is no cycle in $B$, then $S_B=0$, hence the proposition is trivially true. 
For what follows, assume that block $B$ contains the short cycle $C$. 

Recall the definition of $\BBTree$  from the proof of Lemma \ref{lemma:ContributionsSA}. That is, $\BBTree$ is the sub-block of $B$ which  contains all vertices that are reachable from $u^*$ through a path inside $B$ that does not pass through any vertex of $C$. 
Let $w$ be the vertex in the cycle $C$ that is closest to $z$. Note that $w$ is not in $\BBTree$.  
Let  the set $\Lambda$ consists of all vertices in $B$ apart from $w$, that do not belong to $\BBTree$. 

\begin{lemma}\label{lemma:DisWeightCycleStuff}
 Let $\mathbb{P}$ be the set of paths from $w$ to the external vertices in $\Lambda$, i.e., that  have at least one neighbour in  $\partial_{\rm out} B$.
We have that 
\begin{align*}
S_{w}(\Lambda)& 
\leq  2 \Inf_{\{u^*,z\}} \cdot n^4 \cdot \max_{{\rm P}\in \mathbb{P}} \left\{
\sum^{|P|}_{j=1} \WA_{\rm out}(u_j)\prod^{j-1}_{i=1}[\WA(u_i)-\WA_{\rm out}(u_i)]\right\}\enspace. 
\end{align*}
\end{lemma}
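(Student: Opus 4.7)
The plan is to reduce the unicyclic case to the tree-based recursive analysis already carried out in Lemma \ref{lemma:ContributionsSA}, paying a multiplicative price of $2$ to account for the cycle being traversable from $w$ in two directions.

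Let $C = (w, c_1, \ldots, c_k, w)$ denote the unique cycle of $B$; by property \eqref{itm:BPShortCycle} we have $k \le 4\log n / \log^4 d$, and by property \eqref{itm:BuffCond} the cycle is at distance at least $\log^5 d$ from $\partial_{\rm out} B$. Since $B$ is unicyclic (property \eqref{itm:BPunicyclic}), deleting the edge $\{c_k, w\}$ of $C$ transforms $B$ into a tree $\tilde{B}$ rooted at $w$, in which the cycle becomes the path $w \to c_1 \to \cdots \to c_k$ while every subtree hanging off the cycle vertices (which together with $\{c_1, \ldots, c_k\}$ exhausts $\Lambda$) is preserved intact. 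I would then upper bound $S_w(\Lambda)$ by the sum of two tree-expectations, one corresponding to the propagation $w \to c_1$ and one to $w \to c_k$: apply the maximal coupling of Lemma \ref{lemma:Coupling4TreeMeasures} separately to the two subsets of $w$'s neighbours in $B$ obtained by dropping one cycle-neighbour at a time, and add the two resulting bounds. The factor of $2$ arises from the double-counting of the contributions of hanging subtrees at cycle vertices on the ``far'' side.

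On each of the two resulting trees rooted at $w$, I would run the recursive unfolding exactly as in the derivation of \eqref{eq:QvTGenBound}: iteratively invoking Lemma \ref{lemma:Coupling4TreeMeasures} along downward paths, the contribution $n^4 \WA_{\rm out}(u_j)$ at each vertex $u_j$ enters additively, scaled by the accumulated product $\prod_{i=1}^{j-1}[\WA(u_i) - \WA_{\rm out}(u_i)]$ coming from the bounds on sibling contributions; the recursion terminates at the external vertices of $\Lambda$. Taking the worst-case downward path in each of the two branches yields the $\max_{P \in \mathbb{P}}$-expression stated in the lemma. The prefactor $\Inf_{\{u^*, z\}}\cdot n^4$ is carried over by the same mechanism that produced the prefactor in \eqref{eq:QvTGenBound}, reflecting the initial propagation of the disagreement at $u^*$ through $z$ (the only edge through which the disagreement at $u^*$ can reach $w$ once $B$ is updated) and the $n^4$ scaling built into the metric $\cD(\Lambda, \cdot, \cdot)$.

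The main obstacle I expect is rigorously justifying the cycle-unrolling with only a factor of $2$ loss. In the Gibbs measure on the unicyclic $B$, the two cycle-neighbours of $w$ are correlated through the long backbone $c_1 \to \cdots \to c_k$ that closes the cycle, so Lemma \ref{lemma:Coupling4TreeMeasures} cannot be applied with $M$ containing both cycle-neighbours of $w$ simultaneously. Treating the two directions separately and summing is the natural fix, yielding the $2$, but one must check that the correlations introduced by closing the cycle do not inflate the recursive bound. Here I would lean on properties \eqref{itm:BPShortCycle} and \eqref{itm:BuffCond} -- the cycle is short and buffered from $\partial_{\rm out} B$ by at least $\log^5 d$ light vertices -- to absorb any cycle-induced correlations into the existing $\WA$-versus-$\WA_{\rm out}$ bookkeeping along each branch.
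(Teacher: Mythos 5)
Your high-level strategy---split the cycle at $w$ into two directions, run the tree recursion twice, and pay a factor of $2$---matches the paper's, and you correctly pinpoint the central obstacle: $w$'s two cycle-neighbours are correlated through the cycle backbone, so \cref{lemma:Coupling4TreeMeasures} cannot be applied with both in $M$. But the mechanism you propose does not actually resolve this, and your closing paragraph essentially admits as much.

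Concretely, ``deleting the edge $\{c_k,w\}$'' does not reduce the problem to a tree Gibbs measure: the spin at $w$ is fixed by the boundary condition, so both edges $\{w,x_1\}$ and $\{w,x_k\}$ contribute non-constant factors to the Hamiltonian of $\mu_\Lambda(\cdot\mid M,\sigma)$, and removing one of them changes the conditional law. Likewise, ``applying \cref{lemma:Coupling4TreeMeasures} to two subsets of $w$'s neighbours obtained by dropping one cycle-neighbour at a time'' and attributing the factor of $2$ to double-counting of far-side subtrees is not a valid reduction: each direction, followed around the cycle, reaches every vertex of $\Lambda$, so the ``two subsets'' are not genuinely disjoint pieces, and you have not shown that the resulting bound is inflated by at most a factor of $2$ in the presence of the cycle correlations. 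The appeal to properties~\eqref{itm:BPShortCycle} and~\eqref{itm:BuffCond} to ``absorb'' these correlations is not what the paper does here; those properties enter later (in \cref{prop:ContributionsSB}) to bound the tail sum, not to justify the present lemma.

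The paper's device, which you are missing, is a \emph{cycle-splitting} trick that leaves the measure unchanged. Insert a dummy boundary vertex $z'$ on one of the two cycle edges incident to $w$ (so $w \to z' \to x_1$), and replace $M$ by $\widehat M = M \cup \{z'\}$. Pinning $\widehat\sigma(z')=\widehat\sigma(w)$ reproduces exactly $\mu_\Lambda(\cdot\mid M,\sigma)$, and relative to $\widehat M$ the set $\Lambda$ is now a genuine tree. A disagreement at $w$ in the old boundary then corresponds to a \emph{simultaneous} disagreement at $\{w,z'\}$ in the new boundary, which the paper removes one pin at a time through the intermediate configuration $\widehat\sigma^{+-}$: the coupling $\upkappa_1$ flips only $z'$, the coupling $\upkappa_2$ flips only $w$, and the triangle inequality gives $S_w(\Lambda)\le S_w^{(1)}(\Lambda)+S_w^{(2)}(\Lambda)$. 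Each $S_w^{(i)}$ now has a single boundary disagreement on a tree and is bounded by the recursion of \cref{lemma:Coupling4TreeMeasures} exactly as in the derivation of~\eqref{eq:QvTGenBound}. This is where the factor $2$ comes from---it is the triangle inequality over an intermediate pinning, not an overlap/double-counting argument---and it is what makes the reduction to the tree case airtight without any additional control on cycle-induced correlations.
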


Repeating the same line of argument to that we used in \eqref{eq:QvTGenBound} (proof of Lemma \ref{lemma:ContributionsSA}) we get that
\begin{align}
S_B & \leq 2 \Inf_{\{u^*, z\}}\cdot n^4\cdot  
\sum_{j\geq \BCDist+1}  \prod_{i=0}^{j-1}  \left[\WA(u_i)-\WA_{\rm out}(u_i) \right] \leq 2\Inf_{\{u^*, z\}}\cdot  n^4\cdot  
\sum_{j\geq \BCDist+1}  \prod_{i=0}^{j-1}  \WA(u_i) \enspace,
\end{align}
where $\BCDist\geq \log^5 d$ is the distance from $u^*$ to the cycle in $B$. 
Then, using arguments which are almost identical to those in the proof of Lemma \ref{lemma:ContributionsSA} we conclude that
\begin{align}
S_B &\leq  \Inf_{\{u^*, z\}}\cdot n^4 \cdot d^{-12} \enspace. \nonumber 
\end{align}
All the above conclude the proof of Proposition \ref{prop:ContributionsSB}. \hfill $\Box$

\subsubsection{Proof of Lemma \ref{lemma:DisWeightCycleStuff}}\label{sec:lemma:DisWeightCycleStuff}

Let $x_0, x_1, \ldots, x_{k}$ be the vertices in the cycle $C$, for some integer $k\geq 3$.  Also, we identify $w$ with the vertex $x_0$, i.e., $w=x_0$.

Let $M$ be the set of vertices in $B$ that are outside $\Lambda$  but 
have at least one neighrbour inside  $\Lambda$.
Note that  $w\in M$. Furthermore,  our assumptions imply that the graph induced by $\Lambda$ 
a tree as it does not contain the whole cycle $C$, recall that  $w$ is not included in $\Lambda$.


We need to show that there exists a coupling $\upkappa$ of the marginals
$\mu_{\Lambda}(\cdot |\ M, \sigma^+)$ and  
$\mu_{\Lambda}(\cdot\  |\ M, \sigma^-)$,
where $\sigma^+, \sigma^-$ are configurations at $M$  that disagree only at $w$
such that 
\begin{align}\label{eq:target4lemma:DisWeightCycleStuff}
\lefteqn{
S_{w}(\Lambda)=\Exp\left[ 
\left .  {\cD}(\Lambda, X_{t+1}, Y_{t+1})  \ \right |
\    X_{t+1} (M)=\sigma,\  
Y_{t+1}(M)=\tau,  B\textrm{ updated at $t+1$} 
\right] } \hspace{6cm} \nonumber \\
 & \leq 2 n^4 \cdot \max_{{\rm P}\in \mathbb{P}} \left\{ \sum^{|P|}_{j=1} \WA_{\rm out}(u_j)\prod^{j-1}_{i=1}[\WA(u_i)-\WA_{\rm out}(u_i)] \right \}\enspace. 
\end{align}
For  $\sigma^+$ we imply that $\sigma^+(w)=+1$ and, similarly, for  $\sigma^-$ we imply
that $\sigma^-(w)=-1$.  

For the sake of the analysis, rather than considering set $M$, we consider
a slightly different set for the boundary of $\Lambda$. We call this new set $\widehat{M}$. 
Specifically, we introduce a new vertex $z$ in the cycle $C$ between
$w$ and $x_1$. That is, instead of having $w$ connected to $x_1$, we have that $w$ is connected to 
$z$ and, in turn,  $z$ is connected to $x_1$.   

Note that for any configuration $\sigma$ at $M$ we have that 
the measure  $\mu_{\Lambda}(\cdot |\ M, \sigma)$ is identical to the measure
$\mu_{\Lambda}(\cdot |\ \widehat{M}, \widehat{\sigma})$, where $\widehat{\sigma}$
is the configuration at $\widehat{M}$ such that for every $u\in M\cap \widehat{M}$
we have $\sigma(u)=\widehat{\sigma}(u)$, and $\widehat{\sigma}(z)=\widehat{\sigma}(w)$. 

We consider below a  coupling $\upkappa_1$ between the distributions  
$\mu_{\Lambda}(\cdot |\ \widehat{M}, \widehat{\sigma}^{++})$ and  
$\mu_{\Lambda}(\cdot\  |\ \widehat{M}, \widehat{\sigma}^{+-})$, 
where $\widehat{\sigma}^{++}, \widehat{\sigma}^{+-}$ are configurations at $\widehat{M}$ such that for every $u\in M\cap \widehat{M}$ we have $\widehat{\sigma}^{+-}(u)=\widehat{\sigma}^{++}(u)=\sigma^{+}(u)$, while  $\widehat{\sigma}^{++}(z)=+1$ and $\widehat{\sigma}^{+-}(z)=-1$. 

Furthermore, we also consider below a coupling $\upkappa_2$ between the distributions  
$\mu_{\Lambda}(\cdot |\ \widehat{M}, \widehat{\sigma}^{--})$ and  
$\mu_{\Lambda}(\cdot\  |\ \widehat{M}, \widehat{\sigma}^{+-})$, 
where $\widehat{\sigma}^{+-}$ is as defined above, while 
$\widehat{\sigma}^{--}$ is a configuration at $\widehat{M}$ such that
for every $u\in M\cap \widehat{M}$ we have $\widehat{\sigma}^{--}(u)=\sigma^{-}(u)$
and  $\widehat{\sigma}^{--}(z)=-1$. 

Note that $ \widehat{\sigma}^{+-}$ and $ \widehat{\sigma}^{--}$ differ only on the configuration 
at $w$. 

It is immediate that we can compose $\upkappa_1$ and $\upkappa_2$ to obtain the coupling $\upkappa$. That is,
we generate a configuration $\bY_A$ from $\mu_{\Lambda}(\cdot |\ M, \sigma^+)$ 
(which is identical distribution to $\mu_{\Lambda}(\cdot |\ \widehat{M}, \widehat{\sigma}^{++})$). 
Then,  we use $\upkappa_1$ to obtain a configuration $\bY_B$ from 
$\mu_{\Lambda}(\cdot\  |\ \widehat{M}, \widehat{\sigma}^{+-})$ conditional on $\bY_A$. Similarly, we use 
 $\upkappa_2$ to obtain a configuration $\bY_C$ from 
$\mu_{\Lambda}(\cdot\  |\ \widehat{M}, \widehat{\sigma}^{--})$ (which is identical distribution
to $\mu_{\Lambda}(\cdot\  |\ {M},{\sigma}^{-})$ conditional on $\bY_B$. 
The pair $(\bY_A, \bY_C)$ induce the coupling $\upkappa$. 

With a slight abuse of notation, we let
$$
S^{(1)}_w(\Lambda)=\Exp\left[ 
\left .  {\cD}(\Lambda, X_{t+1}, Y_{t+1})  \ \right |
\    X_{t+1} (\widehat{M})=\sigma,\  
Y_{t+1}(\widehat{M})=\tau,  B\textrm{ updated at $t+1$} 
\right] 
$$
where we use coupling $\upkappa_1$ for the configuration of $X_{t+1}(\Lambda)$
and $Y_{t+1}(\Lambda)$.
Similarly, we define the quantity $S^{(2)}_w(\Lambda)$ with respect to the coupling $\upkappa_2$. 

The triangle inequality yields that 
\begin{align}
S_{w}(\Lambda) \leq S^{(1)}_w(\Lambda)+S^{(2)}_w(\Lambda)\enspace. 
\end{align}
We use couplings $\upkappa_1$ and $\upkappa_2$ similar to what we have in 
the proofs of \cref{lemma:Coupling4TreeMeasures,lemma:ContributionsSA}.

Then, \eqref{eq:target4lemma:DisWeightCycleStuff} standard arguments we have seen in the proof of \cref{lemma:ContributionsSA,lemma:Coupling4TreeMeasures}.  
 The lemma follows. 
\hfill $\Box$

\section{Proof of Theorem \ref{thrm:BlockParitionGnp}}\label{sec:thrm:BlockParitionGnp}

We show that each of the three properties defining $\IntrstGraphFam (d, \varepsilon)$ holds with high probability. 

\subsection{Property \ref{itm:propty1}: Existence of Block Partition}
For $\varepsilon, d$ and $\beta$ as specified in Theorem \ref{thrm:BlockParitionGnp},
let $\G=\G(n,d/n)$, while let the set of influences $\{\bInf_e\}$ 
be induced by the Edwards-Anderson model on $\G$, with inverse temperature~$\beta$. 

Recall that a cycle $C$ of $\G$ is short if its length is at most {$4\frac{\log n}{\log^4 d}$}. 
Recall also that  a vertex $u$ of $\G$ is a $(d,\varepsilon)$-block vertex, if for every path $P$ of length at most $\log n$ that emanates from $u$, we have that $\cappedWA(P) < 1$, where $\cappedWA$ is the $(d,\varepsilon)$ path-weight, (see Definition \ref{def:WightOfPath}).

In order to prove Property  \ref{itm:propty1}, we provide an algorithm which, for typical instances of $(\G,\{\bInf_e\})$, outputs an $(d,\varepsilon)$-block partition $\mathcal{B} = \{B_1, \ldots, B_N\}$. 

Specifically, the algorithm takes as input two parameters $\varepsilon, d>0$, and a graph-influence pair $(G, \{\Inf_{e}\})$. Then, it proceeds as follows. 

As a preprocessing step, we determine the set of $(d,\varepsilon)$-block vertices in $(G, \{\Inf_{e}\})$, and the set $\cC$ of short cycles in $G$. Then, the algorithm checks whether the following condition is true:
\begin{condition}\label{eq:CHeck0Alg}
The distance of any two cycles in $\cC$ is at least $\textstyle 2\frac{\log n}{\log^2 d}$.
\end{condition}
If Condition \ref{eq:CHeck0Alg} is false, then the algorithm terminates and reports failure. 

At the next step the algorithm constructs the blocks containing the short cycles in $\cC$. 
The aim is for each cycle $C \in \cC$ to have its own block $B_C$. Let us denote with $\hatC$ the set of all vertices within distance $\log^5 d$ from $C$. To proceed with the construction of unicyclic blocks, we need the following condition to be true:

\begin{condition}\label{eq:CHeck1Alg}
For every $C \in \cC$, for every vertex $w$ at distance $\geq 2 \frac{\log n}{\sqrt{d}}$ from $\hatC$, every path that connects $\hatC$ to  $w$ contains at least one $(d, \varepsilon)$-block vertex .
\end{condition}

If Condition \ref{eq:CHeck1Alg} is false, then the algorithm terminates and reports failure. 
The block $B_C$ contains $\hatC$, and all vertices reachable from $\hatC$ through a path $(v_0,\ldots, v_{\ell})$ such that none of the $v_i$'s, for $i=1,\ldots, \ell$, is a $(d, \varepsilon)$-block vertex.  Note that due to Condition \ref{eq:CHeck1Alg}, we must have that
\begin{align}\label{eq:CHeck1AlgCor}
B_{C} \subseteq  N\left(\hatC, \textstyle{2\frac{\log n}{\sqrt{d}}}\right) , \; \text{ for all } C \in \cC \enspace,
\end{align}
where $N(S,r)$ is the set of vertices in $G$ that are reachable from $S$ via a path of length at most $r$.

In the next stage, the algorithm constructs the tree-structured blocks. Let $U$ be the set of vertices $u$ that are not contained in any block constructed in the previous step, and are such that $\WA(u) > 1-\frac{\varepsilon}{2}$, i.e., $u$ is a ``heavy" vertex. For each $u \in U$, we check whether the following condition is true:

\begin{condition}\label{eq:CHeck2Alg}
For every $u \in U$, and for every vertex $w$ at distance $\geq 4 \frac{\log n}{\sqrt{d}}$ from $u$, every path that connects $u$ to  $w$ contains at least one $(d, \varepsilon)$-block vertex .
\end{condition}

If Condition \ref{eq:CHeck2Alg} is false, then the algorithm terminates and reports failure. 

We create the tree blocks iteratively, as follows: For  each vertex $u\in U$ 
whose block has not been specified  yet, the algorithm creates the block $B_u$ that consists of $u$, and all vertices $w$ that are reachable from $u$ through a path $(v_0,\ldots, v_{\ell})$ such that none of the $v_i$'s, for $i=1,\ldots, \ell$, is a $(d, \varepsilon)$-block vertex. Note that due to Condition \ref{eq:CHeck2Alg}, we must have 
\begin{align}\label{eq:CHeck2AlgCor}
B_{u} \subseteq  N\left(u, \textstyle{4\frac{\log n}{\sqrt{d}}}\right) , \; \text{ for all } u \in U \enspace.
\end{align}

Finally, for every vertex $w$ of $G$, which is not included in any block constructed so far, we define $B_w=\{w\}$.  The algorithm concludes by returning the set of blocks $\cB$ comprised by all the blocks created in each of the three steps. 

Theorem \ref{thm:algGivesPart} below, which we prove in Section \ref{sec:thm:algGivesPart},  establishes that the algorithm successfully returns a block-partition $\cB$ with high probability over the instances $(\G,\{\bInf_e\})$.

\begin{theorem}\label{thm:algGivesPart}
For any $\varepsilon>0$, there exists $d_{0}=d_0(\varepsilon)\ge 1$ such that for any
$d\ge d_0$, and for any
$
0<\beta \le (1-\varepsilon)\beta_c(d), 
$
the following is true:

Let $\G=\G(n,d/n)$, and let the set of influences $\{\bInf_e\}$ 
be induced by the Edwards-Anderson model on~$\G$, with inverse temperature $\beta$. Then, 
\begin{align}
\Pr\left[ \text{ the above algorithm successfully returns a block-partition }\right] \ge 1-n^{-{2/3}}
\end{align}
\end{theorem}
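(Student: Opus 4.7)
The plan is to bound the failure probability of the algorithm by a union bound over the three points at which it can abort, namely the failure of Condition \ref{eq:CHeck0Alg}, Condition \ref{eq:CHeck1Alg}, or Condition \ref{eq:CHeck2Alg}. It suffices to show that each occurs with probability at most, say, $n^{-3/4}$, so that a union bound delivers the claimed $1-n^{-2/3}$ lower bound for large $d$. Once the three conditions hold, the construction in the body of the algorithm automatically yields a partition into single vertices, trees, and unicyclic graphs, and one then checks, using the definition of $(d,\varepsilon)$-block vertex together with the inclusions \eqref{eq:CHeck1AlgCor} and \eqref{eq:CHeck2AlgCor}, that every one of the properties required by Definition \ref{def:BlockDecomp} is satisfied.

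For Condition \ref{eq:CHeck0Alg}, the strategy is a standard first moment computation in $\G(n,d/n)$. A violation means that there exist two short cycles $C,C'$, each of length at most $4\log n/\log^4 d$, joined by a path of length at most $2\log n/\log^2 d$; any such configuration contains a subgraph with (at least) two independent cycles on at most $O(\log n/\log^2 d)$ vertices, hence a subgraph with strictly more edges than vertices. In $\G(n,d/n)$ the expected number of such subgraphs is bounded by $\sum_{k} \binom{n}{k} k! (d/n)^{k+1}$ summed over the relevant range of $k$, which is $O(n^{-1+o(1)})$ and, in particular, $o(n^{-3/4})$.

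For Conditions \ref{eq:CHeck1Alg} and \ref{eq:CHeck2Alg}, I would reduce both to the result mentioned in \eqref{eq:ExpositionShortPathsWithBreaks} (Theorem \ref{thm:AllPathsGood} in the appendix), which asserts that with probability $1-o(1)$ every path in $\G$ of length $\log n/\sqrt{d}$ contains a $(d,\varepsilon)$-block vertex. One needs the high-probability version of that statement, sharpened to probability $1-n^{-3/4}$; this is the statement I would take as given from the appendix. Both Conditions \ref{eq:CHeck1Alg} and \ref{eq:CHeck2Alg} concern paths of length at least $2\log n/\sqrt{d}$, which is at least twice the threshold of Theorem \ref{thm:AllPathsGood}. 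Thus, conditional on the event that every path of length $\log n/\sqrt{d}$ in $\G$ already meets a block vertex, every longer path from $\hat C$ or from a heavy vertex $u$ trivially contains a block vertex at one of its initial $\log n/\sqrt{d}$ internal vertices.

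The main technical obstacle is the appendix result behind \eqref{eq:ExpositionShortPathsWithBreaks}, namely controlling the path weight $\cappedWA(P)$ simultaneously over the exponentially many paths in $\G$. The definition \eqref{eq:expLambdaW} assigns weight $1-\varepsilon/4$ to light vertices and $d\cdot \WA(w)$ to heavy vertices, so the probability of a path being ``heavy'' decays super-exponentially in its length once one combines the per-vertex tail bound of Theorem \ref{theorem:TailBound4WA} with the first-moment bound on labelled paths in $\G(n,d/n)$. However, the delicate point is that the $\WA(w)$'s along a common path are correlated (they share incident edges of $\G$), so the argument must be carried out along a carefully chosen ``backbone'' subsequence of independent vertices, exactly as in the analogous Lemmas of \cite{efthymiou2018sampling}. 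Once that ingredient is in hand, the union bound outlined above closes the proof.
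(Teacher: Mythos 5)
Your proposal follows the paper's overall union-bound structure, and for Condition \ref{eq:CHeck0Alg} your first-moment computation is essentially the same as the one in Lemma \ref{lem:SamllUnicyclicGnp}. However, there is a genuine gap in the reduction of Conditions \ref{eq:CHeck1Alg} and \ref{eq:CHeck2Alg} to the appendix result.

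You take as given a statement of the form ``every path of length $\log n/\sqrt{d}$ in $\G$ contains a $(d,\varepsilon)$-block vertex'', quoting the simplified exposition \eqref{eq:ExpositionShortPathsWithBreaks}. But the actual Theorem~\ref{thm:AllPathsGood} is restricted to paths in $\cP_{r,\ell}(\G)$, i.e.\ paths $P$ that are $r$-\pure, meaning $N_r(P)$ induces a tree. This restriction is not cosmetic: the proof of Theorem~\ref{thm:AllPathsGood} goes through a Galton--Watson coupling around the path, and the statement is simply not asserted for paths whose $r$-neighbourhood contains a cycle. In particular, Condition~\ref{eq:CHeck1Alg} concerns paths emanating from $\hatC$, which is precisely a neighbourhood of a short cycle, so their initial segments need not be $r$-\pure. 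To bridge this gap the paper's Lemma~\ref{lem:SmallDiamB} invokes the sparseness event $\cE_S$ a second time: under $\cE_S$, any path of length $2\log n/\sqrt{d}$ that fails to contain an $r$-\pure{} subpath of length $\ell$ would force a second short cycle $C'$ intersecting it, and $C\cup C'\cup P$ would then span more edges than vertices on at most $2\log n/\log^2 d$ vertices, contradicting $\cE_S$. Only after this step can one apply Theorem~\ref{thm:AllPathsGood} to find a block vertex on the path. Your proposal uses $\cE_S$ only to handle Condition~\ref{eq:CHeck0Alg} and never to pass from arbitrary paths to $r$-\pure{} ones, so the claim that ``every longer path trivially contains a block vertex'' does not follow from the appendix result as stated. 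The fix is exactly the intersection-of-cycles argument in Lemma~\ref{lem:SmallDiamB}, which you would need to supply.
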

In light of Theorem \ref{thm:algGivesPart}, all it remains to show is that when the algorithm returns a partition $\cB$, then, $\cB$ is indeed a $(d,\varepsilon)$-block partition, i.e., $\cB$  satisfies all properties of Definition~\ref{def:BlockDecomp}. 

First, note that for every singleton block $B_w=\{w\}$, vertex $w$ must be a $(d,\varepsilon)$-block, since otherwise $w$ would have been considered during the two first stages of the algorithm. 

If $B$ is a multi-vertex block, i.e., was created in the first or second stage of the algorithm, then, by construction, every vertex of $\partial_{\rm out}B$ must be a $(d,\varepsilon)$-block vertex, establishing property~\eqref{itm:BPblockBoundary}.

We observe that no two blocks intersect. Unicyclic blocks do not intersect due to Condition~\ref{eq:CHeck0Alg} and \eqref{eq:CHeck1AlgCor} implying that \eqref{itm:BuffCond} is true.
The heavy vertices in $U$ do not belong
to any of the unicyclic blocks.  Any two block  $B_z$ and $B_x$ generated at the second stage
cannot intersect with each other as they are separated by using block vertices at their boundaries. 
Hence, we conclude that the set of blocks that is created by the algorithm is a partition of the
vertex set.

We finally notice that $B \cup \partial_{\rm out} B$ should contain the same number of cycles as $B$, yielding that every vertex in $\partial_{\rm out}B$ has exactly one neighbour in $B$, which  implies
\eqref{itm:BPSingleNeighBoundary}. To see why this true, for the sake of contradiction assume the opposite.
That is, there exists a multi-vertex block $B$ and $z\in \partial_{\rm out}B$ such that $z$ has two 
neighbours in $B$. Then, since our blocks are low diameter, our assumption implies that there is a short 
cycle in the vertices induced by $B\cup \partial_{\rm out}B$ that contains $z$. 
This is a contradiction. If $B$ is 
unicyclic, then the existence of this additional short cycle violates Condition~\ref{eq:CHeck0Alg}. On the other hand, if $B$ is a tree, then 
this short cycle must have been considered at the first stage of the algorithm, so it cannot emerge at the second stage of the algorithm.

\subsection{Property \ref{itm:propty2}: Lower Bound on \texorpdfstring{$\compWeight$}{Lg}} To establish Property 2, it suffices to use the  following lemma.
For $\G(n,d/n)$ and the EA model with temperature $\beta$ at this graph, recall  from
\eqref{def:Weitgh4Comparison},   the  following weight over the paths $P$ in $\G$:
\begin{align}
\compWeight(P)& =\beta \sum\nolimits_{e }\beta |\bJ_{e}| + \sum\nolimits_{v} \log\degr(v)\enspace,
\end{align}
where $\bJ_e$ are the couplings in the EA model. 

\begin{lemma}\label{lem:SmallCompWeight}
For any $\varepsilon>0$, there exists $d_{0}=d_0(\varepsilon) \ge 1$ such that for any
$d\ge d_0$, and for any $0<\beta \le (1-\varepsilon)\beta_c(d)$,
 the following is true:

Let $\G=\G(n,d/n)$, and let $\mu$ be the EA model on $\G$. Then,
\begin{align}
\Pr\left[\text{every path $P$ in  $\G$ with $\textstyle{|P| \le \frac{\log n}{\log^4 d}}$ has $\textstyle{\compWeight(P) \le {\frac{\log n}{\log^2 d}}}$ } \right] \ge 1 - n^{-d^{8/10}} \enspace.
\end{align}
\end{lemma}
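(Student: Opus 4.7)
The plan is to apply a union bound over all candidate paths in $\G=\G(n,d/n)$ of length at most $L:=\log n/\log^4 d$. For a fixed ordered tuple $(v_0,\ldots,v_\ell)$ of distinct vertices, the probability of it being realised as a path is $(d/n)^\ell$, so the expected number of such paths across all lengths is at most $\sum_{\ell\le L} n^{\ell+1}(d/n)^\ell=\sum_{\ell\le L} n d^\ell\le 2nd^L=n^{1+o(1)}$, since $d^L=n^{1/\log^3 d}$. It therefore suffices to show that for each fixed vertex tuple, conditional on those path edges being present, the event $\{\compWeight(P)>T\}$ with $T:=\log n/\log^2 d$ has probability at most $n^{-\Omega(d)}$: the union bound then yields $n^{1+o(1)}\cdot n^{-\Omega(d)}\ll n^{-d^{8/10}}$ for large $d$.

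Fix such a tuple and condition on the $\ell$ path edges being present. Under this conditioning, for each $v_i$ the number of neighbours outside $P$ is $D_i\sim\mathrm{Bin}(n-\ell-1,d/n)$, independent across $i$ and independent of the i.i.d.\ standard Gaussian couplings on all edges incident to $P$. Decompose $\compWeight(P)=W_{\mathrm{coup}}+W_{\mathrm{deg}}$, where $W_{\mathrm{coup}}:=\beta\sum_e|\bJ_e|$ ranges over edges incident to $P$ and $W_{\mathrm{deg}}:=\sum_{i=0}^{\ell}\log\degr(v_i)$, and show separately that each exceeds $T/2$ only with probability $n^{-\Omega(d)}$. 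For $W_{\mathrm{coup}}$, the total number $N_E$ of incident edges is at most $3Ld$ except with probability $\exp(-\Omega(Ld))$, by a Binomial Chernoff bound on $\sum_i D_i$. Conditional on $N_E$, the quantity $W_{\mathrm{coup}}$ is a sum of $N_E$ i.i.d.\ rescaled half-normals $\beta|\bJ_e|$, each with mean $\beta\sqrt{2/\pi}=O(1/d)$ and sub-Gaussian parameter $O(1/d)$. Its mean is $O(N_E/d)=O(L)\ll T$, and a standard sub-Gaussian concentration inequality gives a deviation bound of $\exp(-\Omega(T^2 d/L))=\exp(-\Omega(d\log n))$, more than enough.

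For $W_{\mathrm{deg}}$ the typical value is $(\ell+1)\log d=O(\log n/\log^3 d)\ll T$, but controlling the tail is subtler because Binomial degrees have only subexponential tails: a single vertex of degree $\Theta(\log n/\log\log n)$ contributes $\Theta(\log\log n)$ already. I plan to use a two-tier argument. Call $v_i$ \emph{heavy} if $\degr(v_i)>\log^C d$ for a suitably large constant $C$. By Binomial Chernoff, $\Pr[\degr(v_i)>\log^C d]\le\exp(-\Omega(\log^C d))$, so the number $H$ of heavy vertices is stochastically dominated by $\mathrm{Bin}(\ell+1,\exp(-\Omega(\log^C d)))$; and each heavy degree satisfies $\Pr[\degr(v_i)>k]\le\exp(-\Omega(k))$ for $k\gg d$. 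On $\{H=0\}$ we trivially have $W_{\mathrm{deg}}\le(\ell+1)\cdot C\log\log d\ll T$; otherwise one combines the tail of $H$ with the tail of individual heavy degrees to obtain $\Pr[W_{\mathrm{deg}}>T/2]\le n^{-\Omega(d)}$. The main obstacle is tuning the heaviness threshold $\log^C d$ so that both tiers contribute negligibly to the union bound while surviving the $n^{1+o(1)}$ multiplicative factor from the count of candidate paths; the rest of the argument is otherwise standard concentration bookkeeping.
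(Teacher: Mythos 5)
Your overall strategy — union bound over candidate vertex tuples, conditioning on the path edges, decomposing $\compWeight$ into a coupling term $W_{\rm coup}$ and a degree term $W_{\rm deg}$, and bounding the number of incident edges — matches the paper's, and your treatment of $W_{\rm coup}$ (edge count $O(d\ell)$ w.h.p., then sub-Gaussian concentration of a sum of rescaled half-normals with mean $O(\ell)\ll T$) is correct and essentially equivalent to the paper's use of its half-normal tail bound.

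The degree term is where the proposal goes wrong. You propose a heaviness threshold $\log^C d$, and assert $\Pr[\degr(v_i)>\log^C d]\le\exp(-\Omega(\log^C d))$ by Chernoff. But $\degr(v_i)$ is (dominated by) $\mathrm{Bin}(n,d/n)$ with mean $\approx d$, and $\log^C d=o(d)$ for large $d$; exceeding a threshold \emph{below the mean} is not a rare event — in fact $\Pr[\degr(v_i)>\log^C d]\to 1$. So the claim that the number $H$ of heavy vertices is dominated by $\mathrm{Bin}(\ell+1,\exp(-\Omega(\log^C d)))$ is false: essentially \emph{every} vertex on $P$ is heavy under your definition, and the $\{H=0\}$ branch has negligible probability. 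For the two-tier scheme to even start, the threshold must be at least $\Omega(d)$ (e.g.\ $Kd$ for large constant $K$, or $d^2$), at which point light vertices contribute $\log\degr\le\log(Kd)=O(\log d)$ each, totalling $O(\ell\log d)=O(\log n/\log^3 d)\ll T$ — still fine, but with the additional bookkeeping for $H\ge 1$ that you anticipate.

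More importantly, the two-tier argument is unnecessary once you have the edge-count bound you already establish. The paper uses concavity (AM--GM) directly: since $\sum_{v\in P}\degr(v)=|\One(P)|+2|\Two(P)|\le 9d\ell$ on the good event, one has
\begin{align*}
\sum_{v\in P}\log\degr(v)=\log\!\Big(\prod_{v\in P}\degr(v)\Big)\le (\ell+1)\log\!\Big(\tfrac{1}{\ell+1}\sum_{v\in P}\degr(v)\Big)\le(\ell+1)\log(9d)\enspace,
\end{align*}
which is $O(\log n/\log^3 d)\ll T$ deterministically once the edge count is controlled. No separate tail analysis for the degree term is needed — the worry that ``a single vertex of degree $\Theta(\log n/\log\log n)$ contributes $\Theta(\log\log n)$'' is immaterial because it is the \emph{sum of degrees}, not the individual degrees, that drives the bound. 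You should replace the heavy/light decomposition with this one-line concavity step, after which your argument coincides with the paper's.
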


\begin{proof}[Proof of Lemma \ref{lem:SmallCompWeight}]
Let $\cK$ be the set of paths $P$ in $\G$ with $|P| \le {\frac{\log n}{\log^4 d}}$, while $\compWeight(P)> {\frac{\log n}{\log^2 d}}$. For every path $P$, let  $\One(P)$, and $\Two(P)$ be the set of edges in $\G$ that have exactly one, and exactly two endpoints in $P$, respectively. Let also $\ell = {\frac{\log n}{\log^4 d}}$.

Clearly, each vertex of $P$ contributes at most $n$ edges in $\One(P)$, and each of them with probability~$d/n$. Hence, $|\One(P)|$ is dominated by a ${\tt Binom}(n\ell, d/n)$, and the Chernoff bound gives
\begin{align}
\Pr \left[|\One(P)| \ge 3 d\cdot \ell \right] \le e^{- d\cdot \ell} 
\le n^{-d^{9/10}} \enspace,
\end{align}
where the last inequality holds for large enough $d$. Similarly, we see that $|\Two(P)|$ is dominated by a  ${\tt Binom}(\ell^2, d/n) + \ell$, as there are already $\ell$ edges in $P$, and there can be at most $\ell^2$ edges in total with both endpoints in $P$. Applying the Chernoff bound with $\xi = \frac{n}{\ell} -2$, we see that 
\begin{align}
\Pr \left[|\Two(P)| \ge (1+\xi) \cdot\frac{d\cdot \ell^2}{n} + \ell\right] 
\le \exp\left({- \frac{\xi^2}{2+\xi}}\cdot\frac{d\cdot \ell^2}{n}  \right)
\le n^{-d^{9/10}} \enspace,
\end{align}
where the last inequality holds for large enough $d$. Noticing that
$
(1+\xi) \cdot\frac{d\cdot \ell^2}{n} + \ell \le 3d\cdot\ell
$, from  the union bound, we have that
\begin{align}\label{eq:BoundOnEdgeNumm}
\Pr \left[|\One(P)|+|\Two(P)|\ge  6d\cdot\ell\right] \le 2 \cdot n^{-d^{9/10}} \le n^{-d^{85/100}}
\enspace.
\end{align}
Let us write $\cE_P$ for the conjunction of the events $\{|\One(P)|\le 3d\cdot\ell\} $ and $ \{|\Two(P)|\le  3d\cdot\ell\}$. From the law of total probability we see that
\begin{align}\label{eq:totalProbForUpss}
\Pr\left[\textstyle{\compWeight(P) > \frac{\log n}{\log^2 d}} \right]
\le
\Pr\left[\textstyle{\compWeight(P) > \frac{\log n}{\log^2 d}} \ \middle|\ \cE_P\right] + \Pr[\bar{\cE}_P] \enspace,
\end{align}
where $\bar{\cE}_P$ is the complement of the event ${\cE}_P$. Using the union bound, we have that
\begin{align}
\Pr[\bar{\cE}_P] &\leq \Pr \left[|\One(P)| \ge 3 d\cdot \ell \right]+\Pr \left[|\Two(P)| \ge 3 d\cdot \ell \right] \leq 2n^{-d^{9/10}} \enspace. 
\end{align}
We now focus on bounding $\Pr\left[\compWeight(P) > {\frac{\log n}{\log^2 d}} \ \middle|\ \cE_P\right]$. Specifically, since $\cE_P$ occurs, there are at most $6d\cdot\ell$ edges incident to $P$, and thus, the first term of $\compWeight(P)$ is dominated by a sum of $6d\cdot\ell$  half-normal distributions with parameter $\beta$. Due to Theorem \ref{thm:UpperTailBoundSumOfHalfNormRelative}, we have that
\begin{align*}
\Pr\left[\sum\nolimits_{e }\beta |J_{e}| \ge 2\cdot 6d \cdot \ell \cdot \beta \cdot \sqrt{\frac{2}{\pi}} \ \right]
\le 
\exp\left(-6d\cdot \ell \cdot \frac{1}{\pi}\right) \enspace,
\end{align*}
where the sum is over all edges incident to $P$. Since $0<\beta \le (1-\varepsilon)\beta_c(d)$, we have that for large~enough~$d$
\begin{align}\label{eq:BoundOnHalfSumm}
\Pr\left[{\beta\sum\nolimits_{e }} |J_{e}| \ge 80 \cdot\ell \ \right]
&\le 
n^{-d^{9/10}} \enspace.
\end{align}
For the second term, we first note that
\begin{align}\label{eq:SumOfDegToProd}
 \sum\nolimits_{v\in V(P)} \log\degr(v) = \log\left(\prod\nolimits_{v \in V(P)} \degr (v)\right) \le \ell\cdot\log\left(\ell^{-1}\cdot\sum\nolimits_{v \in V(P)} \degr (v)\right)
 \enspace,
\end{align}
where the last inequality follows from the arithmetic-geometric mean inequality. Since $\cE_P$ occurs, we also have that
\begin{align}
\sum_{v \in V(P)}\degr(v) = |\One(P)| + 2\cdot|\Two(P)| \le 9 d\cdot\ell \enspace.
\end{align}
Plugging the above into \eqref{eq:SumOfDegToProd} we get that
\begin{align}\label{eq:finalBoundonSumDeg}
\sum_{v\in V(P)} \log\degr(v) \le \ell \cdot \log (9d) \enspace.
\end{align}
 Noticing that for large enough $d$, $\ell \cdot \log (9d) + 80 \ell \le \frac{\log n}{(\log d)^2}$, and combining \eqref{eq:BoundOnHalfSumm} and \eqref{eq:finalBoundonSumDeg}, we get 
 \begin{align*}
 \Pr\left[\textstyle{\compWeight(P) > {\frac{\log n}{\log^2 d}}} \ \middle|\ \cE_P\right] \le 
n^{-d^{9/10}}  \enspace,
\end{align*}
which per \eqref{eq:totalProbForUpss}, and \eqref{eq:BoundOnEdgeNumm}, yields $\Pr\left[\compWeight(P) > {\frac{\log n}{(\log d)^2}}\right] \le n^{-d^{85/100}}$.
Finally, we focus on the cardinality of the set $\cK$. It is easy 
to show that
\begin{align}
\Exp[|\cK|]& \leq 2n\cdot d^{{\ell}}\cdot \Pr\left[\textstyle{\compWeight(P) > {\frac{\log n}{(\log d)^2}}}\right] \leq n^{-d^{8/10}}\enspace. \nonumber
\end{align}
The lemma follows from the above and the Markov's inequality.
\end{proof}

\subsection{Property \ref{itm:propty3}: Bound on extremal coupling values}

Let $\cH$ be the event that there is an edge $e$ in $\G(n,d/n)$ such that
the coupling $|\bJ_e|$ is larger than $10\sqrt{\log n}$, or smaller than $n^{-7/3}$. 

\begin{lemma}\label{lemma:HeavyCouplingBound}
We have that $\Pr[\cH]\leq n^{-1/3}$. 
\end{lemma}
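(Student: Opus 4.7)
The plan is to bound the probability of the event $\cH$ by splitting it into its two contributing sub-events and applying a first-moment argument for each. Specifically, let $\cH_1$ be the event that some edge $e\in E(\G)$ satisfies $|\bJ_e| > 10\sqrt{\log n}$, and let $\cH_2$ be the event that some edge $e\in E(\G)$ satisfies $|\bJ_e| < n^{-7/3}$. By the union bound, $\Pr[\cH] \le \Pr[\cH_1] + \Pr[\cH_2]$, so it suffices to bound each term separately.

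For each potential edge $\{u,w\}\in \binom{V_n}{2}$, let $X_{\{u,w\}}$ denote the indicator that $\{u,w\}$ is present in $\G$. Since couplings are assigned to all potential edges (with only those of realized edges being relevant) and are independent of $X_{\{u,w\}}$, the expected number of edges in $\G$ with abnormally large coupling is
\begin{align*}
\Exp\bigl[|\{e\in E(\G): |\bJ_e|>10\sqrt{\log n}\}|\bigr]
&=\binom{n}{2}\cdot \frac{d}{n}\cdot \Pr\bigl[|\bJ|>10\sqrt{\log n}\bigr],
\end{align*}
where $\bJ$ is a standard Gaussian. Using the standard Gaussian tail $\Pr[|\bJ|>t]\le 2e^{-t^{2}/2}$ with $t=10\sqrt{\log n}$ gives a bound of $2n^{-50}$, so the expectation above is at most $d\,n^{-49}$. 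Markov's inequality then yields $\Pr[\cH_1]\le d\,n^{-49}$.

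For $\cH_2$, I use the fact that the density of a standard Gaussian is uniformly bounded by $1/\sqrt{2\pi}$, which implies $\Pr[|\bJ|<\varepsilon]\le \sqrt{2/\pi}\cdot \varepsilon$ for all $\varepsilon>0$. Hence
\begin{align*}
\Exp\bigl[|\{e\in E(\G): |\bJ_e|<n^{-7/3}\}|\bigr]
&\le \binom{n}{2}\cdot \frac{d}{n}\cdot \sqrt{\tfrac{2}{\pi}}\cdot n^{-7/3}
\le d\,n^{-4/3},
\end{align*}
and by Markov's inequality $\Pr[\cH_2]\le d\,n^{-4/3}$. Combining the two bounds,
\begin{align*}
\Pr[\cH]\le d\,n^{-49}+d\,n^{-4/3}\le n^{-1/3}
\end{align*}
for all sufficiently large $n$, which is exactly the claimed bound.

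There is no real obstacle here: the argument is purely a first-moment calculation using the independence of the couplings from the graph, the Gaussian tail bound for the upper threshold, and the bounded-density estimate for the lower threshold. The only mild subtlety is the lower-threshold bound, which cannot come from a tail inequality but from the fact that the Gaussian density is bounded near zero; this is what makes the exponent $n^{-7/3}$ (rather than, say, $1/n$) harmless.
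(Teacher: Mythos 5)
Your proof is correct and follows essentially the same first-moment/Markov argument as the paper, splitting $\cH$ into the large-coupling and small-coupling events and bounding the expected number of offending edges. The one minor difference is that you retain the edge probability $d/n$ in the expectation (so the relevant count is $\sim dn/2$ potential bad edges), whereas the paper simply uses the crude bound $\binom{n}{2}$ on the number of edges; both suffice, with your version giving somewhat more slack.
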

\begin{proof}
It is standard to show that 
for the standard normal random variable $\bI$, taking large $n$, we have that 
\begin{align}\label{eq:GaussianVLDeviation}
\Pr\left[| \bI | \geq 10\sqrt{\log n}\right] \leq n^{-5}\enspace. 
\end{align}
Furthermore, we have that
\begin{align}\label{eq:GaussianVSDeviation}
\Pr\left[|\bI| \leq n^{-7/3}\right] &= \frac{1}{\sqrt{2\pi}}\int^{n^{-7/3}}_{-n^{-7/3}} \exp(-\frac{x^{2}}2)dx \leq \frac{1}{\sqrt{2\pi}}\int^{n^{-7/3}}_{-n^{-7/3}} dx \leq 
n^{-7/3}\enspace. 
\end{align}

Furthermore, note that the graph instance $\G(n,d/n)$ has at most $\binom{n}{2}$ edge couplings
which are i.i.d standard Gaussians. Let $N$ be the expected number of the edges
$e$ whose coupling $\bJ_e$ is larger than $10\sqrt{\log n}$. Then, \eqref{eq:GaussianVLDeviation}  and \eqref{eq:GaussianVSDeviation}
imply that 
\begin{align}
\Exp[N]\leq \binom{n}{2} \cdot\left(\Pr\left[|\bI|\geq  10\sqrt{\log n}\right] + 
\Pr\left[|\bI| \leq n^{-7/3}\right] \right) \leq n^{-1/3}\enspace. 
\end{align}
Using the Markov's inequality, we get that $\Pr[\cH]\leq n^{-1/3}$, concluding the proof of the lemma. 
\end{proof}

\subsection{Proof of Theorem \ref{thrm:BlockParitionGnp}}
Consider the  triplet $(\G,\bm{\vecJ}, \beta)$ as this is specified in Theorem \ref{thrm:BlockParitionGnp}.
For $j=1,2,3$, let $\cL_j$ be the event that $(\G,\bm{\vecJ}, \beta)$ does {\em not} satisfy the Property $j$. 
From Theorem~\ref{thm:algGivesPart}, we have that $\Pr[\cL_1]\leq n^{-2/3}$. From Lemma \ref{lem:SmallCompWeight}, we have that
$\Pr[\cL_2]\leq n^{-d^{8/10}}$. Finally, from Lemma \ref{lemma:HeavyCouplingBound}, we have that $\Pr[\cL_3]\leq n^{-{1/3}}$. 

The theorem follows by applying the union bound. That is, we have that
\begin{align}
\Pr\left[\bigcup\nolimits_{j=1,2,3}\cL_j \right] &\leq \sum\nolimits_{j=1,2,3}\Pr\left[\cL_j \right] {\leq n^{-1/4}}\enspace. 
\end{align}

\section{Proof of Theorem \ref{thm:algGivesPart}}\label{sec:thm:algGivesPart}

Let us start with an additional definition.
For some integer $r\geq 0$, a path $P$ of $G$ is  \emph{$r$-{\pure}} if $N_r(P)$, i.e., the set of all 
vertices reachable from a vertex in $P$ via a path of length $r$, induces a tree in $G$.
Furthermore, for $\ell\geq 0$,  we define the set of paths
\begin{align*}
\mathcal{P}_{r,\ell}(G)
    = \left\{ P \ :\ \text{$P$  is $r$-{\pure}, and has length at least }\ell
      \right\} \enspace.
\end{align*}
The following theorem states that for $r=\ell=\frac{\log n}{\sqrt{d}}$, and for large enough fixed $d$, every path in $\mathcal{P}_{r,\ell}(\G(n,d/n))$ contains at least one $(d,\varepsilon)$-block vertex.

\begin{restatable}{thrm}{AllPathsGood}
\label{thm:AllPathsGood}
For any $\varepsilon>0$, there exists $d_{0}=d_0(\varepsilon) \ge 1$ such that for any
$d\ge d_0$, and for any $0<\beta \le (1-\varepsilon)\beta_c(d)$,
 the following is true:

Let $\G=\G(n,d/n)$, and let the set of influences $\{\bInf_e\}$ 
be induced by the Edwards-Anderson model on $\G$, with inverse temperature $\beta$.
Then, for $\ell = \frac{\log n}{\sqrt{d}}$, and $r =\ell$ we have that 
\begin{align}\label{eq:thrm:AllPathsGood}
\Pr\left[ \text{ every path in } \cP_{r,\ell}(\G) \text{  contains a $(d,\varepsilon)$-block vertex } \right] 
\ge 1 - {n}^{-d^{1/8}} \enspace,
\end{align}
where the probability is over the instances of $(\G,\{\bInf_e\})$. 
\end{restatable}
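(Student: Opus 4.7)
The plan is to apply the first-moment method: let $X$ denote the number of $r$-pure paths of length at least $\ell$ in $\G$ containing no $(d,\varepsilon)$-block vertex. It will suffice to prove $\Exp[X] \le n^{-d^{1/8}}$, whereupon Markov's inequality delivers~\eqref{eq:thrm:AllPathsGood}. For a fixed path shape $P=(w_0,\ldots,w_\ell)$ on $V_n$, the probability $P$ is an edge-path in $\G$ is $(d/n)^\ell$, so summing over the at most $n^{\ell+1}$ shapes contributes a combinatorial factor of $n\cdot d^\ell$. The remaining task is to bound, for fixed $P$, the conditional probability that $P$ is $r$-pure and contains no block vertex. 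Under $r$-purity, $N_r(P)$ is a tree, so the aggregate influences $\WA(w_i)$ depend on disjoint pieces of local data and the light-vs-heavy statuses of the $w_i$ are essentially independent.

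The core estimate concerns $\Pr[w_i\text{ is non-block}]$. By \cref{theorem:TailBound4WA}, a fixed vertex is heavy with probability at most $\exp(-\Omega(d))$; after ruling out, via a separate high-probability event in the spirit of \cref{lemma:HeavyCouplingBound}, the unlikely possibility that some $\WA(v)$ is unusually large, any witness path of length $k$ from $w_i$ with $\cappedWA \ge 1$ must contain at least $\Omega(k/\log d)$ heavy vertices, since light vertices contribute only $(1-\varepsilon/4)$ to the product while heavy ones contribute $\Theta(d)$. Summing over the at most $d^k$ path shapes of length $k$ rooted at $w_i$ gives $\Pr[w_i\text{ is non-block via a length-}k\text{ witness}] \le \exp(-\Omega(kd/\log d))$, and a further union bound over $k\le \log n$ still leaves $\exp(-\Omega(d/\log d))$.

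To promote this to the whole of $P$, I would pick $s=\Theta(\ell/\log d)$ vertices $w_{i_1},\ldots,w_{i_s}$ on $P$ spaced $\Theta(\log d)$ apart and restrict attention to witness paths of length at most $\Theta(\log d)$; these witnesses then lie in pairwise disjoint subtrees of $N_r(P)$, so their weights are mutually independent. Witnesses longer than the chosen threshold are handled by an auxiliary union bound, in which the exponential tail in $k$ renders their contribution absorbable into a negligible additive slack. Combining,
\begin{align*}
\Pr[\text{no block vertex on }P \mid P \text{ is }r\text{-pure}] \;\le\; \exp\!\left(-\Omega\!\left(\tfrac{\ell\, d}{\log d}\right)\right),
\end{align*}
so $\Exp[X] \le n\cdot d^\ell \cdot \exp(-\Omega(\ell d/\log d))$. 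Plugging in $\ell=\log n/\sqrt d$ yields an exponent of order $1 + O(\log d/\sqrt d) - \Omega(\sqrt d/\log d)$, which for $d$ large enough lies below $-d^{1/8}$.

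The hard part will be justifying the decoupling in this final step. The definition of a block vertex invokes witness paths of length up to $\log n \gg r$, so one must carefully split the event ``$w_i$ non-block'' into a short-witness part (which fits inside $N_r(P)$ and decouples across well-spaced $w_{i_j}$) and a long-witness tail (which is individually very rare by the heavy-vertex-density argument). Making this split quantitative, and verifying that the approximate independence of heavy/light statuses on an $r$-pure tree-like neighborhood is strong enough to multiply the per-vertex estimates across all $\Theta(\ell/\log d)$ chosen vertices, is the main technical hurdle.
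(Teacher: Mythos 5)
Your overall first-moment architecture (sum over $\ell$-paths, factor $n\cdot d^{\ell}$ for existence, bound the conditional probability of ``no block vertex'' given purity, and apply Markov) is indeed the paper's framework; the paper even constructs an explicit Galton--Watson tree $\GWT$ to replace the informal ``$r$-purity makes things independent'' step, but that is a presentational choice. The genuine gap is in the short/long witness split. You set the witness-length threshold at $\Theta(\log d)$ so the $s=\Theta(\ell/\log d)$ selected vertices decouple, and you then claim the long-witness tail is ``absorbable into a negligible additive slack.'' But the per-vertex probability of being non-block via a witness of length $k>\Theta(\log d)$ is, by your own estimate, $\exp(-\Omega(kd/\log d))$, and the sum over $k$ is dominated by $k\approx\log d$, giving $\exp(-\Omega(d))$ --- a quantity that is \emph{independent of $n$}. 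A union bound over the $s$ selected vertices then leaves a term $n\cdot d^{\ell}\cdot s\cdot \exp(-\Omega(d))$, which is polynomial in $n$ for fixed $d$ and therefore destroys the bound $\Exp[X]\le n^{-d^{1/8}}$. Raising the threshold to $r$ would fix the tail (then the first excluded length already gives an $n$-dependent bound), but then the witness neighbourhoods of adjacent selected vertices overlap and the decoupling collapses. The paper escapes this dilemma by not treating long witnesses path-by-path at all: it proves a \emph{global} structural event $\cS_C$ --- with probability $1-n^{-d^{1/5}}$, \emph{every} path of length between $r$ and $\log n$ in $\G$ has $\cappedWA<1$ --- and uses the inclusion $\cS_B\cap\cS_C\subseteq\cS_A$, where $\cS_B$ concerns only block vertices of \emph{range} $r$. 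This separates the $n$-dependent long-witness control from the per-path first moment entirely.

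There is a second, related gap in the decoupling step. You assert the short witnesses ``lie in pairwise disjoint subtrees of $N_r(P)$,'' but a witness path emanating from $w_{i_j}$ may run \emph{along $P$ itself} towards $w_{i_{j\pm1}}$, and those vertices and edges of $P$ are shared; moreover, even for $v_i,v_{i+1}$ on $P$, the aggregate influences $\WA(v_i)$ and $\WA(v_{i+1})$ share the coupling $J_{\{v_i,v_{i+1}\}}$, so they are never independent. The paper addresses exactly this by introducing the path-excluding weight $\WB_P(v_i)$ (the maximum $\cappedWA$ of a witness that avoids $P$), together with left/right-block vertices to handle the witnesses that do run along $P$, and by splitting the vertices of $P$ into odd and even positions so that the $\WB_P(v_i)$'s in each parity class are genuinely i.i.d. Your sketch would need some device playing the same role before the product of per-vertex estimates is legitimate.
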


The proof of Theorem \ref{thm:AllPathsGood} appears in Section \ref{sec:thm:AllPathsGood}. We also use the following standard lemma, which we prove in Appendix \ref{sec:lem:SamllUnicyclicGnp}.

\begin{lemma}\label{lem:SamllUnicyclicGnp}
Let $\G = \G(n,d/n)$. There exist $d_0 \ge 1$ such that for every $d\ge d_0$, with probability at least $1-n^{-3/4}$, every set of vertices $S$ in $\G$ with $|S| \le 2\frac{\log n}{\log^2 d}$, spans at most $|S|$ edges in $\G$ .
\end{lemma}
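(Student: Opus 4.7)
My plan is to prove this via a first moment union bound over all small vertex sets with ``too many'' edges, which is the standard way to establish that typical instances of $\G(n,d/n)$ are locally tree-like or unicyclic on scales below $\log n/\log^2 d$.

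Concretely, let $s_{\max} = 2\log n/\log^2 d$ and, for each $1\le s \le s_{\max}$, let $X_s$ denote the number of vertex subsets $S\subseteq V(\G)$ with $|S|=s$ that span at least $s+1$ edges in $\G$. I would bound the expectation of each $X_s$ by $\Exp[X_s] \le \binom{n}{s}\binom{\binom{s}{2}}{s+1}(d/n)^{s+1}$, since the right-hand side is the expected number of pairs (set of $s$ vertices, set of $s+1$ potential edges among them) that are all realised in the random graph. Applying the standard estimates $\binom{n}{s}\le (en/s)^s$ and $\binom{\binom{s}{2}}{s+1}\le (es/2)^{s+1}$ and simplifying, this reduces to a bound of the form $\Exp[X_s] \le C\cdot s\cdot (C\,d)^{s+1}/n$ for an absolute constant $C$.

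Next I would sum over $s\in \{1,\dots, s_{\max}\}$. The decisive calculation is that $(Cd)^{s_{\max}+1} = \exp((s_{\max}+1)\log(Cd))$, and for $d$ sufficiently large $\log(Cd)\le 2\log d$, so $(Cd)^{s_{\max}+1} \le \exp(4\log n/\log d) = n^{4/\log d}$. Picking $d_0$ large enough so that $4/\log d_0 < 1/5$, the dominant term in the geometric-type sum is at most $n^{1/5}/n = n^{-4/5}$, and multiplying by the polylogarithmic number of terms $s_{\max}\le \log n$ and by the polynomial factor $s$ still leaves a bound of $n^{-4/5+o(1)} \le n^{-3/4}$ for all $n$ large. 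Finally, by Markov's inequality,
\[
\Pr\!\left[\sum_{s=1}^{s_{\max}} X_s \ge 1\right] \le \Exp\!\left[\sum_{s=1}^{s_{\max}} X_s\right] \le n^{-3/4},
\]
and the event $\{\sum_s X_s = 0\}$ is exactly the event that every $|S|\le s_{\max}$ spans at most $|S|$ edges in $\G$, giving the lemma.

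There is no real obstacle here; the argument is a textbook first moment calculation. The only mildly delicate step is choosing $d_0$ large enough that the exponent $(s_{\max}+1)\log(Cd)$ stays below $\tfrac14 \log n$ with room to absorb the $s_{\max}$ and the constant factors that arise when replacing the binomial coefficients by their power-of-$s$ bounds; all such slack is absorbed by requiring $\log d_0$ to exceed a suitable absolute constant.
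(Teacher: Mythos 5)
Your proposal is correct and follows essentially the same first moment union bound as the paper: both bound the expected number of small vertex sets spanning more than $|S|$ edges by $\binom{n}{s}\binom{\binom{s}{2}}{s+1}(d/n)^{s+1}$, apply the standard $\binom{n}{k}\le(ne/k)^k$ estimates, and observe that the geometric factor $(Cd)^{s}$ with $s\le 2\log n/\log^2 d$ is $n^{O(1/\log d)}$, which for large enough $d$ is absorbed by the leading $1/n$. The only difference is cosmetic bookkeeping of constants.
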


Let us denote with $\cE_\cP$ and $\cE_S$ the desirable events of Theorem \ref{thm:AllPathsGood} and Lemma \ref{lem:SamllUnicyclicGnp}, respectively. 
We show that $\cE_\cP \cap \cE_S$ imply all of the Conditions \ref{eq:CHeck0Alg}, \ref{eq:CHeck1Alg}, and \ref{eq:CHeck2Alg}, and thus, the algorithm successfully returns a block-partition. Per the union bound, we have that 
\begin{align}
\Pr[\cE_\cP \cap \cE_S] \ge 1-n^{-2/3} \enspace ,
\end{align}
yielding Theorem \ref{thm:algGivesPart}.

We start by noticing that $\cE_S$ yields Condition \ref{eq:CHeck0Alg}. To the contrary, assume that there exist two short cycles $C , C^\prime$ within distance $\frac{\log n}{\log^2 d}$ from each other. Let $P$ be a path of minimum length connecting $C$ to $C^\prime$. Then, for sufficently large $d$
\begin{align*}
|C \cup C^\prime \cup P| \le \textstyle 2 \cdot 4 \frac{\log n}{\log^4 d} + \frac{\log n}{\log^2 d} < 2 \frac{\log n}{\log^2 d} \enspace,
\end{align*}
but $C \cup C^\prime \cup P$ contains more at least $|C \cup C^\prime \cup P|+1$ edges, contradiction.

We now show the following

\begin{lemma}\label{lem:SmallDiamB}
If  $\cE_\cP \cap \cE_S$ occurs, then Conditions \ref{eq:CHeck1Alg} and \ref{eq:CHeck2Alg} hold.
\end{lemma}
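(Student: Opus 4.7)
The plan is to argue each condition by contradiction, using the contrapositive of Theorem \ref{thm:AllPathsGood} (event $\cE_\cP$) as the central lever: every $r$-pure path of length at least $\ell$ contains a block vertex, so every subpath of length $\ell$ of a block-vertex-free path must fail $r$-purity, forcing a cycle into its $r$-neighborhood. I would then combine such cycles with the relevant subpath of $P$ and short connecting paths into a small vertex set $S$ whose induced subgraph hosts two independent cycles, yielding $|E(S)| \geq |V(S)| + 1$ and contradicting $\cE_S$ (Lemma \ref{lem:SamllUnicyclicGnp}).

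For Condition \ref{eq:CHeck1Alg}: Assume there exist $C\in\cC$, $w$ with $\dist(w,\hatC)\geq 2\ell$, and a path $P$ from $\hatC$ to $w$ containing no block vertex. Let $P'$ be the last $\ell$ vertices of $P$. Since $\hatC = N_{\log^5 d}(C)$, the triangle inequality gives $\dist(x,C)\geq \ell+\log^5 d > r$ for every $x\in P'$, so $C$ is disjoint from the $r$-neighborhood of $P'$. By the contrapositive of $\cE_\cP$, $P'$ is not $r$-pure, so its $r$-neighborhood induces a cycle $C^*$ which, by the previous inclusion, is distinct from $C$. A BFS argument in $\G$ bounds the shortest such cycle by $|C^*|\leq 2(r+\ell)+1$. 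I then assemble $S = V(P')\cup V(C^*)\cup V(Q)\cup V(C)\cup V(Q_0)$, where $Q$ is a shortest path from $P'$ to $C^*$ (length at most $r+\ell$) and $Q_0$ is a path from $P'$ to $C$ consisting of the subpath of $P$ back to $\hatC$ followed by a path of length at most $\log^5 d$ inside $\hatC$ (total length $\ell+\log^5 d$). This $S$ is connected and contains two independent cycles, so $|E(S)|\geq |V(S)|+1$.

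For Condition \ref{eq:CHeck2Alg}: With no ``free'' short cycle available, I would extract two disjoint subpaths $P',P''\subseteq P$ of length $\ell$, separated by at least $2\ell$ along $P$. Applying the contrapositive of $\cE_\cP$ to each yields cycles $C'$ in the $r$-neighborhood of $P'$ and $C''$ in the $r$-neighborhood of $P''$. If $C'\neq C''$, the analogous set $S=V(P_{\mathrm{sub}})\cup V(C')\cup V(C'')\cup V(Q')\cup V(Q'')$, where $P_{\mathrm{sub}}$ is the length-$4\ell$ subpath of $P$ containing $P'$ and $P''$, carries two independent cycles. If $C'=C''=C^*$, then the closed walk formed by $Q'$, an arc of $C^*$, $Q''$, and the subpath of $P_{\mathrm{sub}}$ between the $P$-footpoints of $Q'$ and $Q''$ decomposes into a cycle distinct from $C^*$ (since it uses vertices of $P_{\mathrm{sub}}$ while $C^*$ does not), again producing two independent cycles inside $S$.

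The main obstacle is controlling $|V(S)|\leq 2\log n/\log^2 d$ so that $\cE_S$ forces $|E(S)|\leq |V(S)|$, contradicting $|E(S)|\geq |V(S)|+1$. This requires simultaneously using the BFS bound $|C^*|=O(\ell)$, the short-cycle bound $|C|\leq 4\log n/\log^4 d$, the connecting-path length bounds of order $\ell$ (or $\ell+\log^5 d$), and the asymptotic inequality $\sqrt d\geq \log^2 d$ valid for $d\geq d_0$, so that the total size aggregates to $O(\log n/\log^2 d)$ with room to spare.
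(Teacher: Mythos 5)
Your overall strategy is the same as the paper's: use $\cE_\cP$ to force a block vertex onto any $r$-pure subpath of length $\geq \ell$, and use $\cE_S$ to rule out two "independent" cycles living inside a small vertex set. The paper's own proof is terse (it states the key claim that, under $\cE_S$, every path of length $q=2\ell$ contains a subpath in $\cP_{r,\ell}$, and leaves most of the accounting implicit), whereas you spell out an explicit witness set $S$ and tally its size. That is a reasonable and, if anything, more informative way to write the argument.

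However, there is a concrete gap in the size bound you rely on for Condition~\ref{eq:CHeck1Alg}. You set $P'$ to be the last $\ell$ vertices of the \emph{full} path $P$ from $\hatC$ to $w$, and then claim the connecting piece $Q_0$ (the segment of $P$ running from $P'$ back to $\hatC$, plus $\log^5 d$ more inside $\hatC$) has length $\ell + \log^5 d$. But Condition~\ref{eq:CHeck1Alg} quantifies over \emph{all} paths from $\hatC$ to a vertex $w$ with $\dist(w,\hatC) \geq 2\ell$; such a path may have length far exceeding $2\ell$, in which case the segment of $P$ from $P'$ to $\hatC$ has length $|P|-\ell$, which you cannot bound by $\ell$. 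This blows up $|V(S)|$ and the appeal to $\cE_S$ fails. The paper sidesteps this by first reducing to the claim that \emph{any} block-vertex-free path emanating from $\hatC$ has length $< q = 2\ell$ (from which Condition~\ref{eq:CHeck1Alg} follows directly), so it only ever needs to analyze a path of bounded length. If you instead truncate $P$ to its length-$2\ell$ prefix $P^\circ$ (which is still block-vertex-free), the size of $S$ is controlled, but then the tail of $P^\circ$ may loop back near $\hatC$ and your distance argument guaranteeing $C^*\ne C$ no longer applies; you would then need to treat the degenerate case $C^*=C$ separately (much as you already do for the $C'=C''$ case of Condition~\ref{eq:CHeck2Alg}). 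This missing piece is the real content you need to supply; the rest of the proposal matches the paper's intent.
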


\begin{proof}[Proof of Lemma~\ref{lem:SmallDiamB}]

Let $B_C$ be a unicyc block created in the first stage of the algorithm due to the cycle $C \in \cC$. Let $P= (v_0,\ldots, v_{\ell})$ be any path that emanates from $\hatC$, and is such that none of the $v_i$'s, for $i=1,\ldots, \ell$, is a $(d, \varepsilon)$-block vertex. Let $q = 2\frac{\log n}{\sqrt{d}}$ We claim that $|P| < q $, which clearly implies Condition \ref{eq:CHeck1Alg}.

Indeed, we first observe that, under $\cE_S$, every path in $\G$ of length $q$ must contain a subpath in $\cP_{r,\ell}(\G)$. If no such subpath exists, then we can find another short cycle $C^\prime$ intersecting $P$. Considering the set of vertices $C\cup C^\prime\cup P$, we see that $|C\cup C^\prime\cup P| < 2 \frac{\log n}{\log^2d }$, which violates our assumption that $\cE_S$ occurs. 

Our assumption that  $\cE_{\cP}$ occurs, now implies that $|P| <  q$, since otherwise, there must exist some $v_i$ in $P$ that is a $(d, \varepsilon)$-block vertex, contradiction.

Condition \ref{eq:CHeck2Alg} follows by the same line of argument as above, with the additional observation every path in $\G$ of length $2q$ either contains a subpath in $\cP_{r,\ell}(\G)$, or connects two short cycles $C, C^\prime$.
\end{proof}

All the above conclude the proof of Theorem \ref{thm:algGivesPart}.
\hfill{$\Box$}

\section{Proof of Theorem \ref{thm:AllPathsGood}}\label{sec:thm:AllPathsGood}

Let us first recall that the path $P$ of $G$ is $r$-{\pure} if $N_r(P)$, i.e., the set of all vertices reachable from a vertex in $P$ via a path of length $r$, induces a tree in $G$.
Let $\ell = \frac{\log n}{\sqrt{d}}$, $r =\ell$, and recall we defined
\begin{align*}
\mathcal{P}_{r,\ell}(G)
    = \left\{ P \ :\ \text{$P$  is $r$-{\pure}, and has length at least }\ell
      \right\} \enspace.
\end{align*}
Also recall that given  $\varepsilon, d>0$,  a vertex $u$ in $G$ is called  $(d,\varepsilon)$-{block vertex}, if for every path $P$ of length at most $\log n$ emanating  from $u$, we have that  
the $(d, \varepsilon)$-weight of $P$ is less than~1, i.e.,  $\cappedWA(P) < 1$. 

Let the event 
\begin{align*}
\cS_A &= \{\text{every path in $\cP_{r,\ell}(\G)$ contains a block vertex}\} \enspace.
\end{align*}
Theorem \ref{thm:AllPathsGood} follows by showing $\Pr[\cS_A]\geq 1-n^{-d^{1/8}}$.

For our proof we will also use the following refinement: Given $\varepsilon, d>0$, and an integer $k\le 
\log n$, we say that a vertex $u$ in $G$ is $(d,\varepsilon)$-{\em block vertex of range} $k$,  
if for every path $P$ of length at most $k$ emanating  from $u$, we have that $\cappedWA(P) < 1$, (so that 
a $(d,\varepsilon)$-{block vertex} is also a $(d,\varepsilon)$-{block vertex of range} $\log n$). 
In the following, we omit $d$ and $\varepsilon$, when they are clear from the context.

We establish Theorem \ref{thm:AllPathsGood} in two steps. 
First, we consider the event
\begin{align*}
\cS_B &= \{\text{every path in $\cP_{r,\ell}(\G)$ contains a block vertex of range $r$}\}\enspace.
\end{align*}
Note that $\cS_B$ is different from $\cS_A$ in that it considers block vertices of range $r$, rather than standard block vertices.  We show that $\Pr[\cS_B] \ge {1 - n^{-d^{10/75}}}$.

Next, we consider the event
\begin{align*}
\cS_C &= \{\text{for every path $P$ in $\G$, with $r \le |P| \le \log n$, we have $\cappedWA(P) < 1$}\}
\end{align*}
We prove that $\Pr[\cS_C] \ge 1 - n^{-d^{1/5}}$.

Note that $\cS_C\cap \cS_B$ implies $\cS_A$. This follows from the observation that, under the event $\cS_C$ every block vertex of range $r$ is also a standard block vertex. 
Theorem \ref{thm:AllPathsGood} follows from the aforementioned bounds on $\Pr[\cS_C]$ and $\Pr[\cS_B]$.

More specifically, in \cref{sec:lem:ClacOfR} we show the following result. 
 
\begin{lemma}\label{lem:ClacOfR}
 Under the hypotheses of Theorem \ref{thm:AllPathsGood}, we have that $\Pr[\cS_C] \ge 1 - n^{-d^{1/5}}$.
\end{lemma}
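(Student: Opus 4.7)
The plan is to perform a union bound over all candidate paths of length $k \in [r, \log n]$ in $\G$ and show that each has $\cappedWA(P)<1$ with overwhelming probability. Fix such a path $P=(v_{0},\dots,v_{k})$ and condition on its edges being present in $\G$. Let $H$ denote the set of indices $i$ such that $v_{i}$ is heavy, i.e., $\WA(v_{i})>1-\varepsilon/2$. Assuming for the moment every heavy $v_{i}$ also satisfies $\WA(v_{i})\le 2$, we can bound
\begin{align*}
\cappedWA(P)=(1-\varepsilon/4)^{k+1-|H|}\prod_{i\in H}d\,\WA(v_{i})\le(1-\varepsilon/4)^{k+1-|H|}(2d)^{|H|},
\end{align*}
and asking for this to be $\ge 1$ forces $|H|\ge m^{*}:=c\,k/\log d$ for a suitable $c=c(\varepsilon)>0$. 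Hence $\{\cappedWA(P)\ge 1\}\subseteq\{\exists i:\WA(v_{i})>2\}\cup\{|H|\ge m^{*}\}$, and it suffices to bound the probability of each of these two events.

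Second, I would exploit near-independence along the path. Decompose $\WA(v_{i})=\WA_{*}(v_{i})+\WA^{\mathrm{in}}(v_{i})$, where $\WA_{*}(v_{i})$ sums influences of edges from $v_{i}$ to vertices \emph{not} in $P$; conditional on the path edges being present, the out-of-$P$ neighborhoods are disjoint across $i$ and equipped with independent Gaussian couplings, so the $\WA_{*}(v_{i})$ are independent across $i$. In the typical case that $P$ has no chord, $\WA^{\mathrm{in}}(v_{i})$ is the sum of at most two influence terms, each in $[0,1]$; the rare event that $P$ contains a chord can be absorbed into the union bound since the expected number of chords is $O(k^{2}d/n)=o(1)$. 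Therefore, a heavy vertex (or a vertex with $\WA(v_{i})>2$) corresponds, up to a constant shift in the threshold, to a tail event on $\WA_{*}(v_{i})$, to which Theorem \ref{theorem:TailBound4WA} (adapted to the restricted neighborhood and to a deviation of order $1$ rather than $\varepsilon/2$) yields $\Pr[\WA_{*}(v_{i})>1-3\varepsilon/4]\le e^{-\Omega(d)}$ and $\Pr[\WA(v_{i})>2]\le e^{-\Omega(d)}$.

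Third, combining the two estimates via a binomial union bound (using $\binom{k}{m^{*}}\le(ek/m^{*})^{m^{*}}=(O(\log d))^{m^{*}}$) gives, for $d$ large enough,
\begin{align*}
\Pr[\cappedWA(P)\ge 1]\le(k+1)e^{-\Omega(d)}+(O(\log d))^{m^{*}}e^{-\Omega(d\,m^{*})}\le\exp(-\Omega(d\,k/\log d)).
\end{align*}
The expected number of length-$k$ paths in $\G$ is at most $n\cdot d^{k}$, so a final union bound over $k\in[r,\log n]$ gives
\begin{align*}
\Pr[\cS_{C}^{c}]\le\sum_{k=r}^{\log n}n\,d^{k}\exp(-\Omega(dk/\log d))\le n^{-\Omega(\sqrt{d}/\log d)}\le n^{-d^{1/5}},
\end{align*}
where the last inequality uses $r=\log n/\sqrt{d}$ together with $\sqrt{d}/\log d\ge d^{1/5}$ for $d$ large enough.

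The main obstacle I anticipate is handling correlations cleanly: the $\WA(v_{i})$ along the path share path-edges and possibly chord-edges, so to extract clean independence one must carefully condition on the in-path edges and slightly relax the heaviness threshold when passing between $\WA(v_{i})$ and $\WA_{*}(v_{i})$. A companion technical point is establishing $\Pr[\WA(v)>2]\le e^{-\Omega(d)}$, which is not stated directly in Theorem \ref{theorem:TailBound4WA} but follows by re-running the same moment-generating-function computation with a large deviation parameter in place of $\varepsilon/2$.
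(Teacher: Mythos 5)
Your proposal follows the same overall skeleton as the paper's proof: a per-path bound on $\Pr[\cappedWA(P)\ge 1]$ followed by a first-moment union over the $\lesssim n\cdot d^k$ paths of length $k\in[r,\log n]$. The handling of chords (your ``$O(k^2d/n)=o(1)$'' estimate) also parallels the paper's $\In(P)/\Out(P)$ split. Where you genuinely diverge is in the per-path bound: the paper uses the MGF estimate of Lemma~\ref{lem:MGFForcappedWB}, i.e.\ $\Exp[\cappedWA^t(v)]\le(1-\varepsilon/4)^{t'}$ for $t\approx d^{0.95}$, applied separately to the products over even- and odd-indexed vertices in $\Out(P)$ to exploit independence, whereas you count heavy vertices and appeal to the tail bound of Theorem~\ref{theorem:TailBound4WA}. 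Either route can work, but yours as written has a gap that the paper's even/odd device is specifically designed to close.

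The gap is in the step ``a heavy vertex $\ldots$ corresponds, up to a constant shift in the threshold, to a tail event on $\WA_*(v_i)$.'' Writing $\WA(v_i)=\WA_*(v_i)+\WA^{\rm in}(v_i)$, the only deterministic bound you have is $\WA^{\rm in}(v_i)\le 2$, and shifting the heaviness threshold $1-\varepsilon/2$ down by $2$ gives the vacuous event $\WA_*(v_i)>-1-\varepsilon/2$. So the implication $\{v_i\text{ heavy}\}\subseteq\{\WA_*(v_i)>1-3\varepsilon/4\}$ is false: a vertex can be heavy solely because its two in-path couplings are large, with $\WA_*(v_i)$ arbitrarily small. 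To repair this you must take $\{v_i\text{ heavy}\}\subseteq\{\WA_*(v_i)>1-3\varepsilon/4\}\cup\{\WA^{\rm in}(v_i)>\varepsilon/4\}$, which forces you to also bound $\Pr[\text{at least }m^*/2\text{ indices with }\WA^{\rm in}(v_i)>\varepsilon/4]$. That probability is indeed tiny (each individual event has probability $\exp(-\Theta(\varepsilon^2 d^2))$ by a Gaussian tail), \emph{but} the events $\{\WA^{\rm in}(v_i)>\varepsilon/4\}$ and $\{\WA^{\rm in}(v_{i+1})>\varepsilon/4\}$ share the coupling $J_{\{v_i,v_{i+1}\}}$ and are therefore not independent; your binomial-coefficient calculation $\binom{k}{m^*}(\cdot)^{m^*}$ silently assumes independence across all $i$. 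The standard fix is precisely the even/odd split the paper uses (even-indexed vertices have disjoint in-path edge sets), and once that is incorporated your approach becomes morally very close to the paper's $\cappedWA_{\rm odd}\cdot\cappedWA_{\rm even}$ factorisation — you've just replaced the MGF bound by a binomial tail bound, which buys no particular advantage. Your companion claim that $\Pr[\WA(v)>2]\le e^{-\Omega(d)}$ is fine: it follows by rerunning Theorem~\ref{theorem:TailBound4WA} with a deviation of order~1 (using Theorem~\ref{thm:UpperTailBoundSumOfHalfNormRelative} with $\delta\asymp 1$), as you indicate.
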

 
 Moreover, in the Section \ref{subsec:IndConstr}, we prove the following bound for the probability of $\cS_B$.
 
 \begin{theorem}\label{thm:AllPathsGoodPart1}
 Under the hypotheses of Theorem \ref{thm:AllPathsGood}, we have that $\Pr[\cS_B] \ge {1 - n^{-d^{10/75}}}$.
 \end{theorem}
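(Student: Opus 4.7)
The plan is a first-moment argument: I would bound the expected number of paths $P \in \cP_{r,\ell}(\G)$ along which no vertex is a block vertex of range $r$, and show this expectation is at most $n^{-d^{10/75}}$. A union bound over the $O(n \cdot d^{\ell}) = n^{1 + O((\log d)/\sqrt{d})}$ candidate paths of length $\ell$ in $\G$ would then close the argument, provided the per-path failure probability decays like $n^{-\Omega(\sqrt{d})}$.

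The core estimate is, for a fixed vertex $v$ whose $r$-neighborhood is a tree, to bound the probability that $v$ is not a block vertex of range $r$. Such a failure is witnessed by a path $Q$ of length at most $r$ starting at $v$ with $\cappedWA(Q) = \prod_{u \in Q} \cappedWA(u) \ge 1$. Since light vertices contribute $1-\varepsilon/4 < 1$ and, by \cref{theorem:TailBound4WA}, a vertex is heavy (i.e., $\WA > 1 - \varepsilon/2$) only with probability $\exp(-\Omega(d))$, any witness must contain a surplus of heavy vertices. Concretely, I would use a fractional-moment Markov bound $\Pr[\cappedWA(Q) \ge 1] \le \Exp[\cappedWA(Q)^{\alpha}]$ for some $\alpha \in (0,1)$, together with a first-moment enumeration over paths in the local Galton--Watson tree with $\mathrm{Poisson}(d)$ offspring. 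Tuning $\alpha$ via the aggregate-influence tail bound yields $d \cdot \Exp[\cappedWA(u)^{\alpha}] < 1$, so the resulting sum over path lengths is geometric and one obtains $\Pr[v \text{ is not a block vertex of range } r] \le e^{-\Omega(d)}$.

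To lift this single-vertex estimate to a path $P = v_0 v_1 \cdots v_\ell \in \cP_{r,\ell}(\G)$, I would exploit the fact that $r$-simplicity forces $N_r(P)$ to induce a tree, so the subtrees $T_{v_i}$ hanging off the $v_i$'s in directions perpendicular to $P$ are pairwise disjoint. After first revealing the edges and couplings on $P$ itself, I would handle subpaths of $P$ as witnesses by a short auxiliary calculation bounding $\Pr[\cappedWA(P') \ge 1]$ for a subpath $P' \subseteq P$ of length at most $r$ via the same fractional-moment trick. Conditional on this event not occurring, any witness certifying failure of $v_i$ must lie in $T_{v_i}$, so a linear-in-$\ell$ sub-collection of indices $I \subseteq \{0,\ldots,\ell\}$ gives events that are independent conditional on $(P,\bvecJ|_P)$. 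Multiplying the single-vertex bound over $i \in I$ yields a per-path failure probability at most $\exp(-\Omega(d) \cdot \ell) = n^{-\Omega(\sqrt{d})}$, which combined with the counting bound gives $\Pr[\cS_B^c] \le n^{1+O((\log d)/\sqrt{d}) - \Omega(\sqrt{d})} \le n^{-d^{10/75}}$ for large enough $d$.

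The main obstacle I expect is the single-vertex estimate: the aggregate influences $\WA(u)$ at neighboring vertices in the local tree are correlated through their shared edge coupling, and the piecewise definition of $\cappedWA$ makes $\Exp[\cappedWA(u)^\alpha]$ sensitive to the heavy tail of $\WA$. A naive product of expectations along $Q$ is therefore not immediately valid. Handling this likely requires revealing the couplings and offspring along $Q$ in a specific sequential order, applying the tail bound from \cref{theorem:TailBound4WA} conditionally at each step, and separating the contribution of the (rare) heavy vertices from that of the (overwhelmingly typical) light ones so that one still obtains a product-like decay with rate $e^{-\Omega(d)}$.
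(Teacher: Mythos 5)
There is a genuine gap in your proposed lifting step. After conditioning on the path structure and couplings $(P, \bvecJ|_P)$ and excluding heavy subpaths of $P$, you assert that any witness certifying failure of $v_i$ must lie in the perpendicular subtree $T_{v_i}$. This is false: a witness $Q$ for $v_i$ can run along $P$ from $v_i$ to some $v_j$ and then descend into $T_{v_j}$. Such a hybrid path can have $\cappedWA(Q) \ge 1$ even when the subpath $Q\cap P$ is light and every perpendicular path from $v_i$ is light, because the weight accumulated along $P$ multiplied by a heavy tail in $T_{v_j}$ can push the product past $1$. Consequently, the events $\{v_i\text{ is not a block vertex of range }r\}$ over a well-separated index set $I$ are \emph{not} conditionally independent — they are positively correlated through shared dependence on the perpendicular subtrees at intermediate vertices (a single exceptionally heavy $T_{v_j}$ simultaneously kills every $v_i$ within distance $r$ of $v_j$). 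Since you need an \emph{upper} bound on the joint failure probability, positive correlation makes the product of marginals a lower bound, not an upper bound, so the claimed $\exp(-\Omega(d)\cdot\ell)$ per-path estimate does not follow.

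The paper's machinery is built precisely to handle these hybrid witnesses. It introduces $\WB_P(v_i)$, the maximum weight over perpendicular paths from $v_i$, and the notions of $(d,\varepsilon)$-left-block and $(d,\varepsilon)$-right-block vertices, defined by requiring $\prod_{k} \WB_P(v_k) < 1$ over intervals of $P$ ending or beginning at $i$. Lemma~\ref{lemma:LRBlockVsBlockVertex} then shows that being both left- and right-block implies being a genuine block vertex: any witness $Q$ decomposes at the last $P$-vertex $v_j$ it touches, and $\cappedWA(Q) \le \prod_{k=i}^{j} \WB_P(v_k)$, so the interval product condition controls even the hybrid paths. The probabilistic estimate then proceeds by bounding the size of a covering set $L = \bigcup_i L_i$ of indices absorbed by heavy $\WB_P(v_i)$'s (Corollary~\ref{lem:BreakIntoPartsToBound}, Proposition~\ref{prop:BoundingL}), which requires controlling both the count $|H|$ of heavy indices (close to your single-vertex idea) \emph{and} the total log-weight $\sum_{i\in H}\log\WB_P(v_i)$ via the moment bound of Lemma~\ref{lem:MGFForcappedWB}. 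Your outer scaffolding — first moment over $n\cdot d^{\ell}$ paths, coupling with a Galton--Watson tree, fractional/integer moment bound for the single-vertex weight — does match the paper's framing (Lemma~\ref{lem:couplingDomin}, Corollary~\ref{cor:BoundOnEX}, Lemma~\ref{lem:MGFForcappedWB}). What is missing is the interval-covering argument that converts the moment control into a statement about how many $v_i$ can simultaneously lose the block-vertex property along the same path.
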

 
Observing that $\cS_A \supseteq \cS_A\cap\cS_B$, and applying the union bound gives that:
\begin{align*}
\Pr[\cS_A] \ge \Pr[\cS_C \cap \cS_B] =1- \Pr\left[\ \overline{\cS_C} \cup \overline{\cS_B}\ \right]
\ge 1 - n^{-d^{1/5}} - n^{-d^{10/75}}
\ge 1 -  n^{-d^{1/8}} \enspace.
\end{align*}
Theorem \ref{thm:AllPathsGood} follows.  \hfill $\Box$

\subsection{Proof of Theorem \ref{thm:AllPathsGoodPart1}}\label{subsec:IndConstr}

For $\ell = \frac{\log n}{\sqrt{d}}$, $r = \ell$, and for every $(\ell+1)$-tuple $P =(v_0, v_1, \ldots, v_{\ell})$ of vertices inducing a path in $\G$, we define the events
\begin{align*}
\cC_P &= \{ P \in \mathcal{P}_{r,\ell}(G)\} \enspace, &&\text{and}&&
\cE_P &= \left\{P \text{ contains less than $({4}/{10}) \ell$ block vertices of range $r$}\right\} \enspace.
\end{align*}
We wish to show that $\cE_P \cap \cC_P$ is unlikely to occur. To this end, we introduce a new probabilistic construction, that gives rise to a tree-model $(\GWT, \beta, \{\pmb{\UpJ}_e\})$ enjoying more independence properties than $(\G, \beta, \{\bm{\bJ}_e\})$, using {\em Galton-Watson trees} (GW trees for short). 

Given a distribution $\zeta:\mathbb{Z}_{\geq 0}\to [0,1]$ over the non-negative integers, and a single vertex $u$, recall that the (random) tree $\mathbf{T}$ is a Galton-Watson tree rooted at $u$ with offspring distribution $\zeta$, if the root of $\mathbf{T}$ is vertex $u$, and
the number of children for each vertex in $\mathbf{T}$ is distributed according to $\zeta$, independently from the other vertices.

For an integer $n\ge 1$, a number $d>0$, and an $(\ell+1)$-tuple of vertices $P=(v_0, \ldots, v_{\ell})$, we define the following random process giving rise to the triplet $(\GWT, \beta, \{\pmb{\UpJ}_e\})$, where $\GWT = \GWT(P, n ,d)$ is a tree, and each edge $e \in E(\GWT)$ has coupling  $\pmb{\UpJ}_e$. Specifically, starting from the set $P$ we execute the following steps

\begin{enumerate}
\item we (deterministically) add the edges $\{v_{i-1},v_{i}\}$, for every $i \in \{1,\ldots, \ell\}$, making $P$ a path,
\item from each $v_i$ in $P$, we hang a GW tree rooted at $v_i$, with offspring distribution {${\tt Binom}(n, d/n)$},
\item we truncate each GW tree at depth $r$
\item for every edge $e$ of $\GWT$, we sample $\pmb{\UpJ}_e$ according to $N(0,1)$, independently.
\end{enumerate}
We also define the following event 
\begin{align}\label{eq:DefineET}
\cE_{\GWT}  &= \{P \text{ contains fewer than $({4}/{10}) \ell$ block vertices of range $r$ in } \GWT \} \enspace,
\end{align}
and claim that the following is true.

\begin{lemma}\label{lem:couplingDomin}
There exists a coupling between $(\G,\{\bJ_e\})$ and $(\GWT,\{\pmb{\UpJ}_e\})$, such that $\Ind_{\cE_P} \times \Ind_{\cC_P} \le \Ind_{\cE_\UpT}$.
\end{lemma}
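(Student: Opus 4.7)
The plan is to construct the coupling via a joint breadth-first exposure of $\G$ and $\GWT$, both rooted at the path $P=(v_0,\ldots,v_\ell)$. Processing vertices in BFS order, whenever the children of an already-exposed vertex $u$ are to be revealed, in $\G$ each of the $n-k_u$ currently unseen vertices of $[n]$ is a neighbour of $u$ independently with probability $d/n$, producing a ${\tt Binom}(n-k_u,d/n)$-many child set, whereas in $\GWT$ the child set is ${\tt Binom}(n,d/n)$-many. I use the standard monotone coupling of the two binomials so that, after a natural identification of labels, the $\G$-children embed as a subset of the $\GWT$-children. Every edge that appears in both realisations gets the shared Gaussian weight $\pmb{\UpJ}_e=\bJ_e$, and edges appearing only in $\GWT$ receive fresh independent $N(0,1)$ values. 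I iterate this layer by layer up to depth $r$ on both sides.

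On the event $\cC_P$ the induced subgraph $\G[N_r(P)]$ is a tree, so the $\G$-side of the BFS exposes $N_r(P)$ exactly; by construction $\G[N_r(P)]$ then embeds into $\GWT$ as a subtree with matching couplings on shared edges. Consequently any path $P'$ of length at most $r$ emanating from some $v_i\in P$ in $\G$ lies entirely inside $N_r(P)$ and is simultaneously a path in $\GWT$.

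Writing $\WA_{\G}(u)$ and $\WA_{\GWT}(u)$ for the aggregate influences of $u$ computed inside $\G$ and inside $\GWT$ respectively (and similarly $\cappedWA_{\G}$, $\cappedWA_{\GWT}$), the central comparison is $\cappedWA_{\GWT}(u)\ge \cappedWA_{\G}(u)$ for every vertex $u$ on such a $P'$. By stochastic dominance of the child sets, the $\GWT$-neighbourhood of $u$ contains its $\G$-neighbourhood with matching influences, so $\WA_{\GWT}(u)\ge \WA_{\G}(u)$; the capping rule of Definition~\ref{def:VertexWeights} is non-decreasing in its argument, so the inequality transfers to $\cappedWA$. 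Multiplying over the vertices of $P'$ yields $\cappedWA_{\GWT}(P')\ge \cappedWA_{\G}(P')$, so every path witnessing $\cappedWA_{\G}(P')\ge 1$—the failure of $v_i$ to be a $(d,\varepsilon)$-block vertex of range $r$ in $\G$—also witnesses $\cappedWA_{\GWT}(P')\ge 1$ in $\GWT$. Contrapositively every block vertex of range $r$ on $P$ in $\GWT$ is already a block vertex of range $r$ on $P$ in $\G$, so the number of block vertices on $P$ in $\GWT$ is bounded above by that in $\G$, giving $\cE_P\cap \cC_P\subseteq\cE_{\GWT}$.

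The main technical obstacle is the comparison at vertices of depth exactly $r$ from $P$ in $\GWT$: such vertices are leaves of the truncated GW tree with only their parent edge contributing to $\WA_{\GWT}$, while in $\G$ they may carry additional neighbours at depth $r+1$ lying outside $N_r(P)$, and on those vertices $\WA_{\GWT}(u)\ge \WA_{\G}(u)$ is not automatic. The plan for handling this is to refine the coupling at the truncation layer: for each depth-$r$ vertex of $\GWT$ one also samples a ${\tt Binom}(n,d/n)$-many phantom-offspring count, whose Gaussian couplings are recorded into $\WA_{\GWT}(u)$ without creating further subtree in $\GWT$, preserving the dominance $\WA_{\GWT}(u)\ge \WA_{\G}(u)$ at the leaves as well. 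With that refinement in place the pointwise comparison of capped weights applies uniformly to every vertex on every candidate path, and the lemma follows.
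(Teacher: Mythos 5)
Your coupling is the same construction the paper uses: a joint BFS exposure of $\G$ and $\GWT$ rooted at $P$, a monotone coupling of the ${\tt Binom}$ offspring counts so that the $\G$-children embed into the $\GWT$-children, identical Gaussian weights on shared edges, and the conclusion drawn from the monotonicity of the block-vertex property under adding vertices and edges. You go one step further than the paper's terse justification (``the number of block vertices of $P$ decreases if we add more vertices or more edges''), by isolating the truncation boundary: a vertex $u$ at depth exactly $r$ is a leaf of $\GWT$, so $\WA_\GWT(u)$ consists of the parent influence alone, whereas the corresponding vertex in $\G$ can carry additional edges to depth $r+1$ that contribute to $\WA_\G(u)$; the pointwise inequality $\cappedWA_\GWT(u)\ge\cappedWA_\G(u)$ is therefore not automatic at the terminal vertex of a length-$r$ path, and this is the one place where the naive monotonicity argument could fail. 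Your phantom-offspring refinement — drawing a ${\tt Binom}(n,d/n)$ set of extra i.i.d.\ couplings at each depth-$r$ leaf of $\GWT$ and folding them into $\WA_\GWT$ without extending the tree — is exactly the right repair: it restores the dominance uniformly along every candidate path, and it is harmless downstream, since the moment computation in Lemma~\ref{lem:MGFForcappedWB} already upper-bounds every $\cappedWA(w_{2i})$ on the path by the capped weight of a vertex with a full ${\tt Binom}(n,d/n)$ offspring count. So this is the paper's proof plus a genuinely useful extra precision, and the proposal is correct.
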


\begin{proof}


We consider the following ``BFS tree of the path $P$" in $\G$, which we denote with $\ST$.

Specifically, for $i = 0, 1 , \ldots, \ell$, we obtain a tree $\ST_i$ of depth $r$, which is rooted at vertex $v_i$ using the following recursive exploration procedure. 

For $ 0 \le k \le r-1$, assume that the first $k$ levels of $\ST_i$ have been determined, and let $\{x_1, \ldots, x_m\}$ be the vertices of its $k$-th level. To obtain the level $(k+1)$ of $\ST_i$ we proceed as follows.

For $j = 1, \ldots, m$, we reveal $x_j$'s  neighbors in $\G$, excluding the connection with vertices that have been already explored, or belong to the path $P$. That is, the children of $x_j$ in $\ST_i$ will be precisely its neighbors in $\G$ that do not belong to: (i) any tree $\ST_{l}$ for  $l < i$,  (ii) any of the vertices that have been revealed to belong in $\ST_i$, up to this point,  (iii) the path $P$.

The tree $\ST$ is simply the tree obtained by union of the $\ST_i$'s and $P$.

Clearly, the offspring of each vertex $u$ in $\ST$, is dominated by ${\tt Binom}(n, d/n)$. Therefore, we can couple the graph structures of $\G$ and $\GWT$, so that $\ST$ is always a subtree of $\GWT$.

Since $\ST$ is a subtree of $\GWT$, let $\widehat{\GWT}$ be a subtree of $\GWT$ which is isomorphic
to $\ST$, while let $h:V(\ST)\to V(\widehat{\GWT})$ be a adjacency preserving bijection. 
For each pair of edges $e=\{u,w\}$ in  $E(\ST)$ and $\hat{e}=\{h(u), h(w)\}$  in 
$E(\widehat{\GWT})$ we couple $\pmb{\UpJ}_e$,  $\bJ_{\hat{e}}$ identically. That is, 
we have $\pmb{\UpJ}_e=\bJ_{\hat{e}}$. 
For the remaining edges $e$ of $\GWT$, we draw $\pmb{\UpJ}_e\sim N(0,1)$, independently.

All the above, complete the coupling construction.

Consider now the subgraph of $\G$ induced by $N_r(P)$, and notice that this will be different from $\ST$ exactly when  $N_r(P)$ contains at least one cycle. Clearly, in that case $\Ind_{\cC_P} = 0$, and thus, $\Ind_{\cE_P} \times \Ind_{\cC_P} \le \Ind_{\cE_{\GWT}}$.

On the other hand, if $N_r(P)$ is a tree,  then we have that $\Ind_{\cC_P} = 1$, and $N_r(P)$ coincides with $\ST$. Since the number of block vertices of $P$ decreases if we add more vertices or more edges in $N_r(P)$, and $\ST$ is a subtree of $\GWT$ we have that
$\Ind_{\cE_P} \times \Ind_{\cC_P} = \Ind_{\cE_P} \le \Ind_{\cE_{\GWT}}$.

All the above conclude the proof of Lemma \ref{lem:couplingDomin}.
\end{proof}

\begin{proposition}\label{prop:ExpectedBadPaths}
For any $\varepsilon>0$, there exists $d_0=d_0(\varepsilon)\ge 1$, such that for all $d\ge d_0$, and 
$0<\beta \le (1-\varepsilon)\beta_c(d)$, the following is true:

For $\ell = \frac{\log n}{\sqrt{d}}$, $r = \ell $, and an $(\ell+1)$-tuple $P=(v_0,\ldots, v_{\ell})$, let $(\GWT, \beta, \{\pmb{\UpJ}_e\})$ be the random tree construction defined on Section \ref{subsec:IndConstr} with respect to $P, d , r$ and $\beta$. Then,
\begin{align*}
\Pr[\cE_{\GWT}] &\le n^{-d^{1/7}}
\enspace,
\end{align*}
where the event $\cE_{\GWT}$ is defined by \eqref{eq:DefineET}.
\end{proposition}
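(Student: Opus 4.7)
The plan is to establish \cref{prop:ExpectedBadPaths} via a first-moment/Chernoff analysis that exploits the independence of the Galton--Watson subtrees hanging from $P$ and the rarity of heavy vertices. For each index $i \in \{0, \ldots, \ell\}$ let $Z_i = \Ind\{v_i \text{ is not a block vertex of range } r \text{ in } \GWT\}$, so that $\cE_{\GWT} = \{\sum_{i} Z_{i} > (6/10)(\ell+1)\}$. The central observation is that the weighting scheme $\cappedWA$ forces any path of length $k$ with weight $\ge 1$ to contain at least $m^{*}(k) = \Omega(k/\log d)$ heavy vertices (those with $\WA > 1-\varepsilon/2$), while by \cref{theorem:TailBound4WA} each vertex of $\GWT$ is heavy with probability $p \le \exp(-\Omega(d))$, essentially independently across the branching structure.

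First, I would obtain a per-vertex bound $\Pr[Z_{i} = 1] \le \exp(-\Omega(d/\log d))$ by union-bounding over paths. The number of length-$k$ paths emanating from $v_i$ in $\GWT$ is in expectation at most $(d+2)^{k}$ via the $\mathrm{Binom}(n, d/n)$ offspring distribution, and each such path is bad with probability at most $\binom{k+1}{m^{*}(k)} p^{m^{*}(k)}$ by a Binomial tail on its heavy-vertex count. Since $(d+2)^{k} \binom{k+1}{m^{*}(k)} p^{m^{*}(k)}$ simplifies to $\exp(k \log d + O(k) - \Omega(kd/\log d)) = \exp(-\Omega(kd/\log d))$ for large $d$, summing the resulting geometric series over $k = 1, \ldots, r$ yields the claim.

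Next, I would convert the per-vertex bound into concentration of $\sum_i Z_i$ via a locality argument. The key is that a heavy vertex $w$ sitting at depth $d_w$ inside the subtree $\UpT_{j}$ can only witness a single-heavy-vertex bad path from $v_i$ when $(1-\varepsilon/4)^{|i-j|+d_w} \cdot d\,\WA(w) \ge 1$, which forces $|i-j|+d_w \le k^{*} := O(\log(d\,\WA(w))/\log(1/(1-\varepsilon/4)))$. After conditioning on $\WA(w)$ being of typical magnitude and separately handling the multi-heavy mode (bad paths with $m \ge 2$ heavy vertices, whose joint probability $p^{m}$ beats any combinatorial factor from $(d+2)^{m k^{*}}$) via a similar union bound, I would conclude $\sum_i Z_i \le O(\log d) \cdot H + o(\ell)$, where $H$ counts the heavy vertices in the $O(\log d)$-neighbourhood of $P$ within $\GWT$.

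Finally, $H$ is stochastically dominated by a $\mathrm{Binom}(N, p)$ variable with $N \le (\ell+1) d^{O(\log d)} = \ell \cdot \exp(O(\log^{2} d))$ and $p = \exp(-\Omega(d))$, so Chernoff gives $\Pr[H \ge M] \le (eNp/M)^{M} \le \exp(-\Omega(dM))$ for $M = \Theta(\ell/\log d)$, the threshold beyond which $O(\log d) \cdot H$ would exceed $(6/10)\ell$. This yields $\Pr[\cE_{\GWT}] \le \exp(-\Omega(\sqrt{d}\,\log n / \log d)) = n^{-\Omega(\sqrt{d}/\log d)} \le n^{-d^{1/7}}$ for sufficiently large $d$, since $d^{5/14}$ dominates $\log d$. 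The main obstacle is the locality claim: proving that each heavy vertex contributes $O(\log d)$ failures requires careful tracking of bad-path geometry in $\GWT$ and uniform control of the multi-heavy residual, particularly when $\WA(w)$ exceeds its typical magnitude and the corresponding $k^{*}$ becomes larger than $O(\log d)$.
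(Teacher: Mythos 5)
Your approach is a genuinely different route from the paper. The paper does not work with the indicator variables $Z_i$ and a locality argument; instead it introduces \emph{left-block} and \emph{right-block} vertices with respect to $P$, shows that a vertex which is both left- and right-block is a block vertex in $\GWT$ (Lemma \ref{lemma:LRBlockVsBlockVertex}), and then bounds the number of non-left-block (resp.\ non-right-block) vertices by the size of an interval union $L = \cup_{i\in H} L_i$ around heavy indices. The key tool is a clean deterministic inequality, $|L| \le |H| + \tfrac{1}{\theta}\sum_{i\in H}\log\WB_P(v_i)$, after which everything is delivered by moment bounds on the path weight $\WB_P(v_1)$ (Lemma \ref{lem:MGFForcappedWB}), aided by an even/odd splitting of $P$'s indices that buys genuine i.i.d.\ structure. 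This entirely sidesteps the ``locality'' considerations you propose.

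There are two genuine gaps in your proposal. First, the per-vertex union bound over paths treats the number of length-$k$ descendants of $v_i$ and the heaviness of vertices along those paths as though they were independent. They are not: in $\GWT$, a vertex of high degree both appears on more paths and is more likely to be heavy, since $\cappedWA(w)$ scales with $\degr(w)$. Multiplying an expected path count $(d+2)^k$ by a per-path probability $\binom{k+1}{m^*}p^{m^*}$ double-counts this correlation. The paper's Lemma \ref{lem:MGFForcappedWB} is designed precisely to handle this, by splitting $\Exp[\cappedWA^t(v)]$ across $\degr(v)\le 3d$ and $\degr(v)>3d$ and controlling the heavy-tail contribution explicitly. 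Second, and more seriously, the locality claim $\sum_i Z_i \le O(\log d)\cdot H + o(\ell)$ is the crux of your concentration step but is not established. You correctly flag it as the main obstacle: the radius $k^*$ grows with $\log \WA(w)$, and $\WA(w)$ is unbounded in $\GWT$, so a single vertex of atypically large degree can contaminate far more than $O(\log d)$ indices. Controlling this, together with the multi-heavy residual, requires exactly the kind of careful bookkeeping that the paper's interval decomposition and MGF bounds package into a closed form. As written, the proposal contains the germ of an alternative argument, but without a rigorous version of the locality step it does not constitute a proof.
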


We prove Proposition \ref{prop:ExpectedBadPaths} in Section \ref{sec:AnalysisIndCon}. 
Defining now 
\begin{align}
X = \sum_{P \in V^{(\ell+1) }}  \Ind_{\cE_P} \times \Ind_{\cC_P} \enspace,
\end{align}
we have the following corollary of Proposition \ref{prop:ExpectedBadPaths}
\begin{lemma}\label{cor:BoundOnEX}
For any $\varepsilon>0$, there exists $d_0=d_0(\varepsilon)\ge 1$, such that for every $d\ge d_0$, and 
$0<\beta \le (1-\varepsilon)\beta_c(d)$, we have:
\begin{align*}
\Exp[X] \le n^{-d^{10/75}} \enspace.
\end{align*}
\end{lemma}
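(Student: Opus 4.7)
The plan is to apply linearity of expectation and then bound each summand using \cref{lem:couplingDomin} and \cref{prop:ExpectedBadPaths}. First, I would write
\[
\Exp[X] \;=\; \sum_{P \in V^{(\ell+1)}} \Pr[\cE_P \cap \cC_P],
\]
and estimate each term individually.

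For each ordered tuple $P = (v_0,\ldots,v_\ell)$, I would observe that $\cC_P$ forces $P$ to form a path in $\G$, an event of probability exactly $(d/n)^\ell$ since its $\ell$ edges are independent $\mathrm{Bernoulli}(d/n)$'s. Conditioning on this event, the BFS-based coupling of \cref{lem:couplingDomin} (which is naturally set up under this conditioning, since $\ST$ is defined as the BFS tree of $P$ in $\G$) yields $\Pr\bigl[\cE_P \cap \cC_P \mid P \text{ is a path}\bigr] \le \Pr[\cE_{\GWT}]$. Combining with the tail bound $\Pr[\cE_{\GWT}] \le n^{-d^{1/7}}$ from \cref{prop:ExpectedBadPaths} and multiplying by the path probability gives
\[
\Pr[\cE_P \cap \cC_P] \;\le\; (d/n)^\ell \cdot n^{-d^{1/7}}.
\]

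Summing over the at most $n^{\ell+1}$ tuples of $V^{(\ell+1)}$, and substituting $\ell = \log n / \sqrt{d}$ so that $d^\ell = n^{\log d/\sqrt{d}}$, I would obtain
\[
\Exp[X] \;\le\; n^{\ell+1}\,(d/n)^\ell\, n^{-d^{1/7}} \;=\; n \cdot d^\ell \cdot n^{-d^{1/7}} \;=\; n^{\,1 + \log d/\sqrt{d} - d^{1/7}}.
\]
Since $1/7 > 10/75$ (equivalently $75 > 70$) so that $d^{1/7}/d^{10/75} = d^{1/105} \to \infty$, and $\log d/\sqrt{d} \to 0$ as $d\to\infty$, for all sufficiently large $d$ the inequality $1 + \log d/\sqrt{d} + d^{10/75} \le d^{1/7}$ holds, giving the target $\Exp[X] \le n^{-d^{10/75}}$.

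The main obstacle is extracting the $(d/n)^\ell$ factor from the coupling. A direct application of \cref{lem:couplingDomin} gives only $\Pr[\cE_P\cap\cC_P] \le \Pr[\cE_{\GWT}] \le n^{-d^{1/7}}$, producing after summation $\Exp[X] \le n^{\ell+1 - d^{1/7}}$; the exponent $\Theta(\log n/\sqrt{d}) - d^{1/7}$ is positive in $n$ for any fixed $d$, which would spoil the bound. The improvement relies crucially on the inclusion $\cC_P \subseteq \{P \text{ is a path in } \G\}$ together with the fact that the coupling in \cref{lem:couplingDomin} is effectively built conditional on that event, thereby letting us decouple the path-existence randomness from the tree-model tail estimate.
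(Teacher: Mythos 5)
Your proof is correct and follows essentially the same route as the paper: write $\Exp[X]$ as a sum over ordered tuples, use the inclusion $\cC_P\subseteq\{P\text{ is a path}\}$ to factor out the $(d/n)^\ell$ path-existence probability, then apply Lemma~\ref{lem:couplingDomin} and Proposition~\ref{prop:ExpectedBadPaths} to the conditional probability, and finish with the elementary exponent comparison $1+\log d/\sqrt{d}+d^{10/75}\le d^{1/7}$ for large $d$. The only difference is expository — you make explicit the point that the coupling in Lemma~\ref{lem:couplingDomin} is effectively conditional on $\{P \text{ is a path}\}$ (which the paper leaves implicit in the conditional expectation it writes down) — but the substance and the bounds are identical.
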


\begin{proof}[Proof of \cref{cor:BoundOnEX}]
We have that
\begin{align*}
\Exp[X] &= 
\sum_{P \in V^{(\ell+1)}}  \Exp\left[
            \Ind_{\cE_P} \times \Ind_{\cC_P} 
        \right] 
        \le
n^{(\ell+1)} \cdot \left(\frac{d}{n}\right)^{\ell}  
        \cdot
        \Exp\left[
            \Ind_{\cE_P} \times \Ind_{\cC_P} \mid P \text{ is a path}
        \right]         
        \enspace,
\end{align*}
which due to Lemma \ref{lem:couplingDomin}, and Proposition \ref{prop:ExpectedBadPaths}, yields
\begin{align*}
\Exp[X] &\le
n \cdot {d}^{\ell} \cdot  \Exp\left[
            \Ind_{\cE_{\GWT}} 
        \right]
        \le
        n \cdot {n}^{\frac{\log d}{\sqrt{d}}} 
        \cdot  
        n^{-d^{1/7}}
        \enspace.
\end{align*}
Therefore, there exists a $d_0=d_0(\varepsilon)\ge 1
$, such that for every $d \ge d_0$, we have
$
\Exp[X] \le
        {n^{-d^{10/75}}}
$.
\end{proof}

Since $X$ takes non-negative, integral values, Markov's inequality, and Corollary \ref{cor:BoundOnEX} yield  
\begin{align*}
\Pr[X>0] \le \Exp[X] \le {n^{-d^{10/57}}} \enspace,
\end{align*}
concluding the proof of the theorem.

\section{Proof of Proposition \ref{prop:ExpectedBadPaths}}\label{sec:AnalysisIndCon}

Let us start with a few more definitions. Recall that $\ell = \frac{\log n}{\sqrt{d}}$, $r=\ell$.

\begin{definition}[Vertex Weight induced by Path]
Let $P=(v_0, v_1, \ldots, v_{\ell})$ be a path in~$\G$. Then, for each $i\in \{0, \ldots, \ell\}$ we define
the weight
\begin{align}\label{eq:DefineWB}
\WB_P(v_i) & = \max_{Q} \left\{ \cappedWA(Q) \right\}\enspace, 
\end{align}
where $Q$ varies over all paths of length  at most $r$, that emanate from $v_i$ and do not intersect with $P$, i.e.,  they to do not share vertices.  
\end{definition}

\begin{definition}[Left/Right-Block Vertex]
For $\varepsilon, d >0$, and a path $P=(v_0, \ldots, v_{\ell})$,   we say that a vertex $v_j$ is a
\begin{itemize}
    \setlength\itemsep{0.5em}
	\item $(d, \varepsilon)$-{\em left-block vertex} with respect to $P$, if for all $ t\le j$, we have that
	$
	\prod^{j}_{k=t} \WB_{P}(v_k) <1
	$,
	\item $(d, \varepsilon)$-{\em right-block vertex} with respect to $P$, if for all $ t \ge j$, we have that
	$
	\prod^{t}_{k=j} \WB_{P}(v_k) <1 
	$,
\end{itemize}
where the weights $\WB_P$ are specified with respect to $\varepsilon, d$.
\end{definition}

We have the following lemma.

\begin{lemma}\label{lemma:LRBlockVsBlockVertex} 
For any $\varepsilon, d, \beta>0$, for $\ell = \frac{\log n}{\sqrt{d}}$, $r =\ell$, and an $(\ell+1)$-tuple $P=(v_0,\ldots, v_{\ell})$, let $(\GWT, \beta, \{\pmb{\UpJ}_e\})$ be the random tree construction defined on Section \ref{subsec:IndConstr} with respect to $P, d , r$ and $\beta$.

 If a vertex $v_i$ of $P$, is both $(d, \varepsilon)$-left-block, and $(d, \varepsilon)$-right-block vertex with respect to $P$, then $v_i$ is an 
$(d, \varepsilon)$-block vertex for the tree {${\GWT}$}. 
\end{lemma}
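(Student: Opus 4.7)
The plan is to leverage the tree structure of $\GWT$ to decompose any path $Q$ emanating from $v_i$ into pieces whose weights are controlled by the $\WB_P$ quantities, and then invoke the left- and right-block hypotheses at $v_i$.

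First, I would fix an arbitrary simple path $Q$ in $\GWT$ emanating from $v_i$. Since $\GWT$ is a tree whose (truncated) Galton--Watson subtrees are attached to $P$ only at the respective spine vertices, $Q$ admits a unique decomposition $Q=Q_1\cup Q_2$: a prefix $Q_1$ that travels along $P$ from $v_i$ to some spine vertex $v_j$ (with the degenerate case $j=i$ allowed, in which case $Q_1=(v_i)$), and a suffix $Q_2=(v_j,u_1,\ldots,u_s)$ that continues inside the Galton--Watson subtree hanging from $v_j$. By the depth-$r$ truncation of that subtree, $s\le r$, and by construction $u_1,\ldots,u_s\notin P$, so $Q_2$ is an admissible competitor in the maximum defining $\WB_P(v_j)$; hence $\cappedWA(Q_2)\le\WB_P(v_j)$.

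Next, I would note the trivial single-vertex bound $\cappedWA(v_k)\le\WB_P(v_k)$ for every $v_k\in P$, obtained by taking the length-$0$ path at $v_k$ inside the definition of $\WB_P(v_k)$. Since the vertex set of $Q$ is the disjoint union of $V(Q_1)\setminus\{v_j\}$ and $V(Q_2)$, I get
\[
\cappedWA(Q) \;=\; \Big(\prod_{v_k\in V(Q_1)\setminus\{v_j\}} \cappedWA(v_k)\Big)\cdot\cappedWA(Q_2) \;\le\; \Big(\prod_{v_k\in V(Q_1)\setminus\{v_j\}}\WB_P(v_k)\Big)\cdot\WB_P(v_j) \;=\; \prod_{v_k\in V(Q_1)}\WB_P(v_k).
\]

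Finally, if $j\ge i$ the right-block hypothesis at $v_i$ applied with $t=j$ yields $\prod_{k=i}^{j}\WB_P(v_k)<1$, while if $j\le i$ the left-block hypothesis applied with $t=j$ yields the symmetric inequality; the degenerate case $Q_1=(v_i)$ is covered because both hypotheses include $t=i$ and therefore force $\WB_P(v_i)<1$. In every case $\cappedWA(Q)<1$, which is exactly the condition for $v_i$ to be a $(d,\varepsilon)$-block vertex of $\GWT$. I do not anticipate a real obstacle here: the lemma is essentially bookkeeping, and the only care required is to correctly read the tree structure of $\GWT$ so as to identify $Q_2$ as an admissible path in the maximum defining $\WB_P(v_j)$.
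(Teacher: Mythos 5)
Your proof is correct and follows essentially the same route as the paper's: decompose the arbitrary path $Q$ emanating from $v_i$ at the last spine vertex $v_j$ it visits, bound the spine factors by $\cappedWA(v_k)\le\WB_P(v_k)$, bound the off-spine tail by $\cappedWA(Q_2)\le\WB_P(v_j)$, and invoke the right- (or left-) block hypothesis with $t=j$. The only cosmetic differences are that you label the two pieces so that they overlap at $v_j$ (and then remove the double count), and that you spell out the degenerate case $j=i$, which the paper leaves implicit as an empty product.
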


\begin{proof}[Proof of Lemma \ref{lemma:LRBlockVsBlockVertex}]
Let 
$Q$ be an arbitrary path of length $r$, starting at vertex $v_i$. We will show that $\cappedWA(Q) < 1$. Let $v_j$ be the last vertex of $P$ that appears on~$Q$, (note that this is well-defined since~$Q$ starts at $v_i$, and ${\GWT}$ is a tree). 

W.l.o.g. assume that $j\ge i$, using just the fact that $v_i$ is $(d, \varepsilon)$-right-block, we will show that $\cappedWA(Q) < 1$, (if $j\le i$, we use the fact that $v$ is $(d, \varepsilon)$-left-block). Let $Q_1$ be the sub-path of $Q$ staring at $v_i$ and arriving at $v_{j-1}$, and $Q_2$ be the sub-path of $Q$ staring at $v_j$ and arriving at the last vertex of $Q$, (so that $Q$ is the concatenation of $Q_1$ and $Q_2$). We have
\begin{align*} 
     \cappedWA(Q) &=   \cappedWA(Q_1) \cdot \cappedWA(Q_2)  
    =
     \left( \prod^{j-1}_{k=i} \cappedWA(v_{k}) \right)\cdot
     \cappedWA(Q_2)
    \le 
    \left( \prod^{j-1}_{k=i} \WB_{P}(v_{k}) \right)\cdot
     \WB_{P}(v_j)
    <1 \;\enspace,
\end{align*}
where the first inequality follows from the definition of $\WB_P$, and the last inequality is due to the fact that $v_i$ is an $(d, \varepsilon)$-right-block vertex.
\end{proof}

Given an instance of $(\GWT, \beta, \{\pmb{\UpJ}_e\})$ as in Section \ref{subsec:IndConstr}, we define
\begin{align}
{\cE^{\mathrm{left}}_{\GWT}} 
    &= \left\{P \text{ contains fewer than {$({7}/{10})\ell$} vertices that are $(d, \varepsilon)$-left-block in }{\GWT}\right\} \enspace,\\
{\cE^{\mathrm{right}}_{\GWT}}  
    &= \left\{P \text{ contains fewer than {$({7}/{10})\ell$} vertices that are $(d, \varepsilon)$-right-block in }{\GWT}\right\} \enspace,
\end{align}
and 
\begin{equation*}
\cE_{\GWT}
    = \left\{P \text{ contains fewer than {$({4}/{10})\ell$} vertices that are $r$-block in } {\GWT} \right\} \enspace,
\end{equation*}
We will prove the following bounds.

\begin{theorem}\label{thrm:PrbOfManyLeftRight}
For any $\varepsilon>0$, there exists $d_0=d_0(\varepsilon)\ge 1$, such that for all $d\ge d_0$, and 
$0<\beta \le (1-\varepsilon)\beta_c(d)$, the following is true:

For $\ell = \frac{\log n}{\sqrt{d}}$, $r =\ell$, and an $(\ell+1)$-tuple $P=(v_0,\ldots, v_{\ell})$, let $(\GWT, \beta, \{\pmb{\UpJ}_e\})$ be the random tree construction defined on Section \ref{subsec:IndConstr} with respect to $P, d , r$ and $\beta$. Then,
\begin{align}\label{eq:BoundOnSmallNumOfLRB}
    \Pr\left[{\cE^{\mathrm{left}}_{\GWT}}  \right]   \le n^{-d^{1/6}}, && \text{ and }&&  \Pr\left[{\cE^{\mathrm{right}}_{\GWT}} \right]   \le  n^{-d^{1/6}}\enspace,
\end{align}
where the probability is over the instances of $(\GWT, \beta, \{\pmb{\UpJ}_e\})$.
\end{theorem}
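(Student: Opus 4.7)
The central feature of the Galton--Watson construction is that the subtrees hanging off distinct vertices $v_0, v_1, \ldots, v_\ell$ of the backbone $P$ are drawn independently. Since $\WB_P(v_i)$ depends only on paths $Q$ emanating from $v_i$ that avoid $P$, and every such path lives entirely inside the subtree rooted at $v_i$, the random variables $W_i := \WB_P(v_i)$ are mutually independent and identically distributed. Setting $Z_i := \log W_i$ and $R_j := Z_0 + Z_1 + \cdots + Z_j$ with $R_{-1} := 0$, the definition of left-block vertex rewrites as: $v_j$ is left-block iff $R_j < R_s$ for every $s \in \{-1, 0, \ldots, j-1\}$; equivalently, $v_j$ is a strict running minimum of the walk $(R_s)_{s=-1}^j$. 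The theorem therefore reduces to showing that this i.i.d.\ random walk, with strong negative drift, has at least $(7/10)\ell$ strict running minima except with probability $n^{-d^{1/6}}$. The right-block statement is identical after reversing the orientation of $P$, which is valid by exchangeability of the $W_i$'s.

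The key input is the typical-value bound $\Pr[W_i > 1-\varepsilon/4] \le q_0 := \exp(-\Omega(d))$. Indeed, if every vertex in the subtree rooted at $v_i$ has $\WA < 1-\varepsilon/2$ (is \emph{light}), then $\cappedWA(Q) = (1-\varepsilon/4)^{|Q|}$ for every path $Q$ out of $v_i$, and the maximum is attained at the singleton $Q = (v_i)$ giving $W_i = 1-\varepsilon/4$. Combining the expected size $O(d^r)$ of the depth-$r$ GW subtree for $r = \log n /\sqrt d$ with the per-vertex tail $\Pr[\WA(u) > 1-\varepsilon/2] \le \exp(-\varepsilon^4 d /(8\pi))$ from Theorem~\ref{theorem:TailBound4WA}, a union bound yields $q_0 = \exp(-\Omega(d))$ for sufficiently large $d$. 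In particular, the indicators $J_i := \Ind\{W_i \le 1-\varepsilon/4\}$ are i.i.d.\ Bernoulli$(1-q_0)$, and on the event $\{J_i = 1\}$ we have $Z_i = \log(1-\varepsilon/4) =: -c < 0$, so the walk drifts steeply downward.

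To count strict running minima, I fix a constant threshold $M_0$ and split the bad steps ($J_i=0$) into \emph{mildly bad} ($Z_i \le M_0$) and \emph{very bad} ($Z_i > M_0$). A mildly bad step creates an upward jump of at most $M_0$, which can destroy at most $\lceil M_0/c\rceil + 1 = O(1)$ subsequent strict running minima, because the walk must first regain the lost height via $-c$-steps before dropping below its previous minimum. A Chernoff bound on $\sum_i (1 - J_i)$ shows that the number of bad vertices exceeds $(3/10)\ell/(1 + M_0/c)$ only with probability at most $\exp(-\Omega(d\ell)) = \exp(-\Omega(\sqrt d \log n))$, which is well below $n^{-d^{1/6}}$ for large $d$, and this leaves at least $(7/10)\ell$ strict running minima intact. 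The main technical obstacle is ruling out very bad vertices: I must show that $\Pr[\exists i : W_i > e^{M_0}] \le n^{-d^{1/6}}$. This reduces to bounding the probability that some path $Q$ of length $\le r$ inside a single subtree satisfies $\cappedWA(Q) > e^{M_0}$, and the definition $\cappedWA(v) = d\,\WA(v)$ on heavy vertices forces such a path to contain $\Omega(r/\log d)$ heavies; each heavy costs a factor $\exp(-\Omega(d))$, so a union bound over the $\approx d^r$ candidate paths inside a subtree still leaves a cushion of $\exp(-\Omega(rd/\log d)) = \exp(-\Omega(\sqrt d \log n))$ per subtree, and a final union bound over the $\ell+1$ subtrees delivers the required $n^{-d^{1/6}}$. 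This sharp interplay among $r$, $d$, and the branching factor is the delicate quantitative step of the argument, and it is where the specific choice $r = \log n/\sqrt d$ becomes essential.
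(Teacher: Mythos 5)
The random-walk reformulation you propose — viewing $R_j = \sum_{k\le j}\log\WB_P(v_k)$ and identifying $(d,\varepsilon)$-left-block vertices with strict running minima of $(R_s)$ — is in fact structurally equivalent to the paper's decomposition into the absorbing sets $L_i$: both track the indices at which the walk has not yet recovered from an upward excursion. The high-level plan is sound, but there are three concrete gaps in your quantitative bounds, two of which are fatal.

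First, the $W_i := \WB_P(v_i)$ are \emph{not} mutually independent. The singleton path $Q=(v_i)$ is an admissible path in the definition of $\WB_P(v_i)$, so $\WB_P(v_i)\ge \cappedWA(v_i)$, and $\cappedWA(v_i)$ is a function of $\WA(v_i)=\sum_{w\sim v_i}\Inf_{\{v_i,w\}}$, which includes the couplings on the backbone edges $\{v_{i-1},v_i\}$ and $\{v_i,v_{i+1}\}$. Hence $W_i$ and $W_{i+1}$ share the coupling at $\{v_i,v_{i+1}\}$. Only same-parity $W_i$'s are i.i.d., which is exactly why the paper separates odd and even indices throughout (e.g.\ Proposition~\ref{prop:BoundOnHevy}, Lemma~\ref{lem:MGFForcappedWB}). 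This is fixable by an odd/even split, but it must be done, as your Chernoff bound on $\sum_i(1-J_i)$ implicitly requires independence.

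Second, and more seriously, your bound $\Pr[W_i>1-\varepsilon/4]\le\exp(-\Omega(d))$ does not follow from the union bound you propose. The depth-$r$ Galton--Watson subtree rooted at $v_i$, with $r=\log n/\sqrt d$ and branching $\approx d$, has $\Theta(d^{\,r})=n^{\log d/\sqrt d}$ vertices in expectation — a quantity polynomial in $n$, not $O(1)$. Each vertex is heavy with probability $\exp(-\Omega(d))$, which is a constant in $n$. The union bound over the whole subtree therefore produces an upper bound $n^{\log d/\sqrt d}\cdot\exp(-\Omega(d))$ that \emph{diverges} as $n\to\infty$; indeed, the subtree a.s.\ contains polynomially many heavy vertices. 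The rescue is that a heavy vertex at depth $k$ only lifts $\WB_P(v_i)$ above $1-\varepsilon/4$ if its $\cappedWA$-weight beats the $(1-\varepsilon/4)^{k}$ damping, so only heavy vertices within $O(\log d)$ of $v_i$ — or exponentially heavy ones further away — matter. Your union bound does not encode this damping; the paper's moment-generating-function bound (Lemma~\ref{lem:MGFForcappedWB}, which sums $d^k\cdot\Exp[\cappedWA^t]^{\lfloor k/2\rfloor+1}$ over depth $k$) does, and is precisely the device that closes this gap.

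Third, the claim that a path $Q$ of length at most $r$ with $\cappedWA(Q)>e^{M_0}$ must contain $\Omega(r/\log d)$ heavy vertices is false. That counting is only relevant for paths whose length is $\Theta(r)$, where the accumulated damping $(1-\varepsilon/4)^{\Theta(r)}$ must be offset. For short paths there is no damping to offset: a single heavy vertex $w$ at distance one or two from $v_i$, with $\WA(w)$ of order $e^{M_0}/d$, already gives $\cappedWA(Q)\ge(1-\varepsilon/4)^2\,d\,\WA(w)>e^{M_0}$. Since $\degr(w)\sim\mathrm{Binom}(n,d/n)$ can take moderately large values with only $\exp(-\Theta(d))$ probability, short paths are precisely the dangerous ones and they are not handled by your argument. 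Again, the paper's MGF bound treats all lengths uniformly and avoids this.

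In short, your decomposition is the same in spirit as the paper's, but the two lemmas you lean on — the per-subtree union bound for the typical-value estimate and the heavy-count lower bound for the very-bad event — do not hold as stated, and both failures stem from the subtree size being $n^{\Theta(\log d/\sqrt d)}$ rather than $O(1)$.
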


Using Theorem~\ref{thrm:PrbOfManyLeftRight}, and the union bound, we further get that 
\begin{equation*}
\Pr\left[\cE_{\GWT} \right] 
\le 
\Pr\left[{\cE^{\mathrm{left}}_{\GWT}}\right] + \Pr\left[{\cE^{\mathrm{right}}_{\GWT}}\right] 
\le  2 \cdot n^{-d^{1/6}} 
\le n^{-d^{1/7}}
\enspace,
\end{equation*}
yielding Proposition \ref{prop:ExpectedBadPaths}.

\section{Proof of Theorem~\ref{thrm:PrbOfManyLeftRight}} \label{sec:thrm:PrbOfManyLeftRight} 

We prove the bound of Theorem~\ref{thrm:PrbOfManyLeftRight} only for $\Pr[{\cE^{\mathrm{left}}_{\GWT}}]$, as the derivations for $\Pr[{\cE^{\mathrm{right}}_{\GWT}}]$ are identical. 

Let us start with a few definitions. We define the set of \emph{heavy} vertices of $P$ by , 
\begin{align*}
H = H(P)=\left\{i\in \{0, \ldots, \ell\}:\WB_{P}(v_i)\ge1\right\} 
\enspace .
\end{align*}
For $i \in H$, let $t_i$ be the greatest index in $\{i, i+1, \ldots, \ell\}$ such that for all $ t \in \{i, i+1, \ldots, t_i\}$
\begin{align*}
\prod_{k=i}^{t}\WB_{P}(v_k) \ge 1 \enspace,
\end{align*}
and let $L_i=\{i, i+1, \ldots, t_i\}$. For $i \notin H$, define $L_i =\emptyset$, and let $ L= L_0 \cup L_1 \cup \ldots \cup L_{\ell}$.

\begin{lemma}\label{lemma:LVsLeftblock}
For all $j \in \{0,1,\dots, \ell\} \setminus L$, vertex $v_j$ is a $(d, \varepsilon)$-left-block vertex with respect to $P$.
\end{lemma}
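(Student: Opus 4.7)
The plan is to argue by contrapositive: assume $v_j$ is \emph{not} a $(d,\varepsilon)$-left-block vertex with respect to $P$, and show that $j\in L$. By definition, failure of the left-block property gives some $t\le j$ with $\prod_{k=t}^{j}\WB_P(v_k)\ge 1$. Let $t^\star$ be the \emph{largest} such index in $\{0,1,\dots,j\}$. The one quantity I want to extract from $t^\star$ is that $t^\star\in H$, i.e.\ $\WB_P(v_{t^\star})\ge 1$.

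To see this, split into two cases. If $t^\star=j$, then the single-term product $\WB_P(v_j)\ge 1$ already gives $j\in H$, hence $j\in L_j\subseteq L$, and we are done. Otherwise $t^\star<j$, and by the maximality of $t^\star$ we have $\prod_{k=t^\star+1}^{j}\WB_P(v_k)<1$. Dividing $\prod_{k=t^\star}^{j}\WB_P(v_k)\ge 1$ by this inequality yields $\WB_P(v_{t^\star})>1$, so $t^\star\in H$.

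Once $t^\star\in H$ is established, the plan is to show that $j\in L_{t^\star}$, which by definition of $L$ gives $j\in L$ and completes the contrapositive. For this I need that for every $t$ with $t^\star\le t\le j$,
\[
\prod_{k=t^\star}^{t}\WB_P(v_k)\;\ge\;1,
\]
so that $t_{t^\star}\ge j$. For $t=j$ this is exactly the hypothesis defining $t^\star$. For $t^\star\le t<j$, the same maximality argument as above gives $\prod_{k=t+1}^{j}\WB_P(v_k)<1$ (since $t+1>t^\star$), and dividing $\prod_{k=t^\star}^{j}\WB_P(v_k)\ge 1$ by this strictly-less-than-$1$ quantity yields $\prod_{k=t^\star}^{t}\WB_P(v_k)>1$, as needed.

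There is no real obstacle here: the argument is a purely combinatorial manipulation of the definitions of $H$, $t_i$, $L_i$, and $L$. The only subtle point to keep track of carefully is the case $t^\star=j$ (so that the auxiliary product $\prod_{k=t^\star+1}^{j}\WB_P(v_k)$ is not used), which is why I split off that case at the start.
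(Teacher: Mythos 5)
Your proof is correct and takes essentially the same route as the paper's: both argue by contrapositive, take $t^\star$ (the paper calls it $i^*$) to be the largest index $\le j$ for which the product $\prod_{k=t^\star}^{j}\WB_P(v_k)\ge 1$, and then use the maximality of that index to conclude $j\in L_{t^\star}$. The only cosmetic difference is that you spell out separately that $t^\star\in H$ and handle the $t^\star=j$ edge case explicitly, whereas the paper folds that observation into the claim that $\prod_{k=i^*}^{t}\WB_P(v_k)\ge 1$ for all $t\in\{i^*,\dots,j\}$ (applied at $t=i^*$).
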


\begin{proof}
We show the contrapositive, i.e., assuming $v_j$ is not $(d, \varepsilon)$-left-block with respect to $P$, we show that $j \in L$. If $v_j$ is not $(d, \varepsilon)$-left-block, then there must exist at least one index $i \in \{0,1, \ldots, j\}$ such that 
$
 \prod^{j}_{k=i}\; \WB_{P}(v_k) \ge 1 
$.

Let $i^*$ be the greatest such index; we claim that $j \in L_{i^*}$, i.e., for every $t \in \{i^*, i^*+1, \ldots, j\}$ we have that $\prod^{t}_{k=i^*}\; \WB_{P}(v_k) \ge 1 $. Indeed, if there was a $t \in \{i^*, i^*+1, \ldots, j\}$ such that $\prod^{t}_{k=i^*}\; \WB_{P}(v_k) < 1 $, then we would have that $\prod^{j}_{k=t+1}\; \WB_{P}(v_k) \ge 1 $, which contradicts with the fact that $i^*$ is the greatest such index. 
\end{proof}

The proposition below shows that we should expect no more than a small fraction of vertices in $P$ to belong in $L$.

\begin{proposition}\label{prop:BoundingL}For any $\varepsilon>0$, there exists $d_0=d_0(\varepsilon)\ge 1$, such that for all $d\ge d_0$, and 
$0<\beta \le (1-\varepsilon)\beta_c(d)$, the following is true:

For $\ell = \frac{\log n}{\sqrt{d}}$, $r =\ell$, and an $(\ell+1)$-tuple $P=(v_0,\ldots, v_{\ell})$, let $(\GWT, \beta, \{\pmb{\UpJ}_e\})$ be the random tree construction defined on Section \ref{subsec:IndConstr} with respect to $P, d , r$ and $\beta$. Then,
\begin{align}
 \Pr\left[|L|\geq \frac{3}{10}\ell \right]\leq n^{-d^{1/6}} \enspace.  
\end{align}
\end{proposition}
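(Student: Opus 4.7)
The plan is to exploit the independence inherent in the Galton--Watson construction of $\GWT$. Since the subtrees hanging off distinct vertices of $P$ are generated independently and each edge coupling is drawn independently, the random variables $\WB_P(v_0),\ldots,\WB_P(v_\ell)$ are mutually independent and identically distributed. Throughout, set $W_i = \log \WB_P(v_i)$.

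First, I would recast $L$ in terms of a reflected random walk. Writing $S_t = \sum_{k=0}^{t} W_k$ with the convention $S_{-1} = 0$, one checks directly from the definition of $L_j$ that $i \in L$ iff there exists $m \in \{-1, 0, \ldots, i-1\}$ with $S_m \leq S_t$ for every $t \in [m, i]$. Equivalently, $i \in L$ iff $R_i > 0$, where $R_i = S_i - \min_{-1 \leq t \leq i} S_t$ satisfies the Lindley recursion $R_i = (R_{i-1} + W_i)^+$. Thus $|L|$ is precisely the number of indices $i \in \{0,\ldots,\ell\}$ at which this reflected walk is strictly positive.

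Second, I would obtain a tail bound $\Pr[W_i \geq 0] \leq q$ with $q = \exp(-\Omega(d))$. Since $\cappedWA(u) \leq 1 - \varepsilon/4$ for every non-heavy $u$ (i.e., $\WA(u) \leq 1 - \varepsilon/2$), any path $Q$ avoiding heavy vertices has $\cappedWA(Q) \leq (1-\varepsilon/4)^{|Q|+1} < 1$. Therefore $\{W_i \geq 0\} = \{\WB_P(v_i) \geq 1\}$ forces the truncated subtree of $v_i$ to contain a heavy vertex $u$ at distance $k$ satisfying roughly $(1-\varepsilon/4)^k \cdot d \cdot \WA(u) \geq 1$. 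The geometric decay of $(1-\varepsilon/4)^k$ means only heavies within $O(\log d/\varepsilon)$ of $v_i$ contribute (contributions from deeper heavies would require $\WA(u)$ exponentially large in $k$, which is super-exponentially unlikely by Gaussian concentration of $\bJ_e$). A union bound over the $O(d^{\log d})$ candidate vertices, combined with Theorem~\ref{theorem:TailBound4WA} bounding the per-vertex heavy probability by $\exp(-\Omega(d))$, then yields $\Pr[W_i \geq 0] \leq \exp(-\Omega(d))$.

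Third, I would concentrate $|L|$ via a block argument. Fix a block length $B = \Theta(\log d / \varepsilon)$ large enough that $(1-\varepsilon/4)^B \leq 1/\mathrm{poly}(d)$, and partition $\{0,\ldots,\ell\}$ into consecutive blocks of size $B$. Call a block \emph{clean} if every $v_i$ in it satisfies $W_i < 0$. By independence the number of non-clean blocks is stochastically dominated by $\mathrm{Bin}(\lceil(\ell+1)/B\rceil, Bq)$. A short deterministic argument shows that any run $L_j$ spanning a clean block must cancel the $(1-\varepsilon/4)^B$ drop using positive contributions from non-clean blocks, which forces $|L| \leq 2B \cdot (\#\text{non-clean blocks})$. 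A Chernoff bound on this Binomial with $q = \exp(-\Omega(d))$ gives
\begin{align*}
\Pr\bigl[|L| \geq \tfrac{3}{10}\ell\bigr]
\;\leq\; \exp\bigl(-\Omega(\ell d/\log d)\bigr)
\;=\; n^{-\Omega(\sqrt{d}/\log d)}
\;\leq\; n^{-d^{1/6}},
\end{align*}
for $d$ sufficiently large, as required.

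The main obstacle is the sharp tail estimate in the second step. The truncated GW subtree rooted at $v_i$ has expected size $d^{r} = n^{\log d/\sqrt{d}}$, so a naive union bound over all subtree vertices is hopeless for fixed $d$ and growing $n$. The essential saving input is the geometric decay of $(1-\varepsilon/4)^k$ along paths through non-heavy prefixes, which restricts the effective range of the union bound to an $O(\log d)$-neighbourhood of $v_i$ and brings the per-vertex heavy probability $\exp(-\Omega(d))$ into play.
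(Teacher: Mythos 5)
Your reflected-walk reformulation of $L$ is essentially sound (modulo strict/non-strict inequalities, which are measure-zero under Gaussian couplings), but the proposal has two genuine gaps, one fatal.

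The fatal gap is in Step 3. The deterministic claim $|L| \leq 2B\cdot(\#\text{non-clean blocks})$ is false: if a single index $j$ has $W_j$ enormous while every other index is clean, the run $L_j$ can span the entire path, so $|L| = \Theta(\ell)$ while there is exactly one non-clean block. No block argument that counts only the \emph{number} of heavy indices can succeed here, because the run length $|L_j|$ scales with the \emph{magnitude} of $\log\WB_P(v_j)$, not merely its sign. This is exactly why the paper works with the bound $|L| \le |H| + \tfrac{1}{\theta}\sum_{i\in H}\log\WB_P(v_i)$ (Corollary~\ref{lem:BreakIntoPartsToBound}) and then separately controls both the cardinality $|H|$ (Proposition~\ref{prop:BoundOnHevy}) and the cumulative overshoot $\sum_{i\in H}\log\WB_P(v_i)$ conditional on $|H|$ being small (Theorem~\ref{thm:BoundOnWBThm}). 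Any correct proof must quantitatively bound this overshoot; your Step 3 does not. A secondary but still real issue is the independence claim: $\WB_P(v_i)$ and $\WB_P(v_{i+1})$ both depend on $\cappedWA(v_i)$ resp.\ $\cappedWA(v_{i+1})$, which share the coupling on the path edge $\{v_i,v_{i+1}\}$, so the $\WB_P(v_i)$'s are not mutually independent; only same-parity indices are i.i.d.\ (the paper splits into $I_1$ and $I_2$ for precisely this reason). Finally, the tail bound in Step 2 is the technical heart of the matter (the paper devotes Lemma~\ref{lem:MGFForcappedWB} and the series bound over depth-$k$ paths to it), and your sketch does not address how to control contributions from paths through \emph{multiple} heavy vertices or from heavies with $\WA(u)$ far above 1, which the MGF calculation handles uniformly.
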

In light of Lemma~\ref{lemma:LVsLeftblock}, and Proposition \ref{prop:BoundingL}, Theorem~\ref{thrm:PrbOfManyLeftRight} follows easily.
Therefore, to finish proving Theorem~\ref{thrm:PrbOfManyLeftRight}, let us prove Proposition \ref{prop:BoundingL}.

\section{Proof of Proposition \ref{prop:BoundingL}}\label{sec:prop:BoundingL}

Using the union bound, we have that
\begin{align}\nonumber
|L|&=\left | \bigcup\nolimits_{i\in H} L_i \right| \leq \sum\nolimits_{i\in H}|L_i|\enspace. 
\end{align}


To estimate each $|L_i|$ for $i\in H$ we think as follows:
Each heavy vertex, $v_i$, introduces a weight, $\cappedWB_P(v_i)>1$. If $L_i = \{i, i+1\ldots,j\}$, then notice that  $\{i, i+1\ldots,j+1\}$ is the first interval on the right of $i$ that ``absorbs'' the weight $\WB_P(v_i)$, i.e., $j$ is the smallest index in $\{0, \ldots, i\}$ such that 
$$
{\prod\nolimits_{k=i}^{j+1} \WB_{P}(v_k) < 1} \enspace.
$$
Let $\theta = \varepsilon/2$. If  $H \cap L_i=\emptyset$,   we observe that $v_i$ requires at most a number of $\log_{(1-\theta)} \cappedWB_P(v_i)$ light vertices to get $\cappedWB_P(v_i)$ absorbed. 
Similarly, if  $ H \cap L_i\neq \emptyset$, then we observe that the number of light vertices needs to be
$$
\sum\nolimits_{j \in H\cap L_i} \log_{(1-\theta)} \WB_P(v_j)\enspace. 
$$
Therefore, with the above union-bound we obtain that 
$|L| \le |H| + \sum_{i\in H} \log_{(1-\theta)} \WB_P(v_i)$.
Hence, we have the following corollary. 

\begin{corollary}\label{lem:BreakIntoPartsToBound}
Let  $ \theta = \varepsilon/2$, then we have that
    \begin{align}\label{eq:BreakIntoPartsToBound}
        |L| 
        \le |H| + \sum\nolimits_{i\in H} \log_{(1-\theta)} \WB_P(v_i)
        \le |H| + \frac{1}{\theta}\cdot\sum\nolimits_{i\in H}\log\WB_P(v_i)\enspace.
     \end{align}
\end{corollary}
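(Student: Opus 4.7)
The strategy is to formalise the absorption heuristic laid out in the paragraph preceding the corollary: each heavy $v_i\in H$ injects a multiplicative budget $\WB_P(v_i)\ge 1$ into the product defining $L_i$, and this budget must then be dissipated by subsequent light vertices whose weights are strictly below~$1$. The cleanest bookkeeping is to decompose $L$, viewed as a subset of $\{0,\dots,\ell\}$, into its maximal consecutive intervals $J_1,\dots,J_N$ and to prove two key sub-claims.

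\emph{Claim 1:} For each $s$, $\prod_{k\in J_s}\WB_P(v_k)\ge 1$. The left endpoint $a$ of $J_s$ must lie in $H$, for otherwise the $L_i$ containing $a$ would also contain $a-1$, violating the maximality of~$J_s$. Walking rightwards through the chain of intervals $L_a,L_{t_a+1},L_{t_{t_a+1}+1},\dots$ that jointly cover $J_s$, each successor either starts at a freshly heavy index $t_{\cdot}+1\in H$ (automatically extending the running product by a factor $\ge 1$) or already overlaps the previous interval (and inherits the previous product-$\ge 1$ guarantee via the defining property of each $L_i$). A short telescoping argument, based on the fact that $\prod_{k=i}^{t}\WB_P(v_k)\ge 1$ for every $t\in L_i$, then keeps the cumulative product $\ge 1$ throughout $J_s$.

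\emph{Claim 2:} For every $k\notin H$, $\WB_P(v_k)\le 1-\theta$. Granted the two claims, the corollary follows in two lines: splitting the product in Claim~1 into its heavy and light factors, applying Claim~2 to bound the light part above by $(1-\theta)^{|J_s\setminus H|}$, and taking logarithms yields $|J_s\setminus H|\le \sum_{k\in J_s\cap H}\log_{1/(1-\theta)}\WB_P(v_k)$. Summing over $s$ and adding $|L\cap H|\le |H|$ gives the first inequality, while the second follows from the elementary estimate $-\log(1-\theta)\ge \theta$ for $\theta\in(0,1)$, which turns $\log_{1/(1-\theta)}x$ into $\log x/\theta$ for $x\ge 1$. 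The main obstacle is Claim~2: the condition $k\notin H$ only delivers the qualitative bound $\WB_P(v_k)<1$ with no a priori separation from~$1$, and extracting a quantitative $\theta$-gap has to exploit the $d$-scaling in Definition~\ref{def:VertexWeights}, where light vertices carry weight $1-\varepsilon/4$ while heavy vertices carry weight at least $d(1-\varepsilon/2)$; a case analysis on the number of heavy vertices met by the extremising path $Q$ from $v_k$, combined with the discreteness induced by the factor $d$, should supply the required uniform gap once $d$ is sufficiently large.
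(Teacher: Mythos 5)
Your overall skeleton is the paper's absorption argument: split $L$ into intervals, argue that the heavy weights must pay for the light damping, and take logarithms. Your Claim 1 is correct, and is in fact handled more carefully than in the paper: working with the maximal consecutive intervals of $L$ (rather than bounding $|L|\le\sum_{i\in H}|L_i|$, as the paper does) avoids double counting, and the product-$\ge 1$ property over each maximal interval $[a,b]$ does follow from the descent you sketch — if the product over $[a,b]$ were below $1$, the interval $L_i$ covering $b$ (whose partial products are $\ge 1$ up to $t_i\ge b$) lets you push the failure to the strictly earlier index $i-1$, and iterating must eventually produce a covering interval that starts at $a$ itself, a contradiction.

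The genuine gap is Claim 2, and the repair you propose for it cannot work. For $k\notin H$ the definitions give only $\WB_P(v_k)<1$, and there is no uniform separation from $1$, no matter how large $d$ is: a maximising path $Q$ in \eqref{eq:DefineWB} may contain one $\cappedWA$-heavy vertex $w$, whose weight $d\cdot\WA(w)$ is a continuous (not quantised) quantity, preceded by enough weight-$(1-\varepsilon/4)$ vertices so that $\cappedWA(Q)$ lands anywhere in $(1-\varepsilon/4,1)$, say at $0.999$; the factor $d$ in Definition~\ref{def:VertexWeights} creates no discreteness that keeps such products away from $1$. Moreover, with $\theta=\varepsilon/2$ Claim 2 is false rather than merely unproven: the one-vertex path already forces $\WB_P(v_k)\ge \cappedWA(v_k)=1-\varepsilon/4>1-\varepsilon/2$ for a typical light vertex (this is the same inequality $\cappedWA(v)\le\WB_P(v)$ that the paper itself invokes in Lemma~\ref{lemma:LRBlockVsBlockVertex}). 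And without a quantitative per-vertex damping factor the conclusion of the corollary genuinely fails as a deterministic statement: if one heavy vertex has $\WB_P=1.5$ and the subsequent light vertices all have $\WB_P=0.999$, then $|L|\approx \ln(1.5)/0.001\approx 400$, while the right-hand side of \eqref{eq:BreakIntoPartsToBound} is $1+\tfrac{2}{\varepsilon}\ln(1.5)=O(1/\varepsilon)$. To be fair, the paper's own argument simply asserts the $(1-\theta)$ absorption rate per light vertex without justification, so you have isolated exactly the weak point of this corollary; but a case analysis on the heavy vertices of the extremising path plus "discreteness induced by $d$" does not supply the missing uniform gap.
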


Next, we prove tail bounds for each term in the rhs of \eqref{eq:BreakIntoPartsToBound}, using the following technical lemma, which we prove in Section \ref{sec:lem:MGFForcappedWB}.

\begin{lemma}\label{lem:MGFForcappedWB}
For any $\varepsilon>0$, there exists $d_0=d_0(\varepsilon)\ge 1$, such that for all $d\ge d_0$, and 
$0<\beta \le (1-\varepsilon)\beta_c(d)$, the following is true:

For $\ell = \frac{\log n}{\sqrt{d}}$, $r =\ell$, and an $(\ell+1)$-tuple $P=(v_0,\ldots, v_{\ell})$, let $(\GWT, \beta, \{\pmb{\UpJ}_e\})$ be the random tree construction defined on Section \ref{subsec:IndConstr} with respect to $P, d, r$ and $\beta$. Then, for {$t= d^{95/100}$} and {$q= d^{93/100}$}, we have that
\begin{align}
\Exp\left[( \WB_P(v_1) )^t\right]  \le \left(1-\frac{\varepsilon}{4}\right)^{q} \enspace.
\end{align}
\end{lemma}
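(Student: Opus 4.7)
The plan is to bound $\Exp[\WB_P(v_1)^t]$ by a sum over paths in the Galton--Watson subtree hanging from $v_1$. Using $(\max_i a_i)^t \leq \sum_i a_i^t$ for non-negative reals, I will start from
\begin{align*}
\Exp[\WB_P(v_1)^t] \leq \sum_{m=0}^{r} \Exp[S_m], \qquad S_m := \sum_{Q:\,|Q|=m}\cappedWA(Q)^t,
\end{align*}
where $Q$ ranges over length-$m$ paths in the GW subtree starting at $v_1$. The target is to show $\Exp[S_m] \leq 2(1-\varepsilon/4)^t\cdot(2d(1-\varepsilon/4)^t)^m(1+o(1))^m$. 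Because $2d(1-\varepsilon/4)^t = 2d\exp(-\Theta(\varepsilon)\,d^{95/100}) < 1/2$ for $d \geq d_0(\varepsilon)$, the resulting geometric series in $m$ is bounded by $2$, giving $\Exp[\WB_P(v_1)^t] \leq 4(1-\varepsilon/4)^t \leq (1-\varepsilon/4)^q$; the last inequality holds because the gap $t-q = d^{95/100}-d^{93/100}$ exceeds $\log 4/|\log(1-\varepsilon/4)|$ for $d$ large.

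The first building block is a single-vertex moment estimate of the form $\Exp[\cappedWA(u)^t] \leq (1+e^{-\Omega(d)})(1-\varepsilon/4)^t$. From the disjoint-event decomposition $\cappedWA(u)^t = (1-\varepsilon/4)^t\cdot\Ind\{\WA(u)\leq 1-\varepsilon/2\} + (d\,\WA(u))^t\cdot\Ind\{\WA(u)>1-\varepsilon/2\}$, the heavy contribution is bounded by Cauchy--Schwarz as $d^t\sqrt{\Exp[\WA(u)^{2t}]\cdot\Pr[\WA(u)>1-\varepsilon/2]}$. Theorem~\ref{theorem:TailBound4WA} gives $\Pr[\WA(u)>1-\varepsilon/2] \leq \exp(-\varepsilon^4 d/(8\pi))$, and standard Poisson-type moment estimates based on $\WA(u) \leq \deg(u)$ yield $\Exp[\WA(u)^{2t}] \leq (Cd)^{2t}$ whenever $2t\leq 2d$, which holds for $t = d^{95/100}$. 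The heavy piece is thus $(Cd^2)^t e^{-\Omega(d)}$, negligible compared with $(1-\varepsilon/4)^t = e^{-\Theta(\varepsilon)d^{95/100}}$ since $t\log d \ll d$ for $d \geq d_0$.

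The second step lifts this estimate from a single vertex to an entire path $Q=(u_0,\ldots,u_m)$ by conditioning on the path-edge influences $I_1,\ldots,I_m$. Under this conditioning the off-path contributions $Z_i$ to each $\WA(u_i)$ are independent across $i$, so the factors $\cappedWA(u_i)^t$ factorise: $\Exp[\prod_i\cappedWA(u_i)^t\mid\{I_j\}] = \prod_i \Exp[\cappedWA(u_i)^t\mid I_i,I_{i+1}]$. On the event $\{\forall j:\;I_j\leq\varepsilon/8\}$ the single-vertex bound applies uniformly to each factor. Each $I_j$ exceeds $\varepsilon/8$ with probability at most $e^{-\Omega(d^2)}$, by the Gaussian tail of $\bJ_{e_j}$ combined with $\beta\leq\sqrt{2\pi}/d$. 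Tallying carefully: if $s$ of the $I_j$'s are large, then at most $2s$ vertex factors need the crude moment bound $\Exp[\cappedWA(u_i)^t\mid I_i,I_{i+1}] \leq (Cd^2)^t$, and summing over $s$ via a binomial identity yields
\begin{align*}
\Exp\bigl[\textstyle\prod_{i=0}^m\cappedWA(u_i)^t\bigr] \leq \bigl(2(1-\varepsilon/4)^t\bigr)^{m+1}(1+\alpha)^m, \qquad \alpha \leq e^{-\Omega(d^2)}.
\end{align*}
Finally, a ``many-to-one'' computation for the GW tree with mean offspring $d$ gives $\Exp[S_m] \leq d^m\cdot\Exp[\prod_i\cappedWA(u_i)^t]$, evaluated along a spine whose vertex degrees are only mildly size-biased, so that the single-vertex estimate still applies up to constants. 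Substituting into the geometric sum from the first paragraph closes the argument.

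The main obstacle is the correlation between $\cappedWA(u_i)$ and $\cappedWA(u_{i+1})$ caused by the shared path-edge influence $\Inf_{e_i}$, which rules out a naive product-of-expectations bound. The conditioning-on-path-edges decoupling together with the doubly-exponential Gaussian tail $\Pr[\Inf>\varepsilon/8]\leq e^{-\Omega(d^2)}$ is what makes the rare ``large path-edge influence'' configurations harmless, and the comfortable slack $(1-\varepsilon/4)^{t-q}$ absorbs the $1+o(1)$ multiplicative losses arising from these bounds.
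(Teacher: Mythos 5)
Your proposal is correct in substance, but it takes a genuinely different route from the paper's. The central difficulty in both proofs is the same: along a path $Q=(u_0,\dots,u_m)$ in the GW tree, the weights $\cappedWA(u_i)$ and $\cappedWA(u_{i+1})$ are correlated because they share the edge influence $\Inf_{e_{i+1}}$. The paper sidesteps this with a \emph{parity trick}: it writes $\cappedWA(Q)=\cappedWA^{\rm even}(Q)\cdot\cappedWA^{\rm odd}(Q)$, observes that the even-indexed (resp.\ odd-indexed) weights involve pairwise disjoint edge sets and are therefore genuinely independent, and then bounds $\Exp[(\WB_P^{\rm even})^t]$ by converting the max over paths into a sum and factoring the resulting moment exactly. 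The single-vertex moment $\Exp[\cappedWA(v)^t]$ is then bounded by a two-regime split on $\deg(v)\le 3d$ versus $\deg(v)>3d$, using Theorem~\ref{theorem:TailBound4WA} for the former and a binomial tail for the latter. You instead attack the dependency head-on: you condition on the path-edge influences $I_1,\dots,I_m$, under which the $\cappedWA(u_i)$ are conditionally independent, and then you must control the rare event that some $I_j$ exceeds a constant (probability $e^{-\Omega(d^2)}$ via the Gaussian tail and $\beta=O(1/d)$), giving those vertices a crude moment bound $(Cd^2)^t$. Your single-vertex moment bound uses Cauchy--Schwarz together with Theorem~\ref{theorem:TailBound4WA} and a Poisson-type moment estimate for $\deg(v)^{2t}$, which is a cleaner way to isolate the heavy-tail contribution than the paper's case split. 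The trade-off: the parity decomposition gives exact independence in a single stroke but requires the extra even/odd bookkeeping; your conditioning argument is conceptually more direct but requires the extra large-$I_j$ bookkeeping and the ``$2s$ bad vertices'' accounting, and the factor for $u_i$ still depends on both $I_i$ and $I_{i+1}$, so the ``binomial identity'' step should really be phrased as: bound the conditional product by $\prod_i g(I_i,I_{i+1})$ where $g$ takes only two values, observe that a single large $I_j$ spoils at most two factors, and then use the independence of the $I_j$'s to evaluate the resulting product expectation. Both approaches share the same max-to-sum-over-paths structure and the same implicit hand-waving about the mildly size-biased degrees along a conditioned spine, and both are saved by the enormous slack $t-q=d^{95/100}-d^{93/100}$.
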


We start by overestimating $|H|$, as follows

\begin{proposition}\label{prop:BoundOnHevy} 
For any $\varepsilon>0$, there exists $d_0=d_0(\varepsilon)\ge 1$, such that for all $d\ge d_0$, and 
$0<\beta \le (1-\varepsilon)\beta_c(d)$, the following is true:

For $\ell = \frac{\log n}{\sqrt{d}}$, $r =\ell$, and an $(\ell+1)$-tuple $P=(v_0,\ldots, v_{\ell})$, let $(\GWT, \beta, \{\pmb{\UpJ}_e\})$ be the random tree construction defined on Section \ref{subsec:IndConstr} with respect to $P, d, r$ and $\beta$. Then, 
\begin{align}
\Pr\left[|H|\geq \frac{\ell}{d^{1/10}} \right ]\leq n^{-d^{1/4}} 
\enspace,
\end{align}
where $H$ is the set of heavy vertices defined above.
\end{proposition}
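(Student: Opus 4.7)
The plan is to realise $|H|$ as a sum of Bernoulli indicators, use the $t$-th moment bound of \cref{lem:MGFForcappedWB} to control each one individually, and then apply a standard binomial tail bound after splitting the sum into two independent families.

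For $i\in\{0,\ldots,\ell\}$ set $Z_i := \Ind\{\WB_P(v_i)\ge 1\}$, so that $|H|=\sum_{i=0}^{\ell}Z_i$. By Markov's inequality and \cref{lem:MGFForcappedWB}, for $t=d^{95/100}$ and $q=d^{93/100}$,
\[
\Pr[Z_i=1] \;=\; \Pr\bigl[(\WB_P(v_i))^{t}\ge 1\bigr] \;\le\; \Exp\bigl[(\WB_P(v_i))^{t}\bigr] \;\le\; (1-\varepsilon/4)^{q} \;=:\; p,
\]
so in particular $\log p \le -(\varepsilon/4)\,d^{93/100}$. The $Z_i$'s are \emph{not} globally independent: by construction of $\GWT$ (\cref{subsec:IndConstr}), the weight $\WB_P(v_i)$ is a measurable function of the GW subtree hanging off $v_i$ and of the Gaussian couplings of the (at most two) $P$-edges incident to $v_i$, so adjacent $Z_i,Z_{i+1}$ share dependence on the coupling of $\{v_i,v_{i+1}\}$. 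However, the even-indexed family $\{Z_{2j}\}_j$ and the odd-indexed family $\{Z_{2j+1}\}_j$ each consist of mutually independent Bernoullis, since within each family the summands depend on pairwise disjoint collections of subtrees and $P$-edge couplings. In other words, $|H_{\rm even}|:=\sum_j Z_{2j}$ and $|H_{\rm odd}|:=\sum_j Z_{2j+1}$ are both sums of independent Bernoullis with parameter at most $p$.

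If $|H|\ge \ell/d^{1/10}$ then at least one of $|H_{\rm even}|,|H_{\rm odd}|$ exceeds $k:=\lceil \ell/(2d^{1/10})\rceil$. The standard binomial tail bound gives
\[
\Pr\bigl[|H_{\rm even}|\ge k\bigr] \;\le\; \binom{\lceil\ell/2\rceil+1}{k}\,p^{k} \;\le\; \left(\frac{e\ell p}{k}\right)^{k} \;\le\; \bigl(2e\,d^{1/10}\,p\bigr)^{k},
\]
and identically for $|H_{\rm odd}|$. Substituting $\ell=\log n/\sqrt d$ and $\log p\le -(\varepsilon/4)d^{93/100}$, the leading term of the exponent becomes $-(\varepsilon/10)\cdot d^{83/100}\cdot \ell = -(\varepsilon/10)\cdot d^{33/100}\log n$, so each tail is at most $n^{-(\varepsilon/10)\,d^{33/100}}$. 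Since $33/100>1/4$, for sufficiently large $d$ this is at most $\tfrac12 n^{-d^{1/4}}$, and a union bound over the two halves closes the argument.

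The only real subtlety is the short-range dependence induced by the shared $P$-edges; the even/odd splitting decouples it cleanly while preserving the identical distribution needed to apply \cref{lem:MGFForcappedWB} to every summand. All remaining steps are routine Markov-type and binomial-tail computations, with the exponent budget $93/100 - 10/100 - 50/100 = 33/100$ comfortably exceeding the target exponent $1/4$ for large~$d$.
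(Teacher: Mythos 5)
Your proof follows essentially the same route as the paper: bound $\Pr[\WB_P(v_i)\ge 1]$ by Markov's inequality via Lemma~\ref{lem:MGFForcappedWB}, split $\{0,\ldots,\ell\}$ into odd and even indices to obtain two independent families, and close with a binomial tail bound plus a union bound. You are slightly more explicit than the paper about why the even/odd split decouples the dependence (shared $P$-edge couplings between consecutive $v_i$'s), which is a worthwhile clarification, but the underlying argument and exponent bookkeeping ($93/100 - 10/100 - 50/100 = 33/100 > 1/4$) coincide with the paper's.
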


\begin{proof}[Proof of Proposition \ref{prop:BoundOnHevy}]

Let $I_1$, and $I_2$ ,be the sets of odd, and even indices of $\{0, \ldots, \ell\}$, respectively. Writing $H_1 =H \cap I_1$, and $H_2 =H \cap I_2$, and using the union bound, it is easy to see that
\begin{align}
\Pr\left[|H|\geq \frac{\ell}{d^{1/10}} \right ] 
\le 
\Pr\left[|H_1|\geq \frac{1}{2}\cdot\frac{\ell}{d^{1/10}} \right ]
+
\Pr\left[|H_2|\geq \frac{1}{2}\cdot\frac{\ell}{d^{1/10}} \right ]
\enspace,
\end{align}
therefore, it suffices to show that
\begin{align*}
\Pr\left[|H_1|\geq \frac{1}{2}\cdot\frac{\ell}{d^{1/10}} \right ] \le \frac{1}{2} \cdot n^{-d^{1/4}} 
\enspace.
\end{align*}

Notice that for any two indices $i, j \in I_1$, the corresponding random variables $\WB_{P}(v_i), \WB_{P}(v_j)$, are i.i.d., and thus, each index $i \in I_1$ belongs to $H_1$ with probability at most $\Pr[\WB_{P}(v_1) \ge 1]$. 
Specifically, per Lemma~\ref{lem:MGFForcappedWB}, and Markov's inequality, we have that for {$t= d^{95/100}$} and {$q= d^{93/100}$}
\begin{align*}
\Pr\left[\WB_{P}(v_1) \ge 1 \right]
\le
{\Exp\left[\left(\WB_{P}(v_i)\right)^t\right]} \le \left(1-\frac{\varepsilon}{4}\right)^{q}
\enspace.
\end{align*}

Therefore, $|H_1|$ is upper bounded by the number of successes of a binomial distribution with  
$\ell/2$ number of trials, and probability of success $\left(1-\frac{\varepsilon}{2}\right)^{q}$. 
 Expanding the tail-probability of aforementioned distribution, and bounding appropriately, we get that 
\begin{align*}
    \Pr\left[|H_1|\ge  \frac{1}{2}\cdot\frac{\ell}{d^{1/10}}\right]
&=
\Pr\left[|H_1|\ge \frac{1}{2}\cdot\frac{\log n}{d^{3/5}}\right]
\\
&\le
\binom{\frac{\log n}{2\cdot d^{1/2}}}{\frac{\log n}{2\cdot d^{3/5}}} \cdot
\exp\left(\log {\left(1-\frac{\varepsilon}{4}\right)}\cdot d^{93/10} \cdot\frac{\log n}{d^{3/5}}\right)\\
&\le
\left(e\cdot {{d^{1/10}}}\right)^{\frac{\log n}{d^{3/5}}} \cdot
\exp\left(\log {\left(1-\frac{\varepsilon}{4}\right)}\cdot d^{86/100} \cdot{\log n}\right)\\
&\le
\exp\left(\log n\left[
\frac{4\log d}{d^{3/5}}
+\log {\left(1-\frac{\varepsilon}{4}\right)}\cdot d^{86/100}\right]\right)
\enspace.
\end{align*}
So that for {$d \ge d_0(\varepsilon)$}, we have that
\begin{align*}
  \Pr\left[|H_1|\ge  \frac{1}{2}\cdot\frac{\ell}{d^{1/10}}\right]
  \leq n^{-d^{4/5}} \le \frac{1}{2} \cdot n^{-d^{1/4}}
\enspace,
\end{align*}
as desired, concluding the proof of Proposition \ref{prop:BoundOnHevy}.
\end{proof}

We have the following bound on the upper tail of $\sum_{i \in H} \log \WB_P(v_i)$.

\begin{theorem}\label{thm:BoundOnWBThm}
For any $\varepsilon>0$, there exists $d_0=d_0(\varepsilon) \ge 1$, such that for all $d\ge d_0$, and 
$0<\beta \le (1-\varepsilon)\beta_c(d)$, the following is true:

For $\ell = \frac{\log n}{\sqrt{d}}$, $r =\ell$, and an $(\ell+1)$-tuple $P=(v_0,\ldots, v_{\ell})$, let $(\GWT, \beta, \{\pmb{\UpJ}_e\})$ be the random tree construction defined on Section \ref{subsec:IndConstr} with respect to $d, \ell , r$ and $\beta$. Then,  
we have that
\begin{equation*} 
\Pr \left[ \sum_{i \in H} \log \WB_P(v_i) \ge \frac{\ell}{d^{1/50}} \middle| 
|H|\le\frac{\ell}{d^{{1}/{10}}}\right] \le 
 n^{-d^{1/3}}\enspace.
 \end{equation*}
\end{theorem}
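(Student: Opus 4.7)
The plan is to apply an exponential Chernoff bound to the sum $\sum_{i=0}^{\ell} X_i$, where $X_i = (\log \WB_P(v_i))^+$. Note that $X_i>0$ iff $i\in H$, so $\sum_{i=0}^\ell X_i = \sum_{i\in H}\log \WB_P(v_i)$. Since the conditioning event $\{|H|\le \ell/d^{1/10}\}$ has probability $\ge 1 - n^{-d^{1/4}}$ by Proposition~\ref{prop:BoundOnHevy}, and $\Pr[A\mid B]\le \Pr[A]/\Pr[B]$, it suffices to establish the unconditional bound $\Pr[\sum_i X_i \ge \ell/d^{1/50}] \le \tfrac{1}{2}\,n^{-d^{1/3}}$.

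The first step is to partition the indices $\{0,1,\ldots,\ell\}$ into the sets $I_1$ of odd and $I_2$ of even indices, and to observe that for $i,j$ of the same parity with $i\ne j$, the random variables $\WB_P(v_i)$ and $\WB_P(v_j)$ are independent. Indeed, $\WB_P(v_i)$ is determined by (a) the influences on the path edges $\{v_{i-1},v_i\}$ and $\{v_i,v_{i+1}\}$ (which enter through $\WA(v_i)$ at the root of any candidate path $Q$) and (b) the GW subtree rooted at $v_i$ together with its edge couplings. For $|i-j|\ge 2$ these two collections are disjoint, and the GW subtrees at $v_i,v_j$ are independent by the construction in Section~\ref{subsec:IndConstr}. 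A union bound then reduces the task to showing, for each $j\in\{1,2\}$,
\[
\Pr\!\left[\sum_{i\in I_j} X_i \ge \frac{\ell}{2d^{1/50}}\right] \le \tfrac{1}{4}\,n^{-d^{1/3}}.
\]

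Next, set $s=t=d^{95/100}$ and estimate the MGF of each $X_i$:
\[
\Exp[e^{sX_i}] = \Pr[\WB_P(v_i)<1] + \Exp\!\left[\WB_P(v_i)^{s}\Ind\{\WB_P(v_i)\ge 1\}\right] \le 1 + \Exp[\WB_P(v_i)^{s}] \le 1 + \Big(1-\tfrac{\varepsilon}{4}\Big)^{q},
\]
where the last inequality is Lemma~\ref{lem:MGFForcappedWB} with $q=d^{93/100}$. Since $\ell\cdot(1-\varepsilon/4)^q = o(1)$ for $d$ large (because $q=d^{93/100}$ dwarfs $\log\ell$), independence within each $I_j$ gives $\prod_{i\in I_j}\Exp[e^{sX_i}] \le \exp\!\big(\ell(1-\varepsilon/4)^q\big) \le 2$. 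Exponential Markov then yields
\[
\Pr\!\left[\sum_{i\in I_j} X_i \ge \frac{\ell}{2d^{1/50}}\right] \le 2\exp\!\left(-\frac{s\ell}{2d^{1/50}}\right) = 2\exp\!\left(-\frac{d^{43/100}\log n}{2}\right) = 2\, n^{-d^{43/100}/2},
\]
after plugging in $\ell=\log n/\sqrt{d}$ and $s=d^{95/100}$; for $d$ sufficiently large this is at most $\tfrac{1}{4}n^{-d^{1/3}}$, as required.

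The only real obstacle is the bookkeeping: verifying that the exponent $43/100$ produced by combining $s=d^{95/100}$ with $\ell/d^{1/50}$ and $\ell=\log n/\sqrt{d}$ indeed exceeds $1/3$, and that the MGF-product correction $\exp(\ell(1-\varepsilon/4)^q)$ remains $O(1)$. The substantive analytic input is entirely encapsulated in the moment estimate of Lemma~\ref{lem:MGFForcappedWB}; once that is available, the argument is a clean Chernoff deployment exploiting the parity-based independence inherent to the tree construction $\GWT$.
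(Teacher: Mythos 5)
Your proof reaches the correct conclusion but takes a genuinely different and in fact cleaner route than the paper's. The paper keeps the conditioning on $|H_1|\le\ell/d^{1/10}$ throughout and applies exponential Markov to the \emph{conditional} moment $\Exp[\exp(t\log\WB_P(v))\mid v\in H_1]$; this forces a separate lower-bound calculation $\Pr[\WB_P(v)\ge 1]\ge e^{-Cd}$ (obtained via a detour through degrees in the range $[d,2d]$) to control the denominator in Bayes' rule. You instead observe that $X_i=(\log\WB_P(v_i))^+$ turns the restricted sum $\sum_{i\in H}\log\WB_P(v_i)$ into the unrestricted $\sum_i X_i$, reduce the conditional tail to an unconditional one via $\Pr[A\mid B]\le\Pr[A]/\Pr[B]$, and bound the \emph{unconditional} MGF directly by $\Exp[e^{sX_i}]\le 1+\Exp[\WB_P(v_i)^s]$. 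The conditioning event thereafter only contributes a benign factor $1/\Pr[B]\le 2$. The parity split and the invocation of Lemma~\ref{lem:MGFForcappedWB} mirror the paper, but you eliminate an entire sub-argument (the lower bound on $\Pr[\WB_P(v)\ge 1]$), which is a genuine simplification.

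There is one numerical slip that should be repaired. You assert $\ell\cdot(1-\varepsilon/4)^q=o(1)$ and conclude $\prod_{i\in I_j}\Exp[e^{sX_i}]\le 2$. In the paper's asymptotic regime $d$ is a \emph{fixed} (large) constant while $n\to\infty$, so
\begin{align*}
\ell\cdot\Bigl(1-\tfrac{\varepsilon}{4}\Bigr)^{q}
=\frac{(1-\varepsilon/4)^{d^{93/100}}}{\sqrt{d}}\cdot\log n = c\log n
\end{align*}
with $c=c(d,\varepsilon)>0$ tiny but fixed, so the product is $n^{c}$, not $O(1)$. The argument is rescued because $n^c$ is dominated by the Chernoff decay: the bound becomes $n^{\,c-d^{43/100}/2}$, and taking $d$ large enough that $c\le d^{43/100}/4$ and $d^{43/100}/4\ge d^{1/3}$ restores the claimed $n^{-d^{1/3}}$. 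Replace the intermediate assertion ``$\le 2$'' by ``$\le n^{c}$ with $c=(1-\varepsilon/4)^{d^{93/100}}/\sqrt{d}$'' and the proof is complete.
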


Given Theorem \ref{thm:BoundOnWBThm}, we establish Proposition \ref{prop:BoundingL} as follows. 

\begin{proof}[Proof of Proposition \ref{prop:BoundingL}]

 Write $\theta = \varepsilon/2$, due to \cref{lem:BreakIntoPartsToBound}, we have that 
\begin{align}
 \Pr\left[|L|\geq \frac{3}{10}\ell \right]\leq 
  \Pr\left[|H|  + \frac{1}{\theta}\cdot\sum_{i\in H}\log\WB_P(v_i)\geq \frac{3}{10}\ell \right]
 \enspace.   
\end{align}
From the law of total probability it is easy to see that
\begin{multline}
\Pr\left[|H|+ \frac{1}{\theta}\cdot\sum_{i\in H}\log\WB_P(v_i)\geq \frac{3}{10}\ell \right]\leq
\\
     \Pr\left[|H|\ge \frac{\ell}{d^{1/10}}\right]
     +
     \Pr\left[|H| + \frac{1}{\theta}\cdot\sum_{i\in H}\log\WB_P(v_i)\geq \frac{3}{10}\ell \middle||H|\le \frac{\ell}{d^{1/10}} \right] \enspace.
    \label{eq:FirstApproxOFprobL}
\end{multline}
Moreover, we also have that for {appropriately large} $d$
\begin{align*}
\Pr\left[|H| + \frac{1}{\theta}\cdot\sum_{i\in H}\log\WB_P(v_i)\geq \frac{3}{10}\ell \middle||H|\le \frac{\ell}{d^{1/10}} \right]
\le
\Pr\left[\sum_{i\in H}\log\WB_P(v_i)\ge\frac{\ell}{d^{1/50}}\middle||H|\le \frac{\ell}{d^{1/10}}\right] \enspace.
\end{align*}
Applying Theorem \ref{thm:BoundOnWBThm}, and Proposition \ref{prop:BoundOnHevy} in the above, we get that for every $d\ge d_0$
\begin{align*}
\Pr\left[|H|  + \frac{1}{\theta}\cdot\sum_{i\in H}\log\WB_P(v_i)\geq \frac{3}{10}\ell \right]\le n^{-d^{1/3}} +  {n}^{-d^{1/4}} 
\le {n}^{-d^{1/6}}
\enspace,
\end{align*}
concluding the proof of Proposition \ref{prop:BoundingL}.
\end{proof}

Let us now prove Theorem \ref{thm:BoundOnWBThm}.

\begin{proof}[Proof of Theorem \ref{thm:BoundOnWBThm} ]
Similarly to the proof of Proposition \ref{prop:BoundOnHevy}, we let $I_1$, and $I_2$ ,be the sets of odd, and even indices of $\{0, \ldots, \ell\}$, respectively. Writing $H_1 =H \cap I_1$, and $H_2 =H \cap I_2$, and using the union bound, we see it is sufficient to prove that
\begin{align*} 
2 \cdot \Pr \left[ \sum\nolimits_{i \in H_1} \log \WB_P(v_i) \ge \frac{1}{2}\cdot\frac{\ell}{d^{1/50}} \middle| |H_1|\leq \frac{\ell}{d^{{1}/{10}}}
	\right] \le
 n^{-d^{1/3}}\enspace.
 \end{align*}
From Markov's inequality we get that for every $t\ge0$
\begin{align}\label{eq:MarkovOnWB}
\Pr \left[ \sum_{i \in H_1} \log \WB_P(v_i) \ge \frac{1}{2}\cdot\frac{\ell}{d^{1/50}} \middle| 
|H_1|\leq \frac{\ell}{d^{{1}/{10}}}
	\right] 
 &\le 
\frac{\Exp\left[\exp( t\cdot \sum_{i \in H_1}\cdot\log\WB_P(v_i)) 
\middle| |H_1|\leq \frac{\ell}{d^{{1}/{10}}} \right]}
	{\exp\left(\frac{t}{2\cdot d^{1/50}} \cdot \ell \right)}
 \nonumber
 \\
 &= 
\frac{
\left(\Exp\left[\exp(t\cdot\log\WB_P(v_1))\ | v_1 \in H_1\right]\right)
        ^{{\ell} \cdot d^{-1/10}}
    }
	{\exp\left(\frac{t}{2\cdot d^{1/50}} \cdot \ell \right)} \enspace,
\end{align}
where the equality follows from the fact that the random variables $\WB_{P}(v_j)$'s where $j\in H_1$ are i.i.d.

Let us write $v$ instead of $v_1$, and notice that since $\WB_P(v)$ is non-negative, we also have
\begin{align}
\label{eq:HoldForNoNegativeRatio}
\Exp\left[\exp(t \cdot \log \WB_P(v) )\ |\ v \in H_1\right] 
&\le \frac{\Exp\left[\left( \WB_P(v) \right)^t\right]}
        {\Pr[\WB_P(v)\ge 1]}\enspace. 
\end{align}
Notice that an upper bound for the enumerator of \eqref{eq:HoldForNoNegativeRatio} is provided by Lemma \ref{lem:MGFForcappedWB}.
Let us now bound the denominator of \eqref{eq:HoldForNoNegativeRatio}, i.e., $\Pr[\WB_P(v) \ge 1] $. First, recalling the definitions of $\WA, \cappedWA$, and $\WB$, it is easy to see that
\begin{align*}
   \Pr[\WB_P(v) \ge 1] 
   \ge {\Pr\left[\,\cappedWA(v) \ge 1\right]} 
   \ge {\Pr[\WA(v) \ge 1]} 
   \ge \Pr[\WA(v) \cdot \mathds{1}\{d< \degr(v) \le 2d\}\ge 1] \enspace,
\end{align*}
so that using Baye's rule, we get
\begin{align}\label{eq:lowerBoundWeightB}
   \Pr[\WB_P(v) \ge 1] 
   \ge
   \Pr[{\WA(v) \ge 1} \mid d \le \degr(v) \le 2d] \cdot \Pr[ d \le \degr(v) \le 2d]
    \enspace.
\end{align}
Let us now focus on the first factor in the rhs of \eqref{eq:lowerBoundWeightB}. As we show in the proof of Theorem \ref{theorem:TailBound4WA}, 
\begin{equation*}
\WA(v) = \sum_{i=1}^{\degr(v)} \left|\tanh\left(\frac{\beta}{2} \bJ_i \right)\right|\enspace,
\end{equation*}
where each $\bJ_i$ follows the standard normal distribution. With that in mind, we underestimate the probability $\Pr[\WA(v) \ge 1 \mid d \le \degr(v) \le 2d]$ as follows
\begin{align*}
\Pr[\WA(v) \ge 1 \mid d \le \degr(v) \le 2d] 
&\ge 
\left(\Pr\left[\;\left|\tanh\left(\frac{\beta}{2} \bJ_i \right)\right| \ge 
                                            \frac{1}{d}\;\right]\right)^{2d} \\
&\ge 
\left(2\cdot\Pr\left[\;\bJ_i\ge \frac{2}{\beta}\cdot\mathrm{arctanh}\left(\frac{1}{d}\right)\;\right]\right)^{2d}
\\
&\ge 
\left(2\cdot\Pr\left[\;\bJ_i\ge \frac{2}{\beta}\cdot\frac{1}{2}\cdot\left(\frac{1+({1}/d)}
{1-({1}/d)}-1\right)\;\right]\right)^{2d}
\\
&\ge 
\left(2\cdot\Pr\left[\;\bJ_i\ge \frac{2}{\beta\cdot (d-1)}\;\right]\right)^{2d}
\enspace.
\end{align*}
Since $0<\beta \le (1-\varepsilon)\beta_c(d)$, there must exist a $\lambda\in (0,1)$ such that $\beta (d-1) = \sqrt{2\pi}\cdot(1-\lambda)$, and thus, we can rewrite the last inequality as
\begin{align*}
\Pr[\WA(v) \ge 1 \mid d \le \degr(v) \le 2d] \ge \left(2\cdot\Pr\left[\;\bJ_i\ge 
\sqrt{\frac{2}{\pi}} \cdot (1-\theta)^{-1}\;\right]\right)^{2d} \enspace.
\end{align*}
Therefore, for $d$ large enough, there exists and a constant $C_1>0$, such that 
\begin{align}\label{eq:finalUnderEstOfBuProb}
\Pr[\WA(v) \ge 1  \mid d \le \degr(v) \le 2d] \ge \exp(-C_1\cdot d) \enspace.
\end{align}
Regarding the second factor in the rhs of \eqref{eq:lowerBoundWeightB}, we have
\begin{align*}
\Pr[ d < \degr(v) \le 2d] 
&=
\sum_{k=d}^{2d}\binom{n}{k}\left(\frac{d}{n}\right)^{k} \left(1-\frac{d}{n}\right)^{n-k} 
{\ge} 
\sum_{k=d}^{2d}\left(\frac{d}{k}\right)^{k}\cdot e^{-d} \enspace.
\end{align*}
Therefore, for $d$ large enough, there exists a constant $C_2>0$, such that 
\begin{align}\label{eq:finalUnderEstOfD2Dprob}
\Pr[ d < \degr(v) \le 2d]  \ge \exp(-C_2\cdot d) \enspace.
\end{align}
Substituting \eqref{eq:finalUnderEstOfBuProb} and \eqref{eq:finalUnderEstOfD2Dprob} in \eqref{eq:lowerBoundWeightB}, we get that for $C=\max\{C_1, C_2\}$, and for every $d \ge d_0$
\begin{align}\label{eq:FinalLowerBoundinDenom}
   \Pr[\WB_P(v) \ge 1] 
   \ge \exp(-C\cdot d) \enspace.
\end{align}
Choosing $t=d^{95/100}$, invoking Lemma \ref{lem:MGFForcappedWB}, and putting everything together, we get that for every {$d\ge d_0 $}  we have
\begin{align*}
\nonumber 
\Pr \left[ \sum_{i \in H_1} \log \WB_P(v_i) \ge \frac{1}{2}\cdot\frac{\ell}{d^{1/50}} \middle| 
|H_1|\leq \frac{\ell}{d^{{1}/{10}}}
	\right] 
 &\le 
		{\left(e^{d \cdot C} \cdot 
    \left(1-\frac{\varepsilon}{4}\right)^{d^{\frac{93}{100}}}\right)^{\ell \cdot d^{-\frac{1}{10}}}}
  \cdot
		{\exp\left(-\frac{t \cdot \ell}{2\cdot d^{1/50}} \right)} \\
&\le 
\left(
{\exp\left({d^{9/10} \cdot {C}}-\frac{t}{d^{1/50}} \right)}
\right)^{\ell}
\le 
\frac{1}{2}\cdot 
n^{-d^{1/3}}
     \enspace,
\end{align*}
as desired. This concludes the proof of Theorem \ref{thm:BoundOnWBThm}.
\end{proof}



\section{Proof of Lemma \ref{lem:MGFForcappedWB}}\label{sec:lem:MGFForcappedWB}

Let us write $v$ instead of $v_1$. For $s\ge 0$, and a path $Q =(w_0, \ldots, w_s)$, let us define 
 \begin{align*}
 \cappedWA^{\rm even}(Q) = \prod_{i=0}^{\lfloor s/2\rfloor} \cappedWA(w_{2i}) \enspace,
 \end{align*}
 that is, the weight of $Q$ contributed only by vertices at even distance from $w_0$. Similarly, define $\cappedWA^{\rm odd}(Q)$ to be the weight of $Q$ contributed only by vertices at odd distance from $w_0$. Let us also define
 also define $\WB_{P}^{\rm even}(v) = \max_Q\{ \cappedWA^{\rm even}(Q)\}$, and $\WB_{P}^{\rm odd}(v) = \max_Q\{ \cappedWA^{\rm odd}(Q)\}$ where the maximisation is over all paths $Q$ of length  at most $r$, that emanate from $v$ and do not intersect with $P$, i.e.,  they to do not share vertices. Since $\WB_P(v) \le \WB_{P}^{\rm odd}(v) \cdot \WB_{P}^{\rm even}(v)$, the union bound yields
 \begin{equation*}
 \Pr \left[\WB_P(v) \ge 1\right] \le \Pr[\WB_{P}^{\rm odd}(v) \ge 1] + \Pr[\WB_{P}^{\rm even}(v) \ge 1] \enspace,
 \end{equation*}
 and thus, it suffices to show that
 \begin{equation*}
 \Pr[\WB_{P}^{\rm even}(v) \ge 1]\le \frac{1}{2} \cdot \left(1-\frac{\varepsilon}{4}\right)^{q} \enspace,
 \end{equation*}

 Writing $\mathrm{path}_v(r)$ to denote all paths of length $r$ in ${\GWT}$, that emanate from $v$, and do not intersect with $P$, and $\mathrm{path}_v(\le r) = \cup_{k\le r} \mathrm{path}_v(k)$, we see that for any $t>0$ we have that
\begin{align}
\label{eq:OnlySumTrickStepToExP}
\Exp\left[\left( \WB_P^{\rm even}(v) \right)^t\right] 
&= \Exp\left[\left( \max_{Q \in \mathrm{path}_v(\le r)}\{\cappedWA^{\rm even}(Q)\}\right)^t\right] 
\le \Exp\left[ \sum_{Q \in \mathrm{path}_v\left(\le r\right)} \left(\cappedWA^{\rm even}(Q)\right)^t\right] \enspace.
\end{align}
Note that we cannot pull the sum out of the expectation in the rhs of the inequality above, as the set $\mathrm{path}_v\left(\le r\right)$ is a random variable. Therefore, we think in the following way. For $k = 0 \ldots r$, there are at most $n^k$ potential paths of length $k$ emanating from $v$, and each potential path of has probability $(d/n)^k$ to be present in ${\GWT}$. Denoting with $ (w_0, \ldots , w_k)$ an arbitrary such potential path emanating from $v$, i.e., $w_0=v$, we see that
\begin{align}
\Exp\left[\left( \WB_P^{\rm even}(v) \right)^t\right] 
&\le \sum_{k=0}^r n^k \cdot \left(\frac{d}{n}\right)^{k} \cdot 
\Exp\left[
\left(
\prod_{i=0}^{\lfloor k/2 \rfloor} \cappedWA\left(w_{2i}\right) 
\right)^t\right]
\le
\sum_{k=0}^r d^k \cdot \left(\Exp\left[{\cappedWA^t\left(v\right)}\right] \right)^{\lfloor k/2 \rfloor+1}
\enspace,
\label{eq:ExpNeedsCalc}
 \end{align}
where the last inequality follows from the independence of the weights $\cappedWA$ corresponding to same parity vertices along $(w_0, \ldots, w_\ell)$.
To upper bound $\Exp\left[{\cappedWA^t\left(v\right)} \right]$ we consider two regimes in terms of the degree of vertex $v$:
 \begin{align}
\Exp\left[{\cappedWA^t\left(v\right)} \right] 
&= 
\Exp\left[{\cappedWA^t\left(v\right)} \cdot\Ind\{\degr(v) \le 3d\}  \right]
+
\Exp\left[{\cappedWA^t\left(v\right)} \cdot\Ind\{\degr(v) > 3d\}  \right]
\label{eq:ExpectationBigSmallDeg}
 \end{align}
Let us now focus on the first term of \eqref{eq:ExpectationBigSmallDeg}.  Writing $g(x)$ for the pdf of $\cappedWA(v)\cdot\Ind\{\degr(v) \le 3d\}$, and noticing that $\cappedWA(v)$ is at most $3d^2$,
we see that
\begin{align}
 \Exp\left[{\cappedWA^t\left(v\right)} \cdot\Ind\{\degr(v) \le 3d\}  \right]
&=
\int_{0}^{1-\varepsilon/4} x^t \cdot g(x) \; dx
+
\int_{1-\varepsilon/4}^{3d^2} x^t \cdot g(x) \; dx
\nonumber\\
&\le
 \left(1-\frac{\varepsilon}{4}\right)^t \cdot \Pr\left[\WA(v) \le 1-\frac{\varepsilon}{2}\right]
+
(3d^2)^{t} \cdot \int_{1-\varepsilon/2}^{3d^2} g(x) \; dx
\nonumber\\
&\le
\left(1-\frac{\varepsilon}{4}\right)^t
 +
(3d^2)^{t} \cdot \Pr\left[\WA(v) \ge {1-\frac{\varepsilon}{2}}\right]
\nonumber\\ \label{eq:FromOurBoundWA}
&\le
\left(1-\frac{\varepsilon}{4}\right)^t
+
(3d^2)^{t} \cdot {\exp\left(-\frac{\varepsilon^4 }{8\pi}\cdot d\right)}
\enspace, 
\end{align}
where \eqref{eq:FromOurBoundWA} follows from Theorem \ref{theorem:TailBound4WA}.
We now focus the large-degree case. Accounting only for the randomness on the degree of $v$, and overestimating each term of $\cappedWA(v)$ to be equal to $d$, we get that
\begin{align}
\nonumber
\Exp\left[{\cappedWA^t\left(v\right)} \cdot\Ind\{\degr(v) > 3d\}  \right]
 &\le
\sum_{k=3d}^{n} (d\cdot k)^t \cdot\Pr[\degr(v) = k] 
 \le
 d^t \cdot\sum_{k=3d}^{n} k^t \cdot \binom{n}{k} \cdot \left(\frac{d}{n}\right)^k
 \\
  &\le
  d^t \cdot
 \sum_{k=3d}^{n} k^t \cdot \left(\frac{ne}{k}\right)^k  \cdot \left(\frac{d}{n}\right)^k
   \le d^t \cdot
 \sum_{k=3d}^{n} k^t \cdot \left(\frac{de}{k}\right)^k  \enspace.
 \label{eq:finalToBound}
\end{align}
Substituting \eqref{eq:FromOurBoundWA} and \eqref{eq:finalToBound} to  \eqref{eq:ExpectationBigSmallDeg}, gives
\begin{equation*}
 \Exp\left[{\cappedWA^t\left(v\right)} \right] \le
 \left(1-\frac{\varepsilon}{4}\right)^t
+
(3d^2)^{t} \cdot \exp\left(-\frac{\varepsilon^4 }{8\pi}\cdot d\right)
+
 d^t \cdot \sum_{k=3d}^{n} k^t \cdot \left(\frac{de}{k}\right)^k \enspace.
\end{equation*}
Since $ed/k < 1$, for every $k >3d$, choosing $t = d^{{95}/{100}}$, we see that there exist $d_0(\varepsilon)\ge 1$, such that for every $d \ge d_0$, we have that
$\Exp\left[{\cappedWA^t\left(v\right)} \right] \le (1-\frac{\varepsilon}{4})^{t^\prime}$, where $t^\prime = d^{{94}/{100}}$.  We now bound \eqref{eq:ExpNeedsCalc} as
\begin{align*}
    \Exp\left[\left( \WB^{\rm even}_P(v) \right)^{{t}}\right] \le \frac{(1-\frac{\varepsilon}{4})^{{t^\prime}}}{1-d^2\cdot(1-\frac{\varepsilon}{4})^{{t^\prime}}}
    {\le
    {\left(1-\frac{\varepsilon}{4}\right)^{{q}}}
    }
    \enspace,
\end{align*}
where we can take $q = d^{93/100}$.

\section{Proof of Theorem~\ref{theorem:TailBound4WA}} \label{sec:theorem:TailBound4WA}

Let $ d_0, \theta, \eta > 0$, to be determined later. First, note that due to the total probability law we have
\begin{align}\label{eq:breakPrCondDeg}
\Pr\left[\WA(v) \geq (1-\theta) \right] \le
\Pr\left[\WA(v) \geq (1-\theta) \mid \degr(v) \le (1+\eta) d \right]
+ \Pr\left[\degr(v) > (1+\eta) d\right]  \enspace.
\end{align}
Let us now focus on the first term of \eqref{eq:breakPrCondDeg}. Recall that $ \WA(v)= \sum_{{z \sim v }} \Inf_{\{v,z\}}$, with
\begin{equation*}
\Inf_{e} =  \frac{\left|1-\exp\left( \beta \bJ_e  \right) \right|}{1+\exp\left( \beta \bJ_{e}\right) }=  \left|\tanh\left(\frac{\beta}{2} \bJ_e \right) \right| \enspace. 
\end{equation*}
where $\bJ_e$ follows the standard Gaussian distribution. 
Since $|\tanh(x)| \le |x|$, for every real $x$, we get further that
\begin{equation*}
\Inf_{e} \le \frac{\beta}{2} |\bJ_e| \enspace,
\end{equation*}
 Hence, bounding from above the upper-tail of $\sum_e \frac{\beta}{2} |\bJ_e|$, provides an upper bound to the corresponding tail for $\sum_e \Inf_{e}$, i.e., for every $x\in \mathbb{R}$ we have
\begin{equation}\label{eq:BboundedBySumOfHalfNorm}
\textstyle
    \Pr\left[\WA(v) \ge x\right] \le 
    \Pr \left[\sum_{e} \frac{\beta}{2} |\bJ_e| \ge x\right] \enspace.
\end{equation}

Applying the bound \eqref{eq:MMRwrite} of Theorem~\ref{thm:UpperTailBoundSumOfHalfNormRelative} in the rhs of \eqref{eq:BboundedBySumOfHalfNorm} and since $0<\beta \le (1-\varepsilon)\beta_c(d)$, we have that for every {$\eta \le (\varepsilon - \theta)/(1-\varepsilon)$}
\begin{align}\label{eq:PrBSmallIfDegSmall}
\Pr\left[\WA(v) \geq (1-\theta) \mid \degr(v) \le (1+\eta) d \right] 
&\le
\exp\left(-\left[\frac{ \left( (1-\theta)\sqrt{\pi} -\frac{\beta}{2} (1+\eta)d \sqrt{2}  \right)^{2}}{ 2\pi(1+\eta)^2d^2\frac{\beta^2}{4} }\right]  (1+\eta)d \right) \nonumber \\
&=
\exp\left(-\left[\frac{ \left( (1-\theta)- (1+\eta) \frac{\beta d}{\sqrt{2\pi}}  \right)^{2}}{ (1+\eta)^2d^2{\beta^2}}\right] 2(1+\eta)d \right) \nonumber\\
&\le
\exp\left(-\left[\frac{ \left( (1-\theta)- (1+\eta) (1-\varepsilon)  \right)^{2}}{ 2\pi(1+\eta)^2(1-\varepsilon)^2}\right] 2(1+\eta)d \right) \nonumber\\ 
&\le
\exp\left(-\left[\frac{ \left( (1-\theta)- (1+\eta) (1-\varepsilon)  \right)^{2}}{ \pi(1+\eta)(1-\varepsilon)^2}\right] d \right) 
\enspace. 
\end{align}
Choosing $\theta = \eta = \varepsilon/2$ in \eqref{eq:PrBSmallIfDegSmall} gives that
\begin{align}\label{eq:PrBSmallIfDegSmallSpecificEtaTheta}
\Pr\left[\WA(v) \geq (1-\theta) \mid \degr(v) \le (1+\eta) d \right] 
\le
\exp\left(-\left[\frac{ \varepsilon^4}{ 2\pi(2+\varepsilon)(1-\varepsilon)^2}\right] d \right)
\le
\exp\left(-\frac{ \varepsilon^4}{ 6\pi}\cdot d \right)
\enspace. 
\end{align}

Let us now turn to the second term of \eqref{eq:breakPrCondDeg}. Since $\degr(v)$ is a sum of independent Bernoulli random variables, applying the Chernoff tail bound gives that for every $\eta \ge 0$
\begin{align}\label{eq:PrDegBeLarge}
\Pr\left[\degr(v) > (1+\eta) d\right] \le \exp\left(-\frac{\eta^2 d}{2 + \eta}\right)
\enspace.
\end{align}
Taking $\eta = \varepsilon$, and
substituting \eqref{eq:PrBSmallIfDegSmallSpecificEtaTheta} and \eqref{eq:PrDegBeLarge}, in \eqref{eq:breakPrCondDeg} gives that
\begin{align*}
\Pr\left[\WA(v) \geq (1-\varepsilon/2) \right]
\le \exp\left(-\frac{\varepsilon^4 }{6\pi}\cdot d + \log 2\right)
\enspace,
\end{align*}
which for $d \ge d_0 := 53\cdot \varepsilon^{-4}$, yields
\begin{align*}
\Pr\left[\WA(v) \geq (1-\varepsilon/2) \right]
\le \exp\left(-\frac{\varepsilon^4 }{8\pi}\cdot d\right)
\enspace.
\end{align*}

\section{Proof of Theorem \ref{thrm:RelaxBounds}}\label{sec:thrm:RelaxBounds}

Since, $1\leq N\leq n$,  it is standard to show that Theorem \ref{thrm:RapidMixingBlockDyn} implies that
\begin{align}\label{eq:FinalBound4BlockDynG}
\uptau_{\rm block}&=O(n\log n)\enspace.
\end{align}
For the rest of the proof we focus on bounding $\uptau_{B}$ for every $B\in \cB$. 

For each block $B\in \cB$ that is unicyclic  with cycle $C=(w_1, \ldots, w_{\ell})$, let   $T_i(B)$ 
be  the connected component in  $B\cup \partial_{\rm out} B$  that includes $w_i$, once we delete {\em all} the edges in  $C$.

Let $\cT=\cT(\cB)$ be the collection of  the following trees:
 \begin{enumerate}[label=(\alph*)]
 \item for every  multi-vertex block $B\in \cB$ that is tree, $\cT$ includes the tree $B\cup \partial_{\rm out} B$,
 \item for every unicyclic block $B\in\cB$,  $\cT$ includes every tree $T_i(B)$.
 \end{enumerate}
 
For each  tree of the type  (a) above, the root is assumed to be a (any) heavy  vertex $w$ such that $\WA(w)>1-\varepsilon/2$.  
For each  tree of type (b), i.e., for each $T_i(B)$,  the root  is assumed to be  the vertex $w_i$, i.e., the vertex that $T_i(B)$
intersects with  the cycle of the block. 

%
%

In order to avoid repetitions,  we use the following convention for the rest of the proof: the Glauber dynamics, or the block dynamics on a subgraph $H$ of $G$, 
is always assumed to be  with respect to the marginal of the Gibbs distribution $\mu_{G,\vecJ,\beta}$ at the subgraph $H$. When necessary,
we impose boundary conditions at $\partial_{\rm out} H$. Recall that $\partial_{\rm out} H$ is the set of 
vertices outside $H$ that have neighbouring vertices inside $H$. 

The following result provides a bound on the relaxation time of tree $T\in \cT$ by utilising a recursive argument. This  argument relies on ideas used to derive  a similar bound in \cite{EfthymiouHSV18, mossel2010gibbs}. 

\begin{theorem}\label{thrm:TreeRelaxationBound}%
For a  tree $T\in \cT$, having vertex $v$ as its root, and  a fixed configuration $\sigma$ at $\partial_{\rm out} T$, 
consider the  Glauber dynamics on $T$. Let 
$\uptau_{\rm rel}=\uptau_{\rm rel}(T,\sigma)$ be the relaxation time of this dynamics. 
We have that 
\begin{align*}
\uptau_{\rm rel}&\leq \exp\left(  m(T,v)  \right)\enspace,
\end{align*}
where $m(T,v)= \max_{P}\{ {\compWeight}(P)\}$ and  the maximum is  over all root-to-leaf paths $P$ in $T$.
\end{theorem}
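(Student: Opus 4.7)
The plan is to argue by induction on the height of $T$. The base case is $T=\{v\}$: the single-site chain has two states and mixes in $O(1)$ steps, so trivially $\uptau_{\rm rel}(T,\sigma)\le \exp(m(T,v))$. For the inductive step, let $u_1,\ldots,u_k$ be the children of $v$ in $T$ and let $T_i$ be the sub-tree hanging off $u_i$, inheriting the appropriate portion of the boundary $\sigma$. Since $v$ is a cut-vertex, removing it disconnects $T$ into $T_1,\ldots,T_k$, which makes the block dynamics with blocks $\{v\},T_1,\ldots,T_k$ particularly amenable.

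I would apply Proposition~\ref{prop:Comparison} to this decomposition. Each vertex other than $v$ belongs to a single block, so the multiplicity factor $\max_u M_u$ is $O(1)$. By the inductive hypothesis, the restricted single-site dynamics on each $T_i$ (rooted at $u_i$, with any boundary) has relaxation time at most $\exp(m(T_i,u_i))$, and the singleton block $\{v\}$ has trivial relaxation time. Consequently,
\[
\uptau_{\rm rel}(T,\sigma) \;\le\; \uptau_{\rm block}(T)\cdot \max_{i}\exp\!\bigl(m(T_i,u_i)\bigr).
\]

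The heart of the proof is the bound
\[
\uptau_{\rm block}(T) \;\le\; \degr(v)\cdot\exp\!\Bigl(\beta\sum_{e\ni v}|J_e|\Bigr).
\]
I would establish this by exploiting the cut-vertex structure: once a sub-tree $T_i$ is updated with $\sigma(v)$ held fixed, the result is an exact draw from the conditional distribution on $T_i$, so the block dynamics projects onto an effective two-state chain on $\sigma(v)\in\{\pm 1\}$. Its spectral gap is controlled by the conditional marginal of $v$ given $\partial_{\rm out}T$, whose max-to-min ratio is at most $\exp(\beta\sum_{e\ni v}|J_e|)$, and the factor $\degr(v)$ comes from the probability $1/(k+1)$ of selecting the block $\{v\}$ at each step. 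Iterating the two displayed inequalities along the root-to-leaf path $P^{*}=(v=w_0,\ldots,w_\ell)$ that achieves the worst local contributions, the factors compose and telescope to
\[
\log\uptau_{\rm rel}(T,\sigma) \;\le\; \sum_{i=0}^{\ell}\log\degr(w_i) \;+\; \beta\sum_{e:\,e\cap P^{*}\neq\emptyset}|J_e| \;=\; m(T,v),
\]
where each edge of $T$ is charged exactly once, at its endpoint higher up in the tree.

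The main obstacle I expect is the one-step block-dynamics estimate: getting a prefactor of exactly $\degr(v)\exp(\beta\sum_{e\ni v}|J_e|)$, involving only the edges incident to $v$ and no contribution from edges deeper inside the sub-trees, requires using that the inductive factor $\exp(m(T_i,u_i))$ already absorbs all coupling weight from inside $T_i$, including the boundary edge $\{v,u_i\}$ counted via its endpoint $u_i\in T_i$. Verifying this clean accounting, so that edges are not double-counted when recursing and so that the cutvertex-based projection of the block chain really is a nearest-neighbour two-state chain (rather than one with long-range dependencies on the configuration of the sub-trees), is the technically delicate portion of the argument.
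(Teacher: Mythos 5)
Your overall strategy matches the paper's: both proceed by induction on the height of $T$, decompose at the root into blocks $\{v\},T_{w_1},\ldots,T_{w_k}$, and combine a one-level bound on the block dynamics with the inductive hypothesis via Proposition~\ref{prop:Comparison}. The real divergence is in the one-level lemma, and that is where your proposal has a genuine gap.

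You claim $\uptau_{\rm block}(T)\le \degr(v)\exp(\beta\sum_{e\ni v}|J_e|)$ by arguing that, because $v$ is a cut vertex, ``the block dynamics projects onto an effective two-state chain on $\sigma(v)$.'' This is not a valid projection. After a block $T_i$ is refreshed it is indeed an exact sample from $\mu(\cdot\mid \sigma(v))$, but the moment $\sigma(v)$ flips, the configuration of $T_i$ is no longer distributed according to the new conditional until $T_i$ is refreshed again; the heat-bath step at $v$ depends on the \emph{current} (possibly stale) configurations of all $T_i$'s, not merely on the value of $\sigma(v)$. So the sequence $(\sigma(v))_t$ is not a Markov chain and there is no two-state chain to take a spectral gap of. Making this rigorous would require a genuine decomposition theorem for the block dynamics rather than a ``max-to-min marginal ratio'' computation. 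The paper sidesteps this entirely: Proposition~\ref{prop:MixingStarUnified} is proved by a direct coupling argument, dividing time into epochs of length $\Theta(R\log R)$, arguing (coupon collector) that with constant probability every subtree is refreshed before a late update of $\{v\}$, bounding the per-subtree disagreement probability by the edge influence $\Inf_{\{u,w_i\}}$, and lower-bounding $\prod_i(1-\Inf_{\{u,w_i\}})\ge \exp(-2\beta\sum_i|J_{\{u,w_i\}}|)$ via a case analysis. This gives $\uptau_{\rm block}\le\exp(10\log R+2\beta\sum_i|J_{\{u,w_i\}}|)$, slightly looser than your claim but actually established.

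One additional remark on the accounting you flag as ``delicate.'' Your closing display asserts that iterating the recursion charges each edge exactly once and telescopes to $m(T,v)$. That clean accounting does not hold under either your per-level bound or the paper's, because the induction hypothesis $m(T_{w_i},w_i)$ already counts the edge $\{v,w_i\}$ (it has an endpoint in every root-to-leaf path of $T_{w_i}$), and the one-level factor counts it again. The telescope therefore delivers a bounded multiple of $m(T,v)$, not $m(T,v)$ itself. This does not break the downstream use of the theorem, since $\compWeight(P)=O(\log n/\log^2 d)$ makes constant prefactors in the exponent immaterial, but the clean ``charged exactly once'' claim as you wrote it is not what comes out of the recursion.
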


\noindent
The proof of Theorem \ref{thrm:TreeRelaxationBound} appears in Section \ref{sec:thrm:TreeRelaxationBound}.
For every block $B\in \cB$ which is tree, Theorem~\ref{thrm:TreeRelaxationBound} immediately  implies that 
\begin{align}\label{eq:RelaxTreeBFirstComparison}
\uptau_{B}\leq \exp\left(  m(B,v)  \right)\enspace,
\end{align}
for any vertex $v\in B$ being the root of $B$. 

Recall that  $m(B,v)= \max_{P}\{ {\compWeight}(P)\}$ and  the maximum is  over all the paths $P$ in $T$ 
from the root $v$ to the leaves of the tree. By construction of the block partition $\cB$, we have that all paths considered for $m(B,v)$ are of length at most 2{$\frac{\log n}{\sqrt{d}}$}.  Moreover, our assumption that
$(G,\vecJ,\beta)\in \IntrstGraphFam(d,\varepsilon)$ implies that $m(B,v)\leq {\frac{\log n}{\log^2 d}}$. 
Plugging this bound for  $m(B,v)$ into \eqref{eq:RelaxTreeBFirstComparison} we get that 
\begin{align}\label{eq:RelaxTreeBFinalComparison}
\uptau_{B} &\leq n^{\frac{1}{\log^2d}}\enspace,
\end{align}
establishing the desired bound for $\uptau_{B}$ for the case where $B\in \cB$ is a tree. We now focus on the case where $B$ is a unicyclic factor graph.  Specifically, we prove the following 
lemma.
\begin{lemma}\label{lemma:RelaxUnicyclicBFinalComparison}
For the case where $B\in \cB$ is unicyclic we have that $\uptau_{B} \leq n^{\frac{3}{\log^2 d}}$.
\end{lemma}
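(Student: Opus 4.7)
The plan is to reduce the unicyclic case to the already-handled tree case by decomposing $B$ along its short cycle and then invoking Proposition~\ref{prop:Comparison} internally to $B$.

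Let $C = (w_1, \ldots, w_\ell)$ denote the cycle of $B$; property~\eqref{itm:BPShortCycle} gives $\ell \le 4\log n/\log^4 d$. Removing all edges of $C$ from $B \cup \partial_{\rm out} B$ disconnects it into the trees $T_1(B), \ldots, T_\ell(B) \in \cT$ defined just before Theorem~\ref{thrm:TreeRelaxationBound}. Setting $\tilde T_i := T_i(B) \cap V(B)$, the collection $\cR = \{\tilde T_1, \ldots, \tilde T_\ell\}$ forms a partition of $V(B)$. I would apply Proposition~\ref{prop:Comparison} to the block dynamics on $B$ with block set $\cR$, in which each vertex lies in exactly one block, obtaining
\[
\uptau_B \;\le\; \uptau(\cR) \cdot \max_i \uptau_{\tilde T_i},
\]
where $\uptau(\cR)$ is the relaxation time of this internal block dynamics and $\uptau_{\tilde T_i}$ is the single-site Glauber relaxation time on $\tilde T_i$ under worst-case boundary at $\partial_{\rm out} \tilde T_i$.

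For each $i$, I would invoke Theorem~\ref{thrm:TreeRelaxationBound} applied to $T_i(B) \in \cT$ rooted at $w_i$: the worst-case condition of $\tilde T_i$ at $\{w_{i-1}, w_{i+1}\}$ can be absorbed into an external field at $w_i$, reducing to a worst-case boundary condition on $T_i(B)$. By \eqref{eq:CHeck1AlgCor} combined with property~\eqref{itm:BuffCond}, every root-to-leaf path in $T_i(B)$ has length at most $\log^5 d + 2\log n/\sqrt d$, which is below $\log n/\log^4 d$ for large enough $d$. Property~\ref{itm:propty2} of $\IntrstGraphFam(d,\varepsilon)$ then forces $\compWeight(P) \le \log n/\log^2 d$ on every such path, so Theorem~\ref{thrm:TreeRelaxationBound} yields $\uptau_{\tilde T_i} \le n^{1/\log^2 d}$.

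The principal obstacle is bounding $\uptau(\cR)$. Since $\{\tilde T_i\}$ partitions $V(B)$, disagreements between two coupled copies of the block dynamics can propagate only across the $\ell$ edges of $C$. I would adapt the path-coupling argument of Theorem~\ref{thrm:BDContraction}, using a distance metric that weights a disagreement at $w_i$ by the sum of the influences of the two cycle edges incident to $w_i$. The assumption $\beta \le (1-\varepsilon)\beta_c(d)$ keeps each such influence small in expectation, Theorem~\ref{theorem:TailBound4WA} and property~\ref{itm:propty3} of $\IntrstGraphFam(d,\varepsilon)$ control extremal values, and $\ell$ being polylogarithmic in $n$ makes the cumulative effect manageable: one-step contraction at rate $\Omega(1/\ell)$ gives $\uptau(\cR) = O(\ell \log \ell) \le n^{2/\log^2 d}$. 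Combining the two bounds yields $\uptau_B \le n^{2/\log^2 d} \cdot n^{1/\log^2 d} = n^{3/\log^2 d}$, completing the proof.
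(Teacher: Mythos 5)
Your decomposition of $B$ along the cycle into the trees $\tilde T_1, \ldots, \tilde T_\ell$ and the bound $\uptau_{\tilde T_i} \le n^{1/\log^2 d}$ via Theorem~\ref{thrm:TreeRelaxationBound} and property~\ref{itm:propty2} are reasonable, but the claim of one-step contraction at rate $\Omega(1/\ell)$ for the internal $\ell$-block dynamics has a genuine gap. The cycle vertices $w_i$ sit deep inside $B$ (distance $\ge \log^5 d$ from $\partial_{\rm out}B$, per \eqref{itm:BuffCond}), which means they can be heavy; property~\ref{itm:propty3} of $\IntrstGraphFam$ only caps the couplings at $|J_e| \le 10\sqrt{\log n}$, so a cycle-edge influence can be as large as $1 - n^{-O(1/\log^3 d)}$, essentially $1$. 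Theorem~\ref{theorem:TailBound4WA} is a statement about the random instance, not a deterministic property of $\IntrstGraphFam$, so it cannot be invoked to control these influences for a fixed triplet. With cycle influences near $1$, your proposed metric (weighting $w_i$ by $\gamma_i = \Inf_{\{w_{i-1},w_i\}} + \Inf_{\{w_i,w_{i+1}\}}$) does not contract: for a disagreement at $w_1$, the expected one-step change is roughly $\tfrac{1}{\ell}\bigl(\Inf_{\{w_1,w_2\}}\gamma_2 + \Inf_{\{w_\ell,w_1\}}\gamma_\ell - \gamma_1\bigr)$, which can be strictly positive when the $\gamma_i$ exceed $1$.

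The paper avoids this entirely by not seeking contraction around the cycle. It uses only \emph{two} blocks ($B_1 = T_1(B)\cap B$ and $B_2 = B\setminus B_1$, both trees), and rather than a contraction estimate it proves crude coupling bounds (Claims~\ref{claim:Influence4Two} and~\ref{claim:TwoDisagreements}): regardless of boundary and history, each block update agrees across two coupled copies with probability at least $n^{-1/\log^3 d}$, which follows solely from $|J_e|\le 10\sqrt{\log n}$. An epoch argument then gives $\uptau_{\rm block} = O(n^{1/\log^3 d})$ for the internal dynamics --- sub-polynomial but not polylogarithmic, and certainly not $O(\ell\log\ell)$ --- which combined with $\uptau_{B_1}, \uptau_{B_2} \le n^{2/\log^2 d}$ and Proposition~\ref{prop:Comparison} yields the stated $n^{3/\log^2 d}$. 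To salvage your approach you would need to replace the contraction claim with a similar ``agreement-probability'' argument, but then the $\ell$-block decomposition offers no advantage over the paper's two-block one, and the factor you can afford for the internal relaxation time is $n^{\Theta(1/\log^3 d)}$, not the $O(\ell\log\ell)$ you assert.
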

For the proof of Lemma \ref{lemma:RelaxUnicyclicBFinalComparison} see Section \ref{sec:lemma:RelaxUnicyclicBFinalComparison}.
Given the bound for $\uptau_{\rm block}$ in \eqref{eq:FinalBound4BlockDynG}, and the bounds of $\uptau_{B}$ in \eqref{eq:RelaxTreeBFinalComparison} and  Lemma \ref{lemma:RelaxUnicyclicBFinalComparison},
we conclude  the proof of the theorem. 
\hfill $\Box$

\subsection{Proof of Theorem \ref{thrm:TreeRelaxationBound}}\label{sec:thrm:TreeRelaxationBound}

For a vertex $u\in T$, let $T_u$ denote the subtree of $T$ containing $u$
and all its descendants. Unless otherwise specified, we assume that the root of $T_u$ is $u$. Note also that $\partial_{\rm out} T_u$ is the subset of $\partial_{\rm out} T$ comprised by all $w\in \partial_{\rm out} T$
having a neighbour in $T_u$.

\begin{proposition}\label{prop:MixingStarUnified}
Let $T\in \cT$ and let $u \in T$. Consider $T_u$  and let $w_1, \ldots, w_{R}$  
be the children of the root $u$. Consider the block dynamics  $(X_t)_{t\geq 0}$ in $T_u$ with set of blocks $\cM=\{\{u \}, T_{w_1}, \ldots, T_{w_R}\}$.

Under any boundary condition at $\partial_{\rm out} T_u$,  the  block dynamics $(X_t)_{t\geq 0}$  exhibits relaxation time
\begin{align*}
\uptau_{\rm block}(T_u) & \leq  { \exp\left(10 \log(R)+ 2\beta\sum\nolimits^R_{i=1} \left|J_{\{u, w_i\}}\right|\right)}\enspace.
\end{align*}
\end{proposition}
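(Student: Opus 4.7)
The plan is to apply a Markov chain decomposition inequality (in the style of Madras--Randall) to the block dynamics by partitioning its state space along the value at the root $u$. Concretely, set $\Omega=\{\pm 1\}^{T_u}$ and write $\Omega=\Omega_+\cup\Omega_-$ with $\Omega_{\pm}=\{X:X(u)=\pm 1\}$. This partition is adapted to the block structure: only the $\{u\}$-update can cross between $\Omega_+$ and $\Omega_-$, while every $T_{w_i}$-update preserves the current class.

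Next I would analyse the two restricted chains $P_{\Omega_{\pm}}$. The crucial observation is that, conditional on the value $X(u)$ and on the fixed boundary condition at $\partial_{\rm out}T_u$, the subtree configurations on $T_{w_1},\dots,T_{w_R}$ are mutually independent under $\mu$, because $u$ is the unique vertex separating them and, $T_u$ being a tree, each vertex of $\partial_{\rm out}T_u$ is adjacent to at most one $T_{w_i}$. Hence $\mu(\cdot\mid u=\pm)$ is a product measure over the $T_{w_i}$'s, and $P_{\Omega_{\pm}}$ is effectively the Glauber chain on this product measure that, at each step, picks one of the $R$ independent factors with probability $1/(R{+}1)$ and resamples it perfectly (the $\{u\}$-updates are absorbed into a self-loop of probability $1/(R{+}1)$). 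Standard product-chain spectral analysis then yields $\mathrm{Gap}(P_{\Omega_{\pm}})\ge 1/(R{+}1)$.

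Then I would bound the spectral gap of the projected two-state chain $\bar P$ on $\{+,-\}$. For any configuration $\sigma$ of the neighbours $w_1,\dots,w_R$ of $u$ combined with the boundary condition, a direct computation of the marginal of $u$ in \eqref{eq:DefOfGlassyMu} shows
\[
\bigl|\log\mu(u=+\mid\sigma)-\log\mu(u=-\mid\sigma)\bigr|\ \le\ \beta\textstyle\sum_{i=1}^{R}|J_{\{u,w_i\}}|,
\]
so $\mu(u=\pm\mid\sigma)\ge \tfrac12\exp(-\beta\sum_{i=1}^R|J_{\{u,w_i\}}|)$. Averaging against $\mu(\cdot\mid u=\pm)$ gives lower bounds of the same form for the two transition probabilities of $\bar P$, and hence $\mathrm{Gap}(\bar P)\ge (R{+}1)^{-1}\exp(-\beta\sum_{i=1}^R|J_{\{u,w_i\}}|)$. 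Plugging both estimates into the Madras--Randall inequality $\mathrm{Gap}(P)\gtrsim \mathrm{Gap}(\bar P)\cdot\min_{\pm}\mathrm{Gap}(P_{\Omega_{\pm}})$ gives $\mathrm{Gap}(P)\gtrsim R^{-2}\exp(-\beta\sum_{i=1}^R|J_{\{u,w_i\}}|)$, which comfortably implies the claimed bound $\uptau_{\rm block}(T_u)\le \exp(10\log R+2\beta\sum_{i=1}^R |J_{\{u,w_i\}}|)$.

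The main technical hurdle is the clean verification of the product structure of $\mu(\cdot\mid u=\pm)$ and the resulting characterisation of the restricted chain as Glauber on a product measure; everything else reduces to the elementary two-state spectral gap computation, standard product-chain analysis, and constant bookkeeping through the decomposition inequality.
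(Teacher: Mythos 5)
Your argument takes a genuinely different route: a Madras--Randall decomposition of the block dynamics along the spin at $u$, in place of the paper's direct coupling with epochs of length $5R\log R$ and a coupon-collector step. Both arguments hinge on the same structural ingredient---conditional on $u$ and the fixed boundary, $\mu$ factorises over the subtrees $T_{w_1},\dots,T_{w_R}$---but you use it to make the restricted chains $P_{\Omega_{\pm}}$ literal product chains (gap $1/(R{+}1)$ read off spectrally), whereas the paper invokes it implicitly when coupling the $T_{w_i}$-updates identically once the roots agree. Your approach is cleaner, avoids the epoch bookkeeping, and in fact yields the sharper exponent $e^{\beta\sum|J|}$; the paper's $e^{2\beta\sum|J|}$ is merely an artifact of the slack estimate $1-\Inf_e\ge e^{-2\beta|J_e|}$, whereas one has $1-\Inf_e=\tfrac{2e^{-\beta|J_e|}}{1+e^{-\beta|J_e|}}\ge e^{-\beta|J_e|}$ directly.

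One concrete gap: the pointwise bound $\mu(u=\pm\mid\sigma)\ge\tfrac12 e^{-\beta\sum_{i=1}^R|J_{\{u,w_i\}}|}$ holds only when $w_1,\dots,w_R$ exhaust the neighbours of $u$ in $T_u\cup\partial_{\rm out}T_u$. The proposition allows any $u\in T$, and for non-root $u$ there is a parent (in $\partial_{\rm out}T_u$) whose fixed boundary spin shifts the odds ratio by a constant $c=e^{\pm\beta J_{\{u,\text{par}\}}}$; when $|J_{\{u,\text{par}\}}|$ is large the conditional marginal can be far below $\tfrac12 e^{-\beta\sum_i|J_i|}$, so your averaging step for $\bar P(+,-)$ breaks. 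The conclusion $\mathrm{Gap}(\bar P)\ge(R{+}1)^{-1}e^{-\beta\sum_i|J_i|}$ is nonetheless true, but needs a different justification: writing $a(\tau)=\mu(u=+\mid\tau)$ and $p=\Exp_\tau[a]$, one has $\mathrm{Gap}(\bar P)=\bigl(1-\mathrm{Var}(a)/[p(1-p)]\bigr)/(R{+}1)$, and the Popoviciu-type bound $\mathrm{Var}(a)\le(\max a - p)(p-\min a)$ combined with the observation that $c$ only shifts the odds ratio multiplicatively---so that $\bigl(\max a/(1-\max a)\bigr)/\bigl(\min a/(1-\min a)\bigr)\le e^{2\beta\sum_i|J_i|}$ depends on the children edges alone---delivers the claim. (Both your derivation and the paper's fail to match the stated constant at $R=1$ for small $|J|$: your $(R{+}1)^2$ exceeds $R^{10}=1$, and the paper's epoch length $5R\log R$ vanishes; indeed for two spins joined by one edge the relaxation time is exactly $1+e^{\beta|J|}$, which exceeds $e^{2\beta|J|}$ when $\beta|J|<\log\frac{1+\sqrt5}{2}$. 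This corner never arises in the paper's actual use of the proposition, but neither route covers it literally.)
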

The proof of Proposition \ref{prop:MixingStarUnified} appears in Section \ref{sec:prop:MixingStarUnified}.
In light of the above proposition, the theorem follows by induction on  the height of the tree $T$. 

The base case corresponds to a single vertex tree, where trivially,  we have that  $\uptau_{\rm rel}(T)=1$.
Assume,  now, that the root $u$ of $T$ has children $w_1, \ldots, w_{\ell}$, for some $\ell\ge 1$. 
Then, per the induction hypothesis we have that
\begin{align}\label{eq:IndHypoRelax}
\uptau_{\rm rel}(T_{w_i}) & \leq \exp\left( m(T_{w_i}, w_i)\right) \qquad \textrm{ for $i=1, \ldots, \ell$}\enspace,
\end{align}
where recall that  $\uptau_{\rm rel}(T_{w_i})$ corresponds to the relaxation time for the  Glauber dynamics on $T_{w_i}$. 

In order to derive $\uptau_{\rm rel}(T)$, i.e., the relaxation time for the   Glauber dynamics on $T$,
consider first the block dynamics on $T$ where the set of blocks is $\{\{u\}, T_{w_1}, \ldots, T_{w_{\ell}}\}$.
The relaxation time for this process is given by Proposition \ref{prop:MixingStarUnified}.  That is, 
\begin{align}\label{eq:IndStepRelax}
\uptau_{\rm block}(T) &={ \exp\left(10 \log(\ell)+ 2\beta\sum\nolimits^{\ell}_{i=1} \left|J_{\{u, w_i\}}\right|\right)}\enspace. 
\end{align}
Using the bounds from \eqref{eq:IndStepRelax}, \eqref{eq:IndHypoRelax}, and Proposition \ref{prop:Comparison} we deduce that
$\uptau_{\rm rel}(T) \leq \exp\left( m(T, u)\right)$.

All the above conclude the proof of the theorem. 
\hfill $\Box$

\subsection{Proof of Proposition \ref{prop:MixingStarUnified}}\label{sec:prop:MixingStarUnified}

We prove the proposition using coupling. Let $(X_t)_{t\ge0}$, $(Y_t)_{t\ge0}$ be two copies of the Markov chain, 
with configuration  $\sigma$ at $ \partial_{\rm out}T$. 

We couple the two chains  such that at each transition we update the same block in the 
two copies.  Also, note that the coupling has the following basic property:  if for some $t_0$ we have $X_{t_0}=Y_{t_0}$, then for every $t>t_0$ we  also have $X_{t}=Y_{t}$.

We divide the evolution of the chains into ``epochs". Each epoch consists of $N=5 R \log (R)$ transitions.  
We  argue that at the end of each epoch the pair of chains in the coupling  are at the same configuration with probability
at least $\frac{1}{2}\exp\left(-2\beta\sum^R_{j=1}|J_{\{u,w_j\}}|\right)$, i.e., we have that 
\begin{align}\label{eq:Target4prop:MixingStarUnified}
\min_{X_0, Y_0} \Pr[X_N=Y_N\ |\ X_0, Y_0] & \geq 
\frac{1}{2}\exp\left(-2\beta\sum\nolimits^R_{j=1}\left|J_{\{u,w_j\}}\right|\right)\enspace,
\end{align}
where $J_{\{u,w_j\}}$ is the coupling parameter at the edge $\{u,w_j\}$.   Then, it is elementary  to show that 
after $100 \exp\left(2\beta\sum^R_{j=1}|J_{\{u,w_j\}}|\right)$  epochs, the probability that the two chains 
agree  is larger than~$0.8$. 

Clearly, the above implies that  $\uptau_{\rm block}(T_u) \leq 500 R\log(R)\exp\left(2\beta\sum^R_{j=1}|J_{\{u,w_j\}}|\right) $.
Hence, the proposition follows by showing that \eqref{eq:Target4prop:MixingStarUnified} holds.

Suppose that at time $t>0$, we update block~$T_{w_j}$. Recall that we update the same block in 
the two copies. We further design our coupling such that the following are satisfied at each update: 

(a) If we have agreement at the root, i.e., we have that $X_t(u)=Y_t(u)$, then deterministically, i.e., with probability $1$, we have 
$X_t(T_{w_j})=Y_t(T_{w_j})$.  We can achieve this because the marginal distributions at the block $T_{w_j}$ is the 
same for both copies, and thus, we can use identical coupling. 

(b) If we have a disagreement at the root of $T_u$, i.e., we have that $X_t(u) \neq Y_t(u)$, then we have 
$X_t(T_{w_j})=Y_t(T_{w_j})$ with probability $1-\Inf_{\{u,w_j\}}$.  To see this, note that we
first couple the configuration at  $w_j$. Using maximal coupling, and due to Lemma~\ref{lemma:Coupling4TreeMeasures}, we have $X_t(w_j)=Y_t(w_j)$ with probability at least $1-\Inf_{\{u,w_j\}}$. Subsequently, i.e., once we obtain the configuration at $w_j$, we couple maximally the configuration for
the remaining vertices in $T_{w_j}$. In that respect, it is easy to see that if $X_t(w_j)=Y_t(w_j)$,
then we can couple identically the configuration at the remaining vertices of the block. On the the other 
hand, having $X_t(w_j) \neq Y_t(w_j)$ precludes having $X_t(T_{w_j})=Y_t(T_{w_j})$.

From the above we conclude that there is coupling such that, if at time $t>0$ we update
any of the blocks $\{ T_{w_1}, \ldots, T_{w_R} \}$, then we have that
\begin{align}\label{eq:DisagreeProbComp}
\Pr[X_{t}(T_{w_j})= Y_{t}(T_{w_j})\ |\ \cF] & \geq 1-\Inf_{\{u,w_j\}} & \forall j=1,2\ldots, R\enspace, 
\end{align}
where $\cF$ is the $\sigma$-algebra generated by the configurations of $X_t$, $Y_t$ at the  blocks in $\cM \setminus \{T_{w_j}\}$.

W.l.o.g. let us focus on the first epoch. Let $\mathcal{U}_{r}$ be the event the root is updated at least once
between  the transitions $3R\log(R)$ and $5R\log(R)$. Also, let $\mathcal{U}_{\rm all}$ be the event that
prior to $3R\log(R)$ all the blocks in $\{ T_{w_1}, \ldots, T_{w_R} \}$ are updated at least once.  
A standard coupon collector type  argument implies that for any $X_0, Y_0$ we have that  
\begin{align}\label{eq:CCollectComparison}
\Pr[ \mathcal{U}_{r}, \ \mathcal{U}_{\rm all} \ |\ X_0, Y_0] \geq 1-10^{-2} \enspace. 
\end{align}

Furthermore, for each $T_{w_j}$, let $\cA_j$ be the event that the last time block $T_{w_j}$ was updated, 
prior to the update of the root, the two chains agree on the configuration of this block.  Similarly, let $\cA_{r}$
be the probability that, after its update, the root has the same configuration in both  copies. 

By the design of our coupling, e.g. see \eqref{eq:DisagreeProbComp}, we have that
for any $X_0, Y_0$ it holds that 
\begin{align}\label{eq:AllAgreeComparison}
\Pr\left[ \bigcap\nolimits^R_{j=1} \cA_j  \mid   \mathcal{U}_{r}, \ \mathcal{U}_{\rm all}, X_0,\ Y_0 \right ] & \geq { \prod\nolimits^R_{j=1}}\left(1-\Inf_{\{u,w_j\}}\right)\enspace. 
\end{align}
Moreover, we have that
\begin{align}\label{eq:RootAgreesComparison}
\Pr\left[\cA_r \mid\cap^R_{i=1}\cA_i,\ \mathcal{U}_{r}, \  \mathcal{U}_{\rm all}, X_0,\ Y_0 \right ] &=1\enspace.
\end{align}
The above holds by noticing that under the event $\cap^R_{i=1}\cA_i$, when the root is updated the
distributions of its configuration in the two copies are identical, and thus, we can use identical coupling.

From \eqref{eq:CCollectComparison}, \eqref{eq:AllAgreeComparison} and \eqref{eq:RootAgreesComparison}, we conclude that
for any $X_0, Y_0$, we have that 
\begin{align}
\Pr[X_N=Y_N\ |\ X_0, \ Y_0] &\geq  \Pr\left[\cA_r, \  {\textstyle \bigcap}^R_{i=1}\cA_i,\ \mathcal{U}_{r}, \  \mathcal{U}_{\rm all} \ |\ X_0, \ Y_0\right ] \nonumber\\
&= \Pr\left[\cA_r \ |\ \cap^R_{i=1}\cA_i,\ \mathcal{U}_{r}, \  \mathcal{U}_{\rm all},\ X_0, \ Y_0\right ] \times \nonumber\\
&\times \Pr\left[ \cap^R_{i=1}\cA_i\ |\ \mathcal{U}_{r}, \  \mathcal{U}_{\rm all},\ X_0, \ Y_0 \right ] 
\cdot \Pr\left[ \mathcal{U}_{r}, \  \mathcal{U}_{\rm all}\ |\ X_0, \ Y_0 \right ] \nonumber\\
&\geq \frac{1}{2} {\textstyle \prod^R_{j=1}}(1-\Inf_{\{u,w_j\}}) \enspace. \label{eq:BasicLowerBound4TreeComp}
\end{align}
The proposition follows by showing that for every $w_j$ we have that
\begin{align}\label{eq:LB31MInfluenceComparison}
1-\Inf_{\{u,w_j\}}\geq \exp\left(-2\beta |J_{\{u,w_j\}}| \right)\enspace. 
\end{align}
We distinguish two cases.  In the first one, we assume that $|J_{\{u,w_j\}}|$ is such that $\exp(\beta |J_{\{u,w_j\}}|)\geq 3$. The second case corresponds to
having $|J_{\{u,w_j\}}|$  such that $\exp(\beta |J_{\{u,w_j\}}|)<3$. 

\subsubsection*{Case 1:}  Recall that we assume that $\exp(\beta |J_{\{u,w_j\}}|)\geq 3$.  It is elementary to show that, for any $J_{\{u,w_j\}}$, i.e., not necessarily large, we have that
\begin{align}\label{eq:MoveAbs2ExpInfl}
\Inf_{\{u,w_j\}}&={\textstyle \frac{|1-\exp(\beta J_{\{u,w_j\}})|}{1+\exp(\beta J_{\{u,w_j\}})}} = {\textstyle \frac{1-\exp(-\beta |J_{\{u,w_j\}}|)}{1+\exp(-\beta |J_{\{u,w_j\}}|)}} \enspace. 
\end{align}
From the above, we have that 
\begin{align}
1-\Inf_{\{u,w_j\}}=2\textstyle \frac{\exp(-\beta|J_{\{u,w_j\}}|)}{1+\exp(-\beta|J_{\{u,w_j\}}|)} \geq \frac{3}{2}\exp\left(-\beta|J_{\{u,w_j\}}|\right) \geq \exp\left(-2\beta|J_{\{u,w_j\}}|\right)\enspace.
\end{align}
The above proves \eqref{eq:LB31MInfluenceComparison} for the case where $\exp(\beta |J_{\{u,w_j\}}|)\geq 3$.

\subsubsection*{Case 2:}  Recall that now we assume that $\exp(\beta |J_{\{u,w_j\}}|)< 3$.  
Then, using  \eqref{eq:MoveAbs2ExpInfl} we get that
\begin{align}\label{eq:InflUBCaseBComparison}
\Inf_{\{u,w_j\}} <1/2\enspace. 
\end{align}
Furthermore, using the standard inequality: $1-x\geq \exp(-\frac{x}{1-x})$ for $0<x\leq 1/2$, we get that
\begin{align}\nonumber
1-\Inf_{\{u,w_j\}} \geq {\textstyle \exp\left(-\frac{\Inf_{\{u,w_j\}}}{1-\Inf_{\{u,w_j\}}}\right)}\geq \exp\left(-2\Inf_{\{u,w_j\}}  \right)\enspace,
\end{align}
where, for the last inequality we use that $\Inf_{\{u,w_j\}} <1/2$, i.e., we use  \eqref{eq:InflUBCaseBComparison}.  The above proves \eqref{eq:LB31MInfluenceComparison} for the case 
where $\exp(\beta |J_{\{u,w_j\}}|)< 3$.
All the above conclude the proof of the proposition. 
\hfill $\Box$

\section{Remaining Proofs}

\subsection{Proof of Lemma \ref{lemma:RelaxUnicyclicBFinalComparison}}\label{sec:lemma:RelaxUnicyclicBFinalComparison}

Consider the unicyclic block $B$.
As per standard notation, we let $C=(w_1, \ldots, w_{\ell})$ be the cycle inside $B$, for some {$\ell\leq 4\frac{\log n}{\log^4 d}$}.

Consider  the,  block dynamics, on $B$ with  a fixed  boundary $\sigma$  at $\partial_{\rm out} B$. 
There are two blocks in this dynamics. The first block, $B_1$, corresponds to the  tree $T\in \cT$ which intersects with the cycle 
$C$ at vertex $w_1$.
The second block $B_2$ corresponds to the vertices in $B\setminus B_1$.  Note that both $B_1$, $B_2$ are trees. 
Let $\uptau_{\rm block}$ be the relaxation of this block dynamics.  

Similarly to Proposition \ref{prop:MixingStarUnified}, we divide the evolution of the dynamics into epochs
to  prove that 
\begin{align}\label{eq:Trelax4UnicyclicB}
\uptau_{\rm block} &\leq 500 n^{\frac{1}{\log^3 d}} \enspace.
\end{align}

Let $(X_t)$ and $(Y_t)$ be two copies of the block dynamics
with the same boundary condition at $\sigma$ at  $\partial_{\rm out} T$. 
We couple the two chains maximally such that at each transition we update the same block in both of them.  Also, note that the coupling is such that, if for some $t_0$ we have $X_{t_0}=Y_{t_0}$,
then for any $t>t_0$, we  also have $X_{t}=Y_{t}$.

We consider  epochs each of length  $N=10$. 
We  argue that at the end of each epoch the chains in the pair are at the same configuration with probability
at least $(1/2) n^{-1/\log^3 d}$, i.e., we have that 
\begin{align}\label{eq:Target4lemma:RelaxUnicyclicBFinalComparison}
\min_{X_0, Y_0} \Pr[X_N=Y_N\ |\ X_0, Y_0] & \geq \textstyle \frac{1}{2} \cdot n^{-\frac{1}{\log^3 d}}\enspace. 
\end{align}

 Then, it is elementary  to show that 
after $50 n^{1/\log^3 d}$  many epochs, the probability the two chains 
agree  is larger than  $0.8$. 
Clearly, the above yields $\uptau_{\rm block}(T_u) \leq 500 n^{1/\log^3 d}$, establishing \eqref{eq:Trelax4UnicyclicB}.

Suppose that at time $t>0$, we update the block $B_1$. Recall that we update the same block in 
the two copies. Then we use the following result. 

\begin{claim}\label{claim:Influence4Two}
There is a coupling such that 
\begin{align}
\Pr[X_{t}(B_1)= Y_{t}(B_1)\ |\ \cF_1] & \geq  n^{-\frac{1}{\log^3d}}   \enspace, 
\end{align}
where $\cF_1$ is the $\sigma$-algebra generated by the configurations of $X_t$, $Y_t$ at $B_2$.
\end{claim}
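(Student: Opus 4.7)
The plan is to exploit the fact that $B_1$ attaches to $B_2$ only through the two cycle edges $\{w_1,w_2\}$ and $\{w_\ell,w_1\}$, so the only vertex of $B_1$ whose boundary neighbourhood differs between the two chains is $w_1$. Writing $\mu^X$ and $\mu^Y$ for the Gibbs distributions on $B_1$ obtained by conditioning on $X_t(B_2)$ and $Y_t(B_2)$ respectively, together with the common external condition $\sigma$ on $\partial_{\rm out} B$, the key observation is that for any spin $s\in\{\pm1\}$,
\[
\mu^X(\cdot \mid \tau(w_1)=s) \;=\; \mu^Y(\cdot \mid \tau(w_1)=s)
\]
as distributions on $B_1\setminus\{w_1\}$: once $\tau(w_1)$ is fixed, the two cycle-edge terms contribute multiplicative constants that cancel upon renormalisation.

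In view of this, I will construct the coupling in two stages: first apply the maximal coupling to the marginals $\mu^X_{w_1}$ and $\mu^Y_{w_1}$ at $w_1$, and then, given the resulting spins at $w_1$, sample the remainder of $B_1$ using identity coupling, which is valid by the observation above. Under this coupling, the event $\{X_t(B_1)=Y_t(B_1)\}$ coincides with $\{X_t(w_1)=Y_t(w_1)\}$, whose probability equals $1-\|\mu^X_{w_1}-\mu^Y_{w_1}\|_{\rm TV}$; it therefore suffices to upper bound this total variation distance.

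To that end, I will compute the log-odds ratio at $w_1$:
\[
\log\frac{\mu^X_{w_1}(+1)/\mu^X_{w_1}(-1)}{\mu^Y_{w_1}(+1)/\mu^Y_{w_1}(-1)}
= \beta J_{\{w_1,w_2\}}\bigl(X_t(w_2)-Y_t(w_2)\bigr)+\beta J_{\{w_\ell,w_1\}}\bigl(X_t(w_\ell)-Y_t(w_\ell)\bigr),
\]
whose absolute value is at most $2\beta(|J_{\{w_1,w_2\}}|+|J_{\{w_\ell,w_1\}}|)$. Invoking property \ref{itm:propty3} of $\IntrstGraphFam(d,\varepsilon)$, namely $|J_e|\le 10\sqrt{\log n}$, combined with $\beta\le\sqrt{2\pi}/d$, yields a log-odds gap of magnitude at most $\Delta := 40\sqrt{2\pi}\sqrt{\log n}/d$. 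A standard two-state estimate then gives $\|\mu^X_{w_1}-\mu^Y_{w_1}\|_{\rm TV}\le\tanh(\Delta/4)$, and hence
\[
1-\|\mu^X_{w_1}-\mu^Y_{w_1}\|_{\rm TV}\;\ge\; e^{-\Delta/2}\;=\;n^{-20\sqrt{2\pi}/(d\sqrt{\log n})},
\]
which is at least $n^{-1/\log^3 d}$ for $d$ sufficiently large, since the comparison reduces to $20\sqrt{2\pi}\log^3 d \le d\sqrt{\log n}$, holding for $n$ large relative to $d$. The substantive step is the first paragraph; once the conditional-independence factorisation through $\tau(w_1)$ is established, the remaining single-vertex tilting estimate is routine, and I do not expect any additional obstacle.
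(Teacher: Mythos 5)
Your proposal is correct and follows essentially the same route as the paper's own proof: couple $w_1$ maximally, extend by identity coupling on $B_1\setminus\{w_1\}$ (valid because the two boundary conditions differ only through the two cycle edges at $w_1$, so conditioning on $\tau(w_1)$ makes the conditional laws coincide), and bound the TV distance at $w_1$ by $\tanh(\beta(|J_a|+|J_b|)/2)$, which is exactly the paper's $\Inf^*$. You are somewhat more explicit than the paper about the conditional-independence factorisation and the log-odds computation, but the resulting lower bound $\exp(-\beta(|J_a|+|J_b|))$ and the final comparison with $n^{-1/\log^3 d}$ via property~\ref{itm:propty3} match the paper's argument step for step.
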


Similarly, suppose  that at time $t>0$, we update $B_2$ in both copies.  
We use the following result.
 
\begin{claim}\label{claim:TwoDisagreements}
There is a coupling such that 
\begin{align}
\Pr[X_{t}(B_2)= Y_{t}(B_2)\ |\ \cF_2] & \geq n^{-\frac{1}{\log^3d}} \enspace, 
\end{align}
where $\cF_2$ is the $\sigma$-algebra generated by the configurations of $X_t$, $Y_t$ at $B_1$.
\end{claim}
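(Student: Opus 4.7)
The plan is to apply a maximal coupling to the $B_2$-update under $\cF_2$ and reduce the analysis to controlling the total variation between two marginals on $B_2$ that differ in the value of a single boundary spin. The starting structural observation is that by the definitions $B_1 = T_1(B)$ and $B_2 = B\setminus B_1 = T_2(B)\cup\cdots\cup T_\ell(B)$, the only edges of $G$ joining $B_1$ with $B_2$ are the two cycle edges $\{w_1,w_2\}$ and $\{w_1,w_\ell\}$, both incident on $w_1$. Combined with the fact that the outer condition on $\partial_{\rm out} B$ is held fixed throughout, this implies that the law of $X_{t+1}(B_2)$ given $\cF_2$ depends on $X_t$ solely through $X_t(w_1)$, and similarly for $Y_{t+1}(B_2)$. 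Consequently, if $X_t(w_1) = Y_t(w_1)$ we may use identity coupling and the agreement is deterministic, so it suffices to treat the case $X_t(w_1) \neq Y_t(w_1)$.

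Assume disagreement at $w_1$ and write $\mu^{+}, \mu^{-}$ for the two marginals of $\mu$ on $\{\pm 1\}^{B_2}$ corresponding to conditioning $w_1$ to $+1$ and $-1$ respectively (with all other boundary spins fixed to their common value). Under a maximal coupling, $\Pr[X_{t+1}(B_2) = Y_{t+1}(B_2) \mid \cF_2] \geq 1 - d_{\mathrm{TV}}(\mu^{+}, \mu^{-})$. For any $\eta \in \{\pm 1\}^{B_2}$, the Hamiltonian difference $H^{+}(\eta) - H^{-}(\eta)$ collects only the contributions of $\{w_1,w_2\}$ and $\{w_1,w_\ell\}$, so its absolute value is at most $\beta K$ with $K := |J_{\{w_1,w_2\}}| + |J_{\{w_1,w_\ell\}}|$. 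Comparing partition functions then yields the pointwise estimate $\mu^{+}(\eta)/\mu^{-}(\eta) \in [e^{-2\beta K}, e^{2\beta K}]$, and hence $1 - d_{\mathrm{TV}}(\mu^{+},\mu^{-}) \geq e^{-2\beta K}$. To bound $K$, I would invoke property~\ref{itm:propty2} of $\IntrstGraphFam(d,\varepsilon)$ applied to the length-$2$ path $P = (w_2, w_1, w_\ell)$: since $|P| \leq \log n/\log^4 d$, we have $\compWeight(P) \leq \log n/\log^2 d$, and both edges whose weights make up $K$ have an endpoint in $P$, so $\beta K \leq \compWeight(P)$. Plugging this in gives a per-step agreement probability at least $n^{-O(1/\log^2 d)}$, which fed into the epoch analysis preceding the claim yields $\uptau_{\rm block}(B) \le n^{O(1/\log^2 d)}$, amply enough to establish $\uptau_B \leq n^{3/\log^2 d}$ via the tree comparison for $B_1$ and $B_2$.

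The main subtlety is matching the specific exponent $n^{-1/\log^3 d}$ in the statement of the claim: the crude pointwise-ratio bound outlined above only yields $n^{-2/\log^2 d}$, which is weaker per-step but already enough for the downstream conclusion $\uptau_B\le n^{3/\log^2 d}$ of \cref{lemma:RelaxUnicyclicBFinalComparison}. To reach the tighter exponent one could exploit the fact that, conditionally on $(\eta(w_2),\eta(w_\ell))$, the spins on $B_2\setminus\{w_2,w_\ell\}$ have the \emph{same} conditional law under both $\mu^{+}$ and $\mu^{-}$, so $d_{\mathrm{TV}}(\mu^{+},\mu^{-})$ actually equals the total variation of the two-vertex marginal at $(w_2,w_\ell)$; this is a $2\times 2$ distribution whose TV can be bounded more carefully in $\beta K$ together with the fact that $w_2, w_\ell$ lie at distance $\geq \log^5 d$ from $\partial_{\rm out} B$ (property \ref{itm:BuffCond}), yielding exponential decay of correlations on the path between them. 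This refinement is the step I expect to require the most care, but it is independent of the main coupling construction and only affects the exponent in the relaxation-time bound.
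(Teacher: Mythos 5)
Your coupling construction and the underlying structural observation are correct and mirror the paper's: since $B$ is unicyclic with $B_1 = T_1(B)$ and $B_2 = B\setminus B_1$, the only edges between $B_1$ and $B_2$ are the two cycle edges $\{w_1,w_2\}$, $\{w_1,w_\ell\}$, so the conditional law at $B_2$ depends on the complement only through $X_t(w_1)$ (the rest of $\partial_{\rm out}B_2$ is frozen by the external boundary). Your pointwise-ratio argument $1 - \|\mu^{+}-\mu^{-}\|_{\rm TV}\geq e^{-2\beta K}$ with $K=|J_{\{w_1,w_2\}}|+|J_{\{w_1,w_\ell\}}|$ is also correct; in fact it is somewhat cleaner than the paper's route, which couples $w_2$ and then $w_\ell$ maximally and has to handle the indirect influence of $w_1$ on $w_2$ along the long arc of the cycle via a convexity/conditioning step.

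The genuine gap is in how you bound $\beta K$. You invoke Property~\ref{itm:propty2} via the length-$2$ path $(w_2,w_1,w_\ell)$, which only gives $\beta K\le \compWeight(P)\le \log n/\log^2 d$, hence $e^{-2\beta K}\ge n^{-2/\log^2 d}$. That misses the claimed $n^{-1/\log^3 d}$ for every $d>1$, since $2/\log^2 d>1/\log^3 d$. The fix is simply to use Property~\ref{itm:propty3} as the paper does: $|J_e|\le 10\sqrt{\log n}$ gives $\beta K\le 20\beta\sqrt{\log n}$, and with $\beta\le (1-\varepsilon)\sqrt{2\pi}/d$ this is $O\!\left(\sqrt{\log n}/d\right)=o\!\left(\log n/\log^3 d\right)$, so $e^{-2\beta K}\ge n^{-1/\log^3 d}$ for large $n$ without any further refinement. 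Your decay-of-correlations refinement is therefore unnecessary, and your fallback claim that $n^{-2/\log^2 d}$ per step is ``amply enough'' for $\uptau_B\le n^{3/\log^2 d}$ does not hold as stated: feeding it into the epoch argument yields $\uptau_{\rm block}(B)=O(n^{2/\log^2 d})$, and combined with $\uptau_1,\uptau_2\le n^{2/\log^2 d}$ via Proposition~\ref{prop:Comparison} this gives only $\uptau_B = O(n^{4/\log^2 d})$, which would force a change to the exponent constant in Theorem~\ref{thrm:RelaxBounds} and hence in Theorem~\ref{thrm:MainResult}.
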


Then, using the above claims and arguing as in Proposition \ref{sec:prop:MixingStarUnified}, we get \eqref{eq:Target4lemma:RelaxUnicyclicBFinalComparison} directly.

Let $\uptau_{1}$ be the relaxation time of the  Glauber dynamics at $B_1$ with, arbitrary boundary fixed condition 
$\sigma$ at $\partial_{\rm out}B_1$. Similarly, we define  $\uptau_2$ to be the relaxation time for block~$B_2$. 
Then,  working as in Theorem \ref{thrm:TreeRelaxationBound} and recalling that $(G,\vecJ,\beta)\in \IntrstGraphFam(d,\varepsilon)$ we obtain 
\begin{align}\label{eq:Trelax4UnicyclicBSubtrees}
\uptau_{1}&\leq n^{\frac{2}{\log^2d}} & \textrm{and} && \uptau_{2}&\leq n^{\frac{2}{\log^2 d}}\enspace. 
\end{align}
From  \eqref{eq:Trelax4UnicyclicB}, \eqref{eq:Trelax4UnicyclicBSubtrees} and Proposition \ref{prop:Comparison} we get that
$\uptau_{B} \leq n^{\frac{3}{\log^2 d}}$.

All the above conclude the proof of Lemma~\ref{lemma:RelaxUnicyclicBFinalComparison}.  
\hfill{} $\Box$

\begin{proof}[Proof of Claim \ref{claim:Influence4Two}]
Let  $J_{a}$ and $J_b$ be  the coupling parameters for the edges connecting  $w_1$ to $w_2$, and $w_1$ to $w_{\ell}$, respectively.  
The probability $\Pr[X_{t}(B_1)= Y_{t}(B_1)\ |\ \cF_1]$ is minimised by choosing $\eta^+$ and $\eta^-$ two  configurations at $w_2$ and $w_{\ell}$,
such that $\eta^+$ maximises the  probability of having $+1$ in the Gibbs marginal at $w_{1}$, while $\eta^-$ maximises $-1$ for the same
marginal. 

Specifically, note that $\eta^+$ is such that $\eta^+(w_{\ell})={\rm sign}(J_b)$ and $\eta^+(w_{2})={\rm sign}(J_a)$. Similarly,  
$\eta^-$ is such that $\eta^-(w_{\ell})=-{\rm sign}(J_b)$ and $\eta^-(w_{2})=-{\rm sign}(J_a)$. 
Then, letting
\begin{align}\nonumber 
\Inf^* &=\textstyle \frac{1-\exp(-\beta(|J_a|+|J_b|))}{1+\exp(-\beta(|J_a|+|J_b|))}\enspace,
\end{align}
it is standard to show that the maximal coupling of the configuration at $w_1$ satisfies
\begin{align}
\Pr[X_{t}(B_1)= Y_{t}(B_1)\ |\ \cF_1] &\geq 1-\Inf^*  \nonumber \\
&= 2\frac{\exp(-\beta(|J_a|+|J_b|))}{1+\exp(-\beta(|J_a|+|J_b|))} \geq \exp(-\beta(|J_a|+|J_b|))\enspace.  \label{eq:JustBeforeclaim:Influence4Two} 
\end{align}
Recall that we assume  $(G,\vecJ,\beta)\in \IntrstGraphFam(d,\varepsilon)$, and in particular, that $|J_a|, |J_b| \leq 10\sqrt{\log n}$. 
Clearly, this implies that  $\exp(-\beta(|J_a|+|J_b|))\geq n^{-1/\log^3 d }$, for large $d$ and $n$.  
The claim follows by plugging this bound into \eqref{eq:JustBeforeclaim:Influence4Two}.
\end{proof}

\begin{proof}[Proof of Claim \ref{claim:TwoDisagreements}]
%

The  probability $\Pr[X_{t}(B_2)= Y_{t}(B_2)\ |\ \cF_2]$ is minimised by 
having a disagreement at $w_1$, i.e., otherwise this probability is $1$. 
Note that $w_1$ has two neighbours in $B_2$. 

The coupling we use is as follows: First we couple maximally vertex $w_2$ and then, given the outcome for $w_2$, 
we couple maximally vertex $w_{\ell}$.  We have that 
\begin{eqnarray}
\lefteqn{
\Pr[X_{t}(B_2)= Y_{t}(B_2) \mid \cF_2] 
} \vspace{2cm} \nonumber  \\ 
&\geq &
\Pr[X_{t}(w_{\ell })= Y_{t}(w_{\ell}) \mid  X_{t}(w_2)= Y_{t}(w_2),\ \cF_2]  
\cdot
\Pr[X_{t}(w_2)= Y_{t}(w_2) \mid  \cF_2] \enspace.   \label{eq:claim:Base4V}
\end{eqnarray}

We start by focusing on  the coupling for $w_{2}$. We have that
\begin{eqnarray}
\lefteqn{
\Pr[X_{t}(w_2) =  Y_{t}(w_2)\mid X_t(w_1)\neq Y_t(w_1), \cF_2] 
}\vspace{4cm} \label{eq:DisagreementAtW2OnlyW1} \\
&\leq &
\max_{\eta^+,\eta^-} 
\Pr[X_{t}(w_2) =  Y_{t}(w_2)\ |\ X_t(\{w_1, w_3\})=\eta^+, \ Y_t(\{w_1,w_3\})=\eta^-, \cF_2]\enspace. \label{eq:DisagreementAtW2OnlyW1W3}
\end{eqnarray}

To see the reason why the above is true, first note that the disagreement at vertex $w_1$ affects the marginal at $w_2$ from two directions. 
The first one is over the edge $\{w_1,w_2\}$, the second one is over the path $(w_1, w_{\ell}, \ldots, w_2)$. 

The  probability of having a disagreement at $w_2$ in \eqref {eq:DisagreementAtW2OnlyW1} is under maximal coupling, conditional 
on the configurations at $w_1$.
The  probability of having a disagreement at $w_2$ in \eqref{eq:DisagreementAtW2OnlyW1W3} is under maximal coupling, conditional on 
the configurations at $w_1$ and $w_3$. The inequality follows by a standard convexity argument. 

Then, arguing as in Claim \ref{claim:Influence4Two}, we get that
\begin{align}\nonumber 
\Pr[X_{t}(w_2) =  Y_{t}(w_2)\ |\ X_t(w_1)\neq Y_t(w_1),\  \cF_2] &\geq \exp(-\beta(|J_a|+|J_c|)) \enspace,  
\end{align}
where recall that $J_{a}$ and $J_c$ are the coupling parameters for the edge that connects  $w_2$, with $w_1$ and $w_{3}$, respectively.
Recall that we assume  that $(G,\vecJ,\beta)\in \IntrstGraphFam(d,\varepsilon)$. This implies that $ |J_a|, |J_c| \leq 10\sqrt{\log n}$. 
Plugging this bound into the inequality above, for large $d$ and $n$,  we get that
\begin{align}\label{eq:Bound4W2Agreement}
\Pr[X_{t}(w_2) =  Y_{t}(w_2)\ |\ X_t(w_1)\neq Y_t(w_1),\  \cF_2] &\geq n^{-1/(\log d)^4} \enspace.
\end{align}

Then, conditional on $X_{t}(w_2) =  Y_{t}(w_2)$,  we couple maximally $X_{t}(w_{\ell}),  Y_{t}(w_{\ell})$.  Note that 
the disagreement at $w_{1}$ only affects the distribution of the configuration at $w_{\ell}$ through the edge $\{w_1, w_{\ell}\}$. 
Hence, using standard arguments (e.g. the same as those in the proof of Proposition \ref{prop:MixingStarUnified}), we get that 
\begin{align}\nonumber
\Pr[X_{t}(w_{\ell })= Y_{t}(w_{\ell})\ |\ X_{t}(w_2)= Y_{t}(w_2),\ \cF_2]&\geq 2\frac{\exp(-\beta |J_b|)}{1+\exp(-\beta |J_b|)}\enspace. 
\end{align}
Recall that we assume  that $(G,\vecJ,\beta)\in \IntrstGraphFam(d,\varepsilon)$. This implies that $ |J_b| \leq 10\sqrt{\log n}$. 
Plugging this bound into the inequality above, for large $d$ and $n$,  we get that
\begin{align}\label{eq:Bound4WLAgreement}
\Pr[X_{t}(w_{\ell })= Y_{t}(w_{\ell})\ |\ X_{t}(w_2)= Y_{t}(w_2),\ \cF_2] &\geq n^{-1/(\log d)^4} \enspace. 
\end{align}

The claim follows by plugging \eqref{eq:Bound4W2Agreement} and \eqref{eq:Bound4WLAgreement} into \eqref{eq:claim:Base4V}. 
\end{proof}

\subsection{Proof of Lemma \ref{lem:ClacOfR}}\label{sec:lem:ClacOfR}
We show that for an arbitrary path $P$ in $\G$, with $r\le|P|\le \log n$, we have that 
\begin{align}\label{eq:SmallTuples}
\Pr \left[\cappedWA(P) \ge 1\right] \le {n^{-d^{1/4}}} \enspace.
\end{align}
In light of \eqref{eq:SmallTuples}, applying the union bound over all such paths gives us the result
\begin{align*}
\Pr \left[\cS_1\right] 
\ge 1 - \sum_{k = r}^{\log n} \binom{n}{k +1}\left(\frac{d}{n}\right)^k \cdot n^{-d^{1/4}} 
 \ge 1 - \sum_{k = r}^{\log n} {n}\cdot{d}^k \cdot n^{-d^{1/4}}
 \ge 1 - n^{-d^{1/5}},
\end{align*}
where the last inequality holds for large $d$ and $n$. Therefore, we now focus on proving ~\eqref{eq:SmallTuples}. 

Let $P= (v_0, \ldots, v_k)$, with $r \le k \le \log n$. We split the vertices of $P$ into two sets, $\In(P)$, and $\Out(P)$. The set $\In(P)$ is comprised by all vertices $v_i$ of $P$ that are adjacent to a vertex in $V(P) \setminus \{v_{i-1}, v_{i+1}\}$, and $\Out(P) = V(P) \setminus \In(P)$, so that 
\begin{align} \label{eq:BreakLamInTwo}
\cappedWA(P) = 
\cappedWA\left(\In(P)\right)\cdot 
\cappedWA(\Out(P))= \prod_{v\in \In(P)} \cappedWA(v) \cdot \prod_{w\in \Out(P)} \cappedWA(w) \enspace.
\end{align}
 We next bound separately $\cappedWA\left(\In(P)\right)$, and $\cappedWA\left(\Out(P)\right)$, so that their product is less than $1$, w.h.p..

Let us start by bounding $\cappedWA\left(\In(P)\right)$. We first notice the following tail bound for $|\In(P)|$:
\begin{align}\label{eq:InSetbound}
\Pr\left[|\In(P)| \ge 2\sqrt{d}\right]
\le 
\sum_{s = \sqrt{d}}^{k}
\binom{k^2}{s}
\cdot
\left( \frac{d}{n}\right)^s
\le 
\sum_{s = \sqrt{d}}^{k}
\left( \frac{d\cdot k^2}{n}\right)^s
\le
2\cdot\left( \frac{d\cdot k^2}{n}\right)^{\sqrt{d}}
\le
 n^{-{\sqrt{d}}/{2}} \enspace,
\end{align}
where the last two inequalities hold for large $d$ and $n$. 

For a vertex $v_i$ in $P$, let $\degr_{\rm in}(v_i)$ be the number of neighbors
that $v_i$ has in $P$, while let $\degr_{\rm out}(v_i)$ be the number of 
neighbors in $V\setminus P$.

For a  vertex $w\in \In(P)$ we have that $\degr_{\rm out}(w)$ is dominated by 
${\tt Binom}(n, d/n)$. Then, we obtain that 
\begin{align}\label{eq:InDegBound}
\Pr\left[\degr_{\rm out}(w)\ge d\cdot\log n, \;\text{ for at least one } w \in \In(P)\ \mid\ |\In(P)| <2\sqrt{d}\right] 
\le n^{-d^{{9}/{10}}}
\enspace .
\end{align}
Indeed,  the Chernoff bound gives
\begin{align*}
\Pr\left[\ \degr_{\rm out}(w)\ge d\cdot\log n\ |\  w\in \In(P)\right] 
\le \exp\left(-\frac{(\log n - 2)^2}{\log n} \cdot d\right)
\le \exp\left(-\frac{\log n}{2} \cdot d\right)
\le n^{-d^{95/100}}
\enspace , 
\end{align*}
where the last two inequalities hold for large enough $d$ and $n$. 
The above and a simple union-bound imply \eqref{eq:InDegBound}.

Let $B_1$ be the event that $|\In(P)| < 2\sqrt{d}$, while let 
$B_2$ be the event that for all $w\in \In(P)$ we have $\degr_{\rm out}(w)<d\log n$.

On the events $B_1$ and $B_2$, we have that for all $w\in \In(P)$
\begin{enumerate} 
\item $\degr(w)<d\log n+2+2\sqrt{d}$,
\item $\cappedWA(w)\leq 2d^2\log n$.
\end{enumerate}

The first item follows by noticing that $\degr(w)=\degr_{\rm in}(w)+\degr_{\rm out}(w)$, 
and that $\degr_{\rm in}(w)\leq 2+|\In(P)|$, for all $w\in \In(P)$.

The second item follows from the first item, and by noticing that 
for any vertex $v$ in the graph we have $\cappedWA(v) \leq d\cdot \degr(v)$. 
Hence, we have that 
\begin{align}
\Pr\left[\cappedWA(\In(P)) < (2d^2 \log n)^{2\sqrt{d}}\right] &\ge \Pr[B_1\cap B_2] \nonumber\\
&\geq  1-\Pr[\overline{B}_1]-\Pr[\overline{B}_2] & \mbox{[union bound]}\nonumber \\
&\geq 1-  n^{-d^{1/3}}\enspace .& \mbox{[from \cref{eq:InDegBound} \& \cref{eq:InSetbound}]} \label{eq:FinalBoundForInP}
\end{align}

To bound $\cappedWA(\Out(P))$, we follow the proof strategy of Lemma \ref{lem:MGFForcappedWB}. In particular, we partition $\Out(P)$ into two sets: the set of vertices with even index in $P$, and the set of vertices with odd index in $P$. Moreover, we let $\cappedWA_{\rm even}(\Out(P))$ be the product of weights over the set of even vertices in $\Out(P)$, and $\cappedWA_{\rm odd}(\Out(P))$ similarly. We claim that
\begin{align}\label{eq:CappedEvenBoundOut}
\Pr \left[\cappedWA_{\rm even}(\Out(P)) \ge (\log n)^{-d}\right] \le n ^{-d^{2/5}} \enspace.
\end{align}
Indeed, Markov's inequality yields that for every $t>0$ 
\begin{align}
\Pr \left[\cappedWA_{\rm even}(\Out(P)) \ge (\log n)^{-d}\right]
\le 
\frac
{\Exp\left[\cappedWA^t_{\rm even}(\Out(P))\right]}
{(\log n)^{-t\cdot d}} \enspace.
\end{align}
In the proof of Lemma \ref{lem:MGFForcappedWB}, we have that for $t= d^{95/100}$, and $s=d^{94/100}$ and arbitrary vertex $v$, we have that $\Exp\left[{\cappedWA^t\left(v\right)} \right] \le (1-\frac{\varepsilon}{4})^{s}$, (notice that although we prove this for the random tree construction, the arguments work precisely for $\G(n, d/n)$ as well). Acknowledging the fact that $\cappedWA_{\rm even}(\Out(P))$ is a product of i.i.d. random variables, we get that
\begin{align}
\Pr \left[\cappedWA_{\rm even}(\Out(P)) \ge (\log n)^{-d}\right] 
\le 
\frac
{\left(1-\varepsilon/4\right)^{ks/2}}
{(\log n)^{-t\cdot d}}
\le 
{(\log n)^{d^2}}
\cdot
{n^{-d^{42/100}}}
\le 
n^{-d^{2/5}}\enspace,
\end{align}
Notice that, by symmetry, 
\eqref{eq:CappedEvenBoundOut} holds for $\cappedWA_{\rm odd}(\Out(P))$ as well. Moreover, since $\cappedWA(\Out(P)) = \cappedWA_{\rm odd}(\Out(P))\cdot\cappedWA_{\rm even}(\Out(P))$, using the union bound further yields
\begin{align}\label{eq:FinalBoundForOutP}
\Pr \left[\cappedWA(\Out(P)) < (\log n)^{-2d}\right] \ge 1- 2\cdot n ^{-d^{2/5}} \ge 1-n^{-d^{1/3}} \enspace.
\end{align}
From the union bound, equations \eqref{eq:FinalBoundForInP} and \eqref{eq:FinalBoundForOutP} , and the fact that
$\cappedWA(P) = \cappedWA(\In(P))\cdot\cappedWA(\Out(P))$,
we have that
\begin{align*}
\Pr \left[\cappedWA(P) < 1\right] \ge 1- 2\cdot n ^{-d^{1/3}} \ge 1-n^{-d^{1/4}} \enspace,
\end{align*}
concluding the proof of Lemma \ref{lem:ClacOfR}.


\bibliographystyle{plainurl}
\bibliography{PapersSamplingGlassses}

\appendix

\section{Tail Bound for sums of half-normal}\label{sec:HalfNormalTails}

We say that $Y$ follows the \emph{half-normal} distribution with parameter $\sigma$, if $Y=|X|$, where $X$ follows the Gaussian distribution $\mathcal{N}(0,\sigma^2)$. For $N>0$ integer, and $\sigma\ge 0$, let $X_1, \ldots, X_N \sim \mathcal{N}(0,\sigma^2)$ be i.i.d standard Gaussians, and write 
\begin{equation*}
X= \sum_{i=1}^N |X_i| \enspace.
\end{equation*}
We show the following concentration bound for $X$
\begin{theorem}\label{thm:UpperTailBoundSumOfHalfNormRelative}
For every $\delta \ge 0$, we have that
\begin{equation*}
\Pr\left[\,X > (1+\delta)\cdot\Exp[X]\,\right]\leq\exp\left(-N\cdot\frac{\delta^{2}}{\pi}\right) \enspace.
\end{equation*}
\end{theorem}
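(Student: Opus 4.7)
\textbf{Proof plan for Theorem~\ref{thm:UpperTailBoundSumOfHalfNormRelative}.}

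The plan is to exploit the fact that $X = \sum_{i=1}^N |X_i|$ is a Lipschitz function of a Gaussian vector, and then to invoke the classical Gaussian concentration (Borell--Sudakov--Tsirelson) inequality. This is cleaner than a direct Chernoff/MGF computation for half-normal variables, since the MGF $2e^{t^2\sigma^2/2}\Phi(t\sigma)$ is awkward to optimise, and Gaussian concentration gives precisely the constant $1/\pi$ that appears in the statement.

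First I will write $X_i = \sigma Z_i$ with $Z = (Z_1,\dots,Z_N) \sim \mathcal{N}(0, I_N)$, and define $g:\mathbb{R}^N \to \mathbb{R}$ by $g(z) = \sigma \sum_{i=1}^N |z_i|$, so that $X = g(Z)$. The key elementary step is to check that $g$ is $\sigma\sqrt{N}$-Lipschitz with respect to the Euclidean norm: by the reverse triangle inequality and Cauchy--Schwarz,
\begin{equation*}
|g(z) - g(z')| \;\le\; \sigma \sum_{i=1}^N \bigl| |z_i| - |z'_i| \bigr| \;\le\; \sigma \sum_{i=1}^N |z_i - z'_i| \;\le\; \sigma \sqrt{N}\, \|z - z'\|_2.
\end{equation*}

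Next, applying the Gaussian Lipschitz concentration inequality to $g$ yields, for every $t \ge 0$,
\begin{equation*}
\Pr\bigl[X \ge \Exp[X] + t\bigr] \;\le\; \exp\!\left(-\frac{t^2}{2\sigma^2 N}\right).
\end{equation*}
Recalling that a half-normal variable with parameter $\sigma$ has mean $\sigma\sqrt{2/\pi}$, we have $\Exp[X] = N\sigma\sqrt{2/\pi}$. Setting $t = \delta \Exp[X] = \delta N \sigma \sqrt{2/\pi}$ and substituting,
\begin{equation*}
\Pr\bigl[X \ge (1+\delta)\Exp[X]\bigr] \;\le\; \exp\!\left(-\frac{\delta^2 N^2 \sigma^2 (2/\pi)}{2 \sigma^2 N}\right) \;=\; \exp\!\left(-\frac{N \delta^2}{\pi}\right),
\end{equation*}
which is exactly the claimed bound.

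There is no real obstacle here: the only nontrivial input is the Gaussian concentration inequality, which is a standard black-box tool, and the Lipschitz constant $\sqrt{N}$ is tight (so the exponent $\delta^2/\pi$ cannot be improved by this route). The only thing to be careful about is distinguishing the Euclidean Lipschitz constant $\sqrt{N}$ from the (larger) $\ell_1$-Lipschitz constant $1$ per coordinate, since Gaussian concentration is stated for the Euclidean metric.
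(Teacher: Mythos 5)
Your proof is correct, and it is genuinely different from the paper's. The paper proceeds by a direct Chernoff/MGF computation: it evaluates $\E[\exp(t|X_1|)] = 2\exp(t^2\sigma^2/2)\mathrm{\Phi}(\sigma t)$, applies the elementary bound $\mathrm{\Phi}(x) \le \tfrac12 + x/\sqrt{2\pi}$ (proved in an auxiliary lemma), linearises the logarithm, and then optimises over $t$, which forces a side condition $a \ge \E[X]$ equivalent to $\delta \ge 0$. You instead represent $X$ as a $\sigma\sqrt{N}$-Lipschitz function of a standard Gaussian vector and invoke the Gaussian concentration inequality (in its mean form, which is what you need here rather than the median form) as a black box. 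Your Lipschitz computation is correct, $\E[X]=N\sigma\sqrt{2/\pi}$ is correct, and the substitution $t = \delta\E[X]$ lands exactly on $\exp(-N\delta^2/\pi)$. It is a nice coincidence that both routes produce the identical constant $1/\pi$ with no slack. The trade-off is the expected one: the paper's argument is elementary and self-contained (using nothing beyond $\mathrm{\Phi}(x)\le \tfrac12 + x/\sqrt{2\pi}$), while yours is shorter and conceptually cleaner but imports Borell--Tsirelson--Ibragimov--Sudakov concentration as an external tool. Your closing remark that the Lipschitz route cannot improve the exponent is also accurate, since the $\ell_1\to\ell_2$ comparison is tight on the diagonal.
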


\begin{proof}
For every real $t \ge 0$, consider the moment generating function $\Exp[\exp(t\cdot X)]$, as a non-negative random variable. Applying Markov's inequality on $\Exp[\exp(t\cdot X)]$, we get that for every $a \in \mathbb{R}$
\begin{equation}\label{eq:markov}
\Pr\left[X > a\right] = \Pr\left[\exp(t\cdot X) > \exp(t\cdot a)\right] \le \frac{\Exp[\exp(t\cdot X)]}{\exp(t\cdot a)} \enspace.
\end{equation}
To calculate $\Exp[\exp(t\cdot X)]$, we first observe that since $X_i$'s are i.i.d., so that
\begin{align}\label{eq:ExpOfProdMGF}
\Exp[\exp(t\cdot X)] = \Exp\left[\exp\left(t\cdot \sum_{i=1}^N |X_i|\right)\right] = \left(\Exp\left[\exp\left(t\cdot|X_1|\right)\right] \right)^N \enspace.
\end{align}
We now focus on $\Exp\left[\exp\left(t\cdot|X_1|\right)\right] $. We have 
\begin{align}
\Exp\left[\exp\left(t\cdot|X_1|\right)\right]
	&= 
\int_{-\infty}^{+\infty}\exp\left(t\cdot|x|\right) \frac{1}{\sigma\sqrt{2\pi}} \exp\left(-\frac{x^2}{2\sigma^2}\right) dx\nonumber \\
	&=2
\int_{0}^{+\infty}\frac{1}{\sigma\sqrt{2\pi}}\exp\left(tx-\frac{x^2}{2\sigma^2}\right) dx\nonumber\\
	&=2\cdot\exp\left(\frac{t^2\sigma^2}{2}\right) \cdot
\int_{0}^{+\infty}\frac{1}{\sigma\sqrt{2\pi}}\exp\left(-\frac{t^2\sigma^2}{2}+tx-\frac{x^2}{2\sigma^2}\right) dx\nonumber\\
	&=2\cdot\exp\left(\frac{t^2\sigma^2}{2}\right) \cdot
\int_{0}^{+\infty}\frac{1}{\sigma\sqrt{2\pi}}\exp\left(-\frac{1}{2}\left(\frac{x}{\sigma}-\sigma t\right)^2\right) dx\nonumber\\
	&=2\cdot\exp\left(\frac{t^2\sigma^2}{2}\right) \cdot 
\mathrm{\Phi}\left({\sigma t}\right) \label{eq:thatsMGF}
\enspace.
\end{align}
Hence, from \eqref{eq:ExpOfProdMGF} and \eqref{eq:markov}, we further have that
\begin{align}\label{eq:LastEquality}
\Pr\left[X > a\right] \le \frac{\Exp[\exp(t\cdot X)]}{\exp(t\cdot a)} =  \exp\left(N\cdot\frac{t^2\sigma^2}{2} - t\cdot a+N\cdot\log
\left[2\cdot\mathrm{\Phi}\left(\sigma t\right)\right] \right)\enspace.
\end{align}
As we prove in Lemma~\ref{lem:BoundPhi}, for every $x\geq 0$, we have
$
\mathrm{\Phi}(x)\leq \frac{1}{2} +\frac{x}{\sqrt{2 \pi}} $.
Therefore, 
\begin{align}
\Pr\left[X > a\right]
	&\le 
 \exp\left(N\cdot\frac{t^2\sigma^2}{2} - t\cdot a+N\cdot\log
\left[1+\sigma t\cdot\sqrt{\frac{2}{\pi}}\right] \right)\nonumber\\
	&\le 
	\exp\left(N\cdot\frac{t^2\sigma^2}{2} - t\cdot a+N\cdot\sigma t\cdot\sqrt{\frac{2}{\pi}}\right)\nonumber\\
	&=
	\exp\left(N\left[t^2\cdot\frac{\sigma^2}{2} + t\cdot\left(\sigma \cdot\sqrt{\frac{2}{\pi}}-\frac{a}{N}\right)\right]\right) \label{eq:noMinimize}
\enspace. 
\end{align}
Minimizing the exponent in the rhs of \eqref{eq:noMinimize}  with respect to $t \geq 0$, yields that for every 
\begin{equation}\label{eq:Alphacond}
\textstyle
a \geq 
N \sigma \cdot \sqrt{\frac{2}{\pi}} \enspace,
\end{equation}
we have
\begin{equation}\label{eq:MMRwrite}
\Pr\left[X > a\right]\leq\exp\left(N\left[-\frac{ \left(a \sqrt{\pi}-\sigma N\sqrt{2} \right)^{2}}{2 \pi N^{2} \sigma^{2}}\right]\right) \enspace.
\end{equation}
Considering the derivative at $t=0$ of the moment generating function of $|X_1|$ in \eqref{eq:thatsMGF}, it is easy to check that 
\begin{equation*}
\Exp[X] 
	= \Exp\left[\sum\nolimits_{i=1}^N|X_i| \right]
	=  N\cdot\Exp\left[|X_1| \right]
	= N \sigma \cdot \sqrt{\frac{2}{\pi}}\enspace.
\end{equation*}
So that condition \eqref{eq:Alphacond}, can be written as $a\ge \Exp[X]$, or equivalently, $a=(1+\delta)\cdot\Exp[X]$ for some $\delta \ge 0$. Substituting $a=(1+\delta)\cdot\Exp[X]$ in \eqref{eq:MMRwrite}, gives
\begin{align*}
\Pr\left[\,X > (1+\delta)\cdot\Exp[X]\,\right]
\le
	\exp\left(-N\cdot\frac{\delta^{2}}{\pi}\right) \label{eq:RREwrite}
\enspace,
\end{align*}
as desired.
\end{proof}

\section{Linear Approximation Of the Gaussian CDF}

\begin{lemma}\label{lem:BoundPhi}
Let $\mathrm{\Phi}: \mathbb{R} \to [0,1]$ be the CDF of the Standard Gaussian distribution. Then, for every $x \ge 0$
\begin{equation*}
\mathrm{\Phi}(x)\leq \frac{1}{2} +\frac{x}{\sqrt{2 \pi}} \enspace.
\end{equation*}
\end{lemma}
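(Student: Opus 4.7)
The plan is to use the standard representation of $\mathrm{\Phi}$ via symmetry of the Gaussian density around $0$, and then bound the Gaussian density by its value at the origin. Concretely, I would first write
\begin{align*}
\mathrm{\Phi}(x) \;=\; \int_{-\infty}^{0} \frac{1}{\sqrt{2\pi}}\, e^{-t^2/2}\, dt \;+\; \int_{0}^{x} \frac{1}{\sqrt{2\pi}}\, e^{-t^2/2}\, dt \;=\; \frac{1}{2} \;+\; \int_{0}^{x} \frac{1}{\sqrt{2\pi}}\, e^{-t^2/2}\, dt,
\end{align*}
where the first integral equals $1/2$ by symmetry of the standard Gaussian density.

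Next I would observe that for every real $t$ we have $e^{-t^2/2} \leq 1$, since the exponent is non-positive. Therefore the remaining integral is bounded as
\begin{align*}
\int_{0}^{x} \frac{1}{\sqrt{2\pi}}\, e^{-t^2/2}\, dt \;\leq\; \int_{0}^{x} \frac{1}{\sqrt{2\pi}}\, dt \;=\; \frac{x}{\sqrt{2\pi}},
\end{align*}
which uses only the hypothesis $x \geq 0$ to ensure that the range of integration is well-oriented. Combining the two displays yields the claimed bound $\mathrm{\Phi}(x) \leq \tfrac{1}{2} + \tfrac{x}{\sqrt{2\pi}}$.

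There is no real obstacle in this proof; the entire argument is a two-line exercise relying on symmetry of the Gaussian and the pointwise bound $e^{-t^2/2} \leq 1$. The only thing to be careful about is the direction of the inequality and the sign of $x$, which is why the hypothesis $x \geq 0$ is essential (for $x < 0$ the same linear bound would fail, though a symmetric version could be recovered using $\mathrm{\Phi}(-x) = 1 - \mathrm{\Phi}(x)$). Since this inequality is used in the proof of Theorem~\ref{thm:UpperTailBoundSumOfHalfNormRelative} to control the moment-generating function of a half-normal through its linear majorant of $\mathrm{\Phi}(\sigma t)$, the simple form above is exactly what is needed.
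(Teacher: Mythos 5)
Your proof is correct and uses the same underlying observation as the paper, namely the pointwise bound $e^{-t^2/2} \le 1$ on the Gaussian density; the paper phrases this via a monotonicity argument (defining $H(x)=\tfrac12+\tfrac{x}{\sqrt{2\pi}}-\Phi(x)$ and showing $H'\ge 0$, $H(0)=0$), whereas you integrate the bound directly, but these are the same argument read in opposite directions.
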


\begin{proof}
Define $H: [0,+\infty) \to \mathbb{R}$ with 
\begin{equation*}
H(x) = \frac{1}{2} +\frac{x}{\sqrt{2 \pi}} -\mathrm{\Phi}(x)\enspace.
\end{equation*}
Differentiating gives
\begin{align*}
H^\prime(x) = \frac{1}{\sqrt{2 \pi}} -\frac{\exp(-\frac{x^2}{2})}{\sqrt{2\pi}} \ge 0\enspace,
\end{align*}
so that $H$ is increasing. Noticing also that $H(0) = 0$, yields $H \ge 0$, as desired.
\end{proof}

\section{Cycles in \texorpdfstring{$\G(n, d/n)$}{}
- Proof of Lemma \ref{lem:SamllUnicyclicGnp}}\label{sec:lem:SamllUnicyclicGnp}

Write $\cE_S$ for the event that every set of vertices $S \subseteq V(\G)$, with cardinality at most $2\frac{\log n}{\log^2  d}$, spans at most $|S|$ edges in $\G$. We have

\begin{eqnarray}
\Pr\Big[ \; \overline{\cE_S} \;\Big]
&\leq 
&\sum_{k=1}^{2\frac{\log n}{ (\log  d)^2}} \binom{n}{k} \binom{\binom{k}{2}}{k+1}\left(\frac{d}{n}\right)^{k+1}
\leq \sum_{k=1}^{2\frac{\log n}{ (\log  d)^2}} \left(\frac{ne}{k}\right)^k\left(\frac{k^2e}{2(k+1)}\right)^{k+1}\left(\frac{d}{n}\right)^{k+1}
 \nonumber\\
&\leq& \frac{1}{n}\sum_{k=1}^{2\frac{\log n}{ (\log  d)^2}}\left(\frac{ekd}{2}\right)\left(\frac{e^2d}{2}\right)^{k}
\leq \frac{ed}{ (\log  d)^2}\ \frac{\log n}{n}\sum_{k=1}^{2\frac{\log n}{ (\log  d)^2}}\left(\frac{e^2d}{2}\right)^{k}
\nonumber \\
&\leq& n^{-9/10}\left({e^2d}/{2}\right)^{2\frac{\log n}{ (\log  d)^2}}
\leq n^{-3/4} \enspace, \nonumber
\end{eqnarray}
where the last two inequalities hold for sufficiently large $d$.

\end{document}